\DeclareMathAlphabet{\mathbbm}{U}{bbm}{m}{n}
\DeclareMathAlphabet{\mathbsf}{\encodingdefault}{\sfdefault}{bx}{n}
\newcommand{\showlineheight}{{\color{blue}Current line height: \the\baselineskip}}
\newcommand{\showfontsize}{{\color{blue} [Current font size: \f@size pt]}}
\newcommand{\showall}{{\color{blue} [font size: \f@size pt, line height: \the\baselineskip]}}
\newcommand{\LangName}[1]{\textsf{{#1}}}
\newcommand{\Met}{\textsf{\textsc{Met}}\xspace}
\newcommand{\Metp}[1]{\textsf{\textsc{Met}}(\ensuremath{#1})\xspace}
\newcommand{\Metpt}{\textsf{\textsc{Met}}(\ensuremath{\mcT})\xspace}
\newcommand{\Capless}{\textsf{{Capless}}\xspace}
\newcommand{\Feff}{\Feps}
\newcommand{\LambdaEA}{\ensuremath{\lambda_{\mathrm{EA}}}\xspace}
\newcommand{\FepsBold}{\LangName{System~\ensuremath{\textsf{F}^{\boldsymbol{\epsilon}}}}\xspace}
\newcommand{\Feps}{\LangName{System~\ensuremath{\textsf{F}^{\epsilon}}}\xspace}
\newcommand{\Fepss}{\LangName{\ensuremath{\textsf{F}^{\epsilon}}}\xspace}
\newcommand{\Fepssn}{\LangName{System~$\textsf{F}^{\epsilon+\meta{sn}}$}\xspace}
\newcommand{\Fepssns}{\LangName{$\textsf{F}^{\epsilon+\meta{sn}}$}\xspace}
\newcommand{\SystemC}{\LangName{System~C}\xspace}
\newcommand{\SystemXi}{\LangName{System~$\Xi$}\xspace}
\newcommand{\Haskell}{\LangName{Haskell}\xspace}
\newcommand{\Scala}{\LangName{Scala}\xspace}
\newcommand{\Links}{\LangName{Links}\xspace}
\newcommand{\Eff}{\LangName{Eff}\xspace}
\newcommand{\Koka}{\LangName{Koka}\xspace}
\newcommand{\Effekt}{\LangName{Effekt}\xspace}
\newcommand{\CaptureCalculus}{\ensuremath{\LangName{CC}_{<:\BoxSym}}\xspace}
\newcommand{\Frank}{\LangName{Frank}\xspace}
\newcommand{\Helium}{\LangName{Helium}\xspace}
\newcommand{\codesize}{\fontsize{8.6}{10.4}}
\newcommand{\rulesize}{\fontsize{9.5}{12}\selectfont}
\newenvironment{prog}{
  \rulesize\begin{displaymath}
}{
  \end{displaymath}\ignorespacesafterend
}
\definecolor{codegreen}{rgb}{0,0.6,0}
\definecolor{codeblue}{rgb}{0,0,0.8}
\definecolor{codegray}{rgb}{0.5,0.5,0.5}
\definecolor{codepurple}{rgb}{0.58,0,0.82}
\definecolor{codeone}{HTML}{9B26B6}
\definecolor{codetwo}{HTML}{414C87}
\definecolor{backcolour}{rgb}{0.95,0.95,0.92}
\definecolor{myred}{HTML}{BB0000}
\definecolor{myblue}{HTML}{003399}
\definecolor{dred}{HTML}{EB212E}
\definecolor{dblue}{HTML}{2E67F8}
\definecolor{dorange}{HTML}{EFBE7D}
\definecolor{dgreen}{HTML}{00C0A3}
\definecolor{dgray}{HTML}{8A8D8F}
\definecolor{dpurple}{HTML}{9063CD}
\newcommand{\red}[1]{{\color{myred}{#1}}}
\newcommand{\dblue}[1]{{\color{dblue}{#1}}}
\newcommand{\dred}[1]{{\color{dred}{#1}}}
\newcommand{\gray}[1]{{\color{gray}{#1}}}
\newcommand{\black}[1]{{\color{black}{#1}}}
\colorlet{hlcolor}{lightgray}
\newcommand{\highlightwithstyle}[2]{
  {\setlength{\fboxsep}{2pt} \pgfsetfillopacity{0.4} \colorbox{hlcolor}{\pgfsetfillopacity{1}$#1#2$}}
}
\newcommand{\hl}[1]{\mathpalette\highlightwithstyle{#1}}
\newcommand{\xl}[1]{\hl{#1}}
\newcommand{\hle}[1]{{\color{dgray} #1}}
\newcommand{\Modhl}[1]{\hle{\Mod_{#1}}}
\newcommand{\Letmhl}[2]{\Let\hle{\Mod_{#2}}\;}
\newcommand{\sigb}[2]{
  \@ifmtarg{#1}{:}{:^{#1}} #2
}
\newcommand{\hforall}{\myforall}
\newcommand{\shat}[1]{{#1}^\ast}
\newcommand{\tmcap}[1]{\keyw{cap}_{#1}}
\newcommand{\tyev}[2]{
  \meta{ev}\,
  \@ifmtarg{#1}{#2}{{#1}^{#2}}
}
\newcommand{\tmev}[1]{\keyw{ev}_{#1}}
\newcommand{\Instctx}{\Omega}
\newcommand{\myforall}{{\text{\rotatebox[origin=c]{180}{$\mathbb{A}$}}}}
\newcommand{\ogeneric}[2][0.7]{%
  \vphantom{{\oplus}}\mathpalette\o@generic{{#1}{#2}}%
}
\newcommand{\o@generic}[2]{\o@@generic#1#2}
\newcommand{\o@@generic}[3]{%
  \begingroup
  \sbox\z@{$\m@th#1\oplus$}%
  \dimen@=\dimexpr\ht\z@+\dp\z@\relax
  \savebox\tw@[\totalheight]{$\m@th#1\bigcirc$}%
  \makebox[\wd\z@]{%
    \ooalign{%
      $#1\vcenter{\hbox{\resizebox{\dimen@}{!}{\usebox\tw@}}}$\cr
      \hidewidth
      $#1\vcenter{\hbox{\resizebox{#2\dimen@}{!}{$#1\vphantom{\oplus}{#3}$}}}$%
      \hidewidth
      \cr
    }%
  }%
  \endgroup
}
\newcommand{\meta}[1]{\mathsf{#1}}
\newcommand{\one}{\mathbbm{1}}
\newcommand{\transl}[1]{\ensuremath{\llbracket #1 \rrbracket}}
\newcommand{\bigtransl}[1]{\ensuremath{
  \Bigl\llbracket
    \vcenter{\bl #1 \el}
  \Bigr\rrbracket
}}
\let\BoxSym\Box
\newcommand{\squareop}[1]{%
  {\mathrel{\ooalign{\hss\raise-0.1ex\hbox{\scalebox{1.1}{$\BoxSym$}}\hss\cr%
  \kern0.3ex\raise0.12ex\hbox{\scalebox{0.65}{$#1$}}}}}
}
\DeclareSymbolFont{arrows3}{LS2}{stixtt}{m}{n}
\DeclareMathSymbol{\squarelrblackbin}{\mathord}{arrows3}{"89}
\DeclareSymbolFont{symbols2}{LS1}{stixfrak}{m}{n}
\DeclareMathSymbol{\typecolon}{\mathbin}{symbols2}{"25}
\newcommand{\Scale}[2][1]{\scalebox{#1}{$\m@th#2$}}
\newcommand{\boxwith}[1]{{\dblue{#1}}}
\newcommand{\mind}[2]{#1_{#2}}
\newcommand{\updlock}[2]{{\llparenthesis #1 \rrparenthesis}_{#2}}
\newcommand{\mR}{\mathbb{R}}
\newcommand{\mcL}{\mathcal{L}}
\newcommand{\mcT}{\mathcal{X}}
\newcommand{\mcS}{\mathcal{S}}
\newcommand{\mcR}{\mathcal{R}}
\newcommand{\emtScp}{{\mcR_{\textsf{scp}}}}
\newcommand{\emtSimp}{{\mcR_{\textsf{simp}}}}
\newcommand{\lock}{\text{\faLock}}
\newcommand{\mlock}{\lock}
\newcommand{\lockwith}[1]{\mlock_{\dblue{#1}}}
\newcommand{\nameb}[3]{
  {#1 \@ifmtarg{#2}{:}{:_{#2}} #3}
}
\newcommand{\varb}[2]{
  \@ifmtarg{#1}{:}{:_{\dblue{#1}}} #2
}
\newcommand{\varbnd}[3]{
  : \dred{#1} \@ifmtarg{#1}{}{.}
  \@ifmtarg{#2}{}{(\dblue{#2}\mid}
  #3
  \@ifmtarg{#2}{}{)}
}
\newcommand{\Letmn}[3]{
  \@ifmtarg{#1}{\keyw{let}}{\keyw{let}_{#1}} \; \keyw{mod}^{#3}_{#2}\;
}
\newcommand{\Letm}[2]{
  \@ifmtarg{#1}{\keyw{let}}{\keyw{let}_{#1}} \; \keyw{mod}_{#2}\;
}
\newcommand{\Letmf}[2]{
  \@ifmtarg{#1}{\keyw{let}}{\keyw{let}_{#1}} \; #2\;
}
\newcommand{\Casem}[1]{
  \@ifmtarg{#1}{\keyw{case}}{\keyw{case}_{#1}} \;
}
\renewcommand{\Box}{\keyw{mod}}
\newcommand{\amk}[1]{{\langle {#1}{\mkern 1mu \mid\!}\rangle}}
\newcommand{\aex}[1]{{\langle #1\rangle}}
\newcommand{\adj}[2]{{\langle {#1} {\mkern 1mu \mid\mkern 1mu} #2 \rangle}}
\newcommand{\aid}{{\langle\rangle}}
\newcommand{\aeq}[1]{[#1]}
\newcommand{\aconst}[1]{\red{dontuseme}}
\newcommand{\aremove}[1]{\red{dontuseme}}
\newcommand{\act}[2]{#1(#2)}
\newcommand{\superimpose}[2]{{%
  \ooalign{%
    \hfil$\m@th#1\@firstoftwo#2$\hfil\cr
    \hfil$\m@th#1\@secondoftwo#2$\hfil\cr
  }%
}}
\newcommand{\Mod}{\keyw{mod}}
\newcommand{\locks}[1]{\meta{locks}(#1)}
\newcommand{\To}{\Rightarrow}
\newcommand{\typm}[3]{#1 \vdash #2 \;\gray{@}\, \dblue{#3}}
\newcommand{\typmet}[4]{#1 \vdash #2 : #3 \;\gray{@}\, \dblue{#4}}
\newcommand{\atmode}[1]{\;\gray{@}\, \dblue{#1}}
\newcommand{\oset}[3][0ex]{%
  \mathrel{\mathop{#3}\limits^{
    \vbox to#1{\kern-2\ex@
    \hbox{$\scriptstyle#2$}\vss}}}}
\newcommand{\stransl}[2]{\ensuremath{\llbracket #1 \rrbracket}_{#2}}
\newcommand{\tr}[1]{\transl{#1}}
\newcommand{\str}[2]{\stransl{#1}{#2}}
\newcommand{\inc}[3]{
  \@ifmtarg{#1}{}{#1 \mathrel{\typecolon}}
  #2 \sqsubseteq #3
}
\newcommand{\Row}{\meta{Row}}
\newcommand{\Scope}{\meta{Scope}}
\newcommand{\Effect}{\meta{Effect}}
\newcommand{\KindEffect}{\meta{Effect}}
\newcommand{\Pure}{\meta{Abs}}
\newcommand{\Any}{\meta{Any}}
\newcommand{\Abs}{-}
\newcommand{\earr}[3]{#1 \to^{#3} #2}
\newcommand{\elambda}[1]{\lambda^{#1}}
\DeclareRobustCommand{\Circle}{%
  \mathbin{\mathpalette\on@ntimes\relax}%
}
\newcommand{\on@ntimes}[2]{%
  \vcenter{\hbox{%
    \sbox0{\m@th$#1\otimes$}%
    \setlength\unitlength{\wd0}%
    \begin{picture}(1,1)
    \linethickness{0.35pt}
    \put(.5,.5){\circle{.8}}
    \end{picture}%
  }}%
}
\newcommand{\txmark}{\text{\textcolor{black}{\ding{51}}{\textcolor{black}{\kern-0.7em\ding{55}}}}}
\newcommand{\Def}{\keyw{def}\;}
\newcommand{\At}{\;\keyw{at}\;}
\newcommand{\Try}{\keyw{try}}
\newcommand{\barr}[4]{
  \@ifmtarg{#2}{(#1)}{(#1,#2)}
  \To #3 \eslash #4
}
\newcommand{\carr}[3]{
  (
  \@ifmtarg{#1}{#2}{#1
    \@ifmtarg{#2}{}{, #2}
  }
  )
  \To #3
}
\newcommand{\carrsingle}[3]{
  \@ifmtarg{#2}{}{(}
  \@ifmtarg{#1}{#2}{#1
    \@ifmtarg{#2}{}{, #2}
  }
  \@ifmtarg{#2}{}{)}
  \To #3
}
\newcommand{\block}[3]{
  \{
  (
  \@ifmtarg{#1}{#2}{#1
    \@ifmtarg{#2}{}{, #2}
  }
  )
  \To #3 \}
}
\newcommand{\blocksingle}[3]{
  \{
  (\@ifmtarg{#1}{#2}{#1
    \@ifmtarg{#2}{}{, #2}}
  )
  \To #3 \}
}
\newcommand{\typxi}[4]{
  #1 \vdash #2 : #3
  \@ifmtarg{#4}{}{\mid {#4}}
}
\newcommand{\typf}[4]{
  #1 \vdash #2 : #3
  \@ifmtarg{#4}{}{\mid {#4}}
}
\newcommand{\typfe}[5]{
  #1 \vdash_{#5} #2 : #3
  \@ifmtarg{#4}{}{\mid {#4}}
}
\newcommand{\typfev}[5]{
  #1 \vdash_{#5} #2 : #3
  \@ifmtarg{#4}{}{\;\dred{\mid #4}}
}
\newcommand{\eslash}{\mathop{\slash}}
\newcommand{\CBox}{\keyw{box}\;}
\newcommand{\CUnbox}{\keyw{unbox}\;}
\newcommand{\subtype}{\leqslant}
\newcommand{\ftv}[1]{\meta{ftv}(#1)}
\newcommand{\fv}[1]{\meta{fv}(#1)}
\newcommand{\dom}[1]{\meta{dom}(#1)}
\newcommand{\BL}[1]{\mathsf{bl}(#1)}
\newcommand{\ol}[1]{\overline{#1}}
\newcommand{\evar}{\varepsilon}
\newcommand{\Fork}{\keyw{fork}}
\NewDocumentCommand{\ForkC}{ m m O{dual} }{\Fork^{#2}_{#1}} %
\newcommand{\refa}[1]{{\color{red!80} \renewcommand{\color}[1]{}{#1}\ifthenelse{\equal{#1}{}}{}{\,}(1)}}
\newcommand{\refb}[1]{{\color{blue!80} \renewcommand{\color}[1]{}{#1}\ifthenelse{\equal{#1}{}}{}{\,}(2)}}
\newcommand{\refc}[1]{{\color{violet} \renewcommand{\color}[1]{}{#1}\ifthenelse{\equal{#1}{}}{}{\,}(3)}}
\newcommand{\refd}[1]{{\color{purple} \renewcommand{\color}[1]{}{#1}\ifthenelse{\equal{#1}{}}{}{\,}(4)}}
\newcommand{\refdno}[1]{{\color{purple} \renewcommand{\color}[1]{}{#1}\ifthenelse{\equal{#1}{}}{}{\,}}}
\newcommand{\refe}[1]{{\color{cyan}   \renewcommand{\color}[1]{}{#1}\ifthenelse{\equal{#1}{}}{}{\,}(5)}}
\newcommand{\reff}[1]{{\color{magenta}\renewcommand{\color}[1]{}{#1}\ifthenelse{\equal{#1}{}}{}{\,}(6)}}
\newcommand{\refg}[1]{{\color{brown}  \renewcommand{\color}[1]{}{#1}\ifthenelse{\equal{#1}{}}{}{\,}(7)}}
\newcommand{\refh}[1]{{\color{orange} \renewcommand{\color}[1]{}{#1}\ifthenelse{\equal{#1}{}}{}{\,}(8)}}
\setlist[description]{itemsep=1ex}
\newcommand{\focus}[1]{{\color{gray}$\boxed{#1}$}}
\newcommand{\LHS}{\mathrm{LHS}}
\newcommand{\RHS}{\mathrm{RHS}}
\newcommand{\notsmall}{}
\newcommand{\tylabi}[2]{\text{T-\scshape{#1}\gray{-#2}}}
\newcommand{\semlabi}[2]{\text{E-\scshape{#1}\gray{-#2}}}
\newcommand{\slab}[1]{\textrm{#1}}
\newcommand{\semlab}[1]{\text{\scshape{E-#1}}}
\newcommand{\tylab}[1]{\text{\scshape{T-#1}}}
\newcommand{\mtylab}[1]{\text{\scshape{MT-#1}}}
\newcommand{\var}[1]{\mathit{#1}}
\newcommand{\keyw}[1]{{{\mathbsf{#1}}}}
\newcommand{\force}[1]{\text{$#1$}}
\newcommand{\Localeffect}[1]{\keyw{local}\;#1\;\keyw{in}\;}
\newcommand{\Handle}{\keyw{handle}}
\newcommand{\Handler}{\keyw{handler}}
\newcommand{\NHandler}{\keyw{nhandler}}
\newcommand{\With}{\;\keyw{with}\;}
\newcommand{\Let}{\keyw{let}\;}
\newcommand{\In}{\;\keyw{in}\;}
\newcommand{\Do}{\keyw{do}\;}
\newcommand{\Doy}{\keyw{do}}
\newcommand{\Ret}{\keyw{return}\;}
\newcommand{\Pre}[1]{\mathsf{Pre}(#1)}
\newcommand{\Value}{\mathsf{Value}}
\DeclareRobustCommand{\circbullet}{\mathbin{\vphantom{\circ}\text{\circbullet@}}}
\newcommand{\circbullet@}{%
  \check@mathfonts
  \m@th\ooalign{%
    \clipbox{0 0 0 {\dimexpr\height-\fontdimen22\textfont2}}{$\bullet$}\cr
    $\circ$\cr
  }%
}
\DeclareRobustCommand{\bulletcirc}{\mathbin{\text{\bulletcirc@}}}
\newcommand{\bulletcirc@}{%
  \check@mathfonts
  \m@th\ooalign{%
    \raisebox{\fontdimen22\textfont2}{\clipbox{0 {\fontdimen22\textfont2} 0 0}{$\bullet$}}\cr
    $\circ$\cr
  }%
}
\newcommand{\Unit}{{()}}
\newcommand{\code}[1]{{\codesize\texttt{#1}}}
\newcommand{\Yield}{{\code{yield}}}
\newcommand{\namei}[2]{{\color{dred}\text{$\text{$\mathit{#1}$}_{#2}$}}}
\newcommand{\defeq}{\;\dred{\doteq}\;\;}
\newcommand{\treq}{\;\;\dred{=}\;\;}
\newcommand{\Hvdash}{\;\;\gray{\vdash}\;\;}
\newcommand{\Hnotvdash}{\;\;\gray{\not\vdash}\;\;}
\newcommand{\Hcolon}{\;\;\gray{\colon}\;}
\newcommand{\Hmid}{\;\;\gray{\mid}\;\;}
\newcommand{\Hatmode}[1]{\;\;\gray{@}\;\; \dblue{#1}}
\newcommand{\TUnit}{\code{1}}
\newcommand{\Int}{\code{Int}}
\newcommand{\sto}{\twoheadrightarrow}
\newcommand{\reducesto}{\mathrel{\leadsto}}
\newcommand{\nreducesto}[2]{
  \mathrel{{}_{\@ifmtarg{#1}{}{{#1}}}{\leadsto}_{{\@ifmtarg{#2}{}{{#2}}}}}
}
\newcommand{\ba}{\begin{array}}
\newcommand{\ea}{\end{array}}
\newcommand{\bl}{\ba[t]{@{}l@{}}}
\newcommand{\el}{\ea}
\renewenvironment{displaymath}{\notsmall\[}{\]\normalsize\ignorespacesafterend}
\newenvironment{syntax}{\begin{displaymath}\ba{@{}l@{\quad}r@{~}c@{~}l@{}}}{\ea\end{displaymath}\ignorespacesafterend}
\newenvironment{reductions}{\begin{displaymath}\ba{@{}l@{\quad}@{}r@{~}c@{~}l@{}}}{\ea\end{displaymath}\ignorespacesafterend}
\newenvironment{nreductions}{\begin{displaymath}\ba{@{}l@{\quad}@{}r@{~}c@{~}l@{}}}{\ea\end{displaymath}\ignorespacesafterend}
\newcommand{\bs}{\ba[t]{@{}l@{\quad}r@{~}c@{~}l@{}}}
\newcommand{\es}{\ea}
\newcommand{\EC}{\mathcal{E}}
\newenvironment{mathparshrink}
{\begin{@empty}%
    \begin{mathpar}}
      {\end{mathpar}\end{@empty}}
\lstdefinestyle{ran}{
    commentstyle=\color{codegray},
    numberstyle=\tiny,
    stringstyle=\ttfamily,
    basicstyle=\codesize\selectfont\ttfamily,
    breakatwhitespace=false,
    breaklines=true,
    captionpos=b,
    keepspaces=true,
    numbers=none,
    numbersep=5pt,
    showspaces=false,
    showstringspaces=false,
    showtabs=false,
    tabsize=2,
    columns=fullflexible,
    aboveskip=.9\medskipamount,
    belowskip=.9\medskipamount,
    keywords=[1]{
      do, mask, handle, with, let, in, resume, return, eff, type,
      fst, snd, if, then, else, case, of, fun, raise, maska, data, ev
    },
    keywordstyle=[1]\bfseries,
    keywords=[2]{
      Int, List, Maybe, String, Bool, Pure, Any, Proc, Abs,
      true, false, nil, cons, just, nothing, proc, N
    },
    keywordstyle=[2]\color{codegreen},
    keywords=[3]{
      choose, fail, get, put, ufork, fork, yield, suspend, abort, log, throw, ask, foo, leak,
      bar, baz, Yield, Gen, State, Fork, Coop, Queue, UCoop, Read
    },
    keywordstyle=[3]\color{dblue},
    morecomment = [l]{\#},
    literate={
      {->}{{$\to$}}{2}
      {=>}{{$\Rightarrow$}}{2}
      {\#>}{{$\sharparrow$}}{2}
      {<=}{{$\subtype$}}{1}
      {|-}{{$\vdash$}}{1}
      {|/-}{{$\nvdash$}}{1}
      {@}{{$\gray{\texttt{@}}$}}{1}
      {forall}{{$\forall$}}{1}
      {hforall}{{$\hforall$}}{1}
      {Pi}{{$\Pi$}}{1}
      {Unit}{{{\color{codegreen}1}}}{1}
      {Gamma}{{{$\Gamma$}}}{1}
      {earr}{{$\xrightarrow{\texttt{e}}$}}{2}
      {earrg}{{$\xrightarrow{\texttt{\leff{Gen} \ltype{Int}, e}}$}}{2}
    },
    escapeinside={<@}{@>}
}
\newcommand{\sharparrow}{\mathrel{\mkern3mu\raisebox{-.1ex}{\scalebox{1}[1]{\#}}\mkern-17mu\To}}
\newcommand{\leff}[1]{{\color{dblue}#1}}
\newcommand{\ltype}[1]{{\color{codegreen}#1}}
\lstdefinestyle{koka}{
    commentstyle=\color{codegray},
    numberstyle=\tiny,
    stringstyle=\ttfamily\small,
    basicstyle=\codesize\selectfont\ttfamily,
    breakatwhitespace=false,
    breaklines=true,
    captionpos=b,
    keepspaces=true,
    numbers=none,
    numbersep=5pt,
    showspaces=false,
    showstringspaces=false,
    showtabs=false,
    tabsize=2,
    columns=fullflexible,
    aboveskip=.9\medskipamount,
    belowskip=.9\medskipamount,
    keywords=[1]{
      do, mask, handle, with, let, in, resume, return, effect,
      fst, snd, if, then, else, case, of
    },
    keywordstyle=[1]\bfseries,
    keywords=[2]{
      List, Int, Unit, list
    },
    keywordstyle=[2]\color{codegreen},
    keywords=[3]{
      choose, fail, get, put, ufork, fork, yield, suspend, abort, log, throw, ask, foo, leak,
      bar, baz, Yield, Gen, State, Fork, Coop
    },
    keywordstyle=[3]\color{dblue},
    morecomment = [l]{\#},
    literate={
      {=>}{{$\Rightarrow$}}{2}
      {->}{{$\rightarrow$}}{2}
    }
}
\newif\ifnoappendix
\begin{document}

\title{Rows and Capabilities as Modal Effects} %

\author{Wenhao Tang}
\orcid{0009-0000-6589-3821}
\email{wenhao.tang@ed.ac.uk}
\affiliation{%
  \institution{The University of Edinburgh}
  \country{UK}}

\author{Sam Lindley}
\orcid{0000-0002-1360-4714}
\email{sam.lindley@ed.ac.uk}
\affiliation{%
  \institution{The University of Edinburgh}
  \country{UK}}

\begin{abstract}
  Effect handlers allow programmers to model and compose computational
  effects modularly.
  Effect systems statically guarantee that all effects are handled.
  Several recent practical effect systems are based on either row
  polymorphism or capabilities.
  However, there remains a gap in understanding the precise
  relationship between effect systems with such disparate foundations.
  The main difficulty is that in both row-based and capability-based
  systems, effect tracking is typically \emph{entangled} with other
  features such as functions.

  We propose a uniform framework for encoding, analysing, and
  comparing effect systems.
  Our framework exploits and generalises modal effect types, a recent
  novel effect system which \emph{decouples} effect tracking from
  functions via modalities.
  Modalities offer fine-grained control over when and how effects are
  tracked, enabling us to express different strategies for effect
  tracking.
  We give encodings as macro translations from existing row-based and
  capability-based effect systems into our framework and show that
  these encodings preserve types and semantics.
  Our encodings reveal the essence of effect tracking mechanisms in
  different effect systems, enable a direct analysis on their
  differences, and provide practical insights on language design.

\end{abstract}

\begin{CCSXML}
<ccs2012>
   <concept>
       <concept_id>10003752.10010124.10010125.10010130</concept_id>
       <concept_desc>Theory of computation~Type structures</concept_desc>
       <concept_significance>500</concept_significance>
       </concept>
   <concept>
       <concept_id>10003752.10003790.10011740</concept_id>
       <concept_desc>Theory of computation~Type theory</concept_desc>
       <concept_significance>500</concept_significance>
       </concept>
   <concept>
       <concept_id>10003752.10010124.10010125.10010126</concept_id>
       <concept_desc>Theory of computation~Control primitives</concept_desc>
       <concept_significance>500</concept_significance>
       </concept>
 </ccs2012>
\end{CCSXML}

\ccsdesc[500]{Theory of computation~Type structures}
\ccsdesc[500]{Theory of computation~Type theory}
\ccsdesc[500]{Theory of computation~Control primitives}

\keywords{effect handlers, effect types, modal types}

\maketitle

\section{Introduction}
\label{sec:introduction}

Effect handlers~\citep{PlotkinP13} provide a powerful abstraction to
define and compose computational effects including state, concurrency,
and probability. %
Effect systems statically ensure that all effects used in a program
are handled.
The literature includes much work on effect systems for effect
handlers
based on a range of different theoretical foundations. Two of the most
popular and well-studied approaches are row-based effect
systems~\citep{linksrow,koka,frank} and capability-based effect
systems~\citep{BrachthauserSO20,BrachthauserSLB22,BoruchGruszeckiOLLB23}.

Row-based effect systems, as in the languages
\Koka~\citep{koka,XieCIL22}, \Links~\citep{linksrow}, and
\Frank~\citep{frank}, follow the traditional monadic reading of
effects: effects are what computations do when they run.
They treat effect types as a row of effects and annotate each function
arrow with an effect row.
For modularity, they implement \emph{parametric effect polymorphism}
via row polymorphism~\citep{Remy94,Leijen05}.
For example, a standard application function in
\Feps~\citep{XieBHSL20}, a core calculus of \Koka, has type:
\[
  \forall\evar . (\Int\to^\evar\TUnit) \to \Int \to^\evar \TUnit
\]
It is polymorphic in its effects $\evar$, which must agree with the
effect performed by its first argument.

Capability-based effect systems, as in the language
\Effekt~\citep{BrachthauserSO20,BrachthauserSLB22} and an extension to
\Scala 3~\citep{BoruchGruszeckiOLLB23}, adopt a contextual reading of
effects: effects are capabilities provided by the context.
Treating effects as capabilities enables a notion of \emph{contextual
effect polymorphism}~\citep{BrachthauserSO20} which allows
effect-polymorphic reuse of functions without effect variables.
For example, an uncurried application function in
\SystemC~\citep{BrachthauserSLB22}, a core calculus of \Effekt, has
type:
\[
  \carr{f:\carrsingle{\Int}{}{\TUnit}}{\Int}{\TUnit}
\]
The argument $f$ is a capability.
It is a second-class function that cannot be returned as a value.
It can use any capabilities the context provides.
We write $\To$ for second-class functions.
For a curried application function, which requires returning a
function, we must capture capabilities in types:
\[
  \carr{}{f:\carrsingle{\Int}{}{\TUnit}}{(\carrsingle{\Int}{}{\TUnit}\At\{f\})}
\]
Its result has type $(\carrsingle{\Int}{}{\TUnit}\At\{f\})$.
As well as specifying an argument and result types as usual, this type
also includes a \emph{capture set} $\{f\}$ which records that the
returned function may use the capability $f$ bound by the argument
type $(f:\carrsingle{\Int}{}{\TUnit})$.

Though row-based and capability-based effect systems are both
well-studied, their relationship is not.
In this paper, we aim to bridge this gap in the literature by
encoding both styles of effect systems into a uniform framework.
\citet{YoshiokaSI24} propose a parameterised calculus which can be
instantiated to various row-based effect systems, but they point out
that it is challenging future work to extend their approach to
capability-based effect systems.
Row-based and capability-based effect systems differ significantly in
both theoretical foundations and interpretations of effects.
Moreover, their mechanisms for tracking effects are \emph{entangled}
with other features such as functions.
For instance, as we have seen above, a function arrow in \Feps is not
only a standard function type but also provides effect annotations.
Similarly, a function arrow in \SystemC may bind capabilities.
The entanglement of effect tracking with such features is the central
challenge in analysing the differences between such effect systems.

An alternative foundation for effect systems has recently emerged in
the form of \emph{modal effect types} (\Met)~\citep{TangWDHLL25}, a
novel approach to effect systems based on multimodal type
theory~\citep{GratzerKNB21,Gratzer23,KavvosG23}.
\Met \emph{decouples} effect tracking from standard type and term
constructs via modalities.
For instance, an application function in \Met has a plain function
type $(\Int\to\TUnit)\to\Int\to\TUnit$.
This type imposes no restriction on how effects from the context may
be used.
To control the use of effects, we can add modalities to the type.
For example, the type
$\boxwith{\aeq{\code{yield}}}(\Int\to\TUnit)\to\Int\to\TUnit$
restricts the argument function to use only the operation
$\code{yield}$ by wrapping it with the \emph{absolute modality}
$\aeq{\code{yield}}$ (modalities have higher precedence than function
arrows); the type
$\boxwith{\aeq{}}(\Int\to\TUnit)\to\boxwith{\aeq{}}(\Int\to\TUnit)$
restricts both the argument and result functions to be pure.

\citet{TangWDHLL25} focus on the pragmatics of \Met, especially how
modalities enable concise type signatures of higher-order functions
without losing modularity.
In this paper we exploit the observation that the decoupling of effect
tracking via modalities leads to a tangible increase in flexibility
and expressivity compared to typical effect systems whose effect
tracking is entangled with other features.
Such decoupling allows us to encode a range of effect systems,
including those based on rows and capabilities, in a uniform
framework.

\medskip

We introduce \Metp{\mcT}, a System F-style core calculus with modal
effect types parameterised by an \emph{effect structure} $\mcT$.
The effect structure is our main extension to \Met~\citep{TangWDHLL25}.
An effect structure, inspired by prior work on abstracting row and effect
types~\citep{rose,HubersM23,YoshiokaSI24}, defines the structure of
effect collections.
\Met hardwires the underlying effect structure to
scoped rows~\citep{Leijen05}.
In contrast, \Metp{\mcT} allows us to smoothly account for the
different treatments of effect collections adopted by different effect
systems, such as sets~\citep{BauerP13,BrachthauserSLB22}, simple
rows~\citep{rose}, %
and scoped rows~\citep{Leijen05,koka,frank}.
Parameterising by the effect structure enables us to separate the
bureaucracy of managing effect collections from our main concern which
is how to use modalities to encode different effect tracking
mechanisms.

\Metp{\mcT} has two further extensions to \Met.
The first extension is \emph{modality-parameterised handlers}.
This is a natural generalisation of effect handlers to be
parameterised by a modality which is used to wrap continuations. This
extension is crucial for the encodings of \Feps and \SystemC as we
will see in \Cref{sec:overview-handlers}.
The second extension is \emph{local labels}, a minimal extension which
allows us to dynamically generate operation
labels~\citep{VilhenaP23}. This extension is crucial for encoding
{named handlers}~\citep{BiernackiPPS20,ZhangM19,BrachthauserSO20}
(also called lexically-scoped handlers) as adopted in some languages,
especially those with capability-based effect systems like \Effekt.

As the main novelty of this paper,
we encode, as \emph{macro translations}~\citep{Felleisen91}, various
effect systems based on rows and capabilities into our uniform
framework \Metp{\mcT}.
We prove that our encodings preserve typing and operational semantics.
Our encodings do not heavily alter the structure of programs but
mostly merely insert terms for manipulating modalities;
our semantics preservation theorems establish a strong correspondence
between the behaviours of source calculi and their translations.
Our primary case studies are encodings of \Feps~\citep{XieBHSL20}, a
core calculus of \Koka with a row-based effect system, and of
\SystemC~\citep{BrachthauserSLB22}, a core calculus of \Effekt with a
capability-based effect system.
By encoding effect systems into a uniform framework, we can directly
reason about the differences the effect tracking mechanisms of
different effect systems
(\Cref{sec:overview-comparison,sec:overview-handler-comparison}).
Our encodings also offer practical insights for language designers
(\Cref{sec:discussion}).

Beyond analysing differences between effect systems, \Metp{\mcT} opens
up interesting future research directions. First, \Metp{\mcT} gives a
uniform intermediate representation for different effect systems which
enables us to design type-directed optimisations without restricting
ourselves to a specific effect system.
Second, \Metp{\mcT} allows us to design a new effect system by
directly giving its encoding into \Metp{\mcT} instead of starting from
scratch. Type soundness and effect safety of \Metp{\mcT} guarantee the
corresponding properties hold for the new effect system.

\medskip

The main contributions of this paper are as follows.
\begin{itemize}
  \item We give a high-level overview of \Metpt and a high-level
    overview of how to encode row-based and capability-based effect
    systems into \Metpt which we use to compare row-based and
    capability-based effect systems (\Cref{sec:overview}).
  \item We formally define \Metp{\mcT} (\Cref{sec:core-calculus})
    including our three extensions to \Met: effect structures,
    modality-parameterised handlers, and local labels. We prove type
    soundness and effect safety of \Metp{\mcT} for any effect structure
    $\mcT$ satisfying certain natural validity conditions.
  \item We formally define the encoding of \Feps, a core calculus with
    a row-based effect system \`a la \Koka, into \Metp{\emtScp} with
    the theory $\emtScp$ for scoped rows
    (\Cref{sec:row-based-effect-systems}).
  We prove the encoding preserves types and semantics.
  \item We formally define the encoding of \SystemC, a core calculus
    with a capability-based effect system \`a la \Effekt, into
    \Metp{\mcS} with the theory $\mcS$ for sets
    (\Cref{sec:capability-based-effect-systems}).
  We prove the encoding preserves types and semantics.
  \item We discuss encodings of further effect systems, practical
    insights for language design provided by our encodings, as well as
    potential extensions to \Metp{\mcT} (\Cref{sec:more-encodings}).
\end{itemize}
\Cref{sec:related-work} discusses related and future work.
The full specifications, proofs, and appendices can
be found in the extended version of the paper~\citep{TangS25arxiv}.

\section{Overview}
\label{sec:overview}

In this section we give a high-level overview of the main ideas of the
paper.
We begin with a brief introduction to modal
effects~\citep{TangWDHLL25} in \Metp{\mcT} and examples of effect
theories $\mcT$.
We briefly describe the row-based effect system of
\Feps~\citep{XieBHSL20} and the capability-based effect system of
\SystemC~\citep{BrachthauserSLB22} along with their encodings into
\Metp{\mcT}.
We use these encodings to directly compare the different systems
in a uniform framework.
We specifically consider encodings of the different kinds of effect
handlers provided by the different systems.
We also briefly discuss the results of encoding other effect systems
in \Metp{\mcT}.

\subsection{Modal Effects and \Metp{\mcT}}
\label{sec:met-introduction}

\Metp{\mcT} is a System F-style core calculus.
Every well-typed term in System F is also well-typed in \Metp{\mcT}.
For example, we may define a higher-order application function as
follows.
\begin{prog}
  \namei{app}{\Metpt}\defeq
  \lambda f^{\Int\to \TUnit} . \lambda x^\Int . f\; x
  \Hcolon (\Int\to\TUnit)\to\Int\to\TUnit
\end{prog}
We use meta-level macros defined by $\doteq$ in red to refer to code
snippets.

\subsubsection{Effect Contexts}
\Metpt adopts a contextual reading of effects.
Effectful operations are ascribed a type signature, either globally or
locally.
For our examples we begin by assuming global operations $\code{yield}
: \Int \sto \TUnit$ and $\code{ask}:\TUnit\sto\Int$.
Typing judgements include an \emph{ambient effect context} which
tracks the operations that may be performed.
Consider the following function.
\begin{prog}
  \Hvdash \namei{gen}{\Metpt}\defeq
  \lambda x^\Int . \Do\code{yield}\;x
  \Hcolon \Int \to \TUnit
  \Hatmode{\code{yield}}
\end{prog}
It has type $\Int \to \TUnit$.
When applied it performs the $\code{yield}$ operation using the $\Doy$
syntax.
The judgement specifies the effect context with the syntax
$\!\!\atmode{\code{yield}}$, which tracks the possibility of
performing $\code{yield}$.
We can now apply \namei{app}{\Metpt} to \namei{gen}{\Metpt} and $42$ as follows.
\begin{prog}
  \Hvdash (\lambda f^{\Int\to \TUnit} . \lambda x^\Int . f\;x)~
    (\lambda x^\Int . \Do\code{yield}\;x)~42
    \Hcolon \TUnit
    \Hatmode{\code{yield}}
\end{prog}
There is a natural notion of subeffecting on effect contexts.
The following judgement is also valid.
\begin{prog}
  \Hvdash
  \lambda x^\Int . \Do\code{yield}\;x
  \Hcolon \Int \to \TUnit
  \Hatmode{\code{yield},\code{ask}}
\end{prog}

\subsubsection{Absolute Modalities}
\label{sec:overview-absolute}
Effect contexts specified by $\!\!\atmode{E}$ belong to typing
judgements instead of types.
As discussed in \Cref{sec:introduction}, \Metpt uses modalities to
track effects in types.
An \emph{absolute modality} $\aeq{E}$ allows us to specify a new
effect context $E$ in types different from the ambient one.
For example, consider the following typing derivation.
\begin{prog}
  \inferrule*
  {
  \lockwith{\aeq{\code{yield}}} \Hvdash
    \lambda x^\Int . \Do\code{yield}\;x
    \Hcolon \Int \to \TUnit
    \Hatmode{\code{yield}}
  }
  {
  \Hvdash \namei{gen'}{\Metpt}\defeq
    \Mod_{\aeq{\code{yield}}}~(\lambda x^\Int . \Do\code{yield}\;x )
    \Hcolon \boxwith{\aeq{\code{yield}}}(\Int \to \TUnit)
    \Hatmode{F}
  }
\end{prog}
This term has type $\boxwith{\aeq{\code{yield}}}(\Int \to \TUnit)$.
We highlight modalities in blue when they appear in types.
The syntax $\Mod_{\aeq{\code{yield}}}$ introduces an absolute modality
$\aeq{\code{yield}}$ which specifies a singleton effect context of
$\code{yield}$ and uses it to override the ambient effect context
$F$.
The typing judgement of the premise uses the new effect context
$\code{yield}$ as its ambient effect context.
The lock $\lockwith{\aeq{\code{yield}}}$ tracks the switch of the
effect context and controls the accessibility of variables on the left
of it.
Only variables that are known not to use effects other than
${\code{yield}}$ may be used.
This is important to ensure effect safety.
For example, consider the following invalid judgement.
\begin{prog}
  {
  f:\Int\to\TUnit \Hnotvdash
  \Mod_{\aeq{\code{yield}}}~(\lambda x^\Int . f\;x)
  \Hcolon \boxwith{\aeq{\code{yield}}}(\Int \to \TUnit)
  \Hatmode{\code{ask}}
  }
\end{prog}
This program is unsafe as $f$ may invoke $\code{ask}$ which we must
not use under effect context $\code{yield}$ specified by the modality
$\aeq{\code{yield}}$.
\Metpt rejects this judgement as it relies on the following invalid
judgement for the inner function.
\begin{prog}
  {
    f:\Int\to\TUnit, \lockwith{\aeq{\code{yield}}} \Hnotvdash
    \lambda x^\Int . f\;x
    \Hcolon \Int \to \TUnit
    \Hatmode{\code{yield}}
  }
\end{prog}
This typing judgement is invalid as the lock
$\lockwith{\aeq{\code{yield}}}$ prevents the use of $f$.
To make it valid, we can annotate the binding of $f$ with the modality
$\aeq{\code{yield}}$ as $f \varb{\aeq{\code{yield}}} \Int \to \TUnit$.
This annotation tracks that the function $f$ may only use the
operation $\code{yield}$.
Such annotated bindings are introduced by modality elimination.
For instance, we can eliminate the modality of \namei{gen'}{\Metpt}
and then apply it via the $\Letm{}{}\!\!$ syntax as follows (where we
elide the typing of the bound term).
\begin{prog}
  \inferrule*
  {
    ... \\
    f\varb{\aeq{\code{yield}}}{\Int\to\TUnit} \Hvdash
    f\;42
    \Hcolon \TUnit
    \Hatmode{\code{yield}}
  }
  {
    \Hvdash
    \Letm{}{\aeq{\code{yield}}} f
      = \Mod_{\aeq{\code{yield}}}~(\lambda x^\Int . \Do\code{yield}\;x)
      \In f\;42
    \Hcolon \TUnit
    \Hatmode{\code{yield}}
  }
\end{prog}
The term $\lambda x^\Int.\Do\code{yield}\;x$ inside the modality
$\aeq{\code{yield}}$ is bound to $f$.
The binding of $f$ is annotated with this absolute modality.
Consequently, the use of $f$ in $f\;42$ requires the ambient effect
context to contain the operation $\code{yield}$.
In general, whether a variable binding $f\varb{\mu} A$ can be used
after a lock $\lockwith{\nu}$ is controlled by a modality
transformation relation which we will introduce in
\Cref{sec:modalities}.

\subsubsection{Relative Modalities}
\label{sec:overview-relative}
As well as being able to specify a fresh effect context from scratch
with an absolute modality, \Metpt also has \emph{relative modalities}
$\aex{D}$ which allow us to extend the ambient effect context with an
\emph{extension} $D$.
For instance, consider the following derivation.
\begin{prog}
  \inferrule*
  {
  \lockwith{\aex{\code{yield}}} \Hvdash
  \lambda x^\Int . \Do\code{yield}\;(\Do\code{ask}\;\Unit)
  \Hcolon \Int \to \TUnit
  \Hatmode{\code{yield},\code{ask}}
  }
  {
  \Hvdash
  \Mod_{\aex{\code{yield}}}~(\lambda x^\Int . \Do\code{yield}\;(\Do\code{ask}\;\Unit) )
  \Hcolon \boxwith{\aex{\code{yield}}}(\Int \to \TUnit)
  \Hatmode{\code{ask}}
  }
\end{prog}
The relative modality $\aex{\code{yield}}$ extends the ambient effect context
$\code{ask}$ with the operation $\code{yield}$.
Consequently, the inside function can use both operations.
Relative modalities are especially useful for giving composable types
to effect handlers.
We refer to \citet{TangWDHLL25} for further details.
We use relative modalities in the encoding of \SystemC as we will see
in \Cref{sec:overview-encoding-caps}.

\subsubsection{Effect Structures}

Improving on \Met, we parameterise \Metp{\mcT} by an effect structure
$\mcT$ which defines the well-formedness relations and equivalence
relations for extensions and effect contexts as well as a subeffecting
relation $E\subtype F$.
In the remainder of the overview, we will use two effect structures:
$\mcS$, which models effect collections as sets of operations, to
encode capability sets in \SystemC, and $\emtScp$, which models effect
collections as scoped rows of operations, to encode effect rows in
\Feps.
Sets are unordered and allow only one occurrence of each label,
whereas scoped rows allow repeated labels and identify rows up to
reordering of non-identical labels.
Both theories support effect variables.
Theory $\mcS$ allows arbitrary numbers of effect variables while
theory $\emtScp$ only allows at most one effect variable in each row
following row polymorphism~\citep{Remy94,Leijen05}.

\subsection{Rows as Modal Effects}
\label{sec:overview-encoding-rows}

\Koka~\citep{KokaLang} has an effect system based on scoped
rows~\citep{Leijen05}.
\Feps~\citep{XieBHSL20} is a core calculus underlying \Koka.
To encode \Feps, we use the effect structure $\emtScp$ of scoped rows.

Function types in \Feps have the form $\earr{A}{B}{E}$, where $E$ is
an effect row that specifies the effects that the function may use.
Effect types in \Feps are entangled with function types.
The key idea of our encoding is to decouple the effect type $E$ from
the function arrow via an absolute modality in \Metp{\emtScp}.
Writing $\tr{-}$ for translations, we translate a function type as
follows.
\begin{prog}
\bl
  \transl{\earr{A}{B}{E}} = \boxwith{\aeq{\tr{E}}}(\tr{A}\to \tr{B})
\el
\end{prog}
An effectful function in \Feps is decomposed into an absolute modality
and a standard function in \Metp{\emtScp}.
For instance, consider the following first-order effectful function in
\Feps which invokes the operation $\code{yield}$ from
\Cref{sec:met-introduction}.
\begin{prog}\bl
  \namei{gen}{\Fepss} \defeq
  \lambda^{\code{yield}} x^\Int . \Do\Yield\;x
  \Hcolon \Int\to^{\code{yield}}\TUnit
\el\end{prog}
(Each $\lambda$-abstraction in \Feps is annotated with an effect row.)
The translation of \namei{gen}{\Fepss} is exactly the function
\namei{gen'}{\Metpt} defined in \Cref{sec:overview-absolute}. We repeat
its definition here for easy reference.
\begin{prog}\bl
  \tr{\namei{gen}{\Fepss}}
  \treq
    \Modhl{\aeq{\code{yield}}}~(\lambda x^\Int . \Do\code{yield}\;x )
    \Hcolon \boxwith{\aeq{\code{yield}}}(\Int \to \TUnit)
\el\end{prog}
On the term level, we insert a modality introduction
$\Mod_{\aeq{\code{yield}}}$ for the $\lambda$-abstraction,
corresponding to the type-level modality $\aeq{\code{yield}}$.
We colour ${\Mod}$ in grey in the translations.
The black parts remain terms with valid syntax
and provide intuitions on the translation.
Remember that the modality ${\aeq{\code{yield}}}$ is a first-class
type constructor and not part of the function type.

As a more non-trivial example including both higher-order functions
and function application, consider the effect-polymorphic application
function in \Feps from \Cref{sec:introduction}.
\begin{prog}\bl
  \namei{app}{\Fepss} \defeq
  \Lambda\evar^\Effect . \lambda f^{\Int\to^\evar\TUnit} . \lambda^{\evar} x^{\Int} .  f\;x
  \Hcolon \forall\evar . (\Int\to^\evar\TUnit) \to \Int \to^\evar \TUnit
\el\end{prog}
This function abstracts over an effect variable $\evar$ which stands
for the effects performed by the argument $f$.
Both $f$ and the inner $\lambda$-abstraction are annotated with
$\evar$ as $f$ is invoked so the effects must match up.
The outer $\lambda$-abstraction is pure as partial application is
pure.
The encoding of \namei{app}{\Fepss} in \Metp{\emtScp} is as follows.
\begin{prog}\ba{r@{}c@{}l}
  \transl{\namei{app}{\Fepss}}
  &\treq&
  \force{\Lambda\evar^\Effect} . \Modhl{\aeq{}}\;(\lambda f^{\boxwith{\aeq{\evar}}(\Int\to\TUnit)} .
    \Modhl{\aeq{\evar}}\;\text{$(\lambda x^\Int . \Letmhl{}{\aeq{\evar}}\; f' = f \In f'\;x))$} \\
  &\Hcolon& \forall\evar^\Effect.\boxwith{\aeq{}}(\text{$\boxwith{\aeq{\evar}}(\Int\to\TUnit)\to \boxwith{\aeq{\evar}}(\Int \to \TUnit)$})
\ea\end{prog}
Each function arrow is associated with an absolute modality reflecting
the effects performed by that function.
For the pure function arrow in the middle, we use the empty absolute
modality $\boxwith{\aeq{}}$.
The type abstraction $\Lambda\evar$ and quantifier $\forall\evar$ are
preserved.
We omit kinds when obvious.
In the term, in addition to modality introduction, we also insert a
modality elimination for $f$ before applying it to $x$.
The use of $f'$ requires that the effect variable $\evar$ is present
in the effect context.

Our term translation from \Feps to \Metp{\emtScp} explicitly decouples
the effect tracking mechanism of \Feps from function abstraction and
application.
This reveals the essence of effect tracking in \Feps.
Each $\lambda$-abstraction $\lambda^E x. M$ in \Feps is encoded in
\Metp{\emtScp} by inserting a modality introduction
$\Mod_{\aeq{\tr{E}}}$.
This demonstrates that a function in \Feps carries its effects.
Each function application $V\;W$ in \Feps is encoded by inserting a
modality elimination $\Letm{}{\aeq{\tr{E}}} f = \tr{V} \In f\;\tr{W}$
for function $V$ of type $\earr{A}{B}{E}$.
This demonstrates that when a function is invoked in \Feps, we need to
provide all effects it may use, as the elimination of $\aeq{\tr{E}}$
and use of $f$ together require $\tr{E}$ to be present in the effect
context.

We give the full encoding of \Feps into \Metp{\emtScp} in
\Cref{sec:row-based-effect-systems}.

\subsection{Capabilities as Modal Effects}
\label{sec:overview-encoding-caps}

\Effekt~\citep{EffektLang} has an effect system based on capabilities.
\SystemC~\citep{BrachthauserSLB22} is a core calculus underlying
\Effekt.
Since \SystemC tracks capabilities as sets, we use the effect structure
$\mcS$ of sets to encode it.

Functions in \SystemC are called \emph{blocks}.
Blocks are second-class in that they must be fully applied and cannot
be returned.
Capabilities are introduced as block variables.
Unlike row-based effect systems which have a separate notion of
operation labels, \SystemC interprets effects as capabilities provided
by the context.
A capability can only be used if it is in scope.

\newcommand{\yld}{{\var{y}}}

\subsubsection{First-Order Blocks}
\label{sec:overview-encoding-caps-blocks}
Let us start with a simple example. Supposing we have a capability
$\var{\yld}:\carrsingle{\Int}{}{\TUnit}$ (for yielding integers) in
the context, we can construct the following block.
\begin{prog}\bl
\var{\yld}:^\ast\carrsingle{\Int}{}{\TUnit} \Hvdash
  \namei{gen}{C} \defeq \blocksingle{x:\Int}{}{\var{\yld}(x)}
  \Hcolon \carrsingle{\Int}{}{\TUnit} \Hmid \{\var{\yld}\} \\
\el\end{prog}
The star $\ast$ on the binding of $\var{\yld}$ indicates that this
block variable is a capability.
Braces delimit blocks. Arguments are wrapped in parentheses. Double
arrows emphasise that blocks are second-class.
The block applies the capability $\var{\yld}$ from the context to the
argument $x$.
The typing judgement tracks a capability set $\{\var{\yld}\}$, which
contains all capabilities that the block may use.
The block arrow itself has no capability annotation.
The above block is simply encoded as a $\lambda$-abstraction in
\Metp{\mcS}.%
\footnote{%
  If we strictly follow the encoding of \SystemC in
  \Cref{sec:encoding-caps}, there would be an extra identity modality
  for the translated function. This modality is crucial for keeping
  the encoding systematic. We omit such identity modalities in the
  overview.}
\begin{prog}
\bl
\shat{\var{\yld}} : \Effect,
\var{\yld} : \boxwith{\aeq{\shat{\yld}}}(\Int \to \TUnit),
\hat{\var{\yld}}\varb{\aeq{\shat{\yld}}}\Int \to \TUnit
\Hvdash
  \tr{\namei{gen}{C}}
  \treq \lambda x^\Int . \hat{\var{\yld}}\;x
  \Hcolon \Int \to \TUnit \Hatmode{\shat{\var{\yld}}} \\
\el
\end{prog}
The most interesting aspect of the encoding is how we encode the
capability $\yld:\carrsingle{\Int}{}{\TUnit}$ in the context.
A capability $\var{\yld}$ in \SystemC can appear as both a type and a
term.
We introduce an effect variable $\shat{\yld}$ of kind $\Effect$ to
represent it at the type level.
We omit kinds in the context when obvious.
We encode the capability $\var{\yld}$ itself as a term variable of
type $\boxwith{\aeq{\shat{\yld}}}(\Int\to\TUnit)$, where the absolute
modality makes sure that whenever $\yld$ is invoked the effect
variable $\shat{y}$ must be present in the effect context.
To avoid repeatedly writing modality eliminations, the modality of
$\yld$ is immediately eliminated and bound to $\hat{\yld}$ after
$\yld$ is introduced.
The translation of the block body directly applies $\hat{\yld}$ to
$x$.
The effect variable $\shat{\yld}$ must be in the effect context
specified by $\!\!\atmode{\shat{\yld}}$ because $\hat{\yld}$ is used.

\subsubsection{Boxes}
In \SystemC we can turn a second-class block into a first-class value
by boxing it.
\begin{prog}\bl
{\yld}:^\ast\carrsingle{\Int}{}{\TUnit} \Hvdash
  \namei{gen'}{C} \defeq \text{$\CBox$}\blocksingle{x:\Int}{}{{\yld}(x)}
  \Hcolon \carrsingle{\Int}{}{\TUnit} \text{$\At$} \{\yld\} \\
\el\end{prog}
This typing judgement has no capability set as it is for values which
are always pure in \SystemC.
The value has type $\carrsingle{\Int}{}{\TUnit} \text{$\At$}\{\yld\}$,
which means it is a boxed block of type $\carrsingle{\Int}{}{\TUnit}$
with capability set $\{\yld\}$.
The block may only use the capability $\yld$.
We can unbox a boxed block $V$ via $\CUnbox V$ which gives back a
second-class block.
We simply encode boxing and unboxing as modality introduction and
elimination in \Metp{\mcS}.
For instance, we encode \namei{gen'}{C} as follows.
\begin{prog}
\bl
\shat{\var{\yld}},
\var{\yld} : \boxwith{\aeq{\shat{\yld}}}(\Int \to \TUnit),
\hat{\var{\yld}}\varb{\aeq{\shat{\yld}}}\Int \to \TUnit
\Hvdash
  \tr{\namei{gen'}{C}}
  \treq \force{\Modhl{\aeq{\shat{\yld}}}}\;(\lambda x^\Int . \hat{\var{\yld}}\;x)
  \Hcolon \boxwith{\aeq{\shat{\yld}}}(\Int \to \TUnit) \Hatmode{\cdot} \\
\el
\end{prog}
The capability set annotation $\!\!\At\{\yld\}$ in the type is encoded
as the absolute modality $\aeq{\shat{\yld}}$.
The encoding shows the connection between boxes of \SystemC and
modalities, supporting the claim of \citet{BrachthauserSLB22} that
boxes of \SystemC are inspired by modal
connectives~\citep{ChoudhuryK20}.

\subsubsection{Higher-Order Blocks}
\label{sec:overview-encoding-caps-higher-order-block}
The situation become more involved when we consider higher-order
blocks that take other blocks as arguments.
This is because \SystemC entangles the introduction and tracking of
capabilities with blocks, especially their construction and
application.

Let us consider the uncurried and curried application functions
(blocks) introduced in \Cref{sec:introduction}.
\begin{prog}
\ba{l@{}l@{}c@{}l@{}c@{}l}
&\namei{app}{C} &\defeq& \block{x:\Int}{f:{\carrsingle{\Int}{}{\TUnit}}}{f(x)}
&\Hcolon& \carr{\Int}{f : \carrsingle{\Int}{}{\TUnit}}{\TUnit} \\
&\namei{app'}{C} &\defeq&
  \blocksingle{}{f:{\carrsingle{\Int}{}{\TUnit}}}{
  \CBox\blocksingle{x:\Int}{}{f(x)}}
  &\Hcolon& \carr{}{f:\carrsingle{\Int}{}{\TUnit}}{(\carrsingle{\Int}{}{\TUnit}\At\{f\})} \\
\ea
\end{prog}
These are block constructions.
The first block \namei{app}{C} binds the integer parameter $x$ first
because \SystemC requires value parameters like $x$ to appear before
blocks parameters like $f$ in a parameter list.
In addition to behaving like standard $\lambda$-abstractions, block
constructions also play an important role in capability tracking.
Specifically:
\begin{enumerate}[topsep=2pt]
  \item Both \namei{app}{C} and \namei{app'}{C} bind a capability
    ${f:\carrsingle{\Int}{}{\TUnit}}$ for their block bodies. This
    capability $f$ can also be used in the type as shown in the type
    of \namei{app'}{C}.
  \item For soundness, \SystemC assumes that this new capability $f$
    is called directly at least once in the block body even if $f$ may
    actually not be used. (The capability $f$ is indeed called
    directly in \namei{app}{C} but not so in \namei{app}{C'} as being
    boxed.) Consequently, the capability $f$ is always added to the
    capability set of the block body tracked by the typing judgement.
  \item In addition to the new capability $f$, both \namei{app}{C} and
  \namei{app'}{C} allow any capability from the context to be called
  as well.
\end{enumerate}
Our encoding of block constructions in \Metp{\mcS} takes account of
these three constraints and exposes them explicitly via modalities.
For instance, \namei{app}{C} is encoded as follows.
\begin{prog}\ba{r@{}c@{}l}
\transl{\namei{app}{C}} &\treq&
  \Lambda \shat{f} . \Modhl{\aex{\shat{f}}}~(
     \lambda x^\Int . \lambda f^{\boxwith{\aeq{\shat{f}}}(\Int\to \TUnit)} .
    {\Letmhl{}{\aeq{\shat{f}}} \hat{f} = f \In}
    \hat{f}\;x) \\[.2ex]
  &\Hcolon& \forall\shat{f} . \boxwith{\aex{\shat{f}}}(
     \Int \to \boxwith{\aeq{\shat{f}}}(\Int\to \TUnit) \to \TUnit
  )
\ea\end{prog}
For (1), in order to allow the term variable $f$ to appear in types,
we introduce an effect variable $\shat{f}$ and wrap the type
$\Int\to\TUnit$ of the argument $f$ with an absolute modality
$\aeq{\shat{f}}$.
The effect variable $\shat{f}$ represents the term variable $f$ at the
level of types.
Additionally, we immediately eliminate the modality of $f$ to
$\hat{f}$.
As a result, in the context of the application $\hat{f}\;x$ we have
three bindings of $\shat{f}$, $f$, and $\hat{f}$, consistent with the
translation of capability $\yld$ as shown in
\Cref{sec:overview-encoding-caps-blocks}.
For (2) and (3), we use a relative modality $\aex{{\shat{f}}}$ to
wrap the whole function type.
The relative modality adds the effect variable $\shat{f}$ to the
ambient effect context for the function to use, in accordance with
(2).
The relative modality also still allows the function to use effects
from the ambient effect context as we have seen in
\Cref{sec:overview-relative}, in accordance with (3).

The translation of \namei{app'}{C} is similar.
\begin{prog}\ba{r@{~}c@{~}l}
\transl{\namei{app'}{C}} &\treq&
  \Lambda \shat{f} . \Modhl{\aex{\shat{f}}}~(\lambda f^{\boxwith{\aeq{\shat{f}}}(\Int\to \TUnit)} .
    {\Letmhl{}{\aeq{\shat{f}}} \hat{f} = f \In}
    \Modhl{\aeq{\shat{f}}}~(\lambda x . \hat{f}\;x)) \\[.2ex]
  &:& \forall\shat{f} . \boxwith{\aex{\shat{f}}}(
    \boxwith{\aeq{\shat{f}}}(\Int\to \TUnit) \to \boxwith{\aeq{\shat{f}}}(\Int\to \TUnit)
  )
\ea\end{prog}

In general, the translation of block types from \SystemC to
\Metp{\mcS} is as follows, where we let $A$ and $B$ range over value
types and let $T$ range over block types.
\begin{prog}
\transl{\carr{\ol{A}}{\ol{f:T}}{B}} \;=\;
\forall\ol{\shat{f}} . \boxwith{\aex{\ol{\shat{f}}}}(\ol{\transl{A}}\to \ol{\boxwith{\aeq{\shat{f}}}\transl{T}}\to \transl{B})
\end{prog}
A block type is decomposed into a standard function type with extra
modalities and type quantifiers, which makes explicit exactly how
\SystemC introduces and tracks capabilities.

\subsubsection{Block Calls}
\label{sec:overview-encoding-caps-block-calls}
Blocks must be fully applied.
Assuming we have a capability $\yld : \carrsingle{\Int}{}{\TUnit}$ in
the context, we can apply the blocks \namei{app}{C} and
\namei{app'}{C} to the block \namei{gen}{C} as follows.
\begin{prog}\ba{l@{}l@{}c@{}l@{}c@{}l}
  \yld :^\ast \carrsingle{\Int}{}{\TUnit} \Hvdash& \namei{app}{C}(\namei{gen}{C}, 42)
  &\Hcolon& \TUnit &\Hmid& \{\yld\} \\[.5ex]
  \yld :^\ast \carrsingle{\Int}{}{\TUnit} \Hvdash& \namei{app'}{C}(\namei{gen}{C})
  &\Hcolon& \carrsingle{\Int}{}{\TUnit} \text{$\At$} \{\yld\} &\Hmid& \{\yld\}
\ea\end{prog}
(As blocks must be fully applied, we must additionally pass an integer
to \namei{app}{C} --- in this case $42$.)
These are block calls.
Similar to block constructions, block calls in \SystemC not only pass
arguments to a block but also play an important role in capability
tracking.
Specifically:
\begin{enumerate}[topsep=2pt]
  \item Recall that both \namei{app}{C} and \namei{app'}{C} bind a
    capability $f$. Consequently, when calling them with
    \namei{gen}{C}, \SystemC substitutes $f$ with the capability set
    $\{\yld\}$ of \namei{gen}{C} in types. This is reflected by
    $\!\!\At\{\yld\}$ in the type of calling \namei{app'}{C} (before
    substitution it was $\!\!\At\{f\}$).
  \item Recall that \SystemC assumes the capability $f$ bound by
  \namei{app}{C} and \namei{app'}{C} is called directly.
  Consequently, the capability set of the whole block call must be
  extended with the capability set $\{\yld\}$ of the argument
  \namei{gen}{C}.
  This is reflected by the fact that both typing judgements track the
  capability sets $\{\yld\}$ even though the application of
  \namei{app'}{C} does not call $\yld$ directly.
\end{enumerate}
Our encoding of block calls in \Metp{\mcS} takes account of these two
constraints and exposes them explicitly via modalities.
For instance, our example application of \namei{app}{C} is encoded as
follows.
\begin{prog}\ba{l}
  \shat{\var{\yld}},
  \var{\yld} : \boxwith{\aeq{\shat{\yld}}}(\Int \to \TUnit),
  \hat{\var{\yld}}\varb{\aeq{\shat{\yld}}}\Int \to \TUnit
  \Hvdash
  \\
  \hspace{6em}
  \text{$\Letmhl{}{\aex{\shat{\yld}}}$} f = \tr{\namei{app}{C}}\;\shat{\yld}
  \In f\;42\;(\Modhl{\aeq{\shat{\yld}}}\;\tr{\namei{gen}{C}})
  \Hcolon \TUnit
  \Hatmode{\shat{\yld}}
\ea\end{prog}
For (1), recall that in the translation $\tr{\namei{app}{C}}$ we bind
an effect variable $\shat{f}$ to represent the capability $f$ and wrap
the argument type with an absolute modality $\aeq{\shat{f}}$.
Thus for the application of $\tr{\namei{app}{C}}$, we instantiate the
effect variable $\shat{f}$ with $\shat{\yld}$ and box the argument
$\tr{\namei{gen}{C}}$
with the absolute modality $\aeq{\shat{\yld}}$.
For (2), the elimination of the relative modality $\aex{\shat{\yld}}$
of $\tr{\namei{app}{C}}\;\shat{\yld}$ and the use of $f$ ensure that
$\shat{\yld}$ must be present in the effect context.

The translation of the call of \namei{app'}{C} is similar.
\begin{prog}\bl
  \shat{\var{\yld}},
  \var{\yld} : \boxwith{\aeq{\shat{\yld}}}(\Int \to \TUnit),
  \hat{\var{\yld}}\varb{\aeq{\shat{\yld}}}\Int \to \TUnit
  \Hvdash \\
  \hspace{6em}
  \text{$\Letmhl{}{\aex{\shat{\yld}}}$} f = \tr{\namei{app'}{C}}\;\shat{\yld}
  \In f\;(\Modhl{\aeq{\shat{\yld}}}\;\tr{\namei{gen}{C}})
  \Hcolon \boxwith{\aeq{\shat{\yld}}}(\Int \to \TUnit)
  \Hatmode{\shat{\yld}}
\el\end{prog}

As with the encoding of \Cref{sec:overview-encoding-rows}, the
encoding of \SystemC in \Metp{\mcS} helps elucidate exactly how the
capability tracking of \SystemC is entangled with constructs like
block constructions and calls.
Modality introduction and elimination reveal the hidden mechanisms.

We give the full encoding of \SystemC into \Metp{\mcS} in
\Cref{sec:capability-based-effect-systems}.

\subsection{Comparing Rows and Capabilities}
\label{sec:overview-comparison}

As a uniform framework, \Metpt allows us to directly compare how
effect tracking differs in different effect systems without dealing
with the subtleties in their typing and reduction rules.

For instance, let us compare the encoding of function types and
polymorphic types in \Feps with the encoding of block types and box
types in \SystemC.
\begin{prog}\ba{l@{~}r@{~}c@{~}l}
\text{\Feps to \Metp{\emtScp} : } &
  \transl{\earr{A}{B}{E}} &=& \boxwith{\aeq{\tr{E}}}(\tr{A}\to \tr{B}) \\[.5ex]
  & \transl{\forall\evar . A} &=& \forall\evar . \tr{A} \\[.5ex]
\text{\SystemC to \Metp{\mcS} : } &
\transl{\carr{\ol{A}}{\ol{f:T}}{B}} &=&
\forall\ol{\shat{f}} . \boxwith{\aex{\ol{\shat{f}}}}(\ol{\transl{A}}\to \ol{\boxwith{\aeq{\shat{f}}}\transl{T}}\to \transl{B}) \\[.5ex]
&\transl{T \At C} &=& \boxwith{\aeq{\tr{C}}}\tr{T}
\ea\end{prog}
From the encodings we can immediately observe two key differences
of \Feps and \SystemC.
\begin{enumerate}[leftmargin=4ex]
\item
The encoding of function types in \Feps is wrapped with an absolute
modality, whereas the encoding of a block type in \SystemC is wrapped
with a relative modality.
The encoding of box types in \SystemC is wrapped with an absolute
modality.
The different modalities reveal a fundamental difference between the
meanings of functions in \Feps and blocks in \SystemC: functions in
\Feps can only use those effects specified in their types, whereas
blocks in \SystemC can use arbitrary effects from the context unless
they are boxed.
\item
The encoding of block types in \SystemC binds a list of effect
variables and wraps each block argument type with an absolute modality
of the corresponding effect variable, whereas the encoding of a
function type in \Feps is much less involved.
Only the encoding of polymorphic types in \Feps binds effect
variables.
The difference in the treatment of argument types reveals that
capabilities in \SystemC act as an implicit form of parametric
polymorphism, abstracting the capabilities used by each block
variable.
This explains why capability-based effect systems do not require
explicit effect variables in many cases where row-based effect systems
do.
\end{enumerate}

\subsection{Encoding Effect Handlers}
\label{sec:overview-handlers}

We have seen how effectful functions in \Feps and \SystemC are encoded
in \Metpt.
These are the most important parts of our encodings, as most effect
systems track effects by giving different intepretations to functions.
Though all effect systems discussed in this paper support effect
handlers, the same ideas apply equally to traditional effect systems
for languages with only built-in effects.
Nonetheless, the encodings of effect handlers in \Feps and \SystemC
are interesting and reveal fundamental differences between the typing
and semantics of effect handlers in these two calculi.
In this section, we first briefly review what effect handlers are and
then show how effect handlers in \Feps and \SystemC are encoded.

\subsubsection{Effect Handlers in \Metpt}
\label{sec:overview-handler-metp}
Effect handlers allow us to customise how to handle effectful
operations.
For instance, we can write a handler to handle the $\code{yield}$
operation defined in \Cref{sec:met-introduction} by summing up all
yielded integers as follows.
\begin{prog}\bl
\namei{sum}{\Metpt} \defeq
  \Handle\; (\Do\code{yield}\;42;\Do\Yield\;37;0) \With \{
    \code{yield}\;p\;r \mapsto p + r\;\Unit
  \}
\el\end{prog}
The computation $\Do\code{yield}\;42;\Do\code{yield}\;37;0$ is handled by
the handler $\{\code{yield}\;p\;r \mapsto p + r\;\Unit\}$.
The handler consists of one operation clause for the operation
$\code{yield}$.
In this operation clause, the variable $p$ of type $\Int$ is bound to
the parameter of the $\code{yield}$ operation, and the variable $r$ of
type $\TUnit\to\Int$ is bound to its {recursively-handled}
continuation.
(This kind of recursive handling is known as \emph{deep
  handlers}~\citep{KammarLO13} in the literature.)
For instance, when the first $\code{yield}$ operation is handled, $p$
is $42$ and $r$ is the continuation $\lambda y^\TUnit .
\Handle\;(\Do\code{yield}\;37;0)\With \{\code{yield}\;p\;r\mapsto
p+r\;\Unit\}$.
The handler clause adds the yielded integer $p$ to the result of the
continuation $r$, thus returning the sum of all handled operations.
The above program reduces to $79$.
Effect handlers also have a return clause which we omit here, but
describe in \Cref{sec:core-calculus}.

\subsubsection{Encoding Effect Handlers in \Feps}
\label{sec:overview-handler-feps}
\Feps does not use the $\Handle \With\!\!$ syntax.
Instead, a handler in \Feps is defined as a handler value, which is a
function that takes an argument to handle.
Consider the following polymorphic handler for the $\code{yield}$
operation.
\begin{prog}
\bl
\namei{sum}{\Fepss} \defeq
  \Lambda \evar . \force{\Handler}\; \{\code{yield}\;p\;r\mapsto p + r\;\Unit\}\;
  \Hcolon \forall\evar . (\TUnit \to^{\code{yield},\evar} \Int) \to^\evar \Int
\el
\end{prog}
The term \namei{sum}{\Fepss} is polymorphic over other effects
$\evar$ that it does not handle.
The $\Handler$ syntax defines a handler, which is a function that
takes an argument of type $\TUnit \to^{\code{yield},\evar} \Int$,
calls this argument with unit and handles the $\code{yield}$
operation.
The continuation $r$ has type $\earr{\TUnit}{\Int}{\evar}$ as it may
use effects abstracted by $\evar$.
For instance, we can apply \namei{sum}{\Fepss} as follows which
reduces to $79$.
\begin{prog}
\namei{sum}{\Fepss}\;E\;(\lambda x^{\TUnit}. \Do\code{yield}\;42;\Do\code{yield}\;37;0)
\end{prog}

We can easily encode \namei{sum}{\Fepss} in \Metp{\emtScp} as a
polymorphic function whose body uses the $\Handle\With\!\!$
syntax to handle the argument.
The main difficulty is that for the handler clause, the continuation
$r$ should have type $\tr{\earr{\TUnit}{\Int}{\evar}} =
\boxwith{\aeq{\evar}}(\TUnit\to\Int)$ following the translation of
function types in \Cref{sec:overview-encoding-rows}.
However, the typing rule of handlers in \citet{TangWDHLL25} only
allows us to give a function type to $r$ with no modality.
To solve this problem, we introduce \emph{modality-parameterised
handlers}.
In \Metp{\mcT}, the handler syntax is annotated with a modality $\mu$
as $\Handle^\mu\;M\With H$.
The continuation $r$ in the handler clause of $H$ now has type
$\boxwith{\mu}(A\to B)$ for some types $A$ and $B$.
With the modality-parameterised handler, we can translate
\namei{sum}{\Fepss} as follows, omitting the details of the
translation of the handler clause, which we name $H'$.
\begin{prog}
\ba{r@{}c@{}l}
\transl{\namei{sum}{\Fepss}} &\treq&
\Lambda\evar . \Modhl{\aeq{\evar}}~(
\lambda f^{\boxwith{\aeq{\code{yield},\evar}}(\TUnit \to \Int)} .
    \Handle^{\aeq{\evar}}\;
 (\Letmhl{}{\aeq{\code{yield},\evar}} f' = f \In
  f'\;\Unit) \With H')\\
&\Hcolon& \forall\evar.\boxwith{\aeq{\evar}}(
      \boxwith{\aeq{\code{yield},\evar}}(\TUnit \to \Int)
      \to \Int)
\ea
\end{prog}
We eliminate the modality of the argument $f$ before applying and
handling it.
The type translation follows the translation given in
\Cref{sec:overview-encoding-rows}.
We give full details of our modality-parameterised handlers in
\Cref{sec:typing-metp} and formally define the translation of handlers
in \Cref{sec:encoding-rows}.

\subsubsection{Encoding Effect Handlers in \SystemC}
\label{sec:overview-handler-systemc}
\SystemC adopts named handlers.
Instead of using operation labels to identify which operation we want
to invoke and handle, in \SystemC each handler binds a fresh
capability in the scope of the handler and handles the use of this
capability.
For instance, we can define a named handler and use it to handle a
computation as follows.
\begin{prog}
\Hvdash \namei{sum}{C} \defeq \force{\Try}\;\{{\yld}^{\Int\To\TUnit}\To \yld(42);\yld(37);0\}
  \With \{p\;r\mapsto p+r(\Unit)\}
\Hcolon \Int \Hmid C
\end{prog}
Handlers in \SystemC use the $\Try\With\!\!$ syntax.
This handler introduces a capability $\yld$ of type
$\carrsingle{\Int}{}{\TUnit}$ in the scope between $\Try$ and
$\!\!\With\!\!$.
We use the capability $\yld$ to yield integers $42$ and $37$.
These two uses of $y$ are handled by the handler, whose operation
clause is similar to what we have seen before, except it uses a
capability in place of an operation label.

The semantics of named handlers in \SystemC differs from that of the
standard effect handlers of \citet{PlotkinP13}.
Named handlers have a generative semantics~\citep{BiernackiPPS20}
which dynamically generates a fresh runtime label for each capability
introduced by a handler.
Dynamic generation guarantees the uniqueness of runtime labels, which
ensures that all uses of a capability must be handled by the handler
that introduces the capability.

To encode the named handlers of \SystemC into \Metp{\mcT}, we need to
resolve this semantic gap.
Adding named handlers to \Metpt would work but is rather heavyweight.
We observe that the essence of named handlers is actually a way to
dynamically generate labels.
We introduce \emph{local labels} to \Metp{\mcT}, which decouple
dynamic generation of labels from named handlers.
This extension is inspired by the local effects of
\citet{BiernackiPPS19}, dynamic labels of \citet{VilhenaP23}, and
fresh labels of the \Links
language~\citep{links-generative-labels}.
The syntax $\Localeffect{\ell:A\sto B} M$ introduces a local label
in the scope of $M$.
The type system ensures the local label $\ell$ cannot escape from $M$.
The semantics generates a fresh label to replace the local label
$\ell$.
We provide the details in \Cref{sec:core-calculus}.
With local labels, we can encode \namei{sum}{C} as follows, omitting
the handler $H'$, which contains an operation clause for $\ell_\yld$
translated from the handler of \namei{sum}{C}.
\begin{prog}
\hspace{-.5em}
\ba{l@{}r@{}c@{}l}
\gray{\vdash}\;& \transl{\namei{sum}{C}} &\treq&
\Localeffect{\ell_\yld : \Int\sto\TUnit}
\Handle^{\aeq{\tr{C}}}\; \\[.5ex]
 &\span\span \quad (\lambda \force{\yld}^{\boxwith{\aeq{\force{\ell_\yld}}}(\Int\to\TUnit)} .
    \Letmhl{}{\aeq{\ell_\yld}} \hat{\yld} = \yld \In
    \hat{\yld}\;42 ; \hat{\yld}\;37 ; 0
  )\;(\Modhl{\aeq{\ell_\yld}}\;(\lambda x^\Int . \Do \ell_\yld\;x))
  \With H'
   \;\gray{:}\; \Int \;\gray{@}\; \dblue{\tr{C}}
\ea
\end{prog}

We introduce a local label $\ell_\yld$ for the handler.
We use the term $\Modhl{\aeq{\ell_\yld}}\;(\lambda x^\Int . \Do
\ell_\yld\;x)$ which invokes the operation $\ell_\yld$ to simulate the
capability introduced by the named handler in \namei{sum}{C}.
The translation of the handled computation binds this function to
$\yld$, eliminates the modality of $\yld$ to $\hat{\yld}$, and uses
$\hat{\yld}$ to yield integers $42$ and $37$.
As in the encoding of effect handlers in \Feps, we also use our
modality-parameterised handlers here and annotate $\Handle$ with the modality
$\aeq{\tr{C}}$.

Our translation $\tr{\namei{sum}{C}}$ is simplified for clarity; it is
actually the result of reducing the full translation of \namei{sum}{C}
by a few steps.
We give the full translation in \Cref{sec:encoding-caps}.

\subsubsection{Comparing Encodings of Effect Handlers}
\label{sec:overview-handler-comparison}

Our encodings of \Feps and \SystemC effect handlers elucidate how
effect handlers differ in these two effect systems.
\begin{enumerate}[leftmargin=4ex]
\item
The \SystemC encoding requires local labels, whereas the \Feps
encoding does not, which reveals the syntactic difference that
capabilities in \SystemC have scopes whereas operation labels in \Feps
do not, and the semantic difference that \SystemC generates fresh
runtime labels for effect handlers, whereas \Feps does not.
\item
The \Feps encoding performs operations directly, whereas the
\SystemC encoding wraps operation invocations into a function (such as
the term $\Modhl{\aeq{\ell_\yld}}(\lambda x^\Int . \Do \ell_\yld\;x)$
in $\tr{\namei{sum}{C}}$) and passes this function to the handled
computation.
This difference shows how in a capability-based effect system such as
\SystemC operations are not directly invoked via their labels but are
instead invoked and passed around as blocks explicitly at the term
level.
\end{enumerate}

\subsection{More Encodings}
\label{sec:overview-more-encodings}

The encodings of \Feps and \SystemC illustrate the core idea of using
modalities to encode and compare effect systems with different
foundations in \Metp{\mcT}.
However, \Metp{\mcT} can be used for much more than encoding
these two effect systems.
In \Cref{sec:more-encodings}, we will discuss two more representative
encodings of effect systems into \Metp{\mcT}, including
\begin{itemize}%
  \item \SystemXi~\citep{BrachthauserSO20}, an early core calculus for \Effekt
  based on capabilities, and
  \item \Fepssn~\citep{XieCIL22}, a core calculus formalising
  scope-safe named handlers of \Koka.
\end{itemize}
These results further demonstrate the expressiveness of \Metp{\mcT} as
a general framework to encode, compare, and analyse effect systems.
We further discuss practical language design insights arising from our
encodings in \Cref{sec:discussion}.

\section{The Core Calculus \Metp{\mcT}}
\label{sec:core-calculus}

\Metp{\mcT} is a System F-style call-by-value core calculus with modal
effect types parameterised by an effect structure $\mcT$.
In addition to the effect structure, \Metp{\mcT} also extends \Met
with local labels and modality-parameterised handlers.
We aim to be self-contained about modal effect types in this paper and
refer to \citet{TangWDHLL25} for a more complete introduction.

\subsection{Syntax}

The syntax of \Metp{\mcT} is as follows.
We highlight syntax relevant to modal effect types and our extensions
of local labels and modality-parameterised handlers in grey.
{
\begin{prog}\small
\bs
  \slab{Types}    &A,B  &::= & \TUnit \mid A\to B \mid
                      \hl{\boxwith{\mu} A} \\
                    & & \mid &\alpha\mid {\forall \alpha^K.A} \\
  \slab{Modalities}     &\mu,\nu &::= & \hl{\aeq{E}} \mid \hl{\aex{D}} \\
  \slab{Extensions} &D &::= & \cdot \mid \ell,D \mid \evar,D \\
  \slab{Effect Contexts}\hspace{-1.5ex} &E,F &::= & \cdot \mid \ell,E \mid \evar,E \\
  \slab{Kinds}          &K &::= &  {\Pure} \mid {\Any} \mid \Effect \\
  \slab{Contexts}      &\Gamma &::=& \cdot  \mid \Gamma, \alpha:K
                                            \mid \Gamma, \hl{\lockwith{{\mu}_{E}}} \\
                                      & &\mid& \Gamma, \hl{x\varb{\mu_E}{A}}
                                            \mid \Gamma, {\ell : A\sto B}
                                            \\
  \slab{Label Contexts}\hspace{-1ex} &\Sigma &::= & \cdot \mid \Sigma,{\ell : A\sto B} \\
\es
\hfill
\hspace{2em}
\bs
  \slab{Terms}\hspace{-1em}  &M,N  &::= & \Unit \mid x \mid \lambda x^A.M \mid M\,N \\
                              & &\mid& \Lambda \alpha^K.V \mid M\;A \mid \hl{\Mod_\mu\,V} \\
                              & &\mid& \hl{\Letm{\nu}{\mu} x = V\In M} \\
                        &     &\mid& \Do\ell\; M \mid \xl{\Localeffect{\ell:A\sto B} M}  \\
                        &     &\mid& \xl{\Handle^\mu\;M\With H} \\
  \slab{Values}\hspace{-1em}  &V,W  &::= & \Unit \mid x \mid \lambda x^A.M \mid \Lambda\alpha^K.V
                                    \mid {\Mod_\mu\, V} \\
                                    & &\mid&
                                    V\;A\mid
                                     \Letm{\nu}{\mu} x = V\In W \\
  \slab{Handlers}\hspace{-1em}      &H    &::= & \{ \Ret x \mapsto N, \ell\; p \; r \mapsto M \} \\
\es
\end{prog}
}

We have two kinds $\Pure$ and $\Any$ for value types and one kind
$\Effect$ for extensions and effect contexts.
By convention, we usually write $\alpha$ for type variables of values
and $\evar$ for effect variables.
We omit kinds when obvious.
We let $A$ range over both value types $A$ and effect contexts $E$,
and let $\alpha$ range over type variables for them in type
abstraction $\Lambda\alpha^K.V$ and type application $M\;A$.

Unlike \citet{TangWDHLL25}, we omit masking, as it is not used by our
encodings.
We discuss future extensions to \Metp{\mcT}, including masking, in
\Cref{sec:future-extensions}.

For simplicity, we assume that each handler only handles one
operation, and fix a global context $\Sigma$ which associates each
global operation label with its type.
An entry $\ell : A \sto B$ indicates that the operation $\ell$ takes
an argument of type $A$ and returns a value of type $B$.
We also support local labels which are introduced by
$\Localeffect{\ell:A\sto B} M$ and maintained in the context $\Gamma$.
We do not distinguish between local and global labels syntactically.

Values include type application and modality elimination whose
subterms are restricted to be values, following the notion of
\emph{complex values} in call-by-push-value~\citep{Levy2004}.
Such complex values are convenient as we adopt a value
restriction~\citep{Wright95} for type abstraction and modality
introduction.

\subsection{Effect Structures}
\label{sec:effect-mode-theory}

An effect structure defines the structure of effect collections, that is,
extensions and effect contexts in \Metpt.
Extensions $D$ and effect contexts $E$ are both syntactically defined
as lists of labels and effect variables.
We overload commas for list concatenation, e.g., $D,E$ and $E,F$ are
both list concatenation.
As usual, list concatenation is associative but not commutative.
The kinding, equivalence, and subtyping (or subeffecting) relations
for them are determined by an effect structure $\mcT$.

\begin{definition}[Effect structure]
An effect structure $\mcT$ is a tuple $\langle :, \equiv \rangle$
of two relations.
\begin{itemize}
  \item $\Gamma \vdash D : \Effect$ is a kinding relation which
  defines well-formed extensions and is preserved by concatenation
  $D,D'$. That is, if $\Gamma \vdash D : \Effect$ and $\Gamma \vdash
  D' : \Effect$, then $\Gamma \vdash D,D' : \Effect$.
  \item $\Gamma \vdash D \equiv D'$ is an equivalence relation for
  well-formed extensions. %
\end{itemize}
\end{definition}

Our definition of an effect structure $\mcT$ is minimal and only includes
the definitions of kinding and equivalence relations for extensions
$D$. We can naturally derive the kinding relation $\Gamma\vdash
E:\Effect$, equivalence relation $\Gamma\vdash E\equiv E'$, and
subeffecting relation $\Gamma\vdash E\subtype E'$ for effect contexts
as follows.
{
\begin{prog} \ba{l@{\qquad\quad }l@{\qquad\quad }l@{\qquad\quad }l}
\inferrule*
{ }
{\Gamma \vdash \cdot : \Effect }

&
\inferrule*
{\Gamma \ni \evar : \Effect}
{\Gamma \vdash \evar : \Effect}

&
\inferrule*
{
  \Gamma \vdash D : \Effect \\
  \Gamma \vdash E : \Effect \\
}
{\Gamma \vdash D,E : \Effect}
\\[.5ex]

\inferrule*
{ }
{\Gamma\vdash \cdot \equiv \cdot}

&
\inferrule*
{ }
{\Gamma\vdash \evar \equiv \evar }
\qquad

&
\inferrule*
{
  \Gamma\vdash D_1 \equiv D_2 \\
  \Gamma\vdash E_1 \equiv E_2
}
{\Gamma\vdash D_1,E_1 \equiv D_2,E_2}

&
\inferrule*
{
  \Gamma \vdash E,E' \equiv F\\
}
{\Gamma\vdash E \subtype F}
\el\end{prog}
}

The kinding and equivalence relations for effect contexts are defined
inductively.
The subeffecting relation is more interesting.
We have $E\subtype F$ if there exists an effect context $E'$ such that
$E,E'$ is well-formed and $E,E'\equiv F$.
It is easy to verify that this subeffecting relation is a preorder.
We often write $:_\mcT$, $\subtype_\mcT$, and $\equiv_\mcT$ to denote
which specific effect structure we refer to.
We sometimes omit the context $\Gamma$ for the equivalence and
subeffecting for brevity.

We give three examples of effect structures, among which $\emtScp$
and $\mcS$ are used for the encoding of \Feps and \SystemC in
\Cref{sec:encoding-rows,sec:encoding-caps}, respectively.

\begin{definition}[Simple Rows]
  $\emtSimp = \langle :_\emtSimp,\equiv_\emtSimp\rangle$ defines
  effect collections as simple rows~\citep{rose} of operation labels.
  Well-formed extensions consist of distinct labels without any effect
  variable.
  $D\equiv D'$ if $D$ is identical to $D'$ modulo reordering of
  labels.
\end{definition}

\begin{definition}[Scoped Rows]
  $\emtScp = \langle :_\emtScp,\equiv_\emtScp\rangle$ defines
  effect collections as scoped rows~\citep{Leijen05} of operation
  labels.
  Well-formed extensions comprise potentially duplicated labels
  without any effect variable.
  $D\equiv D'$ if $D$ is identical to $D'$ modulo reordering of
  distinct labels.
\end{definition}

\begin{definition}[Sets]
  $\mcS = \langle :_\mcS, \equiv_\mcS\rangle$ defines effect
  collections as sets.
  Well-formed extensions are sets of labels and effect variables.
  The equivalence relation is set equivalence.
\end{definition}

Full formal definitions of these effect structures are given in
\Cref{app:effect-mode-theories}.
The effect structure $\emtScp$ corresponds to the treatment of effect
collections as scoped rows used in \Met, modulo the fact that \Met has
presence types for labels in effect contexts, whereas we choose not
to for simplicity.
We discuss extending \Metp{\mcT} with presence types and richer effect
kinds in \Cref{sec:future-extensions}.

Following \citet{YoshiokaSI24}, an effect structure that intuitively
characterises the notion of a collection of effects should satisfy the
following validity conditions.
\begin{definition}[Validity Conditions]
  \label{def:sanity-conditions}
  Validity conditions for an effect structure $\mcT$ are
  \begin{enumerate}
    \item if $E\subtype_\mcT \cdot$ then $E = \cdot$, and
    \item if $\ell\subtype_\mcT \ell',E$ and $\ell\neq\ell'$ then
    $\ell\subtype_\mcT E$.
  \end{enumerate}
\end{definition}
The validity conditions together ensure that if a label $\ell$ is a
subtype of an effect context $E$, then it must syntactically appear in
the effect context $E$.
The first condition prevents us from claiming that some label is
contained in the empty effect context.
The second condition prevents us from identifying two syntactically
different label as the same one.
All effect structures given above satisfy the validity conditions.
Our type soundness and effect safety theorems in
\Cref{sec:type-soundness} are parameterised by any effect structure
satisfying the validity conditions.

\subsection{Modalities}
\label{sec:modalities}

Modalities manipulate effect contexts as follows. %
{
\vspace{-.2\baselineskip}
\begin{prog}
  \act{\aeq{E}}{F} ~=~ E \qquad\qquad\qquad \act{\aex{D}}{F} ~=~ D,F
\vspace{-.2\baselineskip}
\end{prog}
}
The absolute modality $\aeq{E}$ completely replaces the effect context
$F$ with $E$.
The extension modality $\aex{D}$ extends the effect context $F$ with $D$.
Following \Met~\citep{TangWDHLL25}, we write $\mu_F$ as a meta-level
notation for the pair of modality $\mu$ and effect context $F$ where
$F$ is the effect context that $\mu$ manipulates.

\paragraph{Modality Composition}

We define the composition of modalities as follows.
\begin{prog}\bl
\mu \circ {\aeq{E}}  =  \aeq{E}
\qquad
\qquad
\aeq{E} \circ {\aex{D}}  =  \aeq{D,E}
\qquad
\qquad
\aex{D_1} \circ \aex{D_2}  =  \aex{D_2,D_1}
\el\end{prog}
Composition is from left to right, for consistency with \Met.
First, an absolute modality fully determines the new effect context
$E$ no matter what $\mu$ does before.
Second, setting the effect context to $E$ followed by extending $E$
with ${D}$ is equivalent to directly setting the effect context to
$D,E$.
Third, relative modalities can be composed into one by combining
the extensions.
Composition is well-defined as we have $\act{(\mu\circ\nu)}{E} =
\act{\nu}{\act{\mu}{E}}$.
We also have associativity $(\mu\circ\nu)\circ\xi =
\mu\circ(\nu\circ\xi)$ and identity $\aid \circ \mu = \mu \circ \aid =
\mu$. %
All of these properties are independent of the effect structure $\mcT$.

\paragraph{Modality Transformation}
We define a modality transformation judgement, which determines the
coercion of modalities, controlling the accessibility of variables as
mentioned in \Cref{sec:overview-absolute} where we disallow the usage
of the variable $f:\Int \to\TUnit$.
Given a variable binding $f\varb{\mu_F} A$ (which means $f$ is
introduced by eliminating the modality $\mu$ of some value of type
$\boxwith{\mu}A$ at effect context $F$), we can access it after a lock
$\lockwith{\nu_F}$ if the modality
transformation relation $\Gamma\vdash \mu\To\nu\atmode{F}$ holds.
The modality transformation judgement is defined as follows.%
\begin{prog}\small\bl
  \inferrule*[left=\mtylab{Abs}]
  {
    \Gamma\vdash E\subtype \mu(F) \\
  }
  {\Gamma\vdash \aeq{E}\To {\mu} \atmode{F}}
  \qquad\qquad\qquad

  \inferrule*[left=\mtylab{Extend}]
  {
    \Gamma \vdash D_1,F \subtype D_2,F \text{ for all } E\subtype F
  }
  {\Gamma\vdash \aex{D_1} \To \aex{D_2} \atmode{E}}
\el\end{prog}

Both rules make sure that we do not lose any effects after
transformation.
Rule \mtylab{Abs} allows us to transform an absolute modality
$\aeq{E}$ to any other modality $\mu$ as long as $E\subtype \mu(F)$.
Rule \mtylab{Extend} allows us to transform an extension modality
$\aex{D_1}$ to another extension modality $\aex{D_2}$ as long as for
any effect context $F$ larger than $E$, we have $D_1,F\subtype D_2,F$.
We need to quantify over all effect contexts $F$ which are larger than
the ambient effect context $E$ because the new effect context that a
relative modality gives us depends on the ambient effect context.
For instance, consider the following judgement which coerces the
modality $\aex{D_1}$ of $V$ to $\aex{D_2}$.
\begin{prog}\bl
\Gamma \vdash \Letm{}{\aex{D_1}} x = V \In \Mod_{\aex{D_2}}\;x
: \boxwith{\aex{D_2}} (\Int\to\Int) \atmode{E}
\el\end{prog}
In its derivation tree we need the transformation relation
$\aex{D_1}\To\aex{D_2} \atmode{E}$.
To preserve the judgement after upcasting $E$ to a larger $F$, the
transformation requires $D_1,F \subtype D_2,F$ for any $E \subtype F$.

Our \mtylab{Extend} rule is suitable for any effect structure, while
the corresponding rule \mtylab{Upcast} in \citet{TangWDHLL25} is
specific to the treatment of effect collections as scoped rows in
\Met.
Given a specific effect structure, we can usually find an
easier-to-compute representation of \mtylab{Extend} without universal
quantification (as is the case for the \mtylab{Upcast} rule in \Met).

\subsection{Kinds and Contexts}

The kinding relations for extensions and effect contexts are provided
by the effect structure $\mcT$ in
\Cref{sec:effect-mode-theory}.
For value types, we have two kinds where $\Pure$ is a subkind of
$\Any$.
A type has kind $\Pure$ if all function types appearing as syntactic
subterms of the type are wrapped in absolute modalities.
For example, $(\TUnit \to \TUnit)\to\TUnit$ does not have kind $\Pure$
whereas $\boxwith{\aeq{}}((\TUnit\to\TUnit)\to\TUnit)$ does.
Intuitively, values whose types have kind $\Pure$ do not depend on the
ambient effect context.
For any operation $\ell : A\sto B$, types $A$ and $B$ should have kind
$\Pure$ to avoid effect leakage following \citet{TangWDHLL25}.
The kinding and type equivalence rules of \Metpt are given in
\Cref{app:full-spec}.

Contexts are ordered.
We write $\Gamma\atmode{E}$ when context $\Gamma$ is well-formed at
effect context $E$, that is, the types of the variables are
well-kinded, and the variables and locks are compatible with $E$.
For instance,
the following context is well-formed at effect context $E$.
\begin{prog}
  {x\varb{\mu_F}{A_1},y\varb{\nu_F}{A_2}},~
  {\lockwith{\aeq{E}_F}},~
  {z\varb{\xi_E}{A_3}, w : A_4} \atmode{E}
\end{prog}
Let us read from right to left.
Variable $w$ is at effect context $E$ (it is technically tagged with
an identity modality $\aid_E$ which is omitted).
Variable $z$ is tagged with modality $\xi_E$, which means it is not at
effect context $E$ but actually at effect context $\xi(E)$.
Lock $\lockwith{\aeq{E}_F}$ changes the effect context to $E$ from
$F$.
Variables $y$ and $x$ are at effect contexts $\nu(F)$ and $\mu(F)$,
respectively.
Each modality in the context carries an index of the effect context it
manipulates, making switching of effect contexts explicit.
We frequently omit the index when it is clear what it must be.
Formal definitions of kinding and context well-formedness rules are in
\Cref{app:rules-metp}.
We define $\locks{-}$ to compose all the modalities on
the locks in a context.
\begin{prog}\bl
\locks{\cdot} = {\aid} \qquad
\locks{\Gamma,\lockwith{\mind{\mu}{E}}} = \locks{\Gamma}\circ {\mu} \qquad
\locks{\Gamma,x\varb{\mu_E}{A}} = \locks{\Gamma} \\
\el\end{prog}
We identify contexts up to the following two equations.
\begin{prog}\bl
\Gamma,\lockwith{\aid_F}, \Gamma' \atmode{E} = \Gamma, \Gamma' \atmode{E} \qquad\qquad
\Gamma,\lockwith{{\mu}_F},\lockwith{{\nu}_{F'}}, \Gamma' \atmode{E}
  = \Gamma,\lockwith{({\mu}\circ{\nu})_{F}}, \Gamma'\atmode{E} \\
\el\end{prog}

\subsection{Typing}
\label{sec:typing-metp}

\Cref{fig:typing-metp} gives the typing rules for \Metp{\mcT}.
As before, we highlight rules relevant to modal effect types and our
extensions in grey.
The typing judgement $\typm{\Gamma}{M:A}{E}$ means that the term $M$
has type $A$ under context $\Gamma$ and effect context $E$ with
well-formedness condition $\Gamma\atmode{E}$.

\begin{figure}[tb]
{\small
\raggedright
\boxed{\Gamma\vdash (\mu,A)\To\nu \atmode{F}}
\hfill
\vspace{-1.5\baselineskip}
\begin{mathpar}
\inferrule*
{
  \Gamma \vdash A:\Pure
}
{\Gamma\vdash (\mu,A)\To\nu \atmode{F}}

\inferrule*
{
  \Gamma \vdash \mu \To \nu \atmode{F}
}
{\Gamma\vdash (\mu,A)\To\nu \atmode{F}}
\vspace{-.5\baselineskip}
\end{mathpar}
\raggedright
\boxed{\typm{\Gamma}{M : A}{E}\vphantom{\mu}}
\hfill
\vspace{-\baselineskip}
\begin{mathparshrink}
\hl{
\inferrule*[Lab=\tylab{Var}]
{
  {\Gamma \vdash (\mu,A) \To \locks{\Gamma'} \atmode{F} }
}
{\typm{\Gamma,x\varb{\mu_F}{A},\Gamma'}{x:A}{E}}
}

\hl{
\inferrule*[Lab=\tylab{Mod}]
{
  \typm{\Gamma,\lockwith{\mind{\mu}{F}}}{V:A}{\mu(F)}
}
{\typm{\Gamma}{\Mod_\mu\,V : \boxwith{\mu} A}{F}}
}

\hl{
\inferrule*[Lab=\tylab{Letmod}]
{
  \typm{\Gamma,\lockwith{\nu_F}}{V : \boxwith{\mu} A}{\nu(F)} \\\\
  \typm{\Gamma,x\varb{(\nu\circ\mu)_F}{A}}{M:B}{F}
}
{\typm{\Gamma}{\Letm{\nu}{\mu} x = V \In M : B}{F}}
}

\inferrule*[Lab=\tylab{Abs}]
{
  \typm{\Gamma, x : A}{M : B}{E}
}
{\typm{\Gamma}{\lambda x^A .M : A \to B}{E}}
\quad\
\inferrule*[Lab=\tylab{App}]
{
  \typm{\Gamma}{M : A \to B}{E} \\\\
  \typm{\Gamma}{N : A}{E}
}
{\typm{\Gamma}{M\; N: B}{E}}
\quad\
\inferrule*[Lab=\tylab{TAbs}]
{
  \typm{\Gamma, \alpha:K}{V : A}{E}
}
{\typm{\Gamma}{\Lambda \alpha^K. V : \forall \alpha^K . A}{E}}
\quad\
\inferrule*[Lab=\tylab{TApp}]
{
  \typm{\Gamma}{M : \forall \alpha^K.A}{E} \\\\
  \Gamma \vdash B : K
}
{\typm{\Gamma}{M\; B: A[B/\alpha]}{E}}

\inferrule*[Lab=\tylab{Unit}]
{ }
{\typm{\Gamma}{\Unit : \TUnit}{E}}

{
\inferrule*[Lab=\tylab{Do}]
{
  \Sigma,\Gamma \ni \ell : A\sto B \\
  \typm{\Gamma}{N : A}{\ell, E} \\
}
{\typm{\Gamma}{\Do\ell\; N : B}{\ell, E}}
}

\xl{
\inferrule*[Lab=\tylab{LocalEffect}]
{
  \typm{\Gamma,\ell:A\sto B}{M : A'}{E} \\
}
{\typm{\Gamma}{\Localeffect{\ell:A\sto B} M : A'}{E}}
}

\xl{
\inferrule*[Lab=\tylab{Handle}]
{
  \mu(F) = E \\
  \Gamma \vdash \mu \To \aid \atmode{F} \\
  \Gamma \vdash \mu \To \mu\circ\mu \atmode{F} \\
  \typm{\Gamma, \lockwith{\mu_F}, \lockwith{\aex{\ell}_{E}}}{M : A}{\ell,E} \\\\
  \Sigma,\Gamma \ni \ell : A'\sto B' \quad
  \typm{\Gamma, \lockwith{\mu_F}, x : \boxwith{(\mu\circ\aex{\ell})} A}{N : B}{E} \quad
  \typm{\Gamma, \lockwith{\mu_F}, p : A', r: \boxwith{\mu}(B' \to B)}{N' : B}{E} \\
}
{\typm{\Gamma}{\Handle^\mu\;M\With
  \{\Ret x \mapsto N, \ell\;p\;r \mapsto N' \}
 : B}{F}}
}
\end{mathparshrink}
}
\vspace{-.2\baselineskip}
\caption{Typing rules and auxiliary rules of \Metp{\mcT}.}
\label{fig:typing-metp}
\vspace{-.1\baselineskip}
\end{figure}

\paragraph{Modality Introduction and Elimination}

Rule \tylab{Mod} introduces a modality $\mu$ to the conclusion, puts a
lock into the context of the premise, and changes the effect context.
Rule \tylab{Letmod} eliminates a modality $\mu$ and moves it to the
variable binding.
We have seen examples that rely on these rules in
\Cref{sec:met-introduction}.
There is another modality $\nu$ in \tylab{Letmod} which is needed for
technical reasons to support sequential elimination.
For instance, given a variable $x : \boxwith{\nu}\boxwith{\mu} A$ with
two modalities, to eliminate both $\nu$ and $\mu$, we can first
eliminate $\nu$ to $y\varb{\nu}{\boxwith{\mu} A}$ and then to
$z\varb{\nu\circ\mu}{A}$ as follows.
\begin{prog}
\Letm{}{\nu} y = x \In
\Letm{\nu}{\mu} z = y \In M
\end{prog}
We restrict $\Mod_\mu$ and $\Letm{\nu}{\mu}\!\!$ to values to avoid
effect leakage, as in \Met~\citep{TangWDHLL25,LorenzenWDEL24}.
Otherwise, for example, if we were to allow a computation
$\Mod_{\aeq{\Yield}}\; (\Do \Yield\; 42)$, this term would be
well-typed in the empty effect context but get stuck as $\Yield$ is
not handled (note that we do not want $\Mod$ to suspend computations
as it could be confusing to programmers).

\paragraph{Accessing Variables}

Locks control the accessibility of variables as we have shown in
\Cref{sec:met-introduction}.
Rule \tylab{Var} uses the auxiliary judgement
$\Gamma\vdash(\mu,A)\To\locks{\Gamma'}\atmode{F}$ (also defined in
\Cref{fig:typing-metp}) to check whether we can access a variable
$x\varb{\mu_F}{A}$ given all locks in $\Gamma'$.
When $A$ has kind $\Pure$, we can always use $x$ as it does not depend
on the effect context.
Otherwise we need to make sure the coercion from $\mu$ to
$\locks{\Gamma'}$ is safe by checking the modality transformation
relation $\Gamma\vdash\mu\To\locks{\Gamma'}\atmode{F}$ where
$\locks{\Gamma'}$ composes the modalities on locks in $\Gamma'$.
We have seen an example in \Cref{sec:overview-absolute} that the
variable $f:\Int\to\TUnit$ cannot be used while
$f\varb{\aeq{\Yield}}{\Int\to\TUnit}$ can.
As another example, $x\varb{\aex{\ell}}{\TUnit\to\TUnit},
\lockwith{\aeq{\ell'}}\vdash x:\TUnit\to\TUnit \atmode{\ell'}$ is
ill-typed since we cannot transform the modality $\aex{\ell}$ to
$\aeq{\ell'}$.
It would be well-typed if $x$ had type $\Int$.

\paragraph{Local Labels}
Rule \tylab{LocalEffect} binds a fresh local label $\ell$ with type
signature $A\sto B$ (we adopt the Barendregt convention for local
labels.)
Well-formedness of type $A'$ and effect context $E$ under $\Gamma$
ensures that $\ell$ cannot appear in $A'$ or $E$.
Rule \tylab{Do} may use any label from $\Sigma$ and $\Gamma$.
The operational semantics (\Cref{sec:semantics-metp}) generates
runtime labels to substitute local labels.

\paragraph{Modality-Parameterised Handlers}
Rule \tylab{Handle} defines a handler and uses it to handle a
computation $M$.
Let us first ignore all occurrences of the modality $\mu$.
A handler of operation $\ell$ extends the effect context with $\ell$
as indicated by the lock $\lockwith{\aex{\ell}_E}$ in the typing
judgement of $M$.
The return value of $M$ is bound to the variable $x$ in the return
clause $\Ret x \mapsto N$.
The type of $x$ also has the modality $\aex{\ell}$ since $x$ may use
the operation $\ell$, e.g., when $M$ returns a function $\lambda
x.\Do\ell\;x$.

We generalise the handlers of~\citet{TangWDHLL25} to be parameterised
by a modality $\mu$.
The modality $\mu$ transforms the effect context $F$ to $\mu(F) = E$
for the whole term as witnessed by the addition of the lock
$\lockwith{\mu_F}$ to the context of each premise.
Since both the handled computation and the handler clauses are
well-typed under the lock $\lockwith{\mu_F}$, we can wrap the
continuation $r$, which captures the handled computation and the
handler, into the modality $\mu$.
The return value $x$ is also wrapped in the modality $\mu$ as it is
returned from $M$ whose context contains the lock $\lockwith{\mu_F}$.
This is in contrast to the handler rule of \citet{TangWDHLL25}, as
shown below, which just gives $r$ the function type $B' \to B$.
\begin{prog}
{\fontsize{9}{12}\selectfont
\inferrule*%
{
  \Sigma,\Gamma \ni \ell : A'\sto B' \quad\
  \typm{\Gamma, \lockwith{{\aex{\ell}}_{E}}}{M : A}{\ell,E} \quad\
  \typm{\Gamma, x : \boxwith{\aex{\ell}} A}{N : B}{E} \quad\
  \typm{\Gamma, p : A', r: B' \to B}{N' : B}{E}
}
{\typm{\Gamma}{\Handle\;M\With
  \{\Ret x \mapsto N, \ell\;p\;r \mapsto N' \}
 : B}{E}}
}
\end{prog}
To recover the original handler construct of \citet{TangWDHLL25}, we
just need to instantiate the modality $\mu$ to the identity modality
$\aid$ as shown by the following syntactic sugar.
\begin{prog}\ba{rcl}
& &\Handle\;M \With \{\Ret x\mapsto N,\ell\;p\;r\mapsto N'\} \\
&\doteq& \Handle^{\aid}\;M \With \{\Ret x\mapsto N,\ell\;p\;r\mapsto \Letm{}{\aid} r = r \In N'\} \\
\ea\end{prog}

Having a modality $\mu$ for the continuation $r$ allows us to have
more fine-grained control over effect tracking for the continuation.
As discussed in \Cref{sec:overview-handlers}, the extra expressiveness
provided by this rule is especially useful for a unified framework to
encode other effect systems with support for effect handlers, as
different encodings typically require translating a function type into
a type with some modalities.
We give an example of an effect handler annotated with the empty
absolute modality $\aeq{}$ in \Metpt based on the handler
\namei{sum}{\Metpt} in \Cref{sec:overview-handler-metp}.
\begin{prog}\bl
  \Handle^{\aeq{}}\; (\Do\code{yield}\;42;\Do\Yield\;37;0) \With \{
    \bl\Ret x \mapsto \Letm{}{\aeq{\code{yield}}} x' = x \In x', \\
    \code{yield}\;p\;r \mapsto \Letm{}{\aeq{}} r' = r \In  p + r'\;\Unit
  \}\el
\el\end{prog}
As a result of the annotation $\aeq{}$, the continuation $r$ has type
$\boxwith{\aeq{}}(\TUnit\to\Int)$ instead of $\TUnit\to\Int$.
In the return clause we eliminate the modality
$\aeq{}\circ\aex{\code{yield}} = \aeq{\code{yield}}$ of $x$.
In contrast, the omitted return clause of \namei{sum}{\Metpt} is $\Ret
x \mapsto \Letm{}{\aex{\code{yield}}} x' = x \In x'$.

The new handler rule requires the modality $\mu$ to have a comonadic
structure as specified by the conditions $\Gamma \vdash \mu\To\aid
\atmode{F}$ and $\Gamma \vdash \mu\To\mu\circ\mu \atmode{F}$.
These conditions are important because semantically a handler for
operation $\ell$ may not be used (when $\ell$ is not invoked) or be
used multiple times (when $\ell$ is invoked multiple times).
Intuitively, each use of the handler consumes one modality $\mu$.
The condition $\mu\To\aid \atmode{F}$ makes sure that when the handler
is not used, we can transform away the modality $\mu$ at effect
context $F$.
The condition $\mu\To\mu\circ\mu \atmode{F}$ makes sure that when the
handler is used multiple times, we can duplicate the modality $\mu$
each time the handlers is used.
For example, the identity modality $\aid$ trivially satisfies the
comonadic structure, and the absolute modality $\aeq{E}$ satisfies the
comonadic structure at $F$ with $E\subtype F$.

\subsection{Operational Semantics}
\label{sec:semantics-metp}

We adopt the generative semantics of \citet{BiernackiPPS20} for local
labels.
Each local label $\ell$ introduced by $\Localeffect{\ell:A\sto B} M$
is replaced by a fresh label generated at runtime.
We manage these labels in a context defined as $\Instctx ::= \cdot
\mid \Instctx, \ell : A\sto B$.
We do not syntactically distinguish runtime generated labels from
static labels; runtime labels are tracked in $\Instctx$.
We define value normal forms $U$ which cannot reduce further.
The definitions for all new syntax and the operational semantics are
given in~\Cref{fig:semantics-metp}.
The reduction relation has the form $M\mid\Instctx\reducesto
N\mid\Instctx'$.
We omit $\Instctx$ when it is unchanged. Only \semlab{Gen} extends
$\Instctx$.
We do not restrict $M$ and $N$ to be closed terms.
All judgements defined previously are also straightforwardly extended
with $\Instctx$.
For instance, typing judgements are of form
$\typm{\Instctx\mid\Gamma}{M:A}{E}$ for runtime terms.

\begin{figure}[tb]
\small
{
\begin{syntax}
\slab{Value normal forms}         &U& ::= &
x
\mid \lambda x^A.M
\mid \Lambda \alpha^K.V
\mid \Mod_\mu\;U
\\
\slab{Evaluation Contexts} &  \EC &::= & [~]
  \mid \EC\; N \mid U\;\EC \mid \EC\; A \mid \Mod_\mu\;\EC
  \mid \Letm{\nu}{\mu} x = \EC \In M
   \\
  & &\mid & \Do\ell\;\EC %
  \mid \Handle^\mu\;\EC\With H
  \\
\end{syntax}
}

\begin{nreductions}
\semlab{App}   & (\lambda x^A.M)\,U &\reducesto& M[U/x] \\
\semlab{TApp}  & (\Lambda \alpha^K.U)\,A &\reducesto& U[A/\alpha] \\
\semlab{Letmod} &
  \Letm{\nu}{\mu} x = \Mod_\mu\,U \In M
  &\reducesto& M[U/x] \\
\semlab{Gen} &\Localeffect{\ell:A\sto B} M \mid \Instctx
  &\reducesto &
  M[\ell'/\ell] \mid \Instctx, \ell':A\sto B
    \quad\hfill\text{where } \ell'\text{ fresh in $\Instctx$ and $\Sigma$}
\\
\semlab{Ret} &
  \Handle^\mu\; U \With H &\reducesto& N[(\Box_{(\mu\circ\aex{\ell})}\,U)/x], \\
  & & &\multicolumn{1}{r}{
  \text{where } H = \{ \Ret x \mapsto N, \ell\;p\;r \mapsto N' \}
  }
\\
\semlab{Op} &
  \Handle^\mu\; \EC[\Do\ell \; U] \With H
    &\reducesto&
    N[U/p, (\Mod_{\mu}\;(\lambda y.\Handle^\mu\; \EC[y] \With H))/r]\\
  & \multicolumn{3}{r}{
    \text{where } \ell\notin\BL{\EC} \text{ and } H \ni (\ell \; p \; r \mapsto N)
  }
\\
\semlab{Lift} &
  \EC[M] &\reducesto& \EC[N]  \hfill\text{if } M \reducesto N \\
\end{nreductions}
\vspace{-4pt}
\caption{Operational semantics of \Metpt.}
\vspace{-.2\baselineskip}
\label{fig:semantics-metp}
\vspace{-.3\baselineskip}
\end{figure}

The operational semantics mostly follows \Met.
Rule \semlab{Gen} is new and generates a fresh runtime label for a
local label binding.
Moreover, since we generalise the handler of \Met, rules \semlab{ret}
and \semlab{Op} are also generalised.
Rule \semlab{Ret} wraps the return value with the modality
$\mu\circ\aex{\ell}$.
Rule \semlab{Op} wraps the continuation with the modality $\mu$.
The modalities in both rules are consistent with the typing rule
\tylab{Handle} in \Cref{sec:typing-metp}.
The function $\BL{\EC}$ gives the set of bound operation labels which
have handlers installed in the evaluation context $\EC$.
The condition $\ell\notin\BL{\EC}$ makes sure each operation $\ell$ is
handled by the dynamically innermost handler of $\ell$.

\subsection{Type Soundness and Effect Safety}
\label{sec:type-soundness}

To state syntactic type soundness, we first define normal forms.

\begin{definition}[Normal Forms]
  \label{def:normal-forms}
  We say that term $M$ is in normal form with respect to effect
  context $E$, if it is either in value normal form $M = U$ or of the
  form $M = \EC[\Do\ell\;U]$ for $\ell\subtype E$.
\end{definition}

The following theorems together give type soundness and effect safety.
They hold for any effect structure $\mcT$ satisfying the validity
conditions of \Cref{def:sanity-conditions}.

\begin{restatable}[Progress]{theorem}{progress}
  \label{thm:progress}
  In \Metp{\mcT} where $\mcT$ satisfies the validity conditions, if
  $\,\typm{\Instctx\mid\cdot}{M:A}{E}$, then either
  $M\mid\Instctx\reducesto N\mid\Instctx'$ for some $N$ and
  $\Instctx'$, or $M$ is in a normal form with respect to $E$.
\end{restatable}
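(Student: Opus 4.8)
The plan is to prove \Cref{thm:progress} by induction on the derivation of $\typm{\Instctx\mid\cdot}{M:A}{E}$. Because the premises of \tylab{Mod}, \tylab{Letmod} and \tylab{Handle} type subterms under contexts containing locks (the handler premises additionally insert $\lockwith{\aex{\ell}_E}$), I would first generalise the statement to an arbitrary context $\Gamma$ binding only type variables and locks; such a context is irrelevant to reduction, so this is harmless and is what lets the induction close. I would also carry a slightly stronger ``stuck'' alternative: besides $M = \EC[\Do\ell\,U]$ with $\ell\subtype E$, I additionally maintain that $\ell\notin\BL{\EC}$. This still entails \Cref{def:normal-forms}, but it is precisely what the congruence cases produce and what the \tylab{Handle} case consumes.

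The one auxiliary ingredient is canonical forms: a value normal form $U$ (in a lock-only context) has, up to type equivalence, the head constructor dictated by its type --- $\Unit$ at $\TUnit$, $\lambda x^A.M$ at $A\to B$, $\Lambda\alpha^K.V$ at $\forall\alpha^K.A$, and $\Mod_\mu\,U'$ at $\boxwith{\mu}A$. This follows by inspecting which typing rules can conclude with a value normal form --- only \tylab{Unit}, \tylab{Abs}, \tylab{TAbs}, \tylab{Mod}, since \tylab{Var} is excluded (no term variables) and the complex values $V\,A$ and $\Letm{\nu}{\mu}x = V\In W$ are not value normal forms --- using that type equivalence never relates distinct outermost type constructors. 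A simple corollary I would also record is that a closed value is never syntactically of the form $\EC[\Do\ell\,U]$ (an easy induction on $\EC$), so applying the induction hypothesis to a value leaves only ``reduces'' or ``is a value normal form''.

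The induction then goes rule by rule. Rules \tylab{Unit}, \tylab{Abs}, \tylab{TAbs} yield value normal forms outright, and \tylab{LocalEffect} always fires \semlab{Gen}, generating a fresh runtime label. For every remaining rule the pattern is the same: apply the induction hypothesis to the principal subterm (the one in evaluation position); if it reduces, then $M$ reduces by \semlab{Lift} through the matching evaluation-context frame ($\EC\,N$, $U\,\EC$, $\EC\,A$, $\Mod_\mu\,\EC$, $\Letm{\nu}{\mu}x = \EC\In N$, $\Do\ell\,\EC$ or $\Handle^\mu\,\EC\With H$); if it is a stuck operation call, prepending the frame yields one for $M$, and since none of these frames except $\Handle^\mu\,\EC\With H$ changes $\BL{\EC}$ the invariant $\ell\notin\BL{\cdot}$ is preserved; and if it is a value normal form, canonical forms fixes its head and the corresponding $\beta$-rule applies: \semlab{App} (after a further appeal to the induction hypothesis on the argument), \semlab{TApp}, or \semlab{Letmod}; for \tylab{Do} the subterm being a value normal form $U$ means $M = \Do\ell\,U$, which is itself a stuck operation call since $\ell\subtype\ell, E$ and $\BL{[~]} = \varnothing$. (For \tylab{Mod} and \tylab{Letmod} the principal subterm is already syntactically a value, so the corollary above discharges the stuck-call case.)

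The crux, and the step I expect to be the main obstacle, is \tylab{Handle}, where $M = \Handle^\mu\,M'\With H$ and $M'$ is typed at effect context $\ell, E$ under the locks $\lockwith{\mu_F},\lockwith{\aex{\ell}_E}$. By the generalised induction hypothesis on $M'$: if $M'$ reduces, lift; if $M' = U$, rule \semlab{Ret} fires; and if $M' = \EC[\Do\ell'\,U]$ with $\ell'\subtype\ell, E$ and $\ell'\notin\BL{\EC}$, then either $\ell' = \ell$, in which case \semlab{Op} applies ($H$ handles $\ell$ and $\ell'\notin\BL{\EC}$), or $\ell'\neq\ell$, in which case the second validity condition of \Cref{def:sanity-conditions} gives $\ell'\subtype E$ while $\BL{\Handle^\mu\,\EC\With H} = \BL{\EC}\cup\{\ell\}$ gives $\ell'\notin\BL{\Handle^\mu\,\EC\With H}$, so $M$ is a (strengthened) normal form with respect to $E$. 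The delicate points are exactly this $\BL{\cdot}$-bookkeeping, the use of the validity conditions to convert ``$\ell'$ is unhandled'' into ``$\ell'$ genuinely lies in the ambient effect context'' --- the first condition additionally making the stuck alternative vacuous, hence forcing $M$ to be a value, when $E = \cdot$ --- and the fact that the handler premises make it necessary to run the induction under non-empty, lock-only contexts in the first place.
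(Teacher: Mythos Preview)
Your approach is correct and, if anything, more careful than the paper's own sketch, which defers most details to the progress proof for \Met in \citet{TangWDHLL25} and never states the generalisation to lock-only contexts explicitly. Your strengthening of the stuck alternative to carry $\ell\notin\BL{\EC}$ is exactly what makes the \tylab{Handle} case go through with a \emph{single} application of validity condition~(2); the paper instead argues by inversion on $\Do\ell'\,U$ deep inside the evaluation context and applies condition~(2) repeatedly to peel off the labels $D$ contributed by any inner handler frames. Both routes are fine; yours keeps the induction hypothesis doing the work.

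One small gap to patch. In your \tylab{Handle} case you conclude that $M$ is a (strengthened) normal form ``with respect to $E$'', but in the notation you adopted from the rule, $E = \mu(F)$ is the \emph{inner} effect context; the goal requires a normal form with respect to the \emph{outer} effect context $F$ at which $\Handle^\mu\,M'\With H$ is typed. The missing step is that the premise $\Gamma\vdash\mu\To\aid\atmode{F}$ of \tylab{Handle} yields $E = \mu(F)\subtype\aid(F) = F$ (soundness of modality transformation, or a direct check of the two \textsc{MT} rules), whence $\ell'\subtype E\subtype F$ by transitivity. With that one line added your argument is complete. For what it is worth, the paper's sketch commits the same conflation of $E$ and $\mu(E)$ at the analogous point.
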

\vspace{-.5\baselineskip}
\begin{restatable}[Subject Reduction]{theorem}{subjectReduction}
  \label{thm:subjrect-reduction}
  In \Metp{\mcT} where $\mcT$ satisfies the validity conditions, if
  $\,\typm{\Instctx\mid\Gamma}{M:A}{E}$ and $M\mid\Instctx \reducesto
  N\mid\Instctx'$, then $\typm{\Instctx'\mid\Gamma}{N:A}{E}$.
\end{restatable}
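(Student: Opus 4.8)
The plan is to prove the statement by induction on the derivation of $M\mid\Instctx\reducesto N\mid\Instctx'$, inverting the typing derivation $\typm{\Instctx\mid\Gamma}{M:A}{E}$ in each case. The congruence rule \semlab{Lift} is handled uniformly by an evaluation-context lemma: if $\typm{\Instctx\mid\Gamma}{\EC[M']:A}{E}$, then there are $\Gamma'$, $A'$, $E'$ with $\typm{\Instctx\mid\Gamma'}{M':A'}{E'}$, and plugging any $N'$ with $\typm{\Instctx'\mid\Gamma'}{N':A'}{E'}$ back into $\EC$ recovers $\typm{\Instctx'\mid\Gamma}{\EC[N']:A}{E}$; once established, the \semlab{Lift} case is closed by the induction hypothesis applied to the redex. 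This lemma is routine except for the frames that shift the effect context and add locks ($\Mod_\mu\,\EC$, $\Letm{\nu}{\mu}x=\EC\In N$, $\Handle^\mu\,\EC\With H$, $\Do\ell\,\EC$), so the real work is in the base rules. Weakening (in $\Gamma$, in locks, and in $\Instctx$) is used throughout and proved first.

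For the $\beta$-like rules I would establish the usual substitution lemmas: ordinary value substitution ($\typm{\Gamma,x:A,\Gamma'}{M:B}{E}$ together with $\typm{\Gamma}{U:A}{E}$ yields $\typm{\Gamma,\Gamma'}{M[U/x]:B}{E}$) for \semlab{App}; type substitution for \semlab{TApp}, using that the kinding, equivalence, and subeffecting relations of $\mcT$ are stable under type substitution; and, most importantly, a \emph{modal} substitution lemma for bindings $x\varb{\rho_F}{A}$ that may be used beneath locks: if $\typm{\Gamma,x\varb{\rho_F}{A},\Gamma'}{M:B}{E}$ and $\typm{\Gamma,\lockwith{\rho_F}}{V:A}{\rho(F)}$ then $\typm{\Gamma,\Gamma'}{M[V/x]:B}{E}$. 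The delicate case of its induction is \tylab{Var}, discharged using the modality-transformation relation (\mtylab{Abs}, \mtylab{Extend}), composition of modalities, and the clause for pure types in the auxiliary judgement $(\mu,A)\To\nu$. This lemma also covers \semlab{Letmod}: inverting \tylab{Mod} on $\Mod_\mu\,U$ and using the context identities $\lockwith{\nu_F},\lockwith{\mu_{\nu(F)}}=\lockwith{(\nu\circ\mu)_F}$ and $\mu(\nu(F))=(\nu\circ\mu)(F)$ yields $\typm{\Gamma,\lockwith{(\nu\circ\mu)_F}}{U:A}{(\nu\circ\mu)(F)}$, which is exactly the hypothesis needed against the second premise $\typm{\Gamma,x\varb{(\nu\circ\mu)_F}{A}}{M:B}{F}$ of \tylab{Letmod}.

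Rule \semlab{Gen} needs a label-relocation lemma: $\typm{\Instctx\mid\Gamma,\ell:A\sto B}{M:A'}{E}$ implies $\typm{\Instctx,\ell':A\sto B\mid\Gamma}{M[\ell'/\ell]:A'}{E}$ for fresh $\ell'$. This is $\alpha$-renaming together with the observation that \tylab{Do} treats labels in $\Instctx$, $\Sigma$, and $\Gamma$ uniformly, and it relies on the well-formedness invariant accompanying \tylab{LocalEffect}, which ensures $\ell$ occurs in neither $A'$ nor $E$ (nor, by $\Gamma\atmode{E}$, anywhere relevant in $\Gamma$).

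The main obstacle is the two handler rules. For \semlab{Ret}, inverting \tylab{Handle} on $\Handle^\mu\,U\With H$ gives $\typm{\Gamma,\lockwith{\mu_F},\lockwith{\aex{\ell}_E}}{U:A}{\ell,E}$ with $\mu(F)=E$; since $\lockwith{\mu_F},\lockwith{\aex{\ell}_E}=\lockwith{(\mu\circ\aex{\ell})_F}$ and $(\mu\circ\aex{\ell})(F)=\ell,E$, rule \tylab{Mod} gives $\typm{\Gamma}{\Mod_{\mu\circ\aex{\ell}}\,U:\boxwith{(\mu\circ\aex{\ell})}A}{F}$, and substituting into the return clause finishes the case. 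For \semlab{Op} one inverts \tylab{Handle}, decomposes $\EC[\Do\ell\,U]$, extracts that the operation argument $U$ has the (pure) type $A'$ and moves it out to $\Gamma\atmode{F}$, substitutes for $p$, and — the crux — re-typechecks the reinstalled handler $\Mod_\mu\,(\lambda y.\Handle^\mu\,\EC[y]\With H)$ at type $\boxwith{\mu}(B'\to B)$. Re-deriving the \tylab{Handle} premises one modality-level deeper is precisely where the comonadic side conditions $\Gamma\vdash\mu\To\aid\atmode{F}$ and $\Gamma\vdash\mu\To\mu\circ\mu\atmode{F}$ are consumed: $\mu\To\mu\circ\mu$ lets us push $\EC$ through an extra copy of $\mu$ when the handler is reinstalled, and $\mu\To\aid$ lets it collapse when no operation is raised. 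I expect verifying that pushing $\EC$ through one more copy of $\mu$ preserves typing to be the genuinely delicate calculation, and it is the step that uses our modality-parameterised handler rather than the plainer one; the $\BL{\EC}$ side condition and the unhandled-prefix shape enforced by \tylab{Do} keep the effect contexts aligned. The validity conditions of \Cref{def:sanity-conditions} are invoked only to keep the effect-theory reasoning about $\subtype$ and $\equiv$ coherent, so the whole argument stays uniform in $\mcT$.
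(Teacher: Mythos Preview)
Your overall strategy is sound and close to the paper's (which inducts on the typing derivation rather than the reduction, but this is immaterial). The substitution and structural lemmas you identify are the right ones, and your treatment of \semlab{Gen}, \semlab{App}, \semlab{TApp}, and \semlab{Letmod} is correct.

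There is however a gap in your handler cases, specifically in how $\mu\To\aid$ is deployed. In both \semlab{Ret} and \semlab{Op}, the return and operation clauses from \tylab{Handle} are typed under $\Gamma,\lockwith{\mu_F}$ at effect context $E=\mu(F)$, but the reduct must be typed under $\Gamma$ at $F$. Your \semlab{Ret} description says ``substituting into the return clause finishes the case,'' but the return clause $N$ lives at $E$, while the value $\Mod_{\mu\circ\aex{\ell}}\,U$ you construct is typed at $F$; these do not line up for your modal substitution lemma (and $\boxwith{(\mu\circ\aex{\ell})}A$ need not have kind $\Pure$ when $\mu$ is relative, so pure promotion does not rescue it). The fix is to first apply lock weakening with $\mu\To\aid\atmode{F}$ to erase $\lockwith{\mu_F}$ from the clause context, retyping $N$ under $\Gamma,x:\ldots$ at $F$, and only then substitute. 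The identical step is required in \semlab{Op} to lift the operation clause $N'$ to $F$ before substituting for $p$ and $r$. Your gloss that ``$\mu\To\aid$ lets it collapse when no operation is raised'' is the semantic intuition, but proof-theoretically $\mu\To\aid$ is the device that removes the outer $\lockwith{\mu_F}$ in \emph{both} reductions; $\mu\To\mu\circ\mu$ is then used separately (as you correctly note) to duplicate $\lockwith{\mu_F}$ when re-deriving the inner \tylab{Handle}. A minor aside: the validity conditions on $\mcT$ are not actually used in subject reduction---the paper invokes them only for progress.
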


The proofs are given in \Cref{app:proofs-metp}.

\section{Encoding a Row-Based Effect System \`a la \Koka}
\label{sec:row-based-effect-systems}

In this section, we briefly present \Feps~\citep{XieBHSL20}, a System
F-style core calculus formalising the row-based effect system of
\Koka~\citep{KokaLang}, and show how to encode it into \Metp{\emtScp}.
We refer to \citet{XieBHSL20} for a complete introduction to \Feps.

\subsection{\FepsBold}
\label{sec:feps}

The syntax of \Feff is as follows.
\begin{prog}
\bs
  \slab{Value Types}\hspace{-.8em}  &A,B  &::= & \TUnit \mid \alpha
  \mid \earr{A}{B}{E} \mid \forall \alpha^K.A \\
  \slab{Effect Rows}\hspace{-1em}  &E,F &::= & \cdot \mid \evar \mid \ell, E \\
  \slab{Kind} &K & ::= & \Effect \mid \Value \\
  \slab{Contexts} &\Gamma  &::= & \cdot
                  \mid \Gamma,x : A
                  \mid \Gamma,\alpha : K
                  \\
  \slab{Label Contexts}\hspace{-1em} &\Sigma &::= & \cdot \mid \Sigma,{\ell : A\sto B} \\
\es
\quad
\bs
  \slab{Values}  &V,W & ::= & \Unit \mid x \mid \elambda{E} x^A . M \\
                    & & \mid & \Lambda \alpha^K . V \mid
                        {\Handler\;H}
                        \\
  \slab{Computations}\hspace{-1ex} &M,N  &::= &
                        \Ret V
                        \mid V\;W \mid V\,A \\
                    & &\mid & {\Do \ell\;V} \mid \Let x = M \In N \\
  \slab{Handlers} & H & ::= & \{\ell\;p\;r \mapsto N\} \\
\es
\end{prog}
Different from \citet{XieCIL22}, our version of \Feps is fine-grain
call-by-value~\citep{LevyPT03}.
Effect rows $E$ are scoped rows~\citep{Leijen05} with an optional tail
effect variable $\evar$.
As in \Metp{\mcT}, we assume a fixed global label context $\Sigma$.
By convention we write $\evar$ for effect variables and $\alpha$ for
value type variables.
We omit their kinds, $\Effect$ and $\Value$, when obvious.
In type abstraction and application, we let $A$ range over both value
types and effect rows, and let $\alpha$ range over their type
variables.

Typing judgements in \Feps include $\typf{\Gamma}{V}{A}{}$ for values
and $\typf{\Gamma}{M}{A}{E}$ for computations, where the latter tracks
effects $E$.
The typing rules and operational semantics of \Feps are standard for a
System F-style calculus with effect handlers and a row-based effect
system~\citep{linksrow,koka}.
We provide the full rules in \Cref{app:semantics-feps} and
show three representative typing rules here.
\begin{prog}
\inferrule*[Lab=\tylab{Abs}]
{
  \typf{\Gamma,
  x:A}{M}{B}{E}
}
{\typf{\Gamma}{\lambda^E x^A.M}{\earr{A}{B}{E}}{}}
\qquad \qquad
\inferrule*[Lab=\tylab{Do}]
{
  \Sigma \ni \ell : A\sto B \\\\
  \typf{\Gamma}{V}{A}{} \\
}
{\typf{\Gamma}{\Do \ell\;V}{B}{\ell,E}}
\qquad \qquad
\inferrule*[Lab=\tylab{Handler}]
{
  H = \{\ell\;p\;r \mapsto N \}\\
  \Sigma \ni \ell : A' \sto B' \\\\
  \typf{\Gamma,
  p : A', r: \earr{B'}{A}{E}}{N}{A}{E} \\
}
{\typf{\Gamma}{\text{$\Handler$}\; H}{\earr{(\earr{\TUnit}{A}{\ell,E})}{A}{E}}{}}
\end{prog}

Rule \tylab{Abs} introduces a $\lambda$-abstraction.
Rule \tylab{Do} invokes an operation $\ell$.
Rule \tylab{Handler} introduces a handler as a function that takes an
argument function of type $\earr{\TUnit}{A}{\ell,E}$ as in
\Cref{sec:overview-handler-feps}.
\citet{XieBHSL20} do not include a return clause in handlers for
\Feps.

\subsection{Encoding \Feps into \Metp{\emtScp}}
\label{sec:encoding-rows}

\begin{figure}[tb]
{\small
\renewcommand{\gray}[1]{#1}
\renewcommand{\arraystretch}{0.95}
\[
\hspace{-1em}
\ba[t]{r@{\ \ }c@{\ \ }l}
  \transl{-} &:& \text{Kind} \to \text{Kind}\\
  \transl{\Effect} &=& \Effect \\
  \transl{\Value} &=& \Pure \\[2ex]
  \transl{-} &:& \text{Effect Row} \to \text{Effect Context}\\
  \transl{\cdot} &=& \cdot \\
  \transl{\evar} &=& \evar \\
  \transl{\ell,E}   &=& \ell,\transl{E} \\[2ex]
  \transl{-} &:& \text{Label Context} \to \text{Label Context}\\
  \transl{\cdot} &=& \cdot \\
  \transl{\Sigma,\ell:A\sto B} &=& \tr{\Sigma},\ell:\tr{A}\sto\tr{B} \\[2ex]
  \transl{-} &:& \text{Value / Handler} \to \text{Term} \\
  \transl{\Unit} &=& \Unit \\
  \transl{x} &=& x \\[.5ex]
  \transl{\Lambda\alpha^K.V} &=& \Lambda\alpha^{\tr{K}}.\transl{V} \\[.5ex]
  \transl{\elambda{E} x^A . M} &=& \hle{\Mod_{\aeq{\transl{E}}}}\;(\lambda x^{\transl{A}}.\transl{M})
  \\[.5ex]
  \Bigl\llbracket
    {\vcenter{\bl\Handler\; H:\\ %
    \gray{\text{\small$\earr{(\earr{\TUnit}{A}{\ell,E})}{A}{E}$}}\el}}
  \Bigr\rrbracket
  &=&
  \hle{\Mod_{\aeq{\transl{E}}}}\;(
  \lambda f .
    \Handle^{\aeq{\tr{E}}}\;
    (\Let \hle{\Mod_{\aeq{\transl{\ell,E}}}}\; f' = f \In
      f'\;\Unit) \With \tr{H^{\gray{E}}}) \\[.5ex]
  \transl{\{\ell\;p\;r\mapsto N\}^{\gray{E}}} &=&
      \{
        \Ret x \mapsto \Let\hle{\Mod_{\aeq{\transl{\ell,E}}}}\; x' = x \In x',
        \ell\;p\;r\mapsto \transl{N}
      \}
\ea
\hfill
\hspace{-.43\textwidth}
\ba[t]{r@{\ \ }c@{\ \ }l}
  \transl{-} &:& \text{Type} \to \text{Type}\\
  \transl{\TUnit} &=& \TUnit \\
  \transl{\alpha} &=& \alpha \\
  \transl{\earr{A}{B}{E}} &=& \boxwith{\aeq{\transl{E}}}(\transl{A} \to \transl{B}) \\[.5ex]
  \transl{\forall\alpha^K.A} &=& \forall\alpha^{\tr{K}} . \transl{A} \\[2ex]
  \stransl{-}{} &:& \text{Context} \to \text{Context}\\
  \stransl{\cdot}{} &=& \cdot \\
  \stransl{\Gamma,x:A}{} &=& \stransl{\Gamma}{},x:\transl{A} \\
  \stransl{\Gamma,\alpha:K}{} &=& \stransl{\Gamma}{},\alpha:\tr{K} \\[2ex]
  \transl{-} &:& \text{Computation} \to \text{Term} \\
  \transl{\Ret V} &=& \tr{V} \\
  \transl{\Let x = M \In N} &=& \Let x = \tr{M} \In \tr{N} \\
  \transl{V\;A} &=& \transl{V}\;\tr{A} \\
  \transl{(V:\gray{A\to^E B})\;W} &=&
    \Let \hle{\Mod_{\aeq{\transl{E}}}}\; x = \transl{V} \In x\;\transl{W} \\
  \transl{\Do \ell\; V} &=& \Do\ell\;\tr{V} \\
\ea
\]
}
\caption{An encoding of \Feps in \Metp{\emtScp}.}
\vspace{-.2\baselineskip}
\label{fig:feps-to-metp}
\vspace{-.3\baselineskip}
\end{figure}

\Cref{fig:feps-to-metp} encodes \Feps in \Metp{\emtScp}.
The translation is mostly straightforward.

For kinds, we translate effect kind $\Effect$ to effect kind $\Effect$
and value kind $\Value$ to the kind $\Pure$.
We always translate values in \Feps into values of kind $\Pure$ in
\Metp{\emtScp}.

For types, we decouple effects from function types in \Feps by
translating an effectful function type $A\to^E B$ into a function type
with an absolute modality $\boxwith{\aeq{\tr{E}}}(\tr{A}\to\tr{B})$.

For contexts, we homomorphically translate each entry.

For terms, the translation is type-directed and essentially defined on
typing judgements.
We annotate components of a term with their types as necessary.
We highlight modality-relevant syntax of the term translation in
\gray{grey}.
The grey parts show how modalities decouple effect tracking.
The black parts themselves remain valid programs after type erasure.

The translation of lambda abstraction $\lambda^E x^A . M$ introduces
an absolute modality by $\Mod_{\aeq{\tr{E}}}$, and the translation of
function application $V\;W$ first eliminates the modality of $\tr{V}$
by $\Letm{}{\aeq{\tr{E}}} x = \tr{V}$ before applying it.
Examples for translations of lambda abstraction and application can be
found in \Cref{sec:overview-encoding-rows} as
$\tr{\namei{gen}{\Fepss}}$ and $\tr{\namei{app}{\Fepss}}$.

Translations of type abstraction, type application, operation
invocation, and let-binding are homomorphic.
Let-binding in \Metpt is syntactic sugar defined in the standard way
as $\Let x = M \In N \doteq (\lambda x . N)\; M$.
The translation of $\Ret V$ is simply $\tr{V}$.

A handler value $\Handler\;H$ of type
$\earr{(\earr{\TUnit}{A}{\ell,E})}{A}{E}$ is translated to a
higher-order function that handles the application of its function
argument $f$.
We eliminate the modality of $f$ before applying it to $\Unit$ since
$f$ has type $\boxwith{\aeq{\tr{\ell,E}}}(\TUnit\to\tr{A})$.
We introduce a modality $\Mod_{\aeq{\tr{E}}}$ for the whole translated
function since $\Handler\;H$ is an effectful function with effect $E$.
In the return clause of $\tr{H}$, we must eliminate the modality of
$x$ as shown in the typing rule \tylab{Handle} of \Metp{\emtScp}.
This modality elimination is always possible as the type $\tr{A}$ of
$x$ always has kind $\Pure$.
The operation clause of $\tr{H}$ demonstrates why we must use
modality-parameterised handlers.
Note that rule \tylab{Handler} of \Feps gives the continuation $r$ in
$H$ the type $\earr{B'}{A}{E}$.
By annotating $\Handle$ with $\aeq{\tr{E}}$, the continuation $r$ in
$\tr{H}$ has type $\boxwith{\aeq{\tr{E}}}(\tr{B'}\to\tr{A})$,
which is equal to $\tr{\earr{B'}{A}{E}}$.
We now give the full translation of the handler
$\namei{sum}{\Fepss}$ of \Cref{sec:overview-handler-feps}.
\begin{prog}
\ba{r@{}c@{}l}
\transl{\namei{sum}{\Fepss}} &\treq&
\bl\Lambda\evar . \Modhl{\aeq{\evar}}~(
\lambda f^{\boxwith{\aeq{\code{yield},\evar}}(\TUnit \to \Int)} .
    \Handle^{\aeq{\evar}}\;
 ({\Letmhl{}{\aeq{\code{yield},\evar}} f' = f \In}
  f\;\Unit) \With\\
  \{\Ret x \mapsto {\Letmhl{}{\aeq{\code{yield},\evar}} x' = x \In} x',
    \ell\;p\;r \mapsto p + ({\Letmhl{}{\aeq{\evar}} r' = r \In} r'\;\Unit ) \}) \\
  \el
  \\
&:& \forall\evar.\boxwith{\aeq{\evar}}(
      \boxwith{\aeq{\code{yield},\evar}}(\TUnit \to \Int)
      \to \Int)
\ea
\end{prog}

We have the following type and semantics preservation theorems with
proofs in \Cref{app:proof-feps}.

\begin{restatable}[Type Preservation]{theorem}{FepsToMetp}
  \label{lemma:type-preservation-feffn-to-metn}
  If $\,\typf{\Gamma}{M}{A}{E}$ in \Feps, then $\transl{\Gamma}
  \vdash \transl{M} : \transl{A} \atmode{\transl{E}}$ in
  \Metp{\emtScp}.
  Similarly for typing judgements of values.
\end{restatable}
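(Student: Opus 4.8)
The plan is to prove type preservation by mutual induction on the \Feps typing derivations for values and computations, generalising the value statement to: if $\typf{\Gamma}{V}{A}{}$ in \Feps, then $\transl{\Gamma} \vdash \transl{V} : \transl{A} \atmode{E}$ in \Metp{\emtScp} for \emph{every} effect context $E$ with $\transl{\Gamma}\atmode{E}$ well-formed (values in \Feps carry no effect, and their translations are values of a $\Pure$-kinded type, so the ambient effect context is irrelevant). Before the main induction I would dispatch three routine auxiliary lemmas: (i) \emph{kind preservation}, stating that $\transl{A}$ has kind $\Pure$ for every \Feps value type $A$ and that $\transl{E}$ is a well-formed effect context of kind $\Effect$; this is a straightforward induction on types, the crucial clause being $\transl{\earr{A}{B}{E}} = \boxwith{\aeq{\transl{E}}}(\transl{A}\to\transl{B})$, which is $\Pure$ precisely because the function arrow is wrapped in an absolute modality; (ii) \emph{commutation of substitution with translation}, $\transl{A[B/\alpha]} = \transl{A}[\transl{B}/\alpha]$ and likewise for effect substitution, needed for the type-application case; and (iii) \emph{preservation of value-hood}, that $\transl{V}$ is a \Metp{\emtScp} value (immediate from the clauses, since $\transl{\elambda{E} x^A.M} = \Mod_{\aeq{\transl{E}}}(\lambda x^{\transl{A}}.\transl{M})$ and $\transl{\Handler\;H}$ are both modality introductions over $\lambda$-abstractions), which licenses the value-restricted rules \tylab{Mod} and \tylab{TAbs}.

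The one genuinely structural lemma I would isolate is a \emph{lock-insertion lemma for pure contexts}: if every term-variable binding in a context $\Gamma_0$ has a $\Pure$-kinded type and $\Gamma_0 \vdash M : A \atmode{E}$, then inserting a lock among those bindings (in the positions the translation creates) preserves derivability, e.g.\ $\transl{\Gamma}, x:\transl{A} \vdash \transl{M} : \transl{B} \atmode{\transl{E}}$ implies $\transl{\Gamma}, \lockwith{\aeq{\transl{E}}_F}, x:\transl{A} \vdash \transl{M} : \transl{B} \atmode{\transl{E}}$. This is proved by induction on the derivation; the only rule that inspects the context is \tylab{Var}, and there the extra lock is harmless exactly because every binding in a translated context is $\Pure$-kinded (by kind preservation), so \tylab{Var} always succeeds through its first premise $\Gamma \vdash A : \Pure$ regardless of the locks following the variable. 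Context well-formedness is unaffected for the same reason, and because $\aeq{\transl{E}}(F) = \transl{E}$ so the effect-context indices line up.

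With these in hand, the main induction is a case analysis on the last \Feps rule. For the variable, unit, return, let (desugared to $(\lambda x.\transl{N})\,\transl{M}$ and typed by \tylab{App}/\tylab{Abs}), type-abstraction, type-application, and \tylab{Do} rules, the obligations follow directly from the IH and the auxiliary lemmas; the only points worth noting are that $\transl{\Sigma}$ still contains $\ell:\transl{A'}\sto\transl{B'}$ and the result effect context $\ell,\transl{E} = \transl{\ell,E}$ matches in \tylab{Do}, and that one rewrites $\transl{A'[A/\alpha]}$ as $\transl{A'}[\transl{A}/\alpha]$ in type application. The \tylab{Abs} case uses \tylab{Mod} (with $\mu = \aeq{\transl{E}}$, for which $\mu(F)=\transl{E}$ regardless of $F$) followed by \tylab{Abs}, closing the gap between the IH's context $\transl{\Gamma},x:\transl{A}$ and the required $\transl{\Gamma},\lockwith{\aeq{\transl{E}}_F},x:\transl{A}$ via the lock-insertion lemma. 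For \tylab{App}, the translation $\Letm{}{\aeq{\transl{E}}} x = \transl{V} \In x\,\transl{W}$ is typed with \tylab{Letmod}: the bound value comes from the value IH, and the subsequent use of $x$ (which receives binding modality $\aeq{\transl{E}}$) is justified by \mtylab{Abs}, since $\transl{E} \subtype \aid(\transl{E}) = \transl{E}$.

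The main obstacle, and the case exercising all of the new machinery, is \tylab{Handler}. Here $\transl{\Handler\;H}$ is $\Mod_{\aeq{\transl{E}}}(\lambda f.\,\Handle^{\aeq{\transl{E}}}\,(\Letm{}{\aeq{\transl{\ell,E}}} f'=f \In f'\,\Unit)\With \transl{H^{E}})$, and one must check: (a) that \tylab{Handle} applies with $\mu = \aeq{\transl{E}}$, whose comonadic side-conditions $\aeq{\transl{E}}\To\aid\atmode{\transl{E}}$ and $\aeq{\transl{E}}\To\aeq{\transl{E}}\circ\aeq{\transl{E}}=\aeq{\transl{E}}\atmode{\transl{E}}$ both reduce to $\transl{E}\subtype\transl{E}$; (b) that $f'\,\Unit$ typechecks under the stacked locks $\lockwith{\aeq{\transl{E}}},\lockwith{\aex{\ell}}$, using that $f$'s type $\boxwith{\aeq{\transl{\ell,E}}}(\TUnit\to\transl{A})$ is $\Pure$ so $f$ is accessible, and that after eliminating its modality $f'\varb{\aeq{\transl{\ell,E}}}(\TUnit\to\transl{A})$ is accessible at effect context $\ell,\transl{E}$ because $\transl{\ell,E}\subtype\ell,\transl{E}$; (c) the return clause $\Letm{}{\aeq{\transl{\ell,E}}} x' = x \In x'$, where $x$ receives type $\boxwith{\aeq{\transl{E}}\circ\aex{\ell}}\transl{A} = \boxwith{\aeq{\transl{\ell,E}}}\transl{A}$ and the elimination goes through since $\transl{A}:\Pure$; and (d) the operation clause, where the continuation $r$ receives exactly $\boxwith{\aeq{\transl{E}}}(\transl{B'}\to\transl{A}) = \transl{\earr{B'}{A}{E}}$, matching the type given to $r$ by \Feps's \tylab{Handler}, so the IH on $N$ applies once the intervening locks and the $\Pure$-typed binding of $f$ are absorbed via the lock-insertion lemma. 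Assembling these verifications yields the theorem; I expect the bookkeeping of effect-context indices on locks and the repeated appeals to $\transl{E}\subtype\transl{E}$-style facts about the scoped-row theory $\emtScp$ to be the bulk of the effort, but each step is local and mechanical.
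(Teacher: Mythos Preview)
Your proposal is correct and takes essentially the same approach as the paper: induction on typing derivations, with the key observation that all translated value types have kind $\Pure$ (so variables remain accessible through arbitrary locks), plus a lemma letting you freely insert locks or change the ambient effect context under translated contexts. The paper packages your ``lock-insertion lemma for pure contexts'' as a slightly more general \emph{Generalised Pure Promotion} lemma (allowing the suffix context and effect context to be replaced by any well-formed alternatives when all free variables are $\Pure$-typed or carry absolute modalities), and states your strengthened value hypothesis separately as a ``Pure Values'' lemma, but these are organisational differences only; your handling of each case, including the detailed \tylab{Handler} verification, matches the paper's.
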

\vspace{-.5\baselineskip}
\begin{restatable}[Semantics Preservation]{theorem}{FepsToMetpSemantics}
  \label{lemma:sematics-preservation-feffn-to-metn}
  If $M$ is well-typed and $M \reducesto N$ in \Feps, then $\transl{M}
  \reducesto^\ast \transl{N}$ in \Metp{\emtScp} where
  $\reducesto^\ast$ denotes the transitive closure of $\reducesto$.
\end{restatable}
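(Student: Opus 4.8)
The plan is to establish a simulation: a single reduction step of \Feps is matched by one or more steps of \Metp{\emtScp}. Since the term translation of \Cref{fig:feps-to-metp} is type-directed, I would first record two bookkeeping facts. First, type preservation for \Feps (standard for this calculus; see \Cref{app:semantics-feps}) guarantees that $\transl{N}$ is defined whenever $M\reducesto N$ and $\transl{M}$ is, and a coherence lemma shows the translation is independent of the chosen derivation up to the administrative reductions \semlab{Letmod} and \semlab{App} on translated redexes. Second, since the encoding never introduces local labels, the label context $\Instctx$ of \Metp{\emtScp} stays empty throughout, so I elide it. I would also extend $\transl{-}$ to the runtime configurations \Feps passes through while reducing (in particular the active-handler form installed when a handler value is applied), sending such a form to a $\Handle^{\aeq{\tr E}}$ whose handler is the translated operation clause $\transl{H^E}$; the translation of an active handler then unfolds, by a bounded number of \semlab{Letmod}/\semlab{App} steps, into the corresponding $\Handle^{\aeq{\tr E}}$ redex, exactly as the translation of $\Handler\;H$ applied to its argument does.

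Three auxiliary lemmas carry the routine work. (i) \emph{Value preservation}: for any \Feps value $V$, $\transl{V}$ is a \Metp{\emtScp} value normal form, and when $V$ has a function type $\earr{A}{B}{E}$ and is not a variable then $\transl{V}$ has the shape $\Mod_{\aeq{\tr E}}\,U$; this is an easy induction on $V$, using that $\lambda$- and $\Lambda$-abstractions are value normal forms regardless of their bodies. (ii) \emph{Substitution}: $\transl{M[V/x]} = \transl{M}[\transl{V}/x]$ and $\transl{M[A/\alpha]} = \transl{M}[\tr{A}/\alpha]$ (and likewise on values, handlers and types), by induction on $M$; the value case uses that substituting a value leaves the type annotations consulted by the term translation unchanged, while the type case uses that $\transl{-}$ is a homomorphism for type- and effect-row substitution. (iii) \emph{Evaluation contexts}: every \Feps evaluation context $\EC$ maps to a \Metp{\emtScp} evaluation context $\transl{\EC}$ with $\transl{\EC[M]}=\transl{\EC}[\transl{M}]$ after desugaring $\Let$, and $\BL{\EC}=\BL{\transl{\EC}}$, so that the ``dynamically innermost handler'' side condition of \semlab{Op} lines up on the two sides.

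With these in hand the proof is a case analysis on the rule used to derive $M\reducesto N$, with the induction hypothesis covering the congruence rules (which reduce to lemma (iii) plus \semlab{Lift}). A $\beta$-step $(\elambda{E} x^A.M)\,V\reducesto M[V/x]$ is simulated in two steps: the translated redex is $\Letm{}{\aeq{\tr E}} y = \Mod_{\aeq{\tr E}}(\lambda x.\transl{M}) \In y\;\transl{V}$, the leading $\Mod_{\aeq{\tr E}}$ coming from translating both the abstraction and the application site, so \semlab{Letmod} then \semlab{App} --- legal because $\transl{V}$ is a value normal form by lemma (i) --- yield $\transl{M}[\transl{V}/x] = \transl{M[V/x]}$ by lemma (ii). A $\Let$-of-$\Ret$ step is a single \semlab{App}; a type application $(\Lambda\alpha^K.V)\,A$ is a single \semlab{TApp} plus the type part of lemma (ii); and the handler-return step is simulated by \semlab{Ret} of \Metp{\emtScp}, where the modality $\mu=\aeq{\tr E}$ fixed by the handler translation makes the type $\boxwith{\aeq{\tr{\ell,E}}}\,\transl{A}$ of the wrapped return value match the elimination $\Letm{}{\aeq{\tr{\ell,E}}} x' = x \In x'$ in the return clause of $\transl{H^E}$.

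The operation case is the crux. In \Feps, performing $\Do\ell\;V$ under an active handler $H$ for $\ell$ with evaluation context $\EC$ not itself handling $\ell$ substitutes $V$ for $p$ and the deeply re-handled continuation for $r$ in the clause body $N$; in \Metp{\emtScp}, \semlab{Op} substitutes $V$ for $p$ and $\Mod_{\aeq{\tr E}}\;(\lambda y.\Handle^{\aeq{\tr E}}\;\transl{\EC}[y]\With\transl{H^E})$ for $r$. I would therefore have to show that the translation of \Feps's reified continuation reduces --- modulo the administrative unfolding of the translated active-handler form back into a $\Handle^{\aeq{\tr E}}$ redex --- to exactly this modality-wrapped continuation, using lemma (iii) to match $\EC$ with $\transl{\EC}$ and to check $\ell\notin\BL{\transl{\EC}}$, and lemma (i) to see that $\transl{V}$ is already a value normal form so that \semlab{Op} applies. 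This step is the main obstacle: it requires pinning down \Feps's precise runtime treatment of handler application and operation dispatch, and checking that \emph{deep} re-handling on the \Feps side corresponds to the recursive $\Handle^{\aeq{\tr E}}$ wrapping on the \Metp{\emtScp} side. It is also the reason the statement is phrased with $\reducesto^\ast$: each \Feps step unfolds into a fixed but nonzero number of administrative \semlab{Letmod}/\semlab{App} steps followed by the principal \Metp{\emtScp} step, rather than a one-to-one correspondence.
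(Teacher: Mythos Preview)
Your proposal is correct and follows essentially the same route as the paper: case analysis on the \Feps reduction rule, supported by (i) values translating to value normal forms, (ii) substitution commuting with translation, and (iii) evaluation contexts commuting with translation (the paper's \Cref{lemma:ectrans-feffn}), together with the extension of $\transl{-}$ to the runtime $\Handle$ form.

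One point where you over-complicate matters: in the \semlab{Op} case you anticipate that the translation of the \Feps continuation $\lambda^E y.\Handle\;\EC[\Ret y]\With H$ only \emph{reduces} to the target $\Mod_{\aeq{\tr E}}\;(\lambda y.\Handle^{\aeq{\tr E}}\;\tr{\EC}[y]\With\tr{H^E})$ modulo administrative unfolding. In fact it is this term \emph{on the nose}: the outer $\Mod_{\aeq{\tr E}}$ comes from translating the $\lambda^E$-abstraction, and the body is the direct translation of the runtime $\Handle$ form you defined. So the \semlab{Op} case is a single \Metp{\emtScp} step followed by an exact syntactic match, not a chain of administrative steps; the paper's proof exploits this. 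Similarly, your coherence worry is unnecessary here since \Feps typing derivations are essentially unique.
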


\section{Encoding a Capability-Based Effect System \`a la \Effekt}
\label{sec:capability-based-effect-systems}

In this section we briefly present \SystemC~\citep{BrachthauserSLB22},
a core calculus formalising the capability-based effect system of
\Effekt~\citep{EffektLang}, and show how to encode it into
\Metp{\mcS}.
We refer to \citet{BrachthauserSLB22} for a complete introduction to
\SystemC.

\subsection{\SystemC}
\label{sec:systemc}

\Cref{fig:typing-systemc} gives the syntax and typing rules for
\SystemC, which is fine-grain call-by-value~\citep{LevyPT03} and
distinguishes between first-class values $V$, blocks $P$ (second-class
functions), and computations $M$.

\begin{figure}[tb] \small
\[
\bs
  \slab{Value Types}\hspace{-1em} &A,B  &::= & \TUnit \mid {T \At C} \\
  \slab{Block Types}\hspace{-1.2em} &T &::= & \carr{\ol{A}}{{\ol{f:T}}}{B} \\
  \slab{Capability Sets}\hspace{-1.2em} &C &::=&  {\{\ol{f}\}} \\
  \slab{Contexts} &\Gamma  &::= & \cdot
                  \mid \Gamma, x: A
                  \mid \Gamma, {f:^C T}
                  \mid \Gamma, f:^\ast T
                  \\
  \slab{Values}   &V,W  &::= & x \mid \Unit \mid {\CBox P} \\
  \slab{Handlers} & H & ::= & \{p\;r \mapsto N\} \\
\es
\hfill
\bs
  \slab{Blocks} &P,Q &::= & f \mid \block{\ol{x:A}}{\ol{f:T}}{M} \\
                &  &\mid &{\CUnbox V} \vphantom{{\ol{f}}} \\
  \slab{Computations}\hspace{-.5em} &M,N  &::= & \Ret V \mid P(\ol{V},\ol{Q}) \\
                  & &\mid& \Let x = M \In N \\
                  & &\mid&  \Def f = P \In N \\
                  & &\mid&  \Try\;\{f^{(A)\To B}\To M\}\With H \\
\es\]

\raggedright
\boxed{\typxi{\Gamma}{V}{A}{} \vphantom{\mid}}
\boxed{\typxi{\Gamma}{P}{T}{C}}
\hfill
\begin{mathparshrink}
\inferrule*[Lab=\tylab{Unit}]
{ }
{\typxi{\Gamma}{\Unit}{\TUnit}{}}

\inferrule*[Lab=\tylab{Var}]
{
  \Gamma \ni x : A
}
{\typxi{\Gamma}{x}{A}{}}

\inferrule*[Lab=\tylab{Box}]
{
  \typxi{\Gamma
  }{P}{T}{C}
}
{\typxi{\Gamma}{\CBox P}{T\At C}{}}

\inferrule*[Lab=\tylab{Transparent}]
{
  \Gamma \ni f :^C T
}
{\typxi{\Gamma}{f}{T}{C}}

\inferrule*[Lab=\tylab{Tracked}]
{
  \Gamma \ni f :^\ast T
}
{\typxi{\Gamma}{f}{T}{\{f\}}}

\inferrule*[Lab=\tylab{Unbox}]
{
  \typxi{\Gamma}{V}{T\At C}{}
}
{\typxi{\Gamma}{\CUnbox V}{T}{C}}

\inferrule*[Lab=\tylab{Block}]
{
  \typxi{\Gamma
  ,{\ol{x:A},\ol{f:^\ast T}}}{M}{B}{C \cup \{\ol{f}\} }
}
{\typxi{\Gamma}{\block{\ol{x:A}}{\ol{f:T}}{M}}{\carr{\ol{A}}{\ol{f:T}}{B}}{C}}

\inferrule*[Lab=\tylab{BSub}]
{
\typxi{\Gamma}{P}{T}{C'} \\
C' \subseteq C
}
{\typxi{\Gamma}{P}{T}{C}}
\end{mathparshrink}

\raggedright
\boxed{\typxi{\Gamma}{M}{A}{C}}
\hfill
\vspace{-1.2\baselineskip}
\begin{mathpar}
\inferrule*[Lab=\tylab{Value}]
{
  \typxi{\Gamma}{V}{A}{}
}
{\typxi{\Gamma}{\Ret V}{A}{\cdot}}

\inferrule*[Lab=\tylab{Let}]
{
  \typxi{\Gamma}{M}{A}{C} \\\\
  \typxi{\Gamma,x:A}{N}{B}{C'} \\
}
{\typxi{\Gamma}{\Let x = M \In N}{B}{C\cup C'}}

\inferrule*[Lab=\tylab{Call}]
{
  \typxi{\Gamma}{P}{\carr{\ol{A_i}}{\ol{f_j:T_j}}{B}}{C} \\\\
  \ol{\typxi{\Gamma}{V_i}{A_i}{}} \\
  \ol{\typxi{\Gamma
  }{Q_j}{T_j}{C_j}}
}
{\typxi{\Gamma}{P(\ol{V_i},\ol{Q_j})}{B[\ol{C_j/f_j}]}{C\cup\ol{C_j}}}

\inferrule*[Lab=\tylab{Sub}]
{
\typxi{\Gamma}{M}{A}{C'} \\\\
C' \subseteq C
}
{\typxi{\Gamma}{M}{A}{C}}

\inferrule*[Lab=\tylab{Def}]
{
  \typxi{\Gamma
  }{P}{T}{C'} \\\\
  \typxi{\Gamma,f :^{C'} T}{M}{A}{C} \\
}
{\typxi{\Gamma}{\Def f = P \In M}{A}{C}}

\inferrule*[Lab=\tylab{Handle}]
{
  \typxi{\Gamma
  ,{f:^\ast\carr{A'}{}{B'}}}{M}{A}{C\cup\{f\}} \\\\
  \typxi{\Gamma,p:A',r:^C\carr{B'}{}{A}}{N}{A}{C} \\
}
{\typxi{\Gamma}{\Try\; \{f^{\carr{A'}{}{B'}}\To M\} \With \{p\;r\mapsto N\}}{A}{C}}
\end{mathpar}
\caption{Syntax and typing rules for \SystemC. We mostly follow the
syntax of \citet{BrachthauserSLB22}. The main difference is that we
write $\To$ for block types to emphasise they are second-class.}
\label{fig:typing-systemc}
\end{figure}

We have three typing judgements for values, blocks, and computations
individually.
Judgements for blocks $\typxi{\Gamma}{P}{T}{C}$ and computations
$\typxi{\Gamma}{M}{A}{C}$ explicitly track a capability set $C$, which
contains the capabilities in $\Gamma$ that may be used.

The typing rules of \SystemC are much more involved than those of
\Feps as capability tracking is deeply entangled with term constructs
such as block constructions (\tylab{Block}), block calls
(\tylab{Call}), block bindings (\tylab{Def}), and usages of block
variables (\tylab{Transparent} and \tylab{Tracked}).
Due to space constraints, we focus on explaining these key rules.

There are two rules for uses of block variables as there are two forms
of block variable bindings in contexts.
A \emph{tracked} binding $f :^\ast T$ stands for a capability.
Rule \tylab{Tracked} tracks $f$ itself in the singleton capability set
$\{f\}$.
A \emph{transparent} binding $f :^C T$ stands for a user-defined block
whose capability set $C$ is known.
Rule \tylab{Transparent} tracks $C$ as the capability set.

Rules \tylab{Def} and \tylab{Block} both bind block variables.
Rule \tylab{Def} binds a block $P$ as a transparent block
variable $f:^{C'} T$ where $C'$ is the capability set of $P$.
Rule \tylab{Block} binds a list of tracked block variables
(capabilities) $\ol{f:^\ast T}$ whose concrete capability sets are
unknown until called.
The rule \tylab{Block} reflects the roles that block constructions
play for capability tracking as we introduced in
\Cref{sec:overview-encoding-caps-higher-order-block}.
For instance, all capabilities $\ol{f}$ are added to the capability
set of the block body $M$.

Rule \tylab{Call} fully applies a block $P$ to values $\ol{V_i}$ and
blocks $\ol{Q_j}$.
The rule reflects the roles that block calls play for capability
tracking as we introduced in \Cref{sec:overview-encoding-caps-block-calls}.
It substitutes each block variable ${f_j}$ (recall that these
variables are bound as $f_j:^\ast T$ in rule \tylab{Block}) with
the capability set $C_j$ of the block $Q_j$ in type $B$.
The capability set of the call is the union of the capability sets of
$P$ and all its block arguments because all these arguments might be
invoked.

Rule \tylab{Handle} defines a named handler which introduces a
capability $f:\carr{A'}{}{B'}$ to the scope of $M$.
Operation invocation via calling $f$ in $M$ is handled by this
handler.
The capability $f$ is added to the capability set of $M$.
The continuation $r$ is introduced as a transparent binding with
capability set $C$ as it may only use capabilities in $C$ provided by
the context.

\SystemC adopts named handlers and a generative semantics with a
reduction relation $M\mid\Instctx \reducesto N\mid\Instctx'$ where
$\Instctx ::= \cdot \mid \ell:\carr{A}{}{B}$ is a context for
runtime operation labels, similar to \Metpt.
The most interesting reduction rule is \semlab{Gen} which uses a
runtime capability value $\tmcap{\ell}$ with a runtime label $\ell$ to
substitute a capability $f$ introduced by a handler.
{\small
\begin{reductions}
\semlab{Gen} &\Try\; \{f^{\carr{A}{}{B}}\To M\} \With H \mid \Instctx
  &\reducesto &
  \Try_\ell\; M[\tmcap{\ell}/f] \With H \mid \Instctx, \ell:\carr{A}{}{B}
  \quad\text{where } \ell \text{ fresh}
\end{reductions}
}
The full specification of operational semantics can be found in
\Cref{app:semantics-cap}.

\subsection{Encoding \SystemC in \Metp{\mcS}}
\label{sec:encoding-caps}

\Cref{fig:SystemC-to-Metn} encodes \SystemC in \Metp{\mcS}.
The term translation is type-directed and defined on typing
judgements.
We annotate components of a term with their types and capability sets
as necessary.
We highlight syntax relevant to modalities and type abstraction of the
term translation in \hle{grey}.
The grey parts show how modalities decouple capability tracking.
The black parts remain valid programs after type erasure.
The encoding is unavoidably more involved than that of \Feps because
of the deeper entanglement of capability tracking with blocks.
As in \Cref{sec:systemc}, we focus on explaining the encoding of
block-relevant constructs.

\begin{figure}[tb]
{\small
\renewcommand{\gray}[1]{#1}
\[\ba[t]{r@{\ \ }c@{\ \ }l}
  \transl{-} &:& \text{Cap Set} \to \text{Effect Context}\\
  \transl{\{\ol{f}\}} &=& \ol{\shat{f}} \\[2ex]
  \transl{-} &:& \text{Value / Block Type} \to \text{Type}\\
  \transl{\TUnit} &=& \TUnit \\
  \transl{T\At C} &=& \boxwith{\aeq{\transl{C}}} \transl{T} \\[.5ex]
  \transl{\carr{\ol{A}}{\ol{f:T}}{B}} &=&
    \forall\ol{\shat{f}} . \boxwith{\aex{\ol{\shat{f}}}}(
    \ol{\transl{A}}\to \ol{\boxwith{\aeq{\shat{f}}}\transl{T}}\to \transl{B})
    \\[2ex]
  \transl{-} &:& \text{Value} \to \text{Term} \\
  \transl{\Unit} &=& \Unit \\
  \transl{x} &=& x \\
  \transl{\CBox P : \gray{T \At C}} &=& \hle{\Mod_{\aeq{\transl{C}}}}\;\transl{P}
  \\[2ex]
\ea
\hfill
\hspace{-3.2em}
\ba[t]{r@{\ \ }c@{\ \ }l}
  \stransl{-}{} &:& \text{Context} \to \text{Context}\\
  \stransl{\cdot}{} &=& \cdot \\
  \stransl{\Gamma,x:A}{} &=& \stransl{\Gamma}{},x:\transl{A} \\[.5ex]
  \stransl{\Gamma,f:^\ast T}{} &=& \stransl{\Gamma}{},
                                 \shat{f},
                                 {f}:\boxwith{\aeq{\shat{f}}}{\transl{T}},
                                 \hat{f}\varb{\aeq{\shat{f}}}{\transl{T}} \\[.5ex]
  \stransl{\Gamma,f:^C T}{} &=& \stransl{\Gamma}{},
                                 {f}:\boxwith{\aeq{\transl{C}}}{\transl{T}},
                                 \hat{f}\varb{\aeq{\transl{C}}}{\transl{T}} \\[1ex]
  \transl{-} &:& \text{Block} \to \text{Term}\\
  \transl{f} &=& \hat{f} \\
  \transl{
    \{({\ol{x:A}},{\ol{f:T}}) \To {M} \}
  } &=&
    \hle{\Lambda\ol{\shat{f}} .
    \Mod_{\aex{\ol{\shat{f}}}}}\;(
    \lambda\ol{x^{\transl{A}}}\,\ol{f^{\boxwith{\aeq{\shat{f}}}\transl{T}}} .\\
    & & \qquad \ol{\Let\hle{\Mod_{\aeq{\shat{f}}}}\; \hat{f} = f \In}
    \transl{M}) \\
  \transl{\CUnbox V : \gray{T \mid C}} &=&
    \Let \hle{\Mod_{\aeq{\transl{C}}}}\; x = \,\transl{V} \In x
  \\[2ex]
\ea\]
\[\ba[t]{r@{\ \ }c@{\ \ }l}
  \transl{-} &:& \text{Computation / Handler} \to \text{Term}\\
  \transl{\Ret V} &=& \transl{V} \\
  \transl{\Let x=M \In N} &=& \Let x = \transl{M} \In \transl{N} \\
  \transl{\Def f=P:\gray{T\mid C}\In N} &=&
    \Let f = \hle{\Mod_{\aeq{\transl{C}}}}\,\transl{P} \In
    \Let\hle{\Mod_{\aeq{\transl{C}}}\;} \hat{f} = f \In
    \transl{N} \\[1ex]
  \transl{P(\ol{V_i},\ol{Q_j:\gray{T_j\mid C_j}})} &=&
  \Let \hle{\Mod_{\aex{\ol{\transl{C_j}}}}\; \black{x = \transl{P}}\;\ol{\transl{C_j}}}\In
  x\;\ol{\transl{V_i}}\;\ol{\hle{(\Mod_{\aeq{\transl{C_j}}}\,\black{\transl{Q_j}})}}
  \\[.5ex]
  \bigtransl{\Try\; \{f^{\carr{A'}{}{B'}}\To M\}\\ \!\!\With H : \gray{A\mid C}}
  &=& \bl
    {\Localeffect{\ell_f : \tr{A'}\sto\tr{B'}}} \Let\hle{\Mod_{\aex{\ell_f}}}\;g = \\[.5ex]
      \qquad\hle{({\Lambda\shat{f}}.\Mod_{\aex{\shat{f}}}\;(}
        \lambda f.
      \Let\hle{\Mod_{\aeq{\shat{f}}}}\;\hat{f} = f \In \transl{M}\hle{))\;\ell_f} \\[1ex]
    \!\!\In\Handle^{\aeq{\tr{C}}}\;
      (g\; (\Modhl{\aeq{\ell_f}}\;(\Modhl{\aid}\;(\lambda x^{\tr{A'}} . \Do \ell_f\;x))))
    \With \tr{H^{\gray{f,C}}} \el \\[.5ex]
  \transl{\{p\;r\mapsto N\}^{\gray{f,C}}} &=&
    \{\bl
    \Ret x \mapsto \Let \hle{\Mod_{\aeq{\ell_f,\tr{C}}}}\; x' = x \In x', \\
    \ell_f\;p\;r\mapsto \Let \hle{\Mod_{\aeq{\transl{C}}}}\; \hat{r} = r \In \transl{N}
    \}\el
\ea\]
}
\caption{An encoding of \SystemC in \Metp{\mcS}.}
\label{fig:SystemC-to-Metn}
\end{figure}

For block constructions and block calls, we have explained their
encodings in detail in
\Cref{sec:overview-encoding-caps-higher-order-block,sec:overview-encoding-caps-block-calls},
using the constructions and calls of blocks \namei{app}{C} and
\namei{app'}{C} as examples.

A block binding $\Def f = P \In N$ not only binds a block $P$ to $f$
but also annotate the binding $f:^{C'} T$ with the capability set $C'$
of the block $P$ as shown by rule \tylab{Def} in
\Cref{fig:typing-systemc}.
For instance, we can bind the block \namei{gen}{C} in
\Cref{sec:overview-encoding-caps-blocks} to $f$ and apply it to $42$.
Its typing derivation is as follows.
\begin{prog}
\inferrule*
{
  \var{\yld}:^\ast\carrsingle{\Int}{}{\TUnit} \Hvdash
    \namei{gen}{C}
    \Hcolon \carrsingle{\Int}{}{\TUnit} \Hmid \{\var{\yld}\}
    \qquad\qquad
  \var{\yld}:^\ast\carrsingle{\Int}{}{\TUnit},
  {f :^{\{\yld\}} \carrsingle{\Int}{}{\TUnit}} \Hvdash
    f(42) \Hcolon \TUnit \Hmid \{\var{\yld}\} \\
}
{\var{\yld}:^\ast\carrsingle{\Int}{}{\TUnit} \Hvdash
  \force{\Def} f = \namei{gen}{C}
  \force{\In} f(42)
  \Hcolon \TUnit \Hmid \{\var{\yld}\}}
\end{prog}
The binding of $f$ in the second premise is annotated with its
capability set $\{\yld\}$ since \namei{gen}{C} uses the capability
$\yld$.
We cannot simply encode such a transparent binding by ignoring its
annotation of the capability set.
Instead, we use an absolute modality to simulate this annotation.
To encode the binding of $f$, we wrap the translated block
$\namei{gen}{C}$ into the absolute modality $\aeq{\yld}$.
The full translation of the above term is as follows, where we provide
the omitted identity modality in
\Cref{sec:overview-encoding-caps-blocks}.
\begin{prog}
\bl
\force{\Let f = \Modhl{\aeq{\shat{\yld}}}}\;(\force{\Modhl{\aid}}\;(\lambda
x^\Int . \hat{\var{\yld}}\;x))
\In
{\Letmhl{}{\aeq{\shat{\yld}}} \hat{f} = f \In}
\Letmhl{}{\aid} f' = \hat{f}\In f'\;42
\el
\end{prog}
We eliminate the modality $\aeq{\shat{\yld}}$ of $f$ and bind it to
$\hat{f}$, reminiscent of how we translate block arguments bound by
block constructions.
In general, for a transparent block variable binding $f :^{C} T$ in
the context, it is translated to two variable bindings
$f:\boxwith{\aeq{\tr{C}}}\tr{T}$ and $\hat{f} \varb{\aeq{\tr{C}}}
\tr{T}$.

The translation of uses of block variables is simple. We translate
each $f$ to its hat version $\hat{f}$.
The simplicity benefits from the fact that we eagerly eliminate the
modality of each $f$ after it is introduced, e.g., in the translations
of block constructions and block bindings.

The translation of named handlers $\Try\; \{f^{\carr{A'}{}{B'}}\To
M\}\With H$ is different from the translation of \namei{sum}{C} in
\Cref{sec:overview-handler-systemc}.
The full translation of \namei{sum}{C} is as follows, where we provide
the omitted identity modality of the function $\lambda x^\Int.\Do\ell_{\yld}\;x$.
\begin{prog}
\bl
\Localeffect{\ell_\yld : \Int\sto\TUnit}
\Letmhl{}{\aex{\ell_\yld}} g = (\hle{\Lambda \shat{\yld} .} \Modhl{\aex{\shat{\yld}}}\;(
  \lambda\yld
  . \Letmhl{}{\aeq{\shat{\yld}}}
  \hat{\yld} = \yld \In \hat{y}\;42;\hat{y}\;37;0))
  \;\hle{\ell_\yld} \\
  \!\!\In
  \bl\Handle^{\aeq{\tr{C}}}\; (g\;
  (\Modhl{\aeq{\ell_\yld}}\;(\Modhl{\aid}\;(\lambda x^\Int . \Do \ell_\yld\;x))))
 \\
  \!\!\force{\With}
  \{\bl\Ret x \mapsto {\Letmhl{}{\aeq{\ell_\yld,\tr{C}}} x' = x \In} x', %
    \ell_\yld\;p\;r\mapsto {\Letmhl{}{\aeq{\tr{C}}} \hat{r} = r \In}
    p + \hat{r}\;\Unit \}
     \el\el
\el
\end{prog}

The main difference is that, instead of directly using the local label
$\ell_\yld$ for the handled computation, we introduce an effect
variable $\shat{\yld}$ first and substitute it with $\ell_\yld$.
This extra layer of abstraction is necessary to keep the translation
systematic, because our translations of types and terms consistently
translate a capability $\yld$ to an effect variable $\shat{\yld}$.
After reducing the type application and substitution of $g$ in the
above translation term, we get the translation of \namei{sum}{C} in
\Cref{sec:overview-handler-systemc}.

In the return clause, we additionally eliminate the modality of $x$.
In the operation clause, we eliminate the modality $\aeq{\tr{C}}$ of
$r$ and bind it to $\hat{r}$ as we use a modality-parameterised
handler.
Using a modality-parameterised handler is important because in
\namei{sum}{C}, the continuation $r$ is a transparent binding of form
$f:^C \TUnit\to\Int$ as shown by the typing rule \tylab{Handle} of
\SystemC in \Cref{sec:systemc}.
We need to wrap the translated continuation $r$ with the absolute
modality $\aeq{\tr{C}}$ to be consistent with the translation of
transparent bindings.

For contexts, we translate each entry.
For a variable binding $x:A$, we translate it homomorphically.
For a transparent binding of a block variable $f:^C T$, we translate
it to two term variables $f$ and $\hat{f}$ as discussed in the
translation of $\Def\!\!$ above.
For a tracked binding of a block variable $f:^\ast T$, we translate it
to an effect variable $\shat{f}$ and two term variables $f$ and
$\hat{f}$ as discussed in \Cref{sec:overview-encoding-rows}.

We have the following type and semantics preservation theorems with
proofs in \Cref{app:proof-systemc}.

\begin{restatable}[Type Preservation]{theorem}{SystemCToMetp}
  \label{lemma:type-preservation-systemc-to-metn}
  If $\,\Gamma \vdash M : A \mid C$ in \SystemC, then
  $\stransl{\Gamma}{} \vdash \transl{M} : \transl{A}
  \atmode{\transl{C}}$ in \Metp{\mcS}.
  Similarly for typing judgements of values and blocks.
\end{restatable}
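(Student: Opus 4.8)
The plan is to prove the three statements — for values, blocks, and computations — simultaneously by induction on the \SystemC\ typing derivation, strengthening the statement so that a value of type $A$ is translated to a term typeable at the empty effect context $\cdot$ (hence, by weakening, at any $E$), while a block or computation with capability set $C$ is translated to a term typeable at exactly $\transl C$. Before the induction I would record a few routine lemmas: (i) translation sends well-formed \SystemC\ types to well-kinded \Metp{\mcS}\ types, and in particular every value type translates to a type of kind $\Pure$, since $\transl{\TUnit} = \TUnit$ and $\transl{T\At C} = \boxwith{\aeq{\transl C}}\transl T$ are both guarded by an absolute modality; (ii) a type-substitution lemma $\transl{A[\ol{C_j/f_j}]} = \transl A[\ol{\transl{C_j}/\shat{f_j}}]$, matching \SystemC's capability substitution in \tylab{Call} against \Metp{\mcS}'s type application; (iii) an effect-weakening lemma: if $\typm{\Gamma}{M:A}{E}$ and $E \subtype_\mcS E'$ then $\typm{\Gamma}{M:A}{E'}$, which discharges the subsumption rules \tylab{Sub} and \tylab{BSub}; and (iv) the equivalence $\transl{C\cup C'} \equiv \transl C, \transl{C'}$ in the theory $\mcS$.

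The heart of the argument is a structural invariant on translated contexts: every entry of $\stransl\Gamma{}$ is either a value variable $x:\transl A$ with $\transl A$ of kind $\Pure$, or a boxed block variable $f:\boxwith{\aeq E}\transl T$ (again of kind $\Pure$), or a hatted block variable $\hat f\varb{\aeq E}\transl T$ whose modality is absolute with $E$ the translated capability content of $f$. Pure-typed variables are accessible after any lock sequence via the first premise of the $(\mu,A)\To\nu$ judgement, so the only genuine proof obligations come from uses of the $\hat f$ variables, and there rule \mtylab{Abs} reduces accessibility to showing that the ambient effect context contains $E$. The induction then proceeds case by case on the last \SystemC\ rule. \tylab{Unit}, \tylab{Var}, \tylab{Value}, \tylab{Let}, \tylab{Box}, and \tylab{Unbox} follow directly from \tylab{Mod}, \tylab{Letmod}, and the pureness of value types; \tylab{Transparent} and \tylab{Tracked} translate $f$ to $\hat f$ and discharge the accessibility obligation because \SystemC\ records the relevant capability in $C$, which by construction is the ambient effect context $\transl C$; \tylab{Def} wraps the translated block in $\aeq{\transl{C'}}$ and immediately eliminates it, mirroring the transparent-binding translation of contexts. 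The two structurally interesting cases are \tylab{Block}, translated by $\Lambda\ol{\shat f}.\Mod_{\aex{\ol{\shat f}}}(\dots)$, and \tylab{Call}, translated by $\Let\Mod_{\aex{\ol{\transl{C_j}}}} x = \transl P\,\ol{\transl{C_j}}\In\dots$: in both, after the type instantiations and modality eliminations the ambient effect context becomes $\ol{\shat f},\transl C$ (resp.\ $\transl C\cup\ol{\transl{C_j}}$), which is exactly the capability set that \SystemC\ assigns to the subterm, so the induction hypothesis applies; here I appeal to lemma~(ii) for the result types, to \tylab{TAbs}/\tylab{TApp} for the quantifiers, and to \mtylab{Extend} — using that $\ol{\transl{C_j}}, G \equiv G$ whenever $\ol{\transl{C_j}}\subseteq G$ — to discharge the elimination of the relative modality. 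The $\Mod_{\aeq{\transl{C_j}}}\transl{Q_j}$ arguments of a call, and the body of \tylab{Box}, additionally need a small lemma that prepending an absolute lock $\lockwith{\aeq E_F}$ to a translated context at ambient effect context $E$ preserves typeability of translated terms; this again rests on the structural invariant, since composing an absolute lock with further locks on its right stays absolute and the invariant pins the ambient effect context down.

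The case I expect to cause the most trouble is \tylab{Handle}. Its translation introduces a local label $\ell_f$ via $\Localeffect{\ell_f:\transl{A'}\sto\transl{B'}}$, let-binds the fibre of the block translation applied to $\ell_f$, and handles with a modality-parameterised handler $\Handle^{\aeq{\transl C}}$. Making this go through requires checking the side conditions of \tylab{LocalEffect} (the label does not escape, as $\transl A$ and $\transl C$ do not mention $\ell_f$); checking the comonadic side conditions of \tylab{Handle}, namely $\aeq{\transl C}\To\aid$ and $\aeq{\transl C}\To\aeq{\transl C}\circ\aeq{\transl C}$ at $F=\transl C$, which both follow from \mtylab{Abs} and reflexivity of $\subtype_\mcS$; reconciling the \Metp{\mcS}\ typing of the continuation $r$ at $\boxwith{\aeq{\transl C}}(\transl{B'}\to\transl A)$ with \SystemC's transparent binding $r:^C\carrsingle{B'}{}{A}$, which is precisely why the handler is modality-parameterised and why $\transl{H^{f,C}}$ eliminates the $\aeq{\transl C}$ modality of $r$; and, since the figure gives the translation of the handled computation only up to a few administrative reductions, either working with the unreduced term or first establishing that those displayed reductions are type-preserving (an instance of subject reduction). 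The remaining bookkeeping — threading $\Sigma$, identifying contexts up to the two lock equations, and the set-equivalence manipulations in $\mcS$ — is routine.
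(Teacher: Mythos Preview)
Your proposal is essentially the paper's proof. Both argue by induction on \SystemC\ derivations, both identify the crucial property that every binding in $\stransl{\Gamma}{}$ is either of kind $\Pure$ or carries an absolute modality, and both use this to justify freely inserting locks and enlarging the ambient effect context. The paper packages this into a single workhorse lemma (Generalised Pure Promotion, \Cref{lemma:pure-promotion-vars}) which, under exactly your structural invariant, lets one replace any suffix of the context and grow $E$; your decomposition into (iii) effect-weakening plus a separate ``prepend an absolute lock'' lemma is the same content sliced differently.

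Two small points to tighten. First, your lemma (iii) is false for \Metp{\mcS}\ in general (the ambient effect context is pinned down by the indices on bindings and locks); it holds only for contexts satisfying your structural invariant, which is exactly how the paper states it (\Cref{lemma:systemc-subeffecting} is phrased for $\stransl{\Gamma}{}$ specifically). Second, in the \tylab{Block} case you need to type the body under $\lockwith{\aex{\ol{\shat f}}}$, so the lock you must insert there is \emph{relative}, not absolute; your ``prepend an absolute lock'' lemma does not literally cover this, though your structural invariant does (accessibility of $\Pure$ and absolute-modal bindings is unaffected by any intervening lock). The paper handles both of these uniformly via \Cref{lemma:pure-promotion-vars}.
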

\vspace{-.5\baselineskip}
\begin{restatable}[Semantics Preservation]{theorem}{SystemCToMetpSemantics}
  \label{lemma:sematics-preservation-systemc-to-metn}
  If $M$ is well-typed and $M\mid\Instctx\reducesto N\mid{\Instctx'}$
  in \SystemC, then $\transl{M} \mid {\transl{\Instctx}}
  \reducesto^\ast \transl{N} \mid {\transl{\Instctx'}}$ in
  \Metp{\mcS}, where $\reducesto^\ast$ denotes the transitive closure
  of $\reducesto$.
\end{restatable}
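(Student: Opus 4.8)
The plan is to prove the theorem by a \emph{simulation} argument: each \SystemC reduction step is matched by a finite (usually nonempty) sequence of \Metp{\mcS} reduction steps, the extra steps being \emph{administrative} reductions that resolve the modality introductions/eliminations and type abstractions/applications inserted by the translation of \Cref{fig:SystemC-to-Metn}. First I would extend $\transl{-}$ to the runtime syntax of \SystemC used in its operational semantics (\Cref{app:semantics-cap}): the label context $\Instctx$ homomorphically ($\transl{\ell:\carr{A}{}{B}} = \ell : \transl{A}\sto\transl{B}$), runtime capability values $\tmcap{\ell}$, and the runtime handler form $\Try_\ell$, set up so that a \SystemC-fresh runtime label is aligned with the \Metp{\mcS}-fresh label produced by \semlab{Gen}. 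Throughout the proof I may invoke type preservation (\Cref{lemma:type-preservation-systemc-to-metn}) to know every translated term is well-typed, which is needed to justify that translated values and blocks are (or reduce in finitely many administrative steps to) value normal forms, and that the comonadic side conditions on the absolute modalities guarding translated handlers hold.

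The technical core is three families of lemmas. \textbf{(i) Substitution lemmas.} The translation commutes with substitution: $\transl{M[V/x]} = \transl{M}[\transl{V}/x]$ for value variables; $\transl{A[C/f]} = \transl{A}[\transl{C}/\shat{f}]$ for capability sets substituted into types; and a \emph{block substitution lemma} for $\transl{M[Q/f]}$ where a block $Q$ with capability set $C$ is substituted for a tracked variable $f :^\ast T$ --- here the single \SystemC substitution becomes the simultaneous substitution of the triple $\shat{f}\mapsto\transl{C}$, $f\mapsto\Mod_{\aeq{\transl{C}}}\transl{Q}$, $\hat{f}\mapsto\transl{Q}$ generated by $\stransl{\Gamma,f:^\ast T}{}$, and one must check this commutes with the eager modality eliminations that the translation inserts. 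An analogous statement handles substitution of a runtime capability $\tmcap{\ell}$, with $\shat{f}\mapsto\ell$ and $\hat{f}\mapsto(\lambda x.\Do\ell\;x)$. \textbf{(ii) A decomposition lemma for evaluation contexts}: $\transl{\EC[M]}$ reduces by administrative steps that do not touch the hole to $\EC'[\transl{M}]$ for some \Metp{\mcS} evaluation context $\EC'$, and moreover $\EC'$ contains a $\Handle^{\aeq{\transl{C}}}$ frame for $\ell_f$ exactly when the \SystemC context contains a $\Try_\ell$ frame for $f$, so that $\ell_f\notin\BL{\EC'}$ iff the corresponding side condition holds in \SystemC. \textbf{(iii) A value lemma}: translations of \SystemC values and blocks reduce in finitely many administrative steps to value normal forms.

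With these in place the proof proceeds by induction on the derivation of $M\mid\Instctx\reducesto N\mid\Instctx'$, with one case per reduction rule. For the congruence rule $\EC[M]\reducesto\EC[N]$ I apply lemma (ii) and the induction hypothesis, then use that administrative reductions in the frame part commute with the reductions taken under the hole. For each base rule I exhibit the matching administrative sequence: the block-call rule resolves the inserted $\Letm{}{\aex{\ol{\transl{C_j}}}}$, performs the type applications of $\Lambda\ol{\shat{f}}$ against $\ol{\transl{C_j}}$, strips $\Mod_{\aex{\ol{\shat{f}}}}$ and the $\Mod_{\aeq{\shat{f}}}$ wrappers, $\beta$-reduces, and then appeals to the block-substitution lemma (the capability-set substitution into the result type being matched by the type-level substitution $\shat{f}\mapsto\transl{C_j}$); the let- and def-binding rules are $\beta$-like modulo resolving the $\Mod_{\aeq{\transl{C}}}$/$\Letm{}{\aeq{\transl{C}}}$ pair introduced for transparent bindings, then invoke substitution; the unboxing rule resolves a $\Letm{}{\aeq{\transl{C}}}$ against the $\Mod_{\aeq{\transl{C}}}$ produced by $\transl{\CBox P}$; the handler-generation rule \semlab{Gen} is matched by \Metp{\mcS} \semlab{Gen} for the local label $\ell_f$ followed by administrative steps (eliminating $\aex{\ell_f}$ from $g$, the type application against $\ell_f$, stripping $\aex{\shat{f}}$, $\beta$, binding $\hat{f}$) reaching $\transl{\Try_{\ell_f}M[\tmcap{\ell_f}/f]\With H}$; the handler-return rule uses \Metp{\mcS} \semlab{Ret} plus one \semlab{Letmod} on the $\Letm{}{\aeq{\ell_f,\transl{C}}}$ in the translated return clause; and the operation-handling rule is the delicate one --- a \SystemC capability call $\tmcap{\ell}(V)$ inside an evaluation context first $\beta$-reduces the applied copy of $\hat{f} = \lambda x.\Do\ell\;x$ (resolving the surrounding identity modality) to a $\Do\ell_f\;\transl{V}$ redex sitting inside $\EC'$ with $\ell_f\notin\BL{\EC'}$, whereupon \Metp{\mcS} \semlab{Op} fires, followed by a final administrative step binding $\hat{r}$ in the translated operation clause. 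I expect the main obstacle to be exactly these handler cases, and within them the operation-handling case: one must align \SystemC's generative semantics (runtime $\Try_\ell$, capability values $\tmcap{\ell}$) with \Metp{\mcS}'s local-label generative semantics \emph{through} the extra layer of indirection by which the translation first introduces an effect variable $\shat{f}$ and only then instantiates it with the fresh label $\ell_f$, and one must track the nest of $\Mod_{\aid}$/$\Mod_{\aeq{\ell_f}}$ wrappers around the simulated capability so that precisely the right number of administrative $\beta$- and $\Letm{}{}$-reductions expose the $\Do\ell_f$ redex. A secondary difficulty is making lemma (ii) precise, since the translation plants administrative redexes inside the frames of \SystemC evaluation contexts (e.g.\ the leading $\Letm{}{\aex{\ol{\transl{C_j}}}}$ of a translated block call), so $\transl{\EC[M]}$ is not literally $\EC'[\transl{M}]$ but only reduces to it; establishing the requisite commutation between those frame-local administrative steps and the steps under the hole is where the bookkeeping concentrates.
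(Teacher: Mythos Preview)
Your approach matches the paper's: induction on the reduction with a case per rule, supported by substitution lemmas, a value-normal-form lemma for translated values and blocks, and an evaluation-context decomposition lemma. The one place you diverge is that your lemma (ii) is stated too weakly. The paper proves the literal \emph{equality} $\transl{\EC[M]} = \transl{\EC}[\transl{M}]$, not merely reduction to some $\EC'[\transl{M}]$. This works because \SystemC evaluation contexts comprise only $[~]$, $\Let x = \EC \In N$, $\Def f = \EC \In N$, and $\Try_\ell\;\EC\With H$ --- there is \emph{no} block-call frame, so your example of ``the leading $\Letm{}{\aex{\ol{\transl{C_j}}}}$ of a translated block call'' cannot arise as a frame of an evaluation context. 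The only subtle frame is $\Def f = \EC \In N$, whose translation wraps $\transl{\EC}$ in $\Mod_{\aeq{\transl{C}}}$; but $\Mod_\mu\;\EC$ is already a \Metp{\mcS} evaluation context, so $\transl{\EC}$ is syntactically a \Metp{\mcS} evaluation context with no administrative redexes in its frames. Your anticipated ``secondary difficulty'' about commuting frame-local administrative reductions with hole-local ones therefore evaporates: all administrative reductions (resolving the $\Mod_{\aid}$/$\Letm{}{\aid}$ pair around a translated capability call, reducing translated values to normal form) happen entirely inside the hole, under a fixed translated evaluation context, before \semlab{Op} fires. With this simplification your proof sketch is otherwise accurate.
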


\section{More Encodings and Discussions}
\label{sec:more-encodings}

In this section, we discuss more encodings of effect systems into
\Metpt, highlight practical language design insights gleaned from our
encodings, and outline potential extensions to \Metp{\mcT}.

\subsection{An Early Version of \Effekt}
\label{sec:systemxi}

\SystemXi~\citep{BrachthauserSO20} is an early core calculus of the
\Effekt language.
\SystemXi is essentially a fragment of \SystemC without boxes.
As a result, in \SystemXi capabilities can never appear in types since
we cannot box a second-class block into a first-class value.
While our encoding of \SystemC in \Cref{sec:encoding-caps} directly
gives an encoding of \SystemXi in \Metp{\mcS}, it introduces
unnecessary complexity.
Since capabilities never appear in types in \SystemXi, we do not need
to introduce an effect variable $\shat{f}$ for each capability $f$ in
the encoding.
It turns out that we can simply encode second-class blocks in
\SystemXi as first-class functions in \Metp{\mcS} without introducing
any extra term constructs.
For instance, a block ${\block{x:A}{f:T}{M}}$ is encoded as a function
$\lambda x^{\tr{A}}\,f^{\tr{T}}.\tr{M}$ by merely changing the notations.
We provide the full encoding of \SystemXi in \Metp{\mcS} in
\Cref{app:systemxi} and prove it preserves types and semantics in
\Cref{app:proof-systemxi}.

\subsection{Named Handlers in \Koka}
\label{sec:fepssn}

\citet{XieCIL22} extend \Koka with named handlers and formalise this
extension in the core calculus \Fepssn, which is based on \Feps.
\Fepssn allows each handler to bind a handler name that can be used to
invoke operations.
A handler name is similar to a capability in \SystemC but it is
a first-class value.
For instance, we can define a named handler in \Fepssn as follows.
\begin{prog}\bl
\namei{sum}{\Fepssns} \defeq
  \Lambda \evar . \force{\NHandler}\; \{\text{$\code{yield}\;p\;r\mapsto p + r\;\Unit$}\}\;
  : \forall\evar . (\forall a . \tyev{\code{yield}}{a} \to^{\code{yield}^a,\evar} \Int) \to^\evar \Int
\el\end{prog}
This handler is similar to the handler \namei{sum}{\Fepss} in
\Cref{sec:overview-handler-feps}.
The main difference is that the argument takes a value of type
$\tyev{\code{yield}}{a}$.
This is a first-class handler name with which we can invoke the
$\code{yield}$ operation.
For example, we can apply \namei{sum}{\Fepssn} as follows.
\begin{prog}\bl
  \namei{sum}{\Fepssns}~E~
  (\Lambda a . \lambda h^{\text{$\tyev{\scalebox{0.9}{\texttt{yield}}}{a}$}}
  . h\;42 ; h\;37 ; 0)
\el\end{prog}
Instead of using the label $\code{yield}$ to invoke the operation as
in application of \namei{sum}{\Fepss} in
\Cref{sec:overview-handler-feps}, we directly apply the handler name
$h$ to arguments.
This is reminiscent of the handler \namei{sum}{\SystemC} in
\Cref{sec:overview-handler-systemc} where we invoke the operation by
calling the capability introduced by the handler.
This program reduces to $79$.
The scope variable $a$ ensure scope safety of the handler name,
similar to the technique used by \lstinline{runST} in
\Haskell~\citep{LaunchburyJ95}.

As with the encoding of named handlers in \SystemC, we can encode a
named handler of \Fepssn by introducing a local label $\ell_a$ and
using the term $\Mod_{\aeq{\ell_a}}\;(\lambda x . \Do\ell_a\;x)$ to
simulate the handler name.
We use the effect structure $\mcS$ instead of $\emtScp$ as there can
never be duplicated handlers with the same name in \Fepssn.
The theory $\mcS$ gives us flexibility to have multiple effect
variables, which we use to encode scope variables.
We give the full encoding of \Fepssn in \Metp{\mcS} in
\Cref{app:fepssn} and prove its type and semantics preservation in
\Cref{app:proof-fepssn}.

\subsection{Insights for Language Design}
\label{sec:discussion}

In \Cref{sec:overview-comparison} and
\Cref{sec:overview-handler-comparison}, we demonstrated how our
encodings provide a direct way to compare the differences of \Feps and
\SystemC.
Moreover, our encodings can also help to inform language design
choices based on the following observations.

\begin{enumerate}[leftmargin=4ex]
\item Our encodings together demonstrate that modal effect types are
  as expressive as the row-based and capability-based effect systems
  we consider.
\item The encoding of \SystemXi (\Cref{sec:systemxi}) implies that we
  need not sacrifice first-class functions in order to obtain the
  benefits of the contextual effect polymorphism of \Effekt.
\item The encodings of \SystemC (\Cref{sec:encoding-caps}), \SystemXi
  (\Cref{sec:systemxi}), and \Fepssn (\Cref{sec:fepssn}) demonstrate
  that we can use local labels, a minimal extension as introduced in
  \Cref{sec:core-calculus}, to simulate the relatively heavyweight
  feature of named handlers in \Effekt and \Koka.
\item The encoding of \Fepssn (\Cref{sec:fepssn}) further demonstrates
  that the first-class handler names of \Koka offer no extra
  expressiveness over the second-class local labels of \Metpt.
\item The encoding of \SystemC (\Cref{sec:encoding-caps}) shows that
  instead of having a built-in form of capabilities which can appear
  at both term and type levels as in \Effekt and
  \Scala~\citep{BoruchGruszeckiOLLB23}, we can simulate it by
  introducing an effect variable for each argument and wrap the
  argument into an absolute modality with the corresponding effect
  variable.
\end{enumerate}

\subsection{Potential Extensions to \Metpt}
\label{sec:future-extensions}

We discuss three potential extensions to \Metp{\mcT} and leave their
full development as future work.

\paragraph{Effect Kinds}
We can extend the effect structure to abstract over effect kinds instead
of having a single kind $\Effect$. %
The augmented definition of effect structure is a triple $\mcT =
\langle\mR, :, \equiv\rangle$ where the new component $\mR$ is a set
of effect kinds.
We must extend the kinding and equivalence relations accordingly.
As an example of this extension, in order to characterise R\'emy-style
row types~\citep{Remy89} which use a kind system to ensure that there
is no duplicated label, we can declare $\mR = \{ \Row_\mcL \mid \mcL
\}$ where $\mcL$ is a label set and denotes all labels that must not
be in the row.
As another example, this extension enables us to combine different
effect structures together by assigning a kind to each theory.
For instance, we can declare two kinds $\meta{Set}$ and $\meta{Row}$
for theories $\mcS$ and $\emtScp$ respectively, and then give local
labels the kind $\meta{Set}$ and global labels the kind $\meta{Row}$.
We can then treat local labels as sets and global labels as scoped
rows.

\paragraph{Presence Types}
We can associate operation labels in extensions and effect contexts
with presence types~\citep{Remy94}.
Furthermore, instead of predefining the operation types for labels, we
can assign operation types to labels in extensions and effect contexts
in the manner of \citet{TangWDHLL25}.
For instance, the syntax of extensions could be extended to
$D::=\cdot\mid\ell:P,D\mid\evar,D$, where $P$ is a presence type
typically defined as $P ::= \Abs \mid \Pre{A\sto B} \mid \theta$.
A label can be absent ($\Abs$), present with a type ($\Pre{A\sto B}$),
or polymorphic over its presence ($\theta$).

\paragraph{Masking}
\Metp{\mcT} does not include the mask operator and the mask modality
$\amk{L}$ of \Met~\citep{TangWDHLL25}.
This enables us to substantially simplify the presentation of the core
calculus, especially the definitions relevant to modalities in
\Cref{sec:modalities}, compared to that of \citet{TangWDHLL25}.
Moreover, the lack of the mask operator does not influence our
encodings as the core calculi of \Effekt and \Koka do not have it.
Masking~\cite{BiernackiPPS18, ConventLMM20} is useful for effect
systems based on scoped rows where duplicated labels indicate nested
handlers for the same operation label.
With the mask operator, we can manually select which handler to use
when nested.
It is interesting future work to extend \Metp{\mcT} with a suitable
notion of abstract mask operator and extend the syntax of relative
modalities to $\adj{L}{D}$ where $L$ is a mask and $D$ is an
extension.
This extension will require extending the effect structure to define the
kinding and equivalence relations of masks.
A form of masking also makes sense for effect structures other than
$\emtScp$.
For instance, masking $\ell$ from a computation in $\mcS$ could be
used to disallow $\ell$ to be performed by the computation.

\section{Related and Future Work}
\label{sec:related-work}

\paragraph{Row-Based Effect Systems}
Row-based effect systems track effects by annotating function arrows
with row types denoting effects.
They have been adopted in research languages such as
\Links~\citep{linksrow}, \Koka~\citep{koka}, and \Frank~\citep{frank}.
\Links uses R{\'e}my-style row types with presence polymorphism
\citep{Remy94}, whereas \Koka and \Frank use scoped rows
\citep{Leijen05}.
\Eff~\citep{BauerP13} and \Helium~\citep{BiernackiPPS20} also track
effects on function arrows but treat effect types as sets.
In this paper we focus on \Koka, but we expect that other row-based
effect systems can be encoded similarly by instantiating the effect
structure appropriately.

\paragraph{Capability-Based Effect Systems}

Capability-based effect systems introduce and track effects as
capabilities.
Different variations diverge on when capability sets appear in types.
\Effekt~\citep{BrachthauserSO20,BrachthauserSLB22} uses second-class
functions and only attaches capability sets to types when boxing
functions.
\CaptureCalculus~\citep{BoruchGruszeckiOLLB23} and
\Capless~\citep{XuBPO25}, the foundations for capture tracking in
\Scala 3,
always annotate every type with its capability set and use subtyping
and syntactic sugar to simplify capability sets.
It is interesting future work to encode them in \Metpt.

\paragraph{Abstracting Effect Systems}
\citet{YoshiokaSI24} study different treatments of effect collections
in row-based effect systems.
They propose a parameterised core calculus, \LambdaEA, whose effect
types can be instantiated to various kinds of sets and rows.
The effect types in \LambdaEA are still entangled with function types.
As a result, \LambdaEA cannot encode capability-based effect systems.
We follow \LambdaEA in parameterising our core calculus \Metp{\mcT}
over different treatments of effect collections.
We make use of modalities to decouple effect tracking from function
types, enabling the encodings of both row-based and capability-based
effect systems.

\paragraph{Encoding into Modal Effect Types}
\citet{TangWDHLL25} consider a restricted row-based effect system in
which each effect type can refer only to the lexically closest effect
variable. This restricted system remains remarkably expressive and
suffices for many practical programs. Nonetheless, they show that it
can be encoded into simply-typed \Met without any effect polymorphism.
Our encodings consider richer source languages, showing that modal
effect types are as expressive as several row-based and
capability-based effect systems in the literature.

\paragraph{Local Effects}
Local labels in \Metpt allow us to introduce fresh effects locally.
They are useful for solving the effect encapsulation and accidental
handling problems~\citep{BiernackiPPS19,ConventLMM20,ZhangM19}.
As discussed in \Cref{sec:overview-handler-systemc}, there are various
local effect formalisms in the
literature~\citep{BiernackiPPS19,VilhenaP23,links-generative-labels};
most are based on dynamic generation of fresh effect names, whereas
the calculus of \citet{BiernackiPPS19} is based on effect coercions.
We conjecture that local labels of \Metpt are as expressive as
these formalisms.
We are interested in studying their relationship by encoding them into
\Metpt.

\paragraph{A Modal Type System for Benign Effects}
\citet{Nanevski04} propose a modal type system for benign effects in
Chapter 4.3. We refer to this system as MTBE.
MTBE supports local effects and indexes the standard necessity
modality $\BoxSym$ with effects for effect tracking.
In MTBE, a type $\BoxSym_E\;A$ means a computation which returns a
value of type $A$ and may perform effects in $E$.
This indexed necessity modality is similar to the absolute modality of
\Metpt.
The key difference between MTBE and \Metpt (and \Met) is that MTBE
has no notion of ambient effect context.
MTBE requires functions to be pure: every function type must specify
all the effects it may perform via a box.
In contrast, \Metpt allows a function to perform any effects from the
ambient effect context.
For example, an application function of type
$(\Int\to\TUnit)\to\Int\to\TUnit$ in \Metpt allows its argument to
perform any effects from the ambient effect context, whereas a
function with such a type in MTBE can only be applied to pure
functions (by default each function type has the empty $\BoxSym$).
In order to apply an application function to effectful arguments in
MTBE we must specify what effects may be performed in the type.
This requires parametric effect polymorphism in order to support
arbitrary effectful arguments.
Moreover, MTBE does not support relative modalities as relative
modalities are intimately tied to the notion of ambient effect
contexts.
Our encoding of \SystemC in \Cref{sec:encoding-caps} relies on the
notion of ambient effect contexts and relative modalities.
As a result, MTBE cannot serve as a general framework for encoding
various effect systems as \Metpt does.

\paragraph{Effectful Contextual Modal Type Theory}
\citet{ZyuzinN21} propose effectful contextual modal type theory
(ECMTT) which extends the \emph{contextual necessity
modality}~\citep{NanevskiPP08}
to track contexts of effectful operations.
Similar to MTBE, ECMTT also lacks the notion of ambient effect
contexts and is thus less expressive and flexible than \Metpt.
Moreover, ECMTT does not support dynamic generation of fresh effect
names and thus cannot express named handlers as in \Effekt.

\paragraph{Call-By-Push-Value}
Attempts to decouple programming language features have frequently
born fruit.
For instance, call-by-push-value (CBPV)~\citep{Levy2004} subsumes both
call-by-value (CBV) and call-by-name (CBN) by decoupling thunking and
forcing from function abstraction and application.
Our work is in a similar vein.
More interestingly, our encodings of \Feps and \SystemC possess
certain similarities with Levy's encodings of CBV and CBN into CBPV,
respectively.
In our encoding of \Feps, each function is wrapped in an absolute
modality, reminiscent of the CBV-to-CBPV encoding where each function
is thunked.
In our encoding of \SystemC, we only wrap a block in an absolute
modality when passing it as an argument, reminiscent of the
CBN-to-CBPV encoding, in which thunking of a function is deferred
until passing it as an argument.
We are interested in further exploring these similarities.

\paragraph{Expressive Power of Effect Handlers}

\citet{ForsterKLP19} compare the expressive power of effect handlers,
monadic reflection, and delimited control in a simply-typed setting
and show that delimited control cannot encode effect handlers in a
type-preserving way.
\citet{PirogPS19} extend the comparison between effect handlers and
delimited control to a polymorphic setting and show their equivalence.
\citet{IkemoriCM23} further show the typed equivalence between named
handlers and multi-prompt delimited control.
In contrast to these works, which compare effect handlers with
other programming abstractions, we compare different effect systems
for effect handlers.

\paragraph{Future Work}

In addition to the ideas already discussed above and in
\Cref{sec:future-extensions}, other directions for future work
include: exploring inverse encodings (from instantiations of \Metpt
into other calculi); studying parametricity and abstraction
safety~\citep{BiernackiPPS20,ZhangM19} for \Metpt; and further
developing \Metpt as a uniform intermediate language for type- and
effect-directed optimisation.

\begin{acks}
We thank Jonathan Immanuel Brachthäuser, Anton Lorenzen, Orpheas van
Rooij, Jesse Sigal, and the anonymous reviewers of ICFP 2025 and POPL
2026 for feedback.
Sam Lindley was supported by UKRI Future Leaders Fellowship ``Effect
Handler Oriented Programming'' (MR/T043830/1 and MR/Z000351/1) and by
the Huawei Edinburgh Joint Lab project “EPOCH: Effectful programming
on capability hardware”.
\end{acks}

\FloatBarrier
\bibliography{reference}

\ifnoappendix
\else
  \appendix
  \section{Formal Definitions of Effect Mode Theories}
\label{app:effect-mode-theories}

We provide the formal definitions of effect structures $\mcS$,
$\emtSimp$, and $\emtScp$ as introduced in
\Cref{sec:effect-mode-theory}.

\begin{figure}[htbp] \rulesize

\raggedright
\boxed{\Gamma\vdash D : \Effect}
\hfill
\begin{mathpar}
\inferrule*
{ }
{\Gamma \vdash \cdot : \Effect }

\inferrule*
{
  \Sigma,\Gamma \ni \ell : A\sto B \\
  \Gamma \vdash D : \Effect \\
}
{\Gamma \vdash \ell,D : \Effect}

\inferrule*
{
  \Gamma \vdash \evar : \Effect \\
  \Gamma \vdash D : \Effect \\
}
{\Gamma \vdash \evar,D : \Effect}
\end{mathpar}

\raggedright
\boxed{\Gamma \vdash D\equiv D'}
\hfill
\begin{mathpar}
\inferrule*
{ }
{D \equiv D}

\inferrule*
{
   D_1 \equiv D_2 \\ D_2 \equiv D_3
}
{D_1 \equiv D_3}

\inferrule*
{
  D \equiv D'
}
{\ell,D \equiv \ell,D'}

\inferrule*
{
  D \equiv D'
}
{\evar,D \equiv \evar,D'}

\inferrule*
{
}
{\ell,\ell',D \equiv \ell',\ell,D}

\inferrule*
{ }
{\ell,\evar,D \equiv \evar,\ell,D}

\inferrule*
{ }
{\evar,\evar',D \equiv \evar',\evar,D}

\inferrule*
{ }
{\ell,\ell,D \equiv \ell,D}

\inferrule*
{ }
{\evar,\evar,D \equiv \evar,D}
\end{mathpar}

\caption{The effect structure $\mcS$ (sets).}
\label{fig:emt-systemC}
\end{figure}

\begin{figure}[htbp] \rulesize

\raggedright
\boxed{\Gamma\vdash D : \Effect}
\hfill
\begin{mathpar}
\inferrule*
{ }
{\Gamma \vdash \cdot : \KindEffect}

\inferrule*
{
  \Sigma,\Gamma \ni \ell : A\sto B \\
  \Gamma \vdash D : \KindEffect \\
}
{\Gamma \vdash \ell,D : \KindEffect}
\end{mathpar}

\raggedright
\boxed{\Gamma \vdash D\equiv D'}
\hfill
\begin{mathpar}
\inferrule*
{ }
{D \equiv D}

\inferrule*
{
   D_1 \equiv D_2 \\ D_2 \equiv D_3
}
{D_1 \equiv D_3}

\inferrule*
{
  D \equiv D'
}
{\ell,D \equiv \ell,D'}

\inferrule*
{
  \ell \neq \ell'
}
{\ell,\ell',D \equiv \ell',\ell,D}
\end{mathpar}

\caption{The effect structure $\mathcal{R}_\textsf{scp}$ (scoped rows).}
\end{figure}

\begin{figure}[htbp] \rulesize
\raggedright
\boxed{\Gamma\vdash D : K}
\hfill
\begin{mathpar}
\inferrule*
{ }
{\Gamma \vdash \cdot : \KindEffect}

\inferrule*
{
  \Sigma,\Gamma \ni \ell : A\sto B \\
  \Gamma \vdash D : \KindEffect \\
}
{\Gamma \vdash \ell,D : \KindEffect}
\end{mathpar}

\raggedright
\boxed{\Gamma \vdash D\equiv D'}
\hfill
\begin{mathpar}
\inferrule*
{ }
{D \equiv D}

\inferrule*
{
   D_1 \equiv D_2 \\ D_2 \equiv D_3
}
{D_1 \equiv D_3}

\inferrule*
{
  D \equiv D'
}
{\ell,D \equiv \ell,D'}

\inferrule*
{
}
{\ell,\ell',D \equiv \ell',\ell,D}

\inferrule*
{ }
{\ell,\ell,D \equiv \ell,D}
\end{mathpar}

\caption{The effect structure $\mathcal{R}_\textsf{simp}$ (simple rows).}
\end{figure}

  \FloatBarrier
  \section{Omitted Rules of \Metp{\mcT}}
\label{app:full-spec}

We provide the kinding and type equivalence rules of \Metpt omitted in
\Cref{sec:core-calculus}.
We also provide some auxiliary definitions used by our proofs.

\subsection{Kinding and Well-Formedness}
\label{app:rules-metp}

The full kinding and well-formedness rules for \Metp{\mcT} are defined
in \Cref{fig:kinding-metp}.
For the global label context $\Sigma$ we require the kinding judgement
$\cdot\vdash A\sto B$ to hold for every $(\ell:A\sto B)\in \Sigma$.

\begin{figure}[htbp]\small
\raggedright
\boxed{\Gamma\vdash A : K\vphantom{\mu}}
\hfill
\begin{mathpar}
\inferrule*
{
  \Gamma \vdash A : \Pure
}
{\Gamma \vdash A : \Any}

\inferrule*
{
  \Gamma \ni \alpha : K
}
{\Gamma \vdash \alpha : K}

\inferrule*
{ }
{\Gamma \vdash \TUnit : \Pure}

\inferrule*
{
  \Gamma \vdash \aeq{E} \\
  \Gamma \vdash A : \Any \\
}
{\Gamma \vdash \boxwith{\aeq{E}} A : \Pure}

\inferrule*
{
  \Gamma \vdash \aex{D} \\
  \Gamma \vdash A : K \\
}
{\Gamma \vdash \boxwith{\aex{D}} A : K}

\inferrule*
{
  \Gamma \vdash A : \Any \\
  \Gamma \vdash B : \Any
}
{\Gamma \vdash A\to B : \Any}

\inferrule*
{
  \Gamma, \alpha:K \vdash A : K
}
{\Gamma \vdash \forall\alpha^K.A : K}
\end{mathpar}

\raggedright
\boxed{\Gamma\vdash \mu}
\hfill
\begin{mathpar}
\inferrule*
{
  \Gamma \vdash D : \Effect \\
}
{\Gamma \vdash \aex{D}}

\inferrule*
{
  \Gamma \vdash E : \Effect
}
{\Gamma \vdash \aeq{E}}
\end{mathpar}

\raggedright
\boxed{\Gamma\vdash A\sto B}
\hfill
\begin{mathpar}
\inferrule*
{
  \Gamma \vdash A : \Pure \\
  \Gamma \vdash B : \Pure
}
{\Gamma \vdash A \sto B}
\end{mathpar}

\raggedright
\boxed{\Gamma \atmode{E}}
\hfill
\begin{mathpar}
\inferrule*
{ }
{\cdot\atmode{E}}

\inferrule*
{
  \Gamma\atmode{F} \\
  \Gamma \vdash A : K \\
}
{\Gamma, x\varb{\mu_F}{A} \atmode{F}}

\inferrule*
{
  \Gamma\atmode{F} \\
  \mu(F) = E \\
}
{\Gamma, \lockwith{\mind{\mu}{F}} \atmode{E}}
\\

\inferrule*
{
  \Gamma\atmode{E} \\
}
{\Gamma, \alpha:K \atmode{E}}

\inferrule*
{
  \Gamma\atmode{E} \\
  \Gamma \vdash A \sto B \\
}
{\Gamma, \ell : A\sto B \atmode{E}}
\end{mathpar}
\caption{Kinding and well-formedness rules for \Metpt. Kinding rules of extensions and effect contexts are provided by the effect structure $\mcT$.}
\label{fig:kinding-metp}
\end{figure}

\subsection{Type Equivalence}
\label{app:equiv-and-order}

The type equivalence relation is defined in \Cref{fig:equiv-metn}.

\begin{figure}[htbp]\small
\raggedright
\boxed{\Gamma \vdash \mu\equiv \nu}
\hfill
\begin{mathpar}
\inferrule*
{
  E \equiv F
}
{\aeq{E} \equiv \aeq{F}}

\inferrule*
{
  D \equiv D'
}
{\aex{D} \equiv \aex{D'}}
\end{mathpar}
\raggedright
\boxed{\Gamma\vdash A\equiv B}
\hfill
\begin{mathpar}
\inferrule*
{
  \Gamma \vdash \alpha : K
}
{\Gamma \vdash \alpha \equiv \alpha}

\inferrule*
{ }
{\TUnit \equiv \TUnit}

\inferrule*
{
  \mu \equiv \nu \\
  A \equiv B
}
{\boxwith{\mu}A \equiv \boxwith{\nu}B}

\inferrule*
{
  A \equiv A' \\
  B \equiv B'
}
{A\to B \equiv A'\to B'}

\inferrule*
{
  \Gamma, \alpha : K \vdash A \equiv B
}
{\Gamma \vdash \forall\alpha^K.A \equiv \forall\alpha^K.B}
\end{mathpar}

\caption{Type equivalence rules for \Metpt. Type equivalence rules of extensions and effect contexts are provided by the effect structure $\mcT$.}
\label{fig:equiv-metn}
\end{figure}

\subsection{Mode Theory}
\label{app:mode-theory}

In the terminology of MTT~\citep{GratzerKNB20}, effect contexts are
\emph{modes}.
The structure of modes, modalities, and modality transformation
constitute the \emph{mode theory}.

To make the proofs easier, we frequently write modalities in the form
$\mu_F$ as introduced in \Cref{sec:modalities}.
Supposing $\mu(F) = E$, we read $\mu_F$ as a morphism $E\to F$ from
mode $E$ to mode $F$.
The reading of $\mu_F$ as a morphism between modes is consistent with
definition of modalities in MTT, while $\mu$ itself is actually an
indexed family of morphisms.
We call $\mu_F$ \emph{concrete modalities} since we have already
called $\mu$ modalities.
We repeat the definitions of modalities and modality composition using
syntax $\mu_F$ for easy reference. They are the same as those in
\Cref{sec:modalities}.
\[\ba{rcr@{\ \ }c@{\ \ }l}
\aeq{E}_F &:& E    &\to& F   \\
\aex{D}_F &:& D+F &\to& F \\
\ea\]
\[\ba{rclcll}
\aeq{E'}_F&\circ&\aeq{E}_{E'} &=& \aeq{E}_F
\\
\aex{D}_F&\circ&\aeq{E}_{D+F} &=& \aeq{E}_F
\\
\aeq{E}_F&\circ&\aex{D}_E &=& \aeq{D+E}_F
\\
\aex{D_1}_F&\circ&\aex{D_2}_{D_1+F} &=&
  \aex{D_2+D_1}_F
\\
\ea\]

We write $D+E$ for $D,E$ and $D+D'$ for $D,D'$ for notation
consistency with \Met.
We also write $\Gamma\vdash \mu_F\To\nu_F$ for
$\Gamma\vdash\mu\To\nu\atmode{F}$.
As we extend composition to concrete modalities, we also let the
operation $\locks{\Gamma}$ return a concrete modality as follows.
\[\bl
\locks{\cdot} = {\aid_F} \qquad
\locks{\Gamma,\lockwith{\mind{\mu}{F}}} = \locks{\Gamma}\circ {\mu_F} \qquad
\locks{\Gamma,x\varb{\mu_F}{A}} = \locks{\Gamma} \\
\el\]

  \FloatBarrier
  \section{Meta Theory and Proofs for \Metp{\mcT}}
\label{app:proofs-metp}

We provide meta theory and proofs for \Metp{\mcT} introduced in
\Cref{sec:core-calculus}.
The proofs are based on the proofs for \Met in \citet{TangWDHLL25} but
are parameterised over the effect structure $\mcT$.
We require the effect structure to satisfy the validity conditions in
\Cref{def:sanity-conditions}.\footnote{Our proofs (especially the
proof of progress in \Cref{app:progress}) only use the second validity
condition. The first condition is not necessary to show progress and
subject reduction but it is natural. We opt for keeping the first
condition.}

\subsection{Properties of the Mode Theory of \Metpt}
\label{app:proofs-mode-theory}

Our proofs for type soundness rely on some properties of the mode
theory of \Metpt.

First, the mode theory of \Metpt should form a double category.
The effect contexts and subeffecting (a preorder relation) form a
category generated by a poset.
The effect contexts (objects) and modalities (horizontal morphisms)
also form a category since modality composition possesses
associativity and identity.
We have the following lemma.

\begin{restatable}[Modes and modalities form a category]{lemma}{modCat}
  \label{lemma:mod-cat}
  Modes and modalities form a category with the identity morphisms
  $\one_E = \aid{}_E : E\to E$ and the morphism composition
  $\mu_F\circ\nu_{F'}$ such that
  \begin{enumerate}
    \item Identity: $\one_F\circ\mind{\mu}{F} = \mind{\mu}{F} =
    \mind{\mu}{F}\circ\one_E$ for $\mind{\mu}{F}:E\to F$.
    \item Associativity: $(\mu_{E_1}\circ\nu_{E_2})\circ\xi_{E_3} = \mu_{E_1}\circ(\nu_{E_2}\circ\xi_{E_3})$
    for $\mind{\mu}{E_1}:E_2\to E_1$, $\mind{\nu}{E_2}:E_3\to E_2$, and $\mind{\xi}{E_3}:E\to E_3$.
  \end{enumerate}
\end{restatable}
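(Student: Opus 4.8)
The plan is to verify the two stated equations directly from the definition of modality composition given in Appendix~\ref{app:mode-theory}, proceeding by case analysis on the shapes of the modalities involved (absolute $\aeq{-}$ versus relative $\aex{-}$). The identity morphism at mode $E$ is $\aid_E$, which — following the conventions of \Met\ — is the relative modality $\aex{\cdot}$ indexed at $E$, since $\act{\aex{\cdot}}{E} = \cdot,E = E$. So the identity laws amount to checking that $\aex{\cdot}$ is a two-sided unit for $\circ$, and associativity amounts to a four-way case split.

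\textbf{Identity.} For the left unit law $\one_F \circ \mu_F = \mu_F$ with $\mu_F : E \to F$: if $\mu = \aeq{E}$, then by the first composition clause $\aex{\cdot} \circ \aeq{E} = \aeq{E}$ directly (any modality composed on the left of an absolute modality yields that absolute modality); if $\mu = \aex{D}$, then by the third clause $\aex{\cdot} \circ \aex{D} = \aex{D,\cdot}$, and since $D,\cdot = D$ syntactically this is $\aex{D}$. For the right unit law $\mu_F \circ \one_E = \mu_F$: if $\mu = \aeq{E}$, the second clause gives $\aeq{E} \circ \aex{\cdot} = \aeq{\cdot,E} = \aeq{E}$; if $\mu = \aex{D}$, the third clause gives $\aex{D} \circ \aex{\cdot} = \aex{\cdot, D} = \aex{D}$. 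In each subcase the index bookkeeping — checking that the target and source modes match up, e.g.\ that $\aex{\cdot}$ at the right mode is the one that composes — is routine and follows from $\act{(\mu\circ\nu)}{E} = \act{\nu}{\act{\mu}{E}}$, which is already established in Section~\ref{sec:modalities}.

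\textbf{Associativity.} For $(\mu_{E_1} \circ \nu_{E_2}) \circ \xi_{E_3} = \mu_{E_1} \circ (\nu_{E_2} \circ \xi_{E_3})$, I would split on whether each of $\mu, \nu, \xi$ is absolute or relative. The cases where \emph{some} modality to the left of a $\cdot$ is absolute collapse quickly: if $\nu = \aeq{E_2}$ or $\mu = \aeq{E_1}$, then both sides reduce (by the ``absolute absorbs everything on its left'' clause) to the leftmost absolute modality, possibly after one extension-absorption step; concretely, with $\mu = \aeq{E_1}$ both sides are $\aeq{E_1}$, and with $\mu$ relative but $\nu = \aeq{E_2}$ both sides are $\aeq{\nu'}$ for the appropriate $\nu'$ determined by $\xi$. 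The only genuinely computational case is $\mu = \aex{D_1}$, $\nu = \aex{D_2}$, $\xi = \aex{D_3}$: the left side is $(\aex{D_1} \circ \aex{D_2}) \circ \aex{D_3} = \aex{D_2,D_1} \circ \aex{D_3} = \aex{D_3, (D_2,D_1)}$, while the right side is $\aex{D_1} \circ (\aex{D_2} \circ \aex{D_3}) = \aex{D_1} \circ \aex{D_3,D_2} = \aex{(D_3,D_2),D_1}$, and these agree because list concatenation $D,D'$ is associative (as noted in Section~\ref{sec:effect-mode-theory}). The subcase $\mu = \aex{D_1}$, $\nu = \aex{D_2}$, $\xi = \aeq{E_3}$ is similar but uses the clause $\aeq{E} \circ \aex{D} = \aeq{D,E}$: left side gives $\aex{D_2,D_1} \circ \aeq{E_3} = \aeq{E_3}$ — wait, more carefully, $\aeq{E_3}$ composed after absorbs, giving $\aeq{E_3}$; actually re-checking against the clauses shows both sides yield $\aeq{E_3}$ since an absolute modality on the right absorbs; this and the remaining mixed subcases are all immediate from the defining equations.

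\textbf{Expected obstacle.} I do not anticipate a deep difficulty — this is a mechanical verification — but the main bookkeeping burden is tracking the \emph{indices} $E_1, E_2, E_3$ correctly through each composition, ensuring that at every step the modality being composed is indexed at the mode produced by the previous step (so that composition is even defined), and confirming that the identity modality $\aid$ is interpreted consistently as $\aex{\cdot}$ at the relevant mode. The cleanest way to discharge this is to first record the well-definedness fact $\act{(\mu\circ\nu)}{E} = \act{\nu}{\act{\mu}{E}}$ (already available) and the associativity/identity of $\circ$ on bare modalities (also already asserted in Section~\ref{sec:modalities}), and then observe that the indexed versions follow by decorating those equations with the forced indices. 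I expect this lemma to be half a page at most.
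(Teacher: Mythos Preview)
Your approach --- a direct case analysis on the shapes of the modalities --- is exactly what the paper does (its proof reads simply ``By inlining the definitions of modalities and checking each case''). The identity laws and the all-relative associativity case are handled correctly.

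However, your treatment of the ``easy'' associativity subcases has the direction backwards. The clause $\mu \circ \aeq{E} = \aeq{E}$ means an absolute modality absorbs whatever is to its \emph{left} in the composition chain, so when a chain $\mu_{E_1} \circ \nu_{E_2} \circ \xi_{E_3}$ contains an absolute modality, the result is determined by the \emph{rightmost} one (possibly extended by any relative modalities further right), not the leftmost. In particular, your claim that ``with $\mu = \aeq{E_1}$ both sides are $\aeq{E_1}$'' is wrong on two counts: first, if $\mu_{E_1} : E_2 \to E_1$ is absolute then $\mu = \aeq{E_2}$, not $\aeq{E_1}$; second, if $\nu = \aex{D_2}$ and $\xi = \aex{D_3}$ are both relative, then both sides equal $\aeq{D_3, D_2, E_2}$, not $\aeq{E_2}$. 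The correct case split is on whether $\xi$ (and then $\nu$) is absolute, not whether $\mu$ is. This is a localized slip and easy to fix once the direction is straightened out.
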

\begin{proof}
  By inlining the definitions of modalities and checking each case.
\end{proof}

As in \citet{TangWDHLL25}, we need to extend the modality
transformation relation a bit for meta theory and proofs.
We write $\Gamma\vdash \mu_F\To\nu_F$ if
$\Gamma\vdash\mu\To\nu\atmode{F}$.
We extend it to allow judgements of form
$\Gamma\vdash\mu_F\To\nu_{F'}$ where $F\subtype F'$ and add one new
rule \mtylab{Mono} as follows.
{
\begin{mathpar}
  {
  \inferrule*[Lab=\mtylab{Mono}]
  {
    \Gamma \vdash F \subtype F' \\
  }
  {\Gamma \vdash \mu_F \To \mu_{F'}}
  }
\end{mathpar}
}

Now we show that modality transformations are 2-cells in the double category.

\begin{restatable}[Modality transformations are 2-cells]{lemma}{modTransTwoCells}
  \label{lemma:modtrans-two-cells}
  If $\mu_F \To \nu_{F'}$, $\mu_F:E\to F$, and $\nu_{F'}:E'\to F'$,
  then $E\subtype E'$ and $F\subtype F'$.
  Moreover, the transformation relation is closed under vertical and
  horizontal composition as shown by the following admissible rules.
  \begin{mathpar}
  \inferrule*
  {
    \mu_{F_1} \To \nu_{F_2} \\
    \nu_{F_2} \To \xi_{F_3}
  }
  {\mu_{F_1} \To \xi_{F_3}}

  \inferrule*
  {
    \mu_F\To\mu'_{F'} \\
    \nu_E\To\nu'_{E'} \\
    \mu_F : E \to F \\
    \mu'_{F'} : E' \to F' \\
  }
  {\mu_F\circ\nu_E \To \mu'_{F'}\circ\nu'_{E'}}
  \end{mathpar}
\end{restatable}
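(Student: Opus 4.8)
The plan is to establish the boundary claim first and then the two closure properties, in each case by induction on the given modality transformation derivations, case-splitting on the final rule (\mtylab{Abs}, \mtylab{Extend}, \mtylab{Mono}). Since the proof for \Met in \citet{TangWDHLL25} is being reused, the only genuinely effect-theory-dependent ingredient should be isolated as a monotonicity lemma for concatenation: if $\Gamma \vdash F \subtype F'$ then $\Gamma \vdash D, F \subtype D, F'$ for every well-formed extension $D$. I would prove it by unfolding $F \subtype F'$ to obtain $F''$ with $F, F''$ well-formed and $F, F'' \equiv F'$; then $D,(F, F'')$ is well-formed because well-formedness is preserved under concatenation, and $D,(F, F'') \equiv D, F'$ by reflexivity of $\equiv$ on $D$ together with the derived congruence rule for effect-context equivalence, so $D, F \subtype D, F'$ with witness $F''$ after reassociating.

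For the boundary claim, \mtylab{Abs} has source $\aeq{E}_F$ and target $\nu_F$, its premise is literally $E \subtype \nu(F)$, and $F \subtype F$ holds by reflexivity; \mtylab{Extend} has source $\aex{D_1}_F$ and target $\aex{D_2}_F$, and instantiating its universally-quantified premise at the ambient context itself yields $D_1, F \subtype D_2, F$; \mtylab{Mono} supplies $F \subtype F'$ directly, while $\mu(F) \subtype \mu(F')$ is immediate when $\mu$ is absolute and is exactly the monotonicity lemma when $\mu = \aex{D}$.

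For vertical composition I would match the shapes of the two derivations against the shape of the shared middle modality $\nu$: uses of \mtylab{Mono} are pushed to the outside using the monotonicity lemma and transitivity of $\subtype$; if $\nu$ is absolute, both derivations reduce to \mtylab{Abs} and the inequalities $E_0 \subtype \nu(F)$ and $\nu(F) \subtype \xi(F)$ compose by transitivity; if $\nu$ is relative, both reduce to \mtylab{Extend} and the quantified premises compose. Horizontal composition follows the same pattern, case-splitting on $\mu$ (the shape of $\mu'$ being pinned down by the boundary claim): since $\mu \circ \aeq{E} = \aeq{E}$, $\aeq{E} \circ \aex{D} = \aeq{D, E}$, and $\aex{D_1} \circ \aex{D_2} = \aex{D_2, D_1}$, each composite is again a single absolute or relative modality, and one checks that the \mtylab{Abs}/\mtylab{Extend} premises for $\mu_F \circ \nu_E \To \mu'_{F'} \circ \nu'_{E'}$ follow from those for $\mu_F \To \mu'_{F'}$ and $\nu_E \To \nu'_{E'}$ together with the monotonicity lemma and \Cref{lemma:mod-cat}.

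The step I expect to be the main obstacle is the relative--relative case of horizontal composition. There the premise of the \mtylab{Extend} instance in the conclusion is a statement quantified over all effect contexts above the ambient one, and it must be derived from the two quantified premises of the hypotheses rather than from single instances of them; threading these quantifiers through the composition identity $\aex{D_1} \circ \aex{D_2} = \aex{D_2, D_1}$, and using the monotonicity lemma to shift ambient contexts correctly, is the delicate bookkeeping, and is also where the precise shape of the $\subtype$ relation induced by $\mcT$ has to be used carefully.
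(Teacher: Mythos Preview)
Your proposal is correct and takes essentially the same approach as the paper: case analysis on the applicable rule (equivalently on the shapes of the modalities involved), with the relative--relative case of horizontal composition carrying the main content, and the monotonicity-of-concatenation lemma you isolate being precisely the relative instance of the paper's \Cref{lemma:mono-mod}. One minor slip: what constrains the shape of $\mu'$ in horizontal composition is the rule structure itself (a relative $\mu$ can only transform to a relative $\mu'$ via \mtylab{Extend} or \mtylab{Mono}), not the boundary claim, which concerns effect contexts rather than modality shapes---but your subsequent enumeration of the three composition identities shows you have the right picture.
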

\begin{proof}
  For the first part, $F\subtype F'$ is obvious from \mtylab{Mono}.
  $E\subtype E'$ follows from case analysis.
  \begin{description}
    \item[Case] $\mu = \aeq{E}$. Obvious from \mtylab{Abs} and \Cref{lemma:mono-mod}.
    \item[Case] $\mu = \aex{D_1}$ and $\nu = \aex{D_2}$. We need to
    show that $D_1,F \subtype D_2,F'$.
    By \mtylab{Extend} we have $D_1,F'\subtype D_2,F'$, which gives
    $D_1,F',F_1\equiv D_2,F'$ for some $F_1$.
    By $F\subtype F'$ we have $F,F_2 \equiv F'$ for some $F_2$. Then
    we have $D_1,F,F_2,F_1\equiv D_2,F'$.
    Finally we have $D_1,F\subtype D_2,F'$.
  \end{description}
  For the second part, vertical composition (the first rule) basically
  says that modality transformation is transitive. Easy to verify.
  Horizontal composition (the second rule) follows from a
  straightforward case analysis on shapes of modalities being
  composed.
  \begin{description}
    \item[Case] $\nu_E$ is an absolute modality. Suppose $\mu =
    \aeq{E_1}$. We have $(\mu\circ\nu)(F_1) = E_1$ for any $F'\subtype
    F_1$. By \Cref{lemma:mono-mod}, we have $E_1 \subtype (\mu'\circ\nu')(F_1)$.
    \item[Case] $\nu_E$ is an relative modality and $\mu_F$ is an
    absolute modality.
    Suppose $\mu = \aeq{E_1}$ and $\nu = \aex{D_1}$. We have
    $(\mu\circ\nu)(F_1) = D_1+E_1$ for any $F'\subtype F_1$. Similar
    to the above case, by \Cref{lemma:mono-mod}, we have $D_1+E_1
    \subtype (\mu'\circ\nu')(F_1)$.
    \item[Case] Both $\mu_F$ and $\nu_E$ are relative modalities. We
    also have that $\mu'_{F'}$ and $\nu'_{E'}$ are relative
    modalities.
    Suppose $\mu = \aex{D_1}, \nu = \aex{D_2}, \mu' = \aex{D_1'}, \nu' =
    \aex{D_2'}$.
    We have $E = D_1,F$ and $E' = D_1',F'$.
    By $\mu_F \To \mu'_{F'}$ and \mtylab{Extend}, we have
    \[ D_1,F_1 \subtype D_1',F_1 \]
    for all $F'\subtype F_1$.
    There exists $F_1'$ such that
    \[ \refa{D_1,F_1,F_1' \equiv D_1',F_1} \]
    By $\nu_E \To \nu'_{E'}$ and \mtylab{Extend}, we have
    \[ D_2,F_2 \subtype D_2',F_2 \]
    for all $E' = D_1',F' \subtype F_2$.
    There exists $F_2'$ such that
    \[ \refb{D_2,F_2,F_2' \equiv D_2',F_2} \]
    Given any $F'\subtype F_3$, by \refa{} we can find $F_{31}$ such that
    \[ D_1,F_3,F_{31} \equiv D_1',F_3 \]
    Then by $D_1',F' \subtype D_1',F_3$ and \refb{} we can find $F_{32}$ such that
    \[ D_2,D_1,F_3,F_{31},F_{32} \equiv D_2',D_1',F_3 \]
    Then we have
    \[ D_2,D_1,F_3 \subtype D_2',D_1',F_3 \]
    Finally by \mtylab{Extend} we have $\aex{D_2,D_1}_F \To \aex{D_2',D_1'}_{F'}$.
  \end{description}
\end{proof}

Beyond being a double category, we show some extra properties.
The most important one is that horizontal morphisms (sub-effecting)
act functorially on vertical ones (modalities). In other words, the
action of $\mu$ on effect contexts gives a total monotone function.

\begin{restatable}[Monotone modalities]{lemma}{monoModality}
  \label{lemma:mono-mod}
  If $\mind{\mu}{F}:E\to F$ and $F\subtype F'$, then
  $\mind{\mu}{F'}:E'\to F'$ with $E\subtype E'$.
\end{restatable}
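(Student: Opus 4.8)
The plan is to prove \Cref{lemma:mono-mod} by a direct case analysis on the shape of the modality $\mu$, since there are only two forms, $\aeq{E_0}$ and $\aex{D}$, and the action of each on effect contexts is given by a simple equation in \Cref{sec:modalities}. In both cases I will read off $E$ and $E'$ from the source of $\mu_F$ and $\mu_{F'}$ respectively, and then establish $E \subtype E'$.

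First I would handle the absolute case $\mu = \aeq{E_0}$. Here $\aeq{E_0}_F : E_0 \to F$, so $E = E_0$, and likewise $\aeq{E_0}_{F'} : E_0 \to F'$, so $E' = E_0 = E$. Hence $E \subtype E'$ holds by reflexivity of the subeffecting preorder (noted to be a preorder immediately after its definition in \Cref{sec:effect-mode-theory}). This case is immediate and independent of the effect theory.

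Next I would handle the relative case $\mu = \aex{D}$. By definition $\aex{D}_F : D,F \to F$ and $\aex{D}_{F'} : D,F' \to F'$, so $E = D,F$ and $E' = D,F'$, and it remains to show $D,F \subtype D,F'$. From $F \subtype F'$ I obtain, by the derived rule for $\subtype$ on effect contexts, an effect context $G$ with $F,G$ well-formed and $F,G \equiv F'$. Then $D,F,G$ is well-formed, since $D$ is a well-formed extension and $F,G$ is a well-formed effect context, by the derived concatenation rule $\Gamma\vdash D:\Effect,\ \Gamma\vdash E:\Effect \Rightarrow \Gamma\vdash D,E:\Effect$. Moreover $D,F,G \equiv D,F'$ by the congruence rule for $\equiv$ on effect contexts, instantiating the left component with $D$ (using reflexivity $D\equiv D$ of the equivalence relation on extensions) and the right component with $F,G \equiv F'$. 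Taking $G$ as witness then gives $D,F \subtype D,F'$, i.e.\ $E \subtype E'$, as required.

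The only step needing a little care is the relative case, specifically the appeals to (i) preservation of well-formedness under concatenation and (ii) congruence of $\equiv$ under left-concatenation by an extension; but both are exactly the derived rules for effect contexts in \Cref{sec:effect-mode-theory}, which are available for any effect theory $\mcT$. So there is no genuine obstacle: the lemma is a routine unfolding of the definitions of the modalities once the case split is made, and it does not rely on the validity conditions of \Cref{def:sanity-conditions}.
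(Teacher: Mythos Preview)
Your proposal is correct and follows essentially the same approach as the paper's proof: a case split on the shape of $\mu$, with the absolute case immediate ($E \equiv E'$) and the relative case reducing to $D,F \subtype D,F'$ from $F \subtype F'$. Your version is more explicit than the paper's, which simply says the relative case is ``obvious by $F\subtype F'$'', whereas you spell out the witness-based argument via the derived rules for effect contexts.
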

\begin{proof}
  When $\mu$ is an absolute modality, obviously we have $E \equiv E'$.
  When $\mu$ is a relative modality $\aex{D}$, we need to show that
  $D+F\subtype D+F'$, which is obvious by $F\subtype F'$.
\end{proof}

\begin{restatable}[Soundness of modality transformation]{lemma}{soundModTransMetp}
  \label{lemma:sound-modtrans-metn}
  For modality transformation $\Gamma\vdash\mu\To\nu\atmode{F}$, we
  have $\act{\mu}{F'}\subtype\act{\nu}{F'}$ for all $F'$ with
  $F\subtype F'$.
\end{restatable}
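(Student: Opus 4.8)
The plan is to proceed by a straightforward case analysis on the rule used to derive $\Gamma\vdash\mu\To\nu\atmode{F}$. Since this judgement form (with matching index $F$, i.e.\ $\mu_F\To\nu_F$) is produced only by \mtylab{Abs} and \mtylab{Extend}, there are exactly two cases and no induction is needed; each case appeals only to the fact that the modality action is total and monotone along subeffecting, together with subeffecting being a preorder (both established in \Cref{sec:effect-mode-theory}).

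For the \mtylab{Extend} case we have $\mu = \aex{D_1}$ and $\nu = \aex{D_2}$, and the premise of the rule supplies $D_1,F''\subtype D_2,F''$ for every $F''$ with $F\subtype F''$. Fixing any $F'$ with $F\subtype F'$, the definition of the modality action gives $\act{\aex{D_1}}{F'} = D_1,F'$ and $\act{\aex{D_2}}{F'} = D_2,F'$, so the required inequality $\act{\mu}{F'}\subtype\act{\nu}{F'}$ is precisely the premise instantiated at $F'' = F'$. This case is therefore immediate.

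For the \mtylab{Abs} case we have $\mu = \aeq{E}$ for some effect context $E$ and $\nu$ arbitrary, with premise $E\subtype\act{\nu}{F}$. Fixing any $F'$ with $F\subtype F'$, we compute $\act{\mu}{F'} = \act{\aeq{E}}{F'} = E$, so it remains to show $E\subtype\act{\nu}{F'}$. From $F\subtype F'$ and \Cref{lemma:mono-mod} we obtain $\act{\nu}{F}\subtype\act{\nu}{F'}$, and chaining $E\subtype\act{\nu}{F}\subtype\act{\nu}{F'}$ via transitivity of subeffecting yields $E\subtype\act{\nu}{F'}$, as required.

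I do not expect a genuine obstacle: the only non-bookkeeping ingredient is \Cref{lemma:mono-mod} (monotonicity of the modality action along $\subtype$), which is already available, plus transitivity of $\subtype$. The one subtlety to handle carefully is that the premise of \mtylab{Extend} is itself universally quantified over all $F''$ above $F$, so the statement being proved is essentially a reformulation of that premise; and in the \mtylab{Abs} case one must note that the premise speaks about $\act{\nu}{F}$ rather than $\act{\nu}{F'}$, which is exactly where monotonicity is needed.
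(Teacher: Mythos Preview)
Your proposal is correct and follows essentially the same approach as the paper: a case analysis on \mtylab{Abs} and \mtylab{Extend}, invoking \Cref{lemma:mono-mod} (plus transitivity of $\subtype$) for the former and noting the latter is immediate from its premise. The paper's proof is simply a terser version of what you wrote.
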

\begin{proof}
  By case analysis on the two modality transformation rules.
  \begin{description}
    \item[Case] \mtylab{Abs}. Follow from \Cref{lemma:mono-mod}.
    \item[Case] \mtylab{Extend}. By definition.
  \end{description}
\end{proof}

We state some properties of the mode theory as the following lemmas
for easier references in proofs. Most of them directly follow from the
definition.

\begin{restatable}[Vertical composition]{lemma}{modtransVertical}
  \label{lemma:modtrans-vertical}
  If $\mu_{F_1}\To\nu_{F_2}$ and $\nu_{F_2}\To\xi_{F_3}$, then $\mu_{F_1}\To\xi_{F_3}$.
\end{restatable}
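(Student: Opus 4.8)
The statement to prove is Lemma~\ref{lemma:modtrans-vertical} (Vertical composition): if $\mu_{F_1}\To\nu_{F_2}$ and $\nu_{F_2}\To\xi_{F_3}$, then $\mu_{F_1}\To\xi_{F_3}$.

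The plan is to observe that this is almost immediate from \Cref{lemma:modtrans-two-cells}, which already establishes that modality transformations are closed under vertical composition as an admissible rule. Indeed, the first admissible rule stated in \Cref{lemma:modtrans-two-cells} is precisely
\[
\inferrule*
{
  \mu_{F_1} \To \nu_{F_2} \\
  \nu_{F_2} \To \xi_{F_3}
}
{\mu_{F_1} \To \xi_{F_3}}
\]
so the lemma is a direct restatement extracted for convenient reference in later proofs. Thus the proof is simply an appeal to that earlier result.

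If instead I wanted to prove it from first principles without invoking \Cref{lemma:modtrans-two-cells}, I would proceed by case analysis on the shapes of the modalities $\mu$, $\nu$, $\xi$, together with the extended transformation rules \mtylab{Abs}, \mtylab{Extend}, and \mtylab{Mono}. First I would handle the case where $\mu$ is an absolute modality $\aeq{E}$: then the conclusion $\aeq{E}_{F_1}\To\xi_{F_3}$ needs only $E\subtype\xi(F_3)$, which follows from $E\subtype\nu(F_2)$ (given) together with $\nu(F_2)\subtype\xi(F_3)$ (a consequence of the second hypothesis via \Cref{lemma:sound-modtrans-metn}, using $F_2\subtype F_2$) and transitivity of subeffecting. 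Next, the case where $\mu$, $\nu$, $\xi$ are all relative modalities $\aex{D_1},\aex{D_2},\aex{D_3}$: here I would unfold \mtylab{Extend} for each hypothesis, obtaining $D_1,F\subtype D_2,F$ for all $F_2\subtype F$ and $D_2,F\subtype D_3,F$ for all $F_3\subtype F$, then chain these (noting $F_2\subtype F_3$ by \Cref{lemma:modtrans-two-cells} first part, or by inspecting the derivations) to conclude $D_1,F\subtype D_3,F$ for all $F_3\subtype F$, hence $\aex{D_1}_{F_1}\To\aex{D_3}_{F_3}$ by \mtylab{Extend} combined with \mtylab{Mono}. The remaining mixed cases (e.g.\ $\nu$ relative but $\xi$ absolute) are handled similarly by reducing to the absolute-target rule \mtylab{Abs}.

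The main obstacle, in the from-scratch approach, is the bookkeeping around the effect contexts $F_1,F_2,F_3$ and the fact that the extended judgement form $\mu_F\To\nu_{F'}$ permits $F\subtype F'$: one must be careful to thread the monotonicity (\Cref{lemma:mono-mod}) and soundness (\Cref{lemma:sound-modtrans-metn}) facts correctly so that the universally quantified subeffecting conditions in \mtylab{Extend} compose over the right range of effect contexts. However, since \Cref{lemma:modtrans-two-cells} already discharges exactly this reasoning, the cleanest route is to cite it directly, and that is the route I would take.

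\begin{proof}
  This is the first admissible rule of \Cref{lemma:modtrans-two-cells}.
\end{proof}
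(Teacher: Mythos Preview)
Your proposal is correct and takes essentially the same approach as the paper: the paper's proof is simply ``Follow from \Cref{lemma:modtrans-two-cells}'', and your proof likewise cites the first admissible rule of that lemma. Your additional from-scratch sketch is extra detail not present in the paper, but the actual proof you give matches the paper's exactly.
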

\begin{proof}
  Follow from \Cref{lemma:modtrans-two-cells}
\end{proof}

\begin{restatable}[Horizontal composition]{lemma}{modtransHorizontal}
  \label{lemma:modtrans-horizontal}
  If $\mind{\mu}{F}:E\to F$, $\mind{\mu'}{F'}:E'\to F'$, $\mind{\mu}{F}\To\mind{\mu'}{F'}$, and
  $\mind{\nu}{E}\To\nu'_{E'}$, then $\mind{\mu}{F}\circ\mind{\nu}{E}\To\mind{\mu'}{F'}\circ\mind{\nu'}{E'}$.
\end{restatable}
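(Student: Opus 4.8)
The plan is to observe that this lemma is, up to renaming of metavariables, precisely the second admissible rule already established inside the proof of \Cref{lemma:modtrans-two-cells}. That lemma shows that the modality transformation relation is closed under horizontal composition: from $\mu_F \To \mu'_{F'}$ and $\nu_E \To \nu'_{E'}$ with $\mu_F : E \to F$ and $\mu'_{F'} : E' \to F'$ it derives $\mu_F \circ \nu_E \To \mu'_{F'} \circ \nu'_{E'}$. Since the hypotheses of the present statement match these side conditions exactly, I would simply invoke \Cref{lemma:modtrans-two-cells}, in the same way that the companion \Cref{lemma:modtrans-vertical} is obtained from the first admissible rule of that lemma. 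So the proof is a one-liner.

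If a self-contained argument were preferred, I would instead redo the case analysis on the shapes of the two composed concrete modalities, mirroring the proof of \Cref{lemma:modtrans-two-cells}. When $\nu$ is an absolute modality $\aeq{E_1}$, the composite $\mu \circ \nu$ equals $\aeq{E_1}$ regardless of $\mu$, and \mtylab{Abs} together with \Cref{lemma:mono-mod} (monotonicity of the action of $\mu'\circ\nu'$ on effect contexts) discharges the goal; the subcase where $\mu$ is absolute but $\nu$ relative is analogous, the composite being $\aeq{D_1, E_1}$ for appropriate $D_1$. The only case requiring genuine work is when $\mu$, $\nu$, $\mu'$, $\nu'$ are all relative, say $\aex{D_1}, \aex{D_2}, \aex{D_1'}, \aex{D_2'}$; here one must show $\aex{D_2, D_1}_F \To \aex{D_2', D_1'}_{F'}$, i.e.\ $D_2, D_1, F_3 \subtype D_2', D_1', F_3$ for every $F_3$ with $F' \subtype F_3$, obtained by threading the universally quantified subeffecting conditions from the two \mtylab{Extend} premises through suitable intermediate extensions and using associativity of list concatenation up to $\equiv_\mcT$.

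Accordingly, the main (and only) subtlety is the all-relative case, and that computation has already been carried out in the proof of \Cref{lemma:modtrans-two-cells}; there is no further obstacle, and I expect the proof of this lemma to read simply ``Follows from \Cref{lemma:modtrans-two-cells}.''
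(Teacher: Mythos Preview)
Your proposal is correct and matches the paper's proof exactly: the paper's proof is the one-liner ``Follow from \Cref{lemma:modtrans-two-cells}'', just as you anticipated. Your optional self-contained case analysis also faithfully mirrors the argument already carried out inside \Cref{lemma:modtrans-two-cells}.
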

\begin{proof}
  Follow from \Cref{lemma:modtrans-two-cells}
\end{proof}

\begin{lemma}[Monotone modality transformation]
  \label{lemma:mono-modtrans}
  If $\mu_F \To \nu_F$ and $F\subtype F'$,
  then $\mu_{F'}\To \nu_{F'}$.
\end{lemma}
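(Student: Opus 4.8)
The plan is to prove the lemma by inspecting the last rule used in the derivation of $\mu_F \To \nu_F$. The extended modality transformation judgement is generated by exactly three rules, \mtylab{Abs}, \mtylab{Extend}, and \mtylab{Mono}, and each of them has only subeffecting judgements as premises (never another modality transformation judgement). Hence no induction hypothesis is needed: a plain case analysis on that last rule suffices, and it simultaneously pins down the shapes of $\mu$ and $\nu$ in each case. Throughout I will use that $\subtype$ is a preorder (in particular transitive and reflexive).

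If the derivation ends with \mtylab{Mono}, then the source and target indices of the conclusion $\mu_F \To \nu_F$ are both $F$ and the two modalities must coincide, so $\nu = \mu$; then $\mu_{F'} \To \mu_{F'}$ follows by \mtylab{Mono} from the reflexive instance $F' \subtype F'$. If it ends with \mtylab{Abs}, then $\mu = \aeq{E}$ and the premise is $E \subtype \nu(F)$; by \Cref{lemma:mono-mod} together with $F \subtype F'$ we obtain $\nu(F) \subtype \nu(F')$, so transitivity gives $E \subtype \nu(F')$, and since $\mu_{F'} = \aeq{E}_{F'}$, rule \mtylab{Abs} yields $\mu_{F'} \To \nu_{F'}$. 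If it ends with \mtylab{Extend}, then $\mu = \aex{D_1}$ and $\nu = \aex{D_2}$, with premise $D_1, G \subtype D_2, G$ for every $G$ with $F \subtype G$; to re-apply \mtylab{Extend} at ambient context $F'$ I only need this family of inequalities for every $G$ with $F' \subtype G$, and each such $G$ already satisfies $F \subtype G$ by transitivity (using $F \subtype F'$), so the required premise is a sub-instance of the one in hand, and \mtylab{Extend} delivers $\aex{D_1}_{F'} \To \aex{D_2}_{F'}$, i.e. $\mu_{F'} \To \nu_{F'}$.

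I do not expect any genuine difficulty in this argument; the only points that need care in the write-up are the bookkeeping of which modality shapes each rule forces — so that, for instance, the \mtylab{Extend} case may legitimately assume both modalities are relative and the \mtylab{Mono} case that $\nu = \mu$ — and the renaming of the ambient effect context inside the universally quantified premise of \mtylab{Extend}. The \mtylab{Abs} and \mtylab{Extend} cases could alternatively be phrased directly through \Cref{lemma:sound-modtrans-metn}, whose conclusion already quantifies over all effect contexts above $F$; I would keep the rule-by-rule presentation since it also makes the shape constraints on $\mu$ and $\nu$ explicit, which the soundness lemma does not.
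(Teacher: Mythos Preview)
Your proof is correct and takes essentially the same approach as the paper: a case analysis on the rule deriving $\mu_F \To \nu_F$, with the \mtylab{Abs} case discharged via \Cref{lemma:mono-mod} and transitivity, and the \mtylab{Extend} case by observing that the universally quantified premise at $F$ already covers all $G$ with $F' \subtype G$. The paper's own proof is extremely terse (it lists only the \mtylab{Abs} and \mtylab{Extend} cases and says ``By definition'' for the latter), so your explicit treatment of \mtylab{Mono} and the spelled-out transitivity steps merely fill in detail the paper elides.
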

\begin{proof}
  By a case analysis.
  \begin{description}
    \item[Case] \mtylab{Abs}. Follow from \Cref{lemma:mono-mod}.
    \item[Case] \mtylab{Extend}. By definition.
  \end{description}
\end{proof}

\begin{lemma}[Asymmetric reflexivity of modality transformation]
  \label{lemma:self-modtrans}
  If $F\subtype F'$ and $\mu_F:E\to F$,
  then $\mu_F\To\mu_{F'}$.
\end{lemma}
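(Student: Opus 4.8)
The plan is to observe that \Cref{lemma:self-modtrans} is, up to a trivial well-formedness check, nothing more than an instance of the rule \mtylab{Mono} that was just added to the extended modality transformation relation. Its role is purely ergonomic: it packages the common pattern ``transform a concrete modality to itself at a larger effect context'' for reuse in later proofs (for instance alongside \Cref{lemma:modtrans-vertical} and \Cref{lemma:mono-modtrans} when chaining transformations), so the proof will be short.

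First I would check that the judgement $\mu_F \To \mu_{F'}$ is well-formed. By \Cref{lemma:mono-mod}, since $\mu_F : E \to F$ and $F \subtype F'$, we have $\mu_{F'} : E' \to F'$ for some $E'$ with $E \subtype E'$; hence $\mu_{F'}$ is a legitimate concrete modality and, since $F \subtype F'$, the judgement has exactly the shape $\nu_{F_1} \To \xi_{F_2}$ with $F_1 \subtype F_2$ that the extended relation permits.

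Then I would conclude by a single application of \mtylab{Mono}, whose only premise, $\Gamma \vdash F \subtype F'$, is one of our hypotheses.

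The main obstacle: there genuinely is none. The lemma is stated separately purely for convenient reference in subsequent proofs, where ``weaken the codomain of a modality'' arises repeatedly; the content is already captured by \mtylab{Mono}, which was introduced for exactly this purpose (one cannot in general reach it from \mtylab{Abs}/\mtylab{Extend} alone, since those produce only equi-codomain 2-cells, e.g.\ the reflexive transformations $\aeq{E}_F \To \aeq{E}_F$ via $E \subtype \mu(F) = E$ and $\aex{D}_F \To \aex{D}_F$ via reflexivity of the preorder $\subtype$). The only subtlety worth spelling out is the implicit naming convention: $E$ in ``$\mu_F : E \to F$'' abbreviates $E := \mu(F)$ and $E'$ in the well-formedness step abbreviates $E' := \mu(F')$, with $E \subtype E'$ supplied by \Cref{lemma:mono-mod}.
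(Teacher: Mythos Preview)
Your proposal is correct and matches the paper's proof exactly: the paper simply writes ``By \mtylab{Mono}.'' Your additional well-formedness check via \Cref{lemma:mono-mod} is not wrong, just more explicit than the paper bothers to be.
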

\begin{proof}
  By \mtylab{Mono}.
\end{proof}

\subsection{Lemmas for the Calculus}
\label{app:CalcM-lemmas}

We prove structural and substitution lemmas for \Metpt as well as some
other auxiliary lemmas for proving type soundness.

\begin{lemma}[Canonical forms]~
  \label{lemma:canonical-forms}
  \begin{enumerate}[label=\arabic*.]
    \item If $\typm{\,}{U:\boxwith{\mu}A}{E}$, then $U$ is of shape $\Box_\mu\,U'$.
    \item If $\typm{\,}{U:A\to B}{E}$, then $U$ is of shape $\lambda x^A.M$.
    \item If $\typm{\,}{U:\forall\alpha^K. A}{E}$, then $U$ is of shape $\Lambda \alpha^K.V$.
    \item If $\typm{\,}{U:\TUnit}{E}$, then $U$ is $\Unit$.
  \end{enumerate}
\end{lemma}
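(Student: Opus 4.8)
The plan is to prove all four parts simultaneously by a case analysis on the shape of the value normal form $U$, using the fact that each syntactic form of $U$ is typed by essentially one typing rule (inversion) and that the typing context here is empty. Recall that value normal forms are generated by $U ::= x \mid \lambda x^A.M \mid \Lambda\alpha^K.V \mid \Mod_\mu\,U'$, so there are only four cases to consider; in particular the complex values $V\,A$ and $\Letm{\nu}{\mu} x = V \In W$ are excluded by the definition of $U$, which is exactly what makes the lemma go through.

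First I would dispatch the variable case: if $U = x$ then the only applicable rule is \tylab{Var}, which requires a binding $x\varb{\mu_F}{A}$ to occur in the context; since the context is $\cdot$ this is impossible, so $U \neq x$ whenever $\typm{\,}{U:A}{E}$. For each of the remaining forms, inversion on the typing derivation forces the concluding rule: $\lambda x^A.M$ must be concluded by \tylab{Abs}, yielding a type of the form $A \to B$; $\Lambda\alpha^K.V$ must be concluded by \tylab{TAbs}, yielding $\forall\alpha^K.A$; $\Mod_\mu\,U'$ must be concluded by \tylab{Mod}, yielding $\boxwith{\mu}A$; and $\Unit$ must be concluded by \tylab{Unit}, yielding $\TUnit$. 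Combining these, for part~1 the hypothesis $\typm{\,}{U:\boxwith{\mu}A}{E}$ rules out every shape except $\Mod_\nu\,U'$, and symmetrically for parts~2--4.

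The one point requiring a little care — and the main, mild, obstacle — is the interaction with type equivalence: the typing rules identify types up to the congruence of \Cref{fig:equiv-metn}, so I must check, e.g., that a function type is never equivalent to a $\forall$-type or to a $\boxwith{\mu}$-type. This holds because type equivalence on value types is a structural congruence that preserves the outermost type constructor: the rules relate $\TUnit$ only to $\TUnit$, $A\to B$ only to $A'\to B'$, $\forall\alpha^K.A$ only to $\forall\alpha^K.B$, and $\boxwith{\mu}A$ only to $\boxwith{\nu}B$ with $\mu\equiv\nu$, while the effect theory $\mcT$ governs only the equivalence of extensions and effect contexts, never of value-type constructors. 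Hence the four type formers are pairwise non-equivalent, the case analysis is exhaustive and consistent, and (for part~1) the modality decorating $U$ being determined only up to $\equiv$ is harmless since $\mu\equiv\nu$ still gives $\Box_\nu\,U'$ the claimed shape.
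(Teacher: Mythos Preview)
Your proposal is correct and takes essentially the same approach as the paper: the paper's proof is the single line ``Directly follows from the typing rules,'' and your argument is exactly the inversion analysis that sentence abbreviates, just spelled out carefully (including the discussion of type equivalence, which the paper leaves implicit). One small wrinkle: you state the grammar of $U$ with four clauses and then treat $\Unit$ as a fifth case without comment; the paper's displayed grammar for $U$ indeed omits $\Unit$, which is almost certainly an oversight given part~4 of the lemma, so it would be cleaner to flag that you are reading $\Unit$ as a value normal form.
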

\begin{proof}
  Directly follows from the typing rules.
\end{proof}

In order to define the lock weakening lemma, we first define a context
update operation $\updlock{\Gamma}{F'}$ which gives a new context
derived from updating the indexes of all locks and variable bindings
in $\Gamma$ such that $\locks{\updlock{\Gamma}{F'}} : E \to F'$ for
some $E$.
\[\ba{rcl}
\updlock{\cdot}{F} &=& \cdot \\
\updlock{\lockwith{\mind{\aeq{E}}{F'}},\Gamma'}{F} &=& \lockwith{\mind{\aeq{E}}{F}},\Gamma' \\
\updlock{\lockwith{\mind{\aex{D}}{F'}},\Gamma'}{F} &=& \lockwith{\mind{\aex{D}}{F}},\updlock{\Gamma'}{D+F} \\
\updlock{x\varb{\mu_{F'}}{A},\Gamma'}{F} &=& x\varb{\mu_F}{A},\updlock{\Gamma'}{F} \\
\updlock{\alpha:K,\Gamma'}{F} &=& \alpha:K,\updlock{\Gamma'}{F} \\
\updlock{\ell:A\sto B,\Gamma'}{F} &=& \ell:A\sto B,\updlock{\Gamma'}{F} \\
\ea\]

We have the following lemma showing that the index update operation
preserves the $\locks{-}$ operation except for updating the index.

\begin{lemma}[Index update preserves composition]
  \label{lemma:index-update}
  If $\mu_F = \locks{\Gamma} : E\to F$, $F \subtype F'$, and
  $\locks{\updlock{\Gamma}{F'}}:E'\to F'$, then $\locks{\updlock{\Gamma}{F'}} = \mu_{F'}$.
\end{lemma}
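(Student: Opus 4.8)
The plan is to prove Lemma~\ref{lemma:index-update} by induction on the structure of the context $\Gamma$, following the recursive definition of $\updlock{\Gamma}{F'}$. The statement to establish is that under the hypotheses $\mu_F = \locks{\Gamma}\colon E\to F$, $F\subtype F'$, and $\locks{\updlock{\Gamma}{F'}}\colon E'\to F'$, we have $\locks{\updlock{\Gamma}{F'}} = \mu_{F'}$. Since the index-update operation only touches locks and variable bindings (leaving type-variable and label entries untouched), and since $\locks{-}$ ignores variable bindings, type-variable bindings, and label bindings entirely, the only cases that do any work are the empty context, the leading-lock cases, and the ``skip'' cases for $x\varb{\mu_{F'}}{A}$, $\alpha:K$, and $\ell:A\sto B$.

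First I would handle the base case $\Gamma = \cdot$: here $\locks{\Gamma} = \aid_F$, $\updlock{\cdot}{F'} = \cdot$, and $\locks{\updlock{\cdot}{F'}} = \aid_{F'}$, so equality is immediate. Next, the ``skip'' cases: when $\Gamma = x\varb{\nu_{F''}}{A},\Gamma'$ (or $\alpha:K,\Gamma'$, or $\ell:A\sto B,\Gamma'$), we have $\locks{\Gamma} = \locks{\Gamma'}$ and $\updlock{\Gamma}{F'} = x\varb{\nu_{F'}}{A},\updlock{\Gamma'}{F'}$ (respectively with the $\alpha$ or $\ell$ entry prepended unchanged), so $\locks{\updlock{\Gamma}{F'}} = \locks{\updlock{\Gamma'}{F'}}$, and the result follows directly from the induction hypothesis applied to $\Gamma'$. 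The substantive cases are the two leading-lock cases. For $\Gamma = \lockwith{\aeq{E_0}_{F''}},\Gamma'$: by the context identification equations of Section~\ref{sec:modalities} we can treat $\Gamma$ as $\lockwith{(\aeq{E_0}\circ\locks{\Gamma'}')_{F''}}$; since $\aeq{E_0}\circ\nu = \aeq{E_0}$ for any $\nu$, we get $\locks{\Gamma} = \aeq{E_0}_{F''}$, hence $\mu = \aeq{E_0}$ and $F = F''$. Meanwhile $\updlock{\Gamma}{F'} = \lockwith{\aeq{E_0}_{F'}},\Gamma'$, and the same collapsing argument gives $\locks{\updlock{\Gamma}{F'}} = \aeq{E_0}_{F'} = \mu_{F'}$, as required. (Note the definition of $\updlock{-}{-}$ does \emph{not} recurse past an absolute lock into $\Gamma'$, which is exactly consistent with the fact that $\locks{-}$ discards everything to the right of it once composed; strictly $\updlock{-}{-}$ leaves $\Gamma'$ literally unchanged, which is harmless since it contributes nothing once an absolute lock precedes it.)

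For the relative-lock case $\Gamma = \lockwith{\aex{D}_{F''}},\Gamma'$: here $\locks{\Gamma} = \aex{D}_{F''}\circ\locks{\Gamma'}$, and for this composition to be defined with target $F''$ we need $\locks{\Gamma'}$ to have target $D + F''$, i.e.\ $\locks{\Gamma'}\colon E\to D+F''$. Also $F = F''$. On the updated side, $\updlock{\Gamma}{F'} = \lockwith{\aex{D}_{F'}},\updlock{\Gamma'}{D+F'}$, so $\locks{\updlock{\Gamma}{F'}} = \aex{D}_{F'}\circ\locks{\updlock{\Gamma'}{D+F'}}$. To apply the induction hypothesis to $\Gamma'$ I need $D+F'' \subtype D+F'$, which follows from $F''\subtype F'$ by monotonicity of list concatenation under $\subtype$ (Lemma~\ref{lemma:mono-mod}, or directly from the definition of $\subtype$ for the effect theory). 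The induction hypothesis then gives $\locks{\updlock{\Gamma'}{D+F'}} = (\locks{\Gamma'})_{D+F'}$, i.e.\ the same underlying modality $\nu$ re-indexed at $D+F'$. Composing on the left with $\aex{D}_{F'}$ and using the fact that modality composition depends only on the underlying modalities (not their indices), we conclude $\locks{\updlock{\Gamma}{F'}} = (\aex{D}\circ\nu)_{F'} = \mu_{F'}$.

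The main obstacle I anticipate is bookkeeping around the indices and the well-definedness side conditions on composition: one has to be careful that in each case the composites $\locks{\Gamma}$, $\locks{\updlock{\Gamma}{F'}}$ actually typecheck as morphisms (source/target match), which requires tracking how $\updlock{-}{-}$ threads the extension $D$ through the index on recursion, and invoking monotonicity ($F''\subtype F'$ implies $D+F''\subtype D+F'$) to line up the induction hypothesis. The conceptual content is light — it is precisely the observation that $\locks{-}$ factors through the ``underlying modality'' (forgetting indices) and that $\updlock{-}{-}$ changes \emph{only} indices — but making that precise requires being disciplined about the context-identification equations and the definedness of composition at each step.
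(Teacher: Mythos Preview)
Your approach---induction on $\Gamma$ from the left, following the recursion of $\updlock{-}{-}$---is exactly the paper's approach, and your relative-lock and skip cases are fine. The paper's one-line proof just packages the key observation as the identity $(\mu\circ\nu)_F = \mu_F\circ\nu_E$ for $\mu_F : E\to F$, which is what you are unfolding case by case.

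There is one genuine error in your absolute-lock case: the claim ``$\aeq{E_0}\circ\nu = \aeq{E_0}$ for any $\nu$'' is false. The composition rules give $\mu\circ\aeq{E} = \aeq{E}$ (absolute on the \emph{right} absorbs), not absolute on the left; for instance $\aeq{E_0}\circ\aex{D} = \aeq{D,E_0}$. Consequently your intermediate conclusion ``$\mu = \aeq{E_0}$'' is wrong in general. The case is still easy to close, but by a different observation: since $\updlock{\lockwith{\aeq{E_0}_{F}},\Gamma'}{F'} = \lockwith{\aeq{E_0}_{F'}},\Gamma'$ leaves $\Gamma'$ literally unchanged, we have
\[
  \locks{\updlock{\Gamma}{F'}} \;=\; \aeq{E_0}_{F'}\circ\locks{\Gamma'}
  \;=\; (\aeq{E_0}\circ\nu)_{F'} \;=\; \mu_{F'},
\]
where $\nu$ is the (unchanged) underlying modality of $\locks{\Gamma'}$; no simplification of $\aeq{E_0}\circ\nu$ is needed. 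This is exactly an instance of the paper's property $(\mu\circ\nu)_F = \mu_F\circ\nu_E$: only the outer index is rewritten, the inner composand (whose target is the fixed $E_0$) stays put.
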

\begin{proof}
  By straightforward induction on the context and using the property
  that $(\mu\circ\nu)_F = \mu_F\circ\nu_E$ for $\mu_F : E\to F$.
\end{proof}

\begin{corollary}[Index update preserves transformation]
  \label{lemma:update-modtrans}
  If $\locks{\Gamma}:E\to F$, $F \subtype F'$, and
  $\locks{\updlock{\Gamma}{F'}}:E'\to F'$, then $\locks{\Gamma}\To
  \locks{\updlock{\Gamma}{F'}}$.
\end{corollary}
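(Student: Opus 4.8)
The plan is to chain the two results that immediately precede the corollary. Write $\locks{\Gamma} = \mu_F : E \to F$ for the concrete modality obtained by composing the locks of $\Gamma$. Since $F \subtype F'$ holds by hypothesis and $\locks{\updlock{\Gamma}{F'}} : E' \to F'$ is assumed to be well defined, Lemma~\ref{lemma:index-update} applies verbatim and yields $\locks{\updlock{\Gamma}{F'}} = \mu_{F'}$. In other words, updating the indices in $\Gamma$ does not change the underlying modality of $\locks{-}$; it only raises its index from $F$ to $F'$.

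It then remains to produce the transformation $\mu_F \To \mu_{F'}$. This is exactly Lemma~\ref{lemma:self-modtrans} (asymmetric reflexivity of modality transformation): from $F \subtype F'$ and $\mu_F : E \to F$ we get $\mu_F \To \mu_{F'}$, witnessed by rule \mtylab{Mono}. Substituting $\mu_F = \locks{\Gamma}$ and $\mu_{F'} = \locks{\updlock{\Gamma}{F'}}$ gives $\locks{\Gamma} \To \locks{\updlock{\Gamma}{F'}}$, as required.

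There is essentially no obstacle here — the corollary is a one-line consequence once Lemma~\ref{lemma:index-update} is in hand. The only point warranting a moment's care is the index bookkeeping: one should check that the codomain/domain data line up, i.e.\ that the $E'$ in the hypothesis $\locks{\updlock{\Gamma}{F'}} : E' \to F'$ coincides with the domain that Lemma~\ref{lemma:index-update} assigns to $\mu_{F'}$ (so that $E' = \act{\mu}{F'}$, which is forced by monotonicity of modalities, Lemma~\ref{lemma:mono-mod}). This is routine given the definition of $\updlock{-}{-}$, so the corollary follows immediately.
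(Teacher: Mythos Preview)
Your proposal is correct and follows exactly the same approach as the paper: invoke Lemma~\ref{lemma:index-update} to get $\locks{\updlock{\Gamma}{F'}} = \mu_{F'}$, then Lemma~\ref{lemma:self-modtrans} to obtain $\mu_F \To \mu_{F'}$. The extra remark about $E'$ lining up via Lemma~\ref{lemma:mono-mod} is a nice sanity check but not needed for the argument.
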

\begin{proof}
  Immediately follow from \Cref{lemma:index-update} and
  \Cref{lemma:self-modtrans}.
\end{proof}

We have the following structural lemmas.

\begin{restatable}[Structural rules]{lemma}{structuralRules} ~
  \label{lemma:structural-rules}
  The following structural rules are admissible.
  \begin{enumerate}[label=\arabic*.]
    \item Variable weakening.
    \begin{mathpar}
      \inferrule*
      {
        \typm{\Gamma,\Gamma'}{M:B}{E} \\
        \Gamma,x\varb{\mu_F}{A},\Gamma'\atmode{E}
      }
      {\typm{\Gamma,x\varb{\mu_F}{A},\Gamma'}{M:B}{E}}
    \end{mathpar}
    \item Variable swapping.
    \begin{mathpar}
      \inferrule*
      {
        \typm{\Gamma,x\varb{\mu_F}{A},y\varb{\nu_F}{B},\Gamma'}{M:A'}{E}
      }
      {\typm{\Gamma,y\varb{\nu_F}{B},x\varb{\mu_F}{A},\Gamma'}{M:A'}{E}}
    \end{mathpar}
    \item Lock weakening.
    \begin{mathpar}
      \inferrule*
      {
        \typm{\Gamma,\lockwith{\mind{\mu}{F}},\Gamma'}{M:A}{E} \\
        \mu_F \To \nu_F \\
        \nu_F : F'\to F \\
        \locks{\updlock{\Gamma'}{F'}}:E'\to F' \\
      }
      {\typm{\Gamma,\lockwith{\mind{\nu}{F}},\updlock{\Gamma'}{F'}}{M:A}{E'}}
    \end{mathpar}
    \item Type variable weakening.
    \begin{mathpar}
      \inferrule*
      {
        \typm{\Gamma,\Gamma'}{M:B}{E}
      }
      {\typm{\Gamma,\alpha:K,\Gamma'}{M:B}{E}}
    \end{mathpar}
    \item Type variable swapping.
    \begin{mathpar}
      \inferrule*
      {
        \typm{\Gamma_1,\Gamma_2,\alpha:K,\Gamma_3}{M:A}{E}
      }
      {\typm{\Gamma_1,\alpha:K,\Gamma_2,\Gamma_3}{M:A}{E}}

      \inferrule*
      {
        \alpha\notin\ftv{\Gamma_2} \\
        {\typm{\Gamma_1,\alpha:K,\Gamma_2,\Gamma_3}{M:A}{E}}
      }
      {\typm{\Gamma_1,\Gamma_2,\alpha:K,\Gamma_3}{M:A}{E}}
    \end{mathpar}
    \item Label weakening.
    \begin{mathpar}
      \inferrule*
      {
        \typm{\Gamma,\Gamma'}{M:B}{E}
      }
      {\typm{\Gamma,\ell:A\sto B,\Gamma'}{M:B}{E}}
    \end{mathpar}
    \item Lock swapping.
    \begin{mathpar}
      \inferrule*
      {
        {\typm{\Gamma,\lockwith{\mu},x\varb{\nu} A,\Gamma'}{M:B}{E}} \\
        \Gamma \vdash A : \Pure \text{ or } \nu = \aeq{F} \\
      }
      {\typm{\Gamma,x\varb{\nu} A,\lockwith{\mu},\Gamma'}{M:B}{E}}
    \end{mathpar}
  \end{enumerate}
\end{restatable}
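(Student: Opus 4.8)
The plan is to prove all seven clauses simultaneously by induction on the typing derivation $\typm{\Gamma}{M:A}{E}$ (with the handler and application cases invoking the induction hypothesis on each subderivation). For each of the seven context operations, the overwhelming majority of cases are purely congruential: the rules \tylab{Abs}, \tylab{App}, \tylab{TAbs}, \tylab{TApp}, \tylab{Unit}, \tylab{Do}, \tylab{LocalEffect} do not inspect locks or the fine structure of the context, so one simply applies the induction hypothesis to the premises and reassembles, using standard kinding weakening (which is routine by induction on kinding derivations) to discharge the side conditions on well-formedness of types, extensions, and effect contexts. So the real content is concentrated in the four modality-aware rules \tylab{Var}, \tylab{Mod}, \tylab{Letmod}, \tylab{Handle}, together with how each of the seven operations interacts with $\locks{-}$ and with ambient effect contexts.

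The easy operations I would dispatch first. For \emph{variable weakening} and \emph{variable swapping}, the key observation is that variable bindings contribute nothing to $\locks{-}$, so inserting a fresh $x'\varb{\mu_F}{A}$ or transposing two bindings at the same mode $F$ leaves $\locks{\Gamma'}$ unchanged; hence the side condition $\Gamma\vdash(\mu,A)\To\locks{\Gamma'}\atmode{F}$ of \tylab{Var} is preserved verbatim, and the assumed (resp. easily re-derived) well-formedness takes care of the rest. \emph{Type variable weakening}, \emph{type variable swapping}, and \emph{label weakening} are analogous: these entries also do not affect $\locks{-}$ or the ambient mode, the freshness hypothesis $\alpha\notin\ftv{\Gamma_2}$ in the second swapping rule guarantees well-formedness is maintained, and kinding weakening handles the type-formation premises. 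For \emph{lock swapping}, moving $x\varb{\nu}A$ leftwards past $\lockwith{\mu}$ only matters in the \tylab{Var} case for $x$ itself: if $A:\Pure$ the first premise of the auxiliary judgement $\Gamma\vdash(\mu,A)\To\nu\atmode{F}$ applies and accessibility is insensitive to the position of the lock; if $\nu=\aeq{F}$ then, because $\mu'\circ\aeq{F}=\aeq{F}$ for any $\mu'$ (modality composition) and by rule \mtylab{Abs}, accessibility again does not depend on which locks sit to $x$'s left. In both subcases recomputing $\locks{-}$ after the swap and re-deriving well-formedness is mechanical.

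\textbf{The main obstacle: lock weakening.} Here we replace $\lockwith{\mu_F}$ by $\lockwith{\nu_F}$ with $\mu_F\To\nu_F$ and $\nu_F:F'\to F$, and simultaneously rewrite the tail $\Gamma'$ to $\updlock{\Gamma'}{F'}$, shifting the ambient mode from $E$ to the (by \Cref{lemma:mono-mod}) smaller mode $E'$. I would handle this case by an inner induction following the recursive definition of $\updlock{-}{-}$. In the \tylab{Var} case for a variable residing in $\Gamma'$, the premise is $\Gamma_{\mathrm{pre}}\vdash(\mu',A')\To\locks{\Gamma''}\atmode{F''}$ where $\Gamma''$ is the suffix of the original context; after the rewrite the relevant lock composite becomes $\nu_F$ precomposed with $\locks{\updlock{\Gamma''}{\cdot}}$, and I would show this transforms appropriately by combining: (i) $\mu_F\To\nu_F$ (the hypothesis), (ii) $\locks{\Gamma''}\To\locks{\updlock{\Gamma''}{\cdot}}$ from \Cref{lemma:update-modtrans}, and (iii) horizontal composition of modality transformations (\Cref{lemma:modtrans-horizontal}, i.e. \Cref{lemma:modtrans-two-cells}), finally chaining with the original transformation via vertical composition (\Cref{lemma:modtrans-vertical}); the $A':\Pure$ subcase is immediate. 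For \tylab{Mod}, \tylab{Letmod}, and \tylab{Handle}, the point is that the extra lock these rules prepend to a premise's context commutes with $\updlock{-}{-}$: using \Cref{lemma:index-update} one checks that updating indices before or after adding the lock yields the same composite modality up to the index shift, so the induction hypothesis applies to the premise with the shifted ambient mode, and \Cref{lemma:mono-mod}/\Cref{lemma:self-modtrans} (plus \Cref{lemma:sound-modtrans-metn} where a subeffecting obligation appears, e.g. in the comonadic-structure premises of \tylab{Handle}) re-establish all remaining side conditions. The bureaucracy of carrying the index shift consistently through every recursive step of $\updlock{-}{-}$, all of it parametric in the effect theory $\mcT$, is where the proof is genuinely delicate; everything else is routine congruence and bookkeeping.
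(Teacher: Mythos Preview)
Your overall structure matches the paper's: items 1, 2, 4, 5, 6, 7 by straightforward induction (with item 7 relying on \mtylab{Abs}), and item 3 (lock weakening) as the delicate one. Your treatment of \tylab{Var}, \tylab{Mod}, \tylab{Letmod} for lock weakening is along the right lines.

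There is, however, a genuine gap. You list \tylab{Do} among the ``purely congruential'' rules that ``do not inspect locks or the fine structure of the context''. This is false for lock weakening: the conclusion of \tylab{Do} is at ambient $\ell,E$, and lock weakening changes the ambient effect context. To reapply \tylab{Do} you must show that $\ell$ is still a sub-effect of the \emph{new} ambient. The paper singles this out as one of the two ``most interesting cases'' precisely because, with the effect theory abstracted, it is not automatic. The argument is: write $\locks{\Gamma'}=\xi_{\mu(F)}$ so the old ambient is $\xi(\mu(F))$ and $\ell\subtype\xi(\mu(F))$; by \Cref{lemma:index-update} the updated lock composite is $\xi_{\nu(F)}$; from $\mu_F\To\nu_F$ and \Cref{lemma:modtrans-two-cells} obtain $\mu(F)\subtype\nu(F)$; then \Cref{lemma:mono-mod} gives $\xi(\mu(F))\subtype\xi(\nu(F))$, and transitivity yields $\ell\subtype\xi(\nu(F))$. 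The same label-preservation issue arises in the $M$-premise of \tylab{Handle}, where the paper notes it is discharged by the presence of the lock $\lockwith{\aex{\ell}}$ itself; your discussion of \tylab{Handle} focuses only on the comonadic side conditions and the commutation of prepended locks with $\updlock{-}{-}$, but does not flag this label obligation.

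Relatedly, your parenthetical ``the (by \Cref{lemma:mono-mod}) smaller mode $E'$'' has the direction wrong: the new ambient $E'$ is \emph{larger} ($E\subtype E'$), not smaller. If $E'$ really were smaller the \tylab{Do} case would in fact fail, so this slip is connected to the oversight above.
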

\begin{proof}
1, 2, 4, 5, 6, 7 follow from straightforward induction on the typing
derivation.
Among them, 7 makes use of the transformation rule \mtylab{Abs}.
3 follows from a induction on the typing derivation.
The most interesting cases are \tylab{Do} and \tylab{Handle} due to
the abstraction over the effect mode theory.
We show the proof as follows.
\begin{description}
\item[Case]
  \begin{mathpar}
  \inferrule*[Lab=\tylab{Var}]
  {
    \nu'_{F_1}= \locks{\Gamma_2} : E\to F_1 \\
    \inferrule*
    {
    \refa{\mu'_{F_1}\To\nu'_{F_1}} \text{ or } \Gamma_1\vdash A:\Pure
    }
    {(\mu,A) \To \nu' \atmode{F_1}}
  }
  {\typm{\Gamma_1,x\varb{\mu'_{F_1}} A,\Gamma_2}{x:A}{E}}
  \end{mathpar}
  Trivial when $A$ is pure. Otherwise, case analysis on where the lock
  weakening happens.
  \begin{description}
    \item[Case] $\Gamma_1$. Supposing $\Gamma_1 =
    \Gamma,\lockwith{\mu_F},\Gamma_0$ and after lock weakening we have
    $\Gamma,\lockwith{\nu_F},\Gamma_0',x\varb{\mu'_{F_1'}},\Gamma_2'$
    where $\Gamma_2' = \updlock{\Gamma_2}{F_1'} : E'\to F_1'$ and
    $\Gamma_0' = \updlock{\Gamma_0}{F'}:F_1'\to F'$.
    By \Cref{lemma:index-update} on $\Gamma_0$, $F\subtype F'$, and
    \Cref{lemma:mono-mod}, we have $F_1\subtype F_1'$.
    Then by \refa{} and \Cref{lemma:mono-modtrans}, we have
    $\mu'_{F_1'}\To\nu'_{F_1'}$.
    Then by \Cref{lemma:index-update} we have $\nu'_{F_1'} =
    \locks{\Gamma_2'}$.
    Finally by \tylab{Var} we have
    \[
      {\typm{\Gamma,\lockwith{\nu_F},\Gamma_0',x\varb{\mu'_{F_1'}} A,\Gamma_2'}{x:A}{E'}}
    \]
    \item[Case] $\Gamma_2$. Suppose $\Gamma_2 =
    \Gamma_0,\lockwith{\mu_F},\Gamma'$ and after lock weakening
    $\Gamma_2$ is replaced by $\Gamma_2' = \Gamma_1,x\varb{\mu'_{F_1}}
    A,\Gamma_0,\lockwith{\nu_F},\updlock{\Gamma'}{F'}$.
    By \Cref{lemma:update-modtrans} we have
    $\locks{\Gamma'}\To\locks{\updlock{\Gamma'}{F'}}$.
    Then by \Cref{lemma:modtrans-horizontal} we have
    $\locks{\Gamma_2} \To \locks{\Gamma_2'}$.
    By \Cref{lemma:modtrans-vertical} and \refa{}, we have $\mu'_{F_1}
    \To \locks{\Gamma_2'}$.
    Finally by \tylab{Var} we have
    \[
      {\typm{\Gamma,x\varb{\mu'_{F_1}},\Gamma_2'}{x:A}{E'}}
    \]
  \end{description}
\item[Case]
  \begin{mathpar}
    \inferrule*[Lab=\tylab{Mod}]
    {
      \mind{\mu'}{E} : F_1 \to E \\
      \refa{\typm{\Gamma,\lockwith{\mind{\mu}{F}},\Gamma',\lockwith{\mind{\mu'}{E}}}{V:A}{F_1}}
    }
    {\typm{\Gamma,\lockwith{\mind{\mu}{F}},\Gamma'}{\Box_{\mu'}\,V : \boxwith{\mu'} A}{E}}
  \end{mathpar}
  We have
  \[
   \updlock{\Gamma',\lockwith{\mind{\mu'}{E}}}{F'}
  = \updlock{\Gamma'}{F'}, \updlock{\lockwith{\mind{\mu'}{E}}}{E'}
  = \updlock{\Gamma'}{F'}, \lockwith{\mind{\mu'}{E'}}.
  \]
  Supposing $\mind{\mu'}{E'}:F_1' \to E'$, by
  $\locks{\updlock{\Gamma'}{F'}, \lockwith{\mind{\mu'}{E'}}}:F_1'\to F'$ and
  IH on \refa{}, we have
  \[
    \typm{\Gamma,\lockwith{\mind{\nu}{F}},\updlock{\Gamma'}{F'}, \lockwith{\mind{\mu'}{E'}}}{V:A}{F_1'}.
  \]
  Then by \tylab{Mod} we have
  \[
  \typm{\Gamma,\lockwith{\mind{\nu}{F}},\updlock{\Gamma'}{F'}}{\Box_{\mu'}\,V : \boxwith{\mu'} A}{E'}.
  \]
  \item[Case]
  \begin{mathpar}
    \inferrule*[Lab=\tylab{Letmod}]
    {
      \nu'_E : F_1\to E \\
      \refa{\typm{\Gamma,\lockwith{\mu_F},\Gamma',\lockwith{\nu'_E}}{V : \boxwith{\mu'} A}{F_1}} \\
      \refb{\typm{\Gamma,\lockwith{\mu_F},\Gamma',x\varb{\nu'_E\circ\mu'_{F_1}}{A}}{M:B}{E}}
    }
    {\typm{\Gamma,\lockwith{\mu_F},\Gamma'}{\Letm{\nu'}{\mu'} x = V \In M : B}{E}}
  \end{mathpar}
  By IH on \refa{}, we have
  \[
    \typm{\Gamma,\lockwith{\nu_F},\updlock{\Gamma'}{F'},\lockwith{\nu'_{E'}}}{V : \boxwith{\mu'} A}{F_1'}
  \]
  where $\nu'_{E'} : F_1' \to E'$.
  By IH on \refb{}, we have
  \[
    \typm{\Gamma,\lockwith{\nu_F},\updlock{\Gamma'}{F'},x\varb{\nu'_{E'}\circ\mu'_{F_1'}}{A}}{M:B}{E'}.
  \]
  Then by \tylab{Letmod}, we have
  \[
    {\typm{\Gamma,\lockwith{\mu_F},\updlock{\Gamma'}{F'}}{\Letm{\nu'}{\mu'} x = V \In M : B}{E'}}
  \]
\item[Case]
  \begin{mathpar}
    \inferrule*[Lab=\tylab{Do}]
    {
      \Sigma,\Gamma,\Gamma' \ni \ell : A\sto B \\
      \refa{\typm{\Gamma,\lockwith{\mu_F},\Gamma'}{N : A}{\ell, E}} \\
    }
    {\typm{\Gamma,\lockwith{\mu_F},\Gamma'}{\Do\ell\; N : B}{\ell, E}}
  \end{mathpar}
  Suppose $\locks{\Gamma'} = \xi_{\mu(F)}$.
  We have $\ell \subtype \xi(\mu(F))$.
  By \Cref{lemma:index-update} we have $\locks{\updlock{\Gamma'}{\nu(F)}} = \xi_{\nu(F)}$
  By IH on \refa{} we have
  \[
    \typm{\Gamma,\lockwith{\nu_F},\updlock{\Gamma'}{\nu(F)}}{N : A}{\xi(\mu(F))}
  \]
  By \Cref{lemma:modtrans-two-cells} and $\mu_F\To\nu_F$ we have
  $\mu(F)\subtype\nu(F)$.
  By \Cref{lemma:mono-mod} we have $\xi(\mu(F))\subtype\xi(\nu(F))$.
  By transitivity of subeffecting we have $\ell \subtype \xi(\nu(F))$.
  Finally our goal follows from reapplying \tylab{Do}.
\item[Case]
  \begin{mathpar}
    \inferrule*[Lab=\tylab{Handle}]
    {
      \mu(F) = E \\
      \Gamma, \lockwith{\xi_{F_1}}, \Gamma' \vdash \mu \To \aid \atmode{F} \\
      \Gamma, \lockwith{\xi_{F_1}}, \Gamma' \vdash \mu \To \mu\circ\mu \atmode{F} \\
      \refa{\typm{\Gamma, \lockwith{\xi_{F_1}}, \Gamma', \lockwith{\mu_F}, \lockwith{\aex{\ell}_{E}}}{M : A}{\ell,E}} \\
      \Sigma,\Gamma,\Gamma' \ni \ell : A'\sto B' \\
      \refb{\typm{\Gamma, \lockwith{\xi_{F_1}}, \Gamma', \lockwith{\mu_F}, x : \boxwith{(\mu\circ\aex{\ell})} A}{N : B}{E}} \\
      \refc{\typm{\Gamma, \lockwith{\xi_{F_1}}, \Gamma', \lockwith{\mu_F}, p : A', r: \boxwith{\mu}(B' \to B)}{N' : B}{E}} \\
    }
    {\typm{\Gamma, \lockwith{\xi_{F_1}}, \Gamma'}{\Handle^\mu\;M\With
      \{\Ret x \mapsto N, \ell\;p\;r \mapsto N' \}
    : B}{F}}
  \end{mathpar}
  Suppose we want to do the lock weakening $\xi_{F_1} \To \nu_{F_1}$.
  Our goal follows from IHs on \refa{}, \refb{}, \refc{} and
  reapplying \tylab{Handle}.
  As in the case of \tylab{Do}, we need to make sure that after IH on
  \refa{} the label $\ell$ is still in the effect context.
  This is guaranteed by the lock $\lockwith{\aex{\ell}}$.
\item[Case] \tylab{TAbs}, \tylab{Abs}, \tylab{TApp}, \tylab{App}.
Follow from IH. Similar to other cases we have shown.
\end{description}
\end{proof}

The following lemma reflects the intuition that pure values can be
used in any effect context.

\begin{lemma}[Pure Promotion]
  \label{lemma:pure-promotion}
  The following promotion rule is admissible.
  \begin{mathpar}
    \inferrule*
    {
      \typm{\Gamma_1,\Gamma}{V:A}{E} \\
      \Gamma_1\vdash A : \Pure \\
      \Gamma_1,\Gamma'\atmode{E'} \\
      \fv{V} \cap \dom{\Gamma} = \emptyset \\
    }
    {\typm{\Gamma_1,\Gamma'}{V:A}{E'}}
  \end{mathpar}
\end{lemma}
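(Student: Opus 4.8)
The plan is to prove the statement by induction on the derivation of $\typm{\Gamma_1,\Gamma}{V:A}{E}$, with a case analysis on the last rule used; since $V$ is a value this rule is one of \tylab{Unit}, \tylab{Var}, \tylab{Abs}, \tylab{TAbs}, \tylab{Mod}, \tylab{TApp}, or \tylab{Letmod}. The naive statement is not quite inductive, for two reasons: descending into a subterm of a complex value (the $V\,B$ case, the $\Letm{\nu}{\mu}{}$ case) or under an absolute modality need not preserve purity of the relevant type, and descending under the binder of $\Letm{\nu}{\mu}{}$ leaves a trailing context fragment. So I would first generalise the lemma to (i) permit a trailing context $\Delta$ after the replaceable segment $\Gamma$ (with $\Delta$ suitably re-indexed in the conclusion), (ii) weaken $\fv V\cap\dom\Gamma=\varnothing$ so it constrains only the replaceable segment, and (iii) replace the hypothesis $\Gamma_1\vdash A:\Pure$ by the disjunction ``$\Gamma_1\vdash A:\Pure$, \emph{or} $E=E'$ together with a monotonicity condition $\locks{\Gamma}\To\locks{\Gamma'}$ (at the appropriate indices)''. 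With this generalisation in place, three cases are immediate: \tylab{Abs} is vacuous, since an arrow type never has kind $\Pure$ (no kinding rule derives $\Pure$ for $A_1\to B_1$ and type equivalence preserves the arrow shape), contradicting the purity branch; \tylab{Unit} follows directly from the well-formedness hypothesis $\Gamma_1,\Gamma'\atmode{E'}$; and \tylab{Var} goes through because $\fv V\cap\dom\Gamma=\varnothing$ forces the variable into $\Gamma_1$, and when $A$ is pure the accessibility obligation of \tylab{Var} is discharged by the $\Gamma\vdash A:\Pure$ premise of the auxiliary judgement $\Gamma\vdash(\mu,A)\To\nu\atmode F$, irrespective of the locks in $\Gamma'$ (in the ``$E=E'$'' branch it is discharged instead by composing the original accessibility proof with $\locks{\Gamma}\To\locks{\Gamma'}$, using vertical composition of modality transformations).

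The remaining cases are closed by the induction hypothesis together with the structural lemmas of \Cref{lemma:structural-rules}. For \tylab{TAbs} I would use type-variable swapping to move the bound variable $\alpha{:}K$ to the left of $\Gamma$, invoke the IH with pure prefix $\Gamma_1,\alpha{:}K$ (inverting the kinding of $\forall\alpha^K.A'$ to see the body type is pure), and swap back, appealing also to type-variable weakening for well-formedness of the new context. For \tylab{Mod} with a \emph{relative} modality $\aex D$ the body type is pure by inversion on the kinding of $\boxwith{\aex D}A'$, so the IH applies directly; here I exploit that the statement allows the replaceable tail to be arbitrary, taking it to be $\Gamma,\lockwith{\aex D_E}$ and its replacement to be $\Gamma',\lockwith{\aex D_{E'}}$, whose well-formedness at $D,E'$ follows from $\Gamma_1,\Gamma'\atmode{E'}$. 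The complex-value cases \tylab{TApp} and \tylab{Letmod} are handled analogously, recursing on the subterm(s): if the subterm's type is pure we use the purity branch, otherwise we fall into the ``$E=E'$'' branch (the premises of \tylab{TApp} share the ambient effect context, and the body premise of \tylab{Letmod} does too), carrying the trailing context provided by the generalisation.

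The main obstacle is the \tylab{Mod} case with an \emph{absolute} modality $\aeq{E_0}$ (and the analogous situations where a subterm sits under an absolute lock). There the body $V'$ is typed at type $A'$ which need \emph{not} be pure, so the purity branch is unavailable; instead I would observe that the body's ambient effect context is $E_0$ no matter what the outer ambient effect context is, so changing $E$ to $E'$ only re-indexes the lock $\lockwith{\aeq{E_0}}$, and one recurses via the ``$E=E'$'' branch with both ambient contexts equal to $E_0$. The crux of verifying this is that the monotonicity side condition survives: $\locks{\Gamma,\lockwith{\aeq{E_0}_E}}=\locks{\Gamma}\circ\aeq{E_0}{}_{E}$ is itself an absolute modality (composition with an absolute modality on the right is absorbing), and by \mtylab{Abs} one absolute modality transforms to another independently of the lock index — so $\locks{\Gamma,\lockwith{\aeq{E_0}_E}}\To\locks{\Gamma',\lockwith{\aeq{E_0}_{E'}}}$ holds automatically. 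The genuinely delicate work is designing the generalisation so that it is simultaneously strong enough to push through all of these recursions (especially threading the monotonicity condition and the trailing context through \tylab{Mod} and \tylab{Letmod}) and weak enough to remain provable, and then checking it specialises back to the stated lemma by taking $\Delta$ empty.
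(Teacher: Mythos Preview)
The paper defers this proof entirely to the corresponding lemma in \citet{TangWDHLL25}, so there is no detailed argument here to compare against; I can only comment on the internal soundness of your plan.

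Your overall strategy is natural, but the proposed generalisation has a structural gap. You fold two claims into one disjunctive hypothesis: a purity branch (a) and an ``$E=E'$ plus lock-monotonicity'' branch (b). Branch (b) is entered via \tylab{Mod} with an absolute modality, and once inside it the body value may itself be a $\lambda$-abstraction $\lambda x.M$---so \tylab{Abs} is \emph{not} vacuous in branch (b), and you justify vacuity only for the purity branch. To push \tylab{Abs} through in branch (b) you must descend into the lambda body $M$, which is a \emph{computation}, but your induction hypothesis is stated for values only. Branch (b) is really a context-segment replacement principle (a generalisation of Lock Weakening, \Cref{lemma:structural-rules}.3) that must hold for arbitrary terms; it cannot be carried as a disjunct through an induction restricted to values. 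The clean fix is to prove it separately for all terms and invoke it as a black box in the \tylab{Mod}-absolute case.

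There is a second problem in your handling of \tylab{TApp} and \tylab{Letmod} in branch (a): you propose to ``fall into the $E=E'$ branch'' when a subterm's type is not $\Pure$, but in branch (a) you are moving from $E$ to a distinct $E'$, so the precondition $E=E'$ is simply unavailable. Concretely, take $\Gamma_1 = y\varb{\aid}{(\forall\alpha^{\Any}.\alpha)}$, $\Gamma=\cdot$, $V = y\;\TUnit$, and $\Gamma' = \lockwith{\aeq{E''}}$: then $A=\TUnit:\Pure$, yet the subterm $y$ has non-$\Pure$ type, and $y$ is inaccessible in $\Gamma_1,\Gamma'$ because neither \mtylab{Abs} nor \mtylab{Extend} derives $\aid\To\aeq{E''}$. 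This instance appears to falsify the lemma \emph{as stated}; neither of your branches can cover it since the desired conclusion fails. Both you and the paper likely need an additional hypothesis on the bindings in $\Gamma_1$ (cf.\ \Cref{lemma:pure-promotion-vars}).
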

\begin{proof}
By induction on the typing derivation of $V$.
\begin{description}
\item[Case] \tylab{Var}. Trivial.
\item[Case]
\begin{mathpar}
\inferrule*[Lab=\tylab{Mod}]
{
  \mind{\mu}{E} : F_1 \to E \\
  \refa{\typm{\Gamma_1,\Gamma,\lockwith{\mind{\mu}{E}}}{V:A}{F_1}}
}
{\typm{\Gamma_1,\Gamma}{\Box_\mu\,V : \boxwith{\mu} A}{E}}
\end{mathpar}
Case analysis on the shape of $\mu$.
\begin{description}
  \item[Case] $\mu$ is relative. $A$ must have kind $\Pure$.
  By IH on \refa{}, we have
  \[
    \typm{\Gamma_1,\Gamma',\lockwith{\mind{\mu}{E'}}}{V:A}{F_1'}
  \]
  where $\mu_{E'}:F_1'\to E'$.
  Then by \tylab{Mod} we have
  \[
    \typm{\Gamma_1,\Gamma'}{\Box_\mu\,V:\boxwith{\mu} A}{E'}
  \]
  \item[Case] $\mu$ is absolute. We have $\mu = \aeq{F_1}$ and
  $\locks{\Gamma',\lockwith{\mu_{E'}}} = \aeq{F_1}_{F} =
  \locks{\Gamma,\lockwith{\mu_E}}$.
  Thus, replacing the context $(\Gamma,\lockwith{\mu_E})$ with
  $(\Gamma',\lockwith{\mu_{E'}})$ in \refa{} does not influence all
  usages of \tylab{Var} in the derivation tree of \refa{}. We have
  \[
    {\typm{\Gamma_1,\Gamma',\lockwith{\mind{\mu}{E'}}}{V:A}{F_1}}
  \]
  Then by \tylab{Mod} we have
  \[
    \typm{\Gamma_1,\Gamma'}{\Box_\mu\,V:\boxwith{\mu} A}{E'}
  \]
\end{description}
\item[Case] \tylab{TAbs}. Follow from IH and
\Cref{lemma:structural-rules}.5.
\item[Case] \tylab{Abs}. Impossible since function types are impure.
\end{description}
\end{proof}

\begin{lemma}[Generalised Pure Promotion]
  \label{lemma:pure-promotion-vars}
  Given $\typm{\Gamma_1,\Gamma}{M:A}{E}$, if for any $x\in\ftv{M}$, we
  have $\Gamma_1 \ni x\varb{\mu} B$ such that either $\Gamma_1\vdash B
  : \Pure$ or $\mu$ is an absolute modality, then
  $\typm{\Gamma_1,\Gamma'}{M:A}{E'}$ for $E\subtype E'$ and $\Gamma_1,\Gamma'\atmode{E'}$.
\end{lemma}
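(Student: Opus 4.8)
The plan is to prove this by induction on the typing derivation of $\typm{\Gamma_1,\Gamma}{M:A}{E}$, mirroring the proof of Pure Promotion (\Cref{lemma:pure-promotion}, itself adapted from \citet{TangWDHLL25}). The generalisation over Pure Promotion is confined to the variable case: all the inductive cases do is thread the effect context through their premises, and since the new effect context $E'$ is $\subtype$-above the old one $E$, monotonicity of the mode theory lets each premise be re-derived at the larger effect context.

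The interesting base case is \tylab{Var} with $M=x$. Since $x\in\ftv{M}$, the hypothesis gives a binding $x\varb{\mu_G}{B}$ in $\Gamma_1$ (with $B\equiv A$) such that either $\Gamma_1\vdash B:\Pure$ or $\mu$ is absolute. Write the old context as $\Gamma_1^L,x\varb{\mu_G}{B},\Gamma_1^R,\Gamma$ and the new one as $\Gamma_1^L,x\varb{\mu_G}{B},\Gamma_1^R,\Gamma'$. The old derivation yields $\Gamma_1\vdash(\mu,B)\To\locks{\Gamma_1^R,\Gamma}\atmode{G}$, and we must produce $\Gamma_1\vdash(\mu,B)\To\locks{\Gamma_1^R,\Gamma'}\atmode{G}$. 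If $B$ is pure, the pure clause of the $(\mu,A)\To\nu$ judgement applies immediately. Otherwise $\mu=\aeq{H}$, and by \mtylab{Abs} it suffices to show $H\subtype\locks{\Gamma_1^R,\Gamma'}(G)$; unfolding the well-formedness judgement $\Gamma_1,\Gamma'\atmode{E'}$ (the point where $x$ sits has effect context $G$, so the locks between $x$ and the end compose to a concrete modality $E'\to G$) gives $\locks{\Gamma_1^R,\Gamma'}(G)=E'$, while the old derivation via \mtylab{Abs} gave $H\subtype\locks{\Gamma_1^R,\Gamma}(G)=E$, and transitivity of subeffecting with $E\subtype E'$ finishes the case. \tylab{Unit} is immediate, and \tylab{App}, \tylab{TAbs}, \tylab{TApp} follow directly by the IH since $\ftv{}$ of each subterm is contained in $\ftv{M}$.

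For the remaining cases we apply the IH to each subderivation with $E'$ in place of $E$, using \Cref{lemma:mono-mod} to push $E\subtype E'$ through modalities ($\mu(E)\subtype\mu(E')$ under a lock $\lockwith{\mu}$) and through row extensions ($\ell,E\subtype\ell,E'$ in \tylab{Do} and in the handled-computation premise of \tylab{Handle}), and \Cref{lemma:mono-modtrans} to re-derive the comonadic side conditions $\mu\To\aid$ and $\mu\To\mu\circ\mu$ of \tylab{Handle} at the larger effect context. Binders introduced by the term itself (the $x$ of \tylab{Abs}/\tylab{Letmod} and the return-clause $x$ of \tylab{Handle}, or the label of \tylab{LocalEffect}) are placed in the malleable part of the context, with $\Gamma$ extended to $\Gamma,x\varb{\rho_E}{A}$ and $\Gamma'$ to $\Gamma',x\varb{\rho_{E'}}{A}$ (index updated via $\updlock{-}{-}$); uses of such a binder inside the subterm change only the effect-context index of its binding, not the binding modality, so the required modality transformation follows from the old one by monotonicity (\Cref{lemma:mono-modtrans}). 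The value restrictions on $\Mod$ and $\Letm{\nu}{\mu}{}$ play no role. The main obstacle, and the place to be careful, is exactly the index bookkeeping in the variable case: one must verify that after replacing $\Gamma$ by $\Gamma'$ the locks between $x$ and the hole still compose to a well-formed concrete modality from $E'$ to $G$ acting as $E'$ on $G$, which is where \Cref{lemma:index-update} together with $\Gamma_1,\Gamma'\atmode{E'}$ does the work; once this is in place the "pure or absolute" dichotomy cleanly reduces the case to a single clause of the $(\mu,A)\To\nu$ judgement plus transitivity of $\subtype$.
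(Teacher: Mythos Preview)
Your proposal is correct and follows the same approach as the paper: induction on the typing derivation with the variable case as the crux, dispatching the pure sub-case by the first clause of the auxiliary judgement and the absolute sub-case via \mtylab{Abs} together with $E\subtype E'$. The paper's own proof is a three-sentence sketch (``straightforward induction \ldots\ the most non-trivial case is \tylab{Var} \ldots\ \mtylab{Abs} still holds because $E\subtype E'$''), so your expansion---threading monotonicity through locks and row extensions, re-deriving the comonadic side conditions of \tylab{Handle} via \Cref{lemma:mono-modtrans}, and explicitly treating term-introduced binders in the malleable part via index updates---fills in exactly the details the paper elides.
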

\begin{proof}
  By straightforward induction on typing judgements in \Metpt.
  The most non-trivial case is to show the accessibility of each
  variable $x\in\ftv{M}$.
  For variables with types of kind $\Pure$, we can always access them.
  For variables annotated with an absolute modality, the modality
  transformation relation \mtylab{Abs} still holds because $E\subtype E'$.
\end{proof}

\begin{restatable}[Substitution]{lemma}{substitution} ~
  \label{lemma:substitution}
  The following substitution rules are admissible.
  \begin{enumerate}[label=\arabic*.]
    \item Preservation of kinds under type substitution.
    \begin{mathpar}
      \inferrule*
      {
        \Gamma\vdash A : K \\
        \Gamma,\alpha:K,\Gamma'\vdash B : K
      }
      {\Gamma,\Gamma'\vdash B[A/\alpha] : K}
    \end{mathpar}
    \item Preservation of types under type substitution.
    \begin{mathpar}
      \inferrule*
      {
        \Gamma\vdash A : K \\
        \typm{\Gamma,\alpha:K,\Gamma'}{M:B}{F} \\
      }
      {\typm{\Gamma,\Gamma'}{M[A/\alpha] : B[A/\alpha]}{F}}
    \end{mathpar}
    \item Preservation of types under value substitution.
    \begin{mathpar}
      \inferrule*
      {
        \typm{\Gamma,\lockwith{\mu_F}}{V:A}{F'} \\
        \typm{\Gamma,x\varb{\mu_F}{A},\Gamma'}{M:B}{E} \\
      }
      {\typm{\Gamma,\Gamma'}{M[V/x]:B}{E}}
    \end{mathpar}
    \item Preservation of types under label substitution.
    \begin{mathpar}
      \inferrule*
      {
        \typm{\Instctx\mid\Gamma,\ell:A'\sto B',\Gamma'}{M:A}{E} \\
        \ell' \in \Instctx
      }
      {\typm{\Instctx\mid\Gamma,\Gamma'[\ell'/\ell]}{M[\ell'/\ell]:A[\ell'/\ell]}{E[\ell'/\ell]}}
    \end{mathpar}
  \end{enumerate}
\end{restatable}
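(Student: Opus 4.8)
The plan is to establish the three parts in order, each by a routine induction on the relevant derivation, so that the earlier parts are available for the later ones, with the real content concentrated in the modal cases of part~3. \emph{Part~1} I would prove by induction on the derivation of $\Gamma,\alpha:K,\Gamma'\vdash B:K'$. The value-type kinding rules of \Cref{fig:kinding-metp} are structural, so each case either reduces to the IH or bottoms out at a type variable, where substitution either leaves it untouched or replaces it by $A$ — in which case the hypothesis $\Gamma\vdash A:K$ together with type-variable weakening from \Cref{lemma:structural-rules} gives the result. The only non-structural ingredient is that the effect-theory kinding judgment $\Gamma\vdash D:\Effect$ is stable under substituting a well-kinded effect context for an effect variable; this holds for each of $\mcS$, $\emtScp$, $\emtSimp$ and I would record it as the expected closure property of an effect theory.

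\emph{Part~2} I would prove by induction on the derivation of $\typm{\Gamma,\alpha:K,\Gamma'}{M:B}{F}$. The purely structural rules (\tylab{Abs}, \tylab{App}, \tylab{Unit}, \tylab{TAbs}, \tylab{TApp}) follow from the IH, using part~1 for the side-condition in \tylab{TApp}. For the modal rules \tylab{Mod}, \tylab{Letmod}, \tylab{Var} the point is that type substitution commutes with all the mode-theoretic operations: $(\act{\mu}{F})[A/\alpha]=\act{\mu[A/\alpha]}{F[A/\alpha]}$, $(\mu\circ\nu)[A/\alpha]=\mu[A/\alpha]\circ\nu[A/\alpha]$, $\locks{\Gamma'[A/\alpha]}=\locks{\Gamma'}[A/\alpha]$, and — using stability of $\subtype$ under substitution — the modality-transformation judgment and the auxiliary judgment $(\mu,A)\To\nu$ are preserved. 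Each modal case is then just a reapplication of the same rule to the substituted premises; \tylab{Do}, \tylab{LocalEffect}, \tylab{Handle} are handled likewise, noting for \tylab{LocalEffect} that the well-formedness conditions keeping $\ell$ out of $A'$ and $E$ are preserved.

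\emph{Part~3} is the substantive case; I would prove it by induction on the derivation of $\typm{\Gamma,x\varb{\mu_F}{A},\Gamma'}{M:B}{E}$, with $\Gamma$, $\mu$, $F$, $V$, $A$ fixed and $\Gamma'$ allowed to grow. Since $V$ is a value and every value form has only value subterms, $M[V/x]$ is again well-formed (and a value whenever $M$ is), so the complex-value positions cause no trouble. Every case that introduces a binder or a lock — \tylab{Abs}, \tylab{TAbs}, \tylab{Mod}, \tylab{Letmod}, \tylab{Handle}, \tylab{LocalEffect}, \tylab{Do} — just extends $\Gamma'$ by the corresponding entry, invokes the IH on each premise (the hypothesis on $V$ mentions only $\Gamma,\lockwith{\mu_F}$ and is therefore unchanged), and reapplies the rule. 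The interesting case is \tylab{Var} on the term $x$ itself: its premise $\Gamma\vdash(\mu,A)\To\locks{\Gamma'}\atmode{F}$ unfolds to either $\Gamma\vdash A:\Pure$ or $\mu_F\To(\locks{\Gamma'})_F$. In the first subcase I would conclude directly from Pure Promotion (\Cref{lemma:pure-promotion}), as $V$ has pure type and all its free variables lie in $\Gamma$. In the second I would apply Lock Weakening (\Cref{lemma:structural-rules}, part~3) to $\typm{\Gamma,\lockwith{\mu_F}}{V:A}{\mu(F)}$ along $\mu_F\To(\locks{\Gamma'})_F$ to obtain $\typm{\Gamma,\lockwith{(\locks{\Gamma'})_F}}{V:A}{E}$, then reconcile the single collapsed lock with the actual tail $\Gamma'$ by splitting it back into the individual locks of $\Gamma'$ via the context-identification equation $\Gamma,\lockwith{\mu},\lockwith{\nu}=\Gamma,\lockwith{\mu\circ\nu}$ and weakening in the remaining variable, type-variable, and label entries of $\Gamma'$ using the other parts of \Cref{lemma:structural-rules}. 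For a variable $y\neq x$ the substitution is the identity, and since $x$ was bound by a variable entry (not a lock) its removal does not alter $\locks{-}$, so \tylab{Var} reapplies verbatim.

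I expect the main obstacle to be precisely this \tylab{Var} subcase of part~3: threading $V$'s typing from the lock configuration $\lockwith{\mu_F}$ under which it was derived into, and through, the arbitrary tail $\Gamma'$, making the effect-context indices on the locks line up (this is where \Cref{lemma:index-update} and \Cref{lemma:update-modtrans} come in) and justifying the collapse/split bookkeeping by the context equations — intricate, though using no idea beyond the already-proved structural lemmas. A secondary, lower-risk point is the reliance in parts~1 and~2 on stability of the effect-theory judgments ($\vdash D:\Effect$, $\equiv$, $\subtype$) under substitution of effect contexts for effect variables; this is immediate for the concrete theories in play but deserves to be stated as a closure assumption on $\mcT$.
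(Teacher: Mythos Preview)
Your proposal is correct and takes essentially the same approach as the paper, whose own proof is very terse: parts~1 and~2 are dismissed as ``straightforward induction'' and part~3 is deferred to the corresponding lemma for \Met in \citet{TangWDHLL25}, noting only that the new cases \tylab{LocalEffect} and the generalised \tylab{Handle} follow from the IH. Your detailed treatment of the \tylab{Var} subcase in part~3 via \Cref{lemma:pure-promotion} and Lock Weakening is exactly the elaboration that deferred argument requires, and your observation that stability of the effect-theory judgments under substitution is an implicit closure assumption on $\mcT$ is a fair point the paper does not state explicitly.
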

\begin{proof} ~ \\
\noindent 1,2,4. Follow from straightforward induction.

\noindent 3. By induction on the typing derivation of $M$.
Trivial when variable $x$ is not used. In the following induction we
always assume $x$ is used.
\begin{description}
\item[Case]
\begin{mathpar}
\inferrule*[Lab=\tylab{Var}]
{
  \mind{\nu}{F}= \locks{\Gamma'} : E\to F \\
  \inferrule*
  {
  \refa{\mind{\mu}{F}\To\nu_F} \text{ or } \Gamma\vdash A:\Pure
  }
  {(\mu,A) \To \nu \atmode{F}}
}
{\typm{\Gamma,x\varb{\mu_F}{A},\Gamma'}{x:A}{E}}
\end{mathpar}
Case analysis on the kind of $A$
\begin{description}
  \item[Case] $A$ does not have kind $\Pure$. By $\typm{\Gamma,\lockwith{\mu_F}}{V:A}{F'}$,
  \refa{}, and \Cref{lemma:structural-rules}.3, we have
  \[
    \typm{\Gamma,\lockwith{\nu_F}}{V:A}{E}.
  \]
  Then, by context equivalence, \Cref{lemma:structural-rules}.1, and
  \Cref{lemma:structural-rules}.4, we have
  \[
    \typm{\Gamma,\Gamma'}{V:A}{E}.
  \]
  \item[Case] $A$ has kind $\Pure$. By
  $\typm{\Gamma,\lockwith{\mu_F}}{V:A}{F'}$ and
  \Cref{lemma:pure-promotion}, we have
  \[
    \typm{\Gamma,\Gamma'}{V:A}{E}.
  \]
\end{description}
\item[Case]
\begin{mathpar}
\inferrule*[Lab=\tylab{Mod}]
{
  \mind{\mu'}{E} : F_1 \to E \\
  \refa{\typm{\Gamma,x\varb{\mu_F}{A},\Gamma',\lockwith{\mind{\mu'}{E}}}{W:B}{F_1}}
}
{\typm{\Gamma,x\varb{\mu_F}{A},\Gamma'}{\Box_{\mu'}\,W : \boxwith{\mu'} B}{E}}
\end{mathpar}
By IH on \refa{} we have
\[
  \typm{\Gamma,\Gamma',\lockwith{\mind{\mu'}{E}}}{W[V/x]:B}{F_1}.
\]
Then by \tylab{Mod} we have
\[
  \typm{\Gamma,\Gamma'}{(\Box_{\mu'}\,W)[V/x] : \boxwith{\mu'} B}{E}
\]
\item[Case] \tylab{Letmod}, \tylab{TAbs}, \tylab{TApp}, \tylab{Abs},
\tylab{App}, \tylab{Do}, \tylab{LocalEffect}, \tylab{Handle}.
By IHs.
\end{description}
\end{proof}

\subsection{Progress}
\label{app:progress}

\progress*
\begin{proof}
By induction on the typing derivation
$\typm{\Instctx\mid\cdot}{M:A}{E}$.
\begin{description}
\item[Case] $M$ is in a value normal form $U$. Trivial. Base case.
\item[Case] \tylab{Mod}. $\Box_\mu\, V$. By IH on $V$.
\item[Case] \tylab{Letmod}. $\Letm{\nu}{\mu} x = V \In N$. By IH on
$V$, if $V$ is reducible then $M$ is reducible; otherwise, $V$ is in a
value normal form, then by \Cref{lemma:canonical-forms} we have that
$M$ is reducible by \semlab{Letmod}.
\item[Case] \tylab{TApp}. $M\,A$. Similarly by IH on $M$, \Cref{lemma:canonical-forms}, and \semlab{TApp}.
\item[Case] \tylab{App}. $M\,N$. Similarly by IH on $M$ and $N$, \Cref{lemma:canonical-forms}, and \semlab{App}.
\item[Case] \tylab{Do}. $\Do\ell\;M$. We have $\ell\subtype E$. Either
$M$ is reducible or the whole term is in a normal form with respect to
$E$.
\item[Case] \tylab{LocalEffect}. Reducible by \semlab{Gen}.
\item[Case] \tylab{Handle}. $\Instctx\mid\cdot \vdash \Handle^\mu\;M\With \{\Ret x \mapsto N,\ell\;p\;r\mapsto N'\} \atmode{E}$.
\begin{description}
  \item[Case] $M$ is reducible. Trivial.
  \item[Case] $M$ is a value. By \semlab{Ret}.
  \item[Case] $M = \EC[\Do\ell'\;U]$.
  Since $M$ is not reducible itself, there is no handler for $\ell'$
  in $\EC$.
  If $\ell = \ell'$, by \semlab{Op}.
  Otherwise, since $M$ is at effect context $\ell,E$, by a
  straightforward induction on the evaluation context $\EC$ we know
  that $\Do\ell'\;U$ has the effect context $D,\ell,E'$ where
  $\ell'\notin D$ and $E'\subtype E$.
  The condition $\ell'\notin D$ is guaranteed by the fact that all
  labels in $D$ are introduced by handlers in $\EC$ and there is no
  handler for $\ell'$ in $\EC$.
  The condition $E'\subtype E$ is guaranteed by the fact that
  modality-parameterised handlers may only shrink the effect context,
  reading bottom-up, with their modality annotations.
  Then by inversion on the typing judgement of $\Do\ell'\;U$ we have
  $\ell'\subtype D,\ell,E'$.
  By \Cref{def:sanity-conditions}, $\ell'\notin D$ and $\ell'\neq\ell$
  we have $\ell'\subtype E$. Thus the whole term is in a normal form
  with respect to $E$.
\end{description}
\end{description}
\end{proof}

\subsection{Subject Reduction}
\label{app:subject-reduction}

\subjectReduction*
\begin{proof}
By induction on the typing derivation
$\typm{\Instctx\mid\Gamma}{M:A}{E}$.
We write out the context $\Instctx$ for runtime labels when necessary.
\begin{description}
\item[Case] \tylab{Var}. Impossible as there is no further reduction.
\item[Case]
\begin{mathpar}
  \inferrule*[Lab=\tylab{Mod}]
  {
    \mind{\mu}{F} : E \to F \\
    \refa{\typm{\Gamma,\lockwith{\mind{\mu}{F}}}{V:A}{E}}
  }
  {\typm{\Gamma}{\Box_\mu\,V : \boxwith{\mu} A}{F}}
\end{mathpar}
The only way to reduce is by \semlab{Lift} and $V\reducesto W$. IH on
\refa{} gives
\[
  {\typm{\Gamma,\lockwith{\mu_F}}{W:A}{E}}.
\]
Then by \tylab{Mod} we have
\[
  {\typm{\Gamma}{\Box_\mu\,W : \boxwith{\mu} A}{F}}.
\]
\item[Case]
\begin{mathpar}
\inferrule*[Lab=\tylab{Letmod}]
{
  \nu_F : E\to F \\
  \refa{\typm{\Gamma,\lockwith{\nu_F}}{V : \boxwith{\mu} A}{E}} \\
  \refb{\typm{\Gamma,x\varb{\nu_F\circ\mu_E}{A}}{M:B}{F}}
}
{\typm{\Gamma}{\Letm{\nu}{\mu} x = V \In M : B}{F}}
\end{mathpar}
By case analysis on the reduction.
\begin{description}
  \item[Case] \semlab{Lift} with $V\reducesto W$. By IH on \refa{} and
  reapplying \tylab{Letmod}.
  \item[Case] \semlab{Letmod}. We have $V = \Box_\mu\,U$ and
  \[
    \Letm{\nu}{\mu} x = \Box_\mu\,U \In M \reducesto M[U/x].
  \]
  Inversion on \refa{} gives
  \[
    {\typm{\Gamma,\lockwith{\nu_F},\lockwith{\mu_E}}{U : A}{E'}}.
  \]
  where $\mu_E : E' \to E$.
  By context equivalence, we have
  \[
    {\typm{\Gamma,\lockwith{\nu_F\circ\mu_E}}{U : A}{E'}}
  \]
  where $\nu_F\circ\mu_E : E'\to F$. By \Cref{lemma:substitution}.3 and \refb{}, we have
  \[
    {\typm{\Gamma}{M[U/x] : B}{F}}.
  \]
\end{description}
\item[Case] \tylab{TAbs},\tylab{Abs}. Impossible as there is no further reduction.
\item[Case]
\begin{mathpar}
\inferrule*[Lab=\tylab{TApp}]
{
\refa{\typm{\Gamma}{M:\forall\alpha^K.B}{E}} \\
\refb{\Gamma\vdash A : K} \\
}
{\typm{\Gamma}{M\,A : B[A/\alpha]}{E}}
\end{mathpar}
By case analysis on the reduction.
\begin{description}
  \item[Case] \semlab{Lift} with $M\reducesto N$. By IH on \refa{}
  and reapplying \tylab{TApp}.
  \item[Case] \semlab{TApp}. We have $M = \Lambda\alpha^K.V$ and
  \[
    (\Lambda\alpha^K.V)\,A\reducesto V[A/\alpha].
  \]
  Inversion on \refa{} gives
  \[
    \typm{\Gamma,\alpha:K}{V:B}{E}.
  \]
  Then by \Cref{lemma:substitution}.2 on \refb{}, we have
  \[
    \typm{\Gamma}{V[A/\alpha]:B[A/\alpha]}{E}.
  \]
\end{description}
\item[Case]
\begin{mathpar}
\inferrule*[Lab=\tylab{App}]
{
  \refa{\typm{\Gamma}{M : A \to B}{E}} \\
  \refb{\typm{\Gamma}{N : A}{E}}
}
{\typm{\Gamma}{M\; N: B}{E}}
\end{mathpar}
By case analysis on the reduction.
\begin{description}
  \item[Case] \semlab{Lift} with $M\reducesto M'$. By IH on \refa{}
  and reapplying \tylab{App}.
  \item[Case] \semlab{Lift} with $N\reducesto N'$. By IH on \refb{}
  and reapplying \tylab{App}.
  \item[Case] \semlab{App}. We have $M = \lambda x^A.M'$, $N=U$, and
  \[
    M\,N\reducesto M'[U/x].
  \]
  Inversion on \refa{} gives
  \[
    \typm{\Gamma,x:A}{M':B}{E}.
  \]
  Then by \Cref{lemma:substitution}.3 we have
  \[
    \typm{\Gamma}{M'[U/x]:B}{E}.
  \]
\end{description}
\item[Case] \tylab{Do}. The only way to reduce is by \semlab{Lift}.
Follow from IH and reapplying \tylab{Do}.
\item[Case]
\begin{mathpar}
\inferrule*[Lab=\tylab{LocalEffect}]
{
  \refa{\typm{\Instctx \mid \Gamma,\ell:A\sto B}{M : A'}{E}} \\
}
{\typm{\Instctx \mid \Gamma}{\Localeffect{\ell:A\sto B} M : A'}{E}}
\end{mathpar}
The only way to reduce is by \semlab{Gen}.
By \refa{}, \Cref{lemma:substitution}.4, and context weakening we have
\[
  \typm{\Instctx, \ell':A\sto B \mid \Gamma}{M[\ell'/\ell] : A'}{E}
\]
Note that we do not need to substitute the label $\ell$ in $A'$ and
$E$ since $\ell$ cannot appear in them.
Note that $\ell$ cannot appear in $A'$ and $E$.
\item[Case]
\begin{mathpar}
{
\inferrule*[Lab=\tylab{Handle}]
{
  H = \{\Ret x \mapsto N, \ell\;p\;r \mapsto N' \} \\
  \Sigma,\Instctx \ni \ell : A'\sto B' \\
  \mu(F) = E \\
  \Instctx\mid\Gamma \vdash \mu \To \aid \atmode{F} \\
  \Instctx\mid\Gamma \vdash \mu \To \mu\circ\mu \atmode{F} \\
  \refa{\typm{\Instctx\mid\Gamma, \lockwith{\mu_F}, \lockwith{\aex{\ell}_{E}}}{M : A}{\ell,E}} \\\\
  \refb{\typm{\Instctx\mid\Gamma, \lockwith{\mu_F}, x : \boxwith{(\mu\circ\aex{\ell})} A}{N : B}{E}} \\
  \refc{\typm{\Instctx\mid\Gamma, \lockwith{\mu_F}, p : A', r: \boxwith{\mu}(B' \to B)}{N' : B}{E}} \\
}
{\typm{\Instctx\mid\Gamma}{\Handle^\mu\;M\With H : B}{F}}
}
\end{mathpar}
By case analysis on the reduction.
\begin{description}
  \item[Case] \semlab{Lift} with $M\reducesto M'$. By IHs and
  reapplying \tylab{Handle}.
  \item[Case] \semlab{Ret}. We have $M=U$ and
  \[
    \Handle\;U \With H \reducesto N[(\Mod_{\mu\circ\aex{\ell}}\;U)/x].
  \]
  By \refa{} and \tylab{Mod}, we have
  \[
  \typm{\Gamma}{\Box_{\mu\circ\aex{\ell}}\;U:\boxwith{(\mu\circ\aex{\ell})}A}{F}
  \]
  By \refb{}, $\mu\To\aid\atmode{F}$, and \Cref{lemma:structural-rules}.3, we have
  \[
    {\typm{\Gamma, x : \boxwith{(\mu\circ\aex{\ell})} A}{N : B}{F}} \\
  \]
  Then by \Cref{lemma:substitution}.3 we have
  \[
    \typm{\Gamma}{N[(\Box_{\mu\circ\aex{\ell}}\;U)/x]:B}{F}
  \]
  \item[Case] \semlab{Op}.
  We have $M = \EC[\Do\ell\;U]$ and
  \[
    \Handle^\mu\;M\With H \reducesto
    N'[U/p, (\Box_\mu\,(\lambda y.\Handle^\mu\; \EC[y] \With H))/r]
  \]
  By \refc{}, $\mu\To\aid\atmode{F}$, and \Cref{lemma:structural-rules}.3, we have
  \[
    \refd{\typm{\Gamma, p : A', r: \boxwith{\mu}(B' \to B)}{N' : B}{F}}
  \]
  The label $\ell$ must be either in the global context $\Sigma$ or a
  runtime label in $\Instctx$ introduced by \semlab{Gen} before.
  We know that $U$ must have type $A'$ since there is a unique entry
  $\ell : A'\sto B'$ in $\Sigma,\Instctx$.
  By \refa{}, a straightforward induction on $\EC$, and inversion on
  the typing judgement of $\Do\ell\;U$, we have
  \[
    \typm{\Gamma,\lockwith{\mu_F}, \lockwith{\aex{\ell}_E}}{U:A'}{E'}
  \]
  for some $E'$.
  By $A' : \Pure$ and \Cref{lemma:pure-promotion}, we have
  \[
    \refe{\typm{\Gamma}{U:A'}{F}}.
  \]
  By \refa{}, $\mu\To\mu\circ\mu\atmode{F}$, and
  \Cref{lemma:structural-rules}.3, supposing $\mu(E) = E'$, we have
  \[
    \reff{\typm{\Gamma, \lockwith{\mu_F}, \lockwith{\mu_E}, \lockwith{\aex{\ell}_{E'}}}{M : A}{\ell,E'}}
  \]
  Observe that $B':\Pure$ allows $y$ to be accessed in any context.
  By \reff{} and a straightforward induction on $\EC$ we have
  \[
    \typm{\Gamma, \lockwith{\mu_F}, y:B',\lockwith{\mu_E}, \lockwith{\aex{\ell}_{E'}}}{
      \EC[y] : A}{\ell,E'}
  \]
  Then by \tylab{Handle} we have
  \[
    \typm{\Gamma, \lockwith{\mu_F}, y : B'}{
      \Handle^\mu\;\EC[y] \With H : A}{E}
  \]
  Note that we need to use $\mu\To\mu\circ\mu\atmode{F}$ and
  \Cref{lemma:structural-rules}.3 to duplicate the lock
  $\lockwith{\mu_F}$ in \refb{} and \refc{} for the handler $H$.
  Then by \tylab{Abs} and \tylab{Mod} we have
  \[
    \refg{\typm{\Gamma}{
      \Mod_\mu\;(\lambda y^{B'} . \Handle^\mu\;\EC[y] \With H) : \boxwith{\mu}(B'\to A)}{F}}
  \]
  By \refd{}, \refe{}, \refg{}, and \Cref{lemma:substitution}.3 we have
  \[
    \typm{\Gamma}{
      N'[U/p, (\Box_\mu\,(\lambda y^{B'}.\Handle^\mu\; \EC[y] \With H))/r]:B}{F}
  \]
\end{description}
\end{description}
\end{proof}

  \FloatBarrier
  \section{Source Calculi and Encodings}
\label{app:full-spec-others}

In this section, we provide the typing rules and operational semantics
of \Feps and \SystemC that are omitted in \Cref{sec:feps,sec:systemc}.
We also provide the translations of runtime constructs used in their
operational semantics.
Furthermore, we provide the specification of
\SystemXi~\citep{BrachthauserSO20} and its encoding in \Metp{\mcS},
and the specification of \Fepssn~\citep{XieCIL22} and its encoding in
\Metp{\emtScp}.

\subsection{Typing Rules and Operational Semantics of \Feps}
\label{app:semantics-feps}

\Cref{fig:typing-feps} gives the full typing rules of \Feps.
\Cref{fig:semantics-feps} gives the operational semantics of \Feps
including the definitions of runtime constructs and evaluation
contexts.
As we have mentioned in \Cref{sec:feps}, they are pretty standard.

Typing rules for runtime constructs are as follows.
\begin{mathpar}
\inferrule*[Lab=\tylab{Handle}]
{
  \Sigma \ni \ell : A'\sto B' \\\\
  \typf{\Gamma
  }{M}{A}{\ell,E} \\
  \typf{\Gamma, p : A', r: \earr{B'}{A}{E}}{N}{A}{E}  \\
}
{\typf{\Gamma}{\Handle\;M\With \{\ell\; p\;r \mapsto N \}}{A}{E}}
\end{mathpar}

Translations of runtime constructs are as follows.
They are used in the proof of semantics preservation in
\Cref{app:proof-feps}.
\[\ba{rcl}
  \transl{-} &:& \text{Computation} \to \text{Term} \\
  \tr{\Handle\;M\With H : A\mid E} &=& \Handle^{\aeq{\tr{E}}}\; \tr{M} \With \tr{H}
\ea\]
Translations of evaluation contexts are analogous to the translations
of their corresponding terms.

\begin{figure}[tb]\small
\raggedright
\boxed{\typf{\Gamma}{V}{A}{}}
\hfill
\begin{mathpar}
\inferrule*[Lab=\tylab{Unit}]
{ }
{\typf{\Gamma}{\Unit}{\TUnit}{}}

\inferrule*[Lab=\tylab{Var}]
{
  \Gamma \ni x : A
}
{\typf{\Gamma}{x}{A}{}}

\inferrule*[Lab=\tylab{Abs}]
{
  \typf{\Gamma,
  x:A}{M}{B}{E}
}
{\typf{\Gamma}{\lambda^E x^A.M}{\earr{A}{B}{E}}{}}

\inferrule*[Lab=\tylab{TAbs}]
{
\typf{\Gamma,\alpha:K}{V}{A}{}
}
{\typf{\Gamma}{\Lambda\alpha^K.V}{A}{}}

\inferrule*[Lab=\tylab{Handler}]
{
  \Sigma \ni \ell : A'\sto B' \\
  \typf{\Gamma,
  p : A', r: \earr{B'}{A}{E}}{N}{A}{E} \\
}
{\typf{\Gamma}{\Handler\; \{ \ell\;p\;r \mapsto N \}}{\earr{(\earr{\TUnit}{A}{\ell,E})}{A}{E}}{}}
\end{mathpar}
\raggedright
\boxed{\typf{\Gamma}{M}{A}{E}}
\hfill
\begin{mathpar}
\inferrule*[Lab=\tylab{Value}]
{
  \typf{\Gamma}{V}{A}{}
}
{\typf{\Gamma}{\Ret V}{A}{E}}

\inferrule*[Lab=\tylab{Let}]
{
  \typf{\Gamma}{M}{A}{E} \\
  \typf{\Gamma,x:A}{N}{B}{E}
}
{\typf{\Gamma}{\Let x = M \In N}{B}{E}}

\inferrule*[Lab=\tylab{App}]
{
  \typf{\Gamma}{V}{\earr{A}{B}{E}}{} \\
  \typf{\Gamma}{W}{A}{}
}
{\typf{\Gamma}{V\;W}{B}{E}}

\inferrule*[Lab=\tylab{TApp}]
{
  \typf{\Gamma}{V}{\forall\alpha^K.B}{} \\
  \Gamma\vdash A : K
}
{\typf{\Gamma}{V\;A}{B[A/\alpha]}{E}}

\inferrule*[Lab=\tylab{Do}]
{
  \Sigma \ni \ell : A\sto B \\
  \typf{\Gamma}{V}{A}{} \\
}
{\typf{\Gamma}{\Do \ell\;V}{B}{\ell,E}}
\end{mathpar}
\caption{Typing rules of \Feps.}
\label{fig:typing-feps}
\end{figure}

\begin{figure}[tb]\small
\begin{syntax}
\slab{Computations} & M & ::=
  & \cdots \mid \Handle\; M \With H \\
\slab{Evaluation Contexts} &  \EC &::= & [~]
  \mid \Let x = \EC \In N \mid \Handle\; \EC \With H
\end{syntax}

\begin{nreductions}
\semlab{TApp}   & (\Lambda \alpha^K.V)\,T &\reducesto& V[T/\alpha] \\
\semlab{App}   & (\lambda x^A.M)\,V &\reducesto& M[V/x] \\
\semlab{Handler}\hspace{-3em} &\Handler\;H\;V
  &\reducesto&
  \Handle\; V\;\Unit \With H,\\
\semlab{Ret} &
  \Handle\; (\Ret V) \With H &\reducesto& \Ret V
\\
\semlab{Op} &
  \Handle\; \EC[\Do\ell \; V] \With H
    &\reducesto& N[V/p, (\lambda y.\Handle\; \EC[\Ret y] \With H)/r],\\
\multicolumn{4}{@{}r@{}}{
      \text{ where } \ell\notin\BL{\EC} \text{ and }  H \ni (\ell \; p \; r \mapsto N)
} \\
\semlab{Lift} &
  \EC[M] &\reducesto& \EC[N],  \hfill\text{if } M \reducesto N \\
\end{nreductions}
\caption{Operational semantics of \Feps.}
\label{fig:semantics-feps}
\end{figure}

\subsection{Operational Semantics of \SystemC}
\label{app:semantics-cap}

\begin{figure}[tb]\small
\begin{syntax}
\slab{Runtime Labels} & \ell \\
\slab{Runtime Contexts} & \Instctx &::=& \cdot \mid \Instctx,\ell:\carrsingle{A}{}{B} \\
\slab{Capability Sets} & C &::=& \cdot \mid \{\ell\} \mid \{f\} \mid C\cup C' \\
\slab{Blocks} & P,Q & ::= & \cdots \mid \tmcap{\ell} \\
\slab{Computations} & M & ::=
  & \cdots \mid \Try_\ell\; M \With H
  \\
\slab{Evaluation Contexts} &  \EC &::= & [~]
  \mid \Let x = \EC \In N
  \mid \Def f = \EC \In N
  \mid \Try_\ell\; \EC \With H
  \\
\end{syntax}

\begin{nreductions}
\semlab{Box}  & \CUnbox (\CBox P) &\reducesto & P \\
\semlab{Let}  & \Let x = \Ret V \In N &\reducesto & N[V/x] \\
\semlab{Def}  & \Def f = P \In N &\reducesto & N[P/f] \\
\semlab{Call} & \block{\ol{x:A}}{\ol{f:T}}{M}(\ol{V},\ol{Q}) &\reducesto &
                M[\ol{V/x}, \ol{Q/f}, \ol{C/f}] \quad\text{where } \ol{\cdot\vdash Q:T\mid C}
                \\[1ex]
\semlab{Gen} &\Try\; \{f^{\carrsingle{A}{}{B}}\To M\} \With H \mid \Instctx
  &\reducesto &
  \Try_\ell\; M[\tmcap{\ell}/f, \{\ell\}/f] \With H \mid\Instctx, \ell : \carrsingle{A}{}{B} \\
\multicolumn{4}{@{}r@{}}{
  \text{where } \ell\text{ fresh}
} \\
\semlab{Ret} &
  \Try_\ell\; (\Ret V) \With H &\reducesto& \Ret V
\\
\semlab{Op} &
  \Try_\ell\; \EC[\tmcap{\ell}(V)] \With H
    &\reducesto& N[V/p,  \blocksingle{y}{}{\Try_\ell\; \EC[\Ret y] \With H}/r],\\
\multicolumn{4}{@{}r@{}}{
      \text{ where } \ell\notin\BL{\EC} \text{ and } H = \{p \; r \mapsto N\}
} \\
\semlab{Lift} &
  \EC[M] &\reducesto& \EC[N],  \hfill\text{if } M \reducesto N \\
\end{nreductions}
  \caption{Operational semantics and runtime constructs for \SystemC.}
  \label{fig:semantics-systemc}
\end{figure}

\Cref{fig:semantics-systemc} defines the operational semantics and
syntax of runtime constructs for \SystemC.
Reduction in \SystemC is defined not only on terms but also blocks
since we have $\CUnbox V$ which can reduce.
Since \SystemC uses block variables $f$ as both term-level and
type-level variables, we need to substitute them separately.
We follow \citet{BrachthauserSLB22} to overload the notion of
substitution.
We write $C/f$ for substituting in types and $P/f$ for substituting in
terms.

Typing rules for runtime constructs are as follows.
\begin{mathpar}\small
\inferrule*[Lab=\tylab{Cap}]
{
  \Instctx \ni \ell : \carrsingle{A}{}{B}
}
{\typf{\Instctx\mid\Gamma}{\tmcap{\ell}}{\carrsingle{A}{}{B}}{\{\ell\}}}

\inferrule*[Lab=\tylab{Handle}]
{
  \Instctx \ni \ell : \carrsingle{A'}{}{B'} \\
  \typxi{\Gamma
  }{M}{A}{C\cup\{\ell\}} \\\\
  \typxi{\Gamma,p:A',r:^C\carr{B'}{}{A}}{N}{A}{C} \\
}
{\typxi{\Instctx\mid\Gamma}{\Try_\ell\; M \With \{p\;r\mapsto N\}}{A}{C}}
\end{mathpar}

Translations of runtime syntax are as follows.
They are used for semantics preservation.
\begin{prog}
  \ba{rcl}
  \transl{-} &:& \text{Runtime Context} \to \text{Runtime Context}\\
  \tr{\cdot} &=& \cdot \\
  \tr{\Instctx, \ell:\carrsingle{A}{}{B}} &=& \tr{\Instctx}, \ell:\tr{A}\sto\tr{B} \\[2ex]
  \transl{-} &:& \text{Term} \to \text{Term}\\[.5ex]
  \tr{\tmcap{\ell}} &=& \Modhl{\aid}\;(\lambda x^{\tr{A}} . \Do \ell\;x)
    \quad\text{where } \Instctx\ni \ell : \carrsingle{A}{}{B} \\[.5ex]
  \tr{\Try_\ell\;M\With H {:A\mid C}} &=&
  \Handle^{\aeq{\tr{C}}}\;\tr{M}\With\tr{H^{{\ell,C}}} \\[.5ex]
  \transl{\{p\;r\mapsto N\}^{{\ell,C}}} &=&
    \{\bl
    \Ret x \mapsto {\Letmhl{}{\aeq{\ell,\tr{C}}} x'=x \In} x', \\[.5ex]
    \ell\;p\;r\mapsto {\Letmhl{}{\aeq{\transl{C}}} \hat{r} = r \In} \transl{N}
    \}\el
\ea
\end{prog}
Translations of evaluation contexts are analogous to the translations
of their corresponding terms.

\subsection{\SystemXi and its Encoding in \Metp{\mcS}}
\label{app:systemxi}

\SystemXi~\citep{BrachthauserSO20} is a fragment of \SystemC without
boxing and unboxing.
As a result, capabilities never appear in types.
\SystemXi actually does not even track capabilities in the typing
judgements.
There is no danger of capability leakage since capabilities are
second-class.
\Cref{fig:typing-systemxi} gives the syntax and typing rules of
\SystemXi.

\begin{figure}[tb] \small
\[\bs
  \slab{Value Types}\hspace{-1em} &A,B  &::= & \TUnit \\
  \slab{Block Types}\hspace{-1em} &T &::= & \carr{\ol{A}}{\ol{T}}{B} \\
  \slab{Contexts} &\Gamma  &::= & \cdot
                  \mid \Gamma, x: A
                  \mid \Gamma, f: T
                  \\
  \slab{Values}   &V,W  &::= & x \mid \Unit \\
  \slab{Blocks} &P,Q &::= & f \mid \block{\ol{x:A}}{\ol{f:T}}{M} \\
\es
\hfill
\bs
  \slab{Computations}\hspace{-.5em} &M,N  &::= & \Ret V \mid
                              \Let x = M \In N \\
                  & & \mid & \Def f = P \In N \mid P(\ol{V},\ol{Q})  \\
                  &     &\mid& \Try\;\{f^{\carrsingle{A}{}{B}}\To M\}\With H \\
  \slab{Handlers} &H &::= & \{p\;r\mapsto N\} \\
\es\]

\raggedright
\boxed{\typxi{\Gamma}{V}{A}{}}
\boxed{\typxi{\Gamma}{P}{T}{}}
\boxed{\typxi{\Gamma}{M}{A}{}}
\hfill
\begin{mathpar}
\inferrule*[Lab=\tylab{Unit}]
{ }
{\typxi{\Gamma}{\Unit}{\TUnit}{}}

\inferrule*[Lab=\tylab{Var}]
{
  \Gamma \ni x : A
}
{\typxi{\Gamma}{x}{A}{}}

\inferrule*[Lab=\tylab{BlockVar}]
{
  \Gamma \ni f : T
}
{\typxi{\Gamma}{f}{T}{}}

\inferrule*[Lab=\tylab{Block}]
{
  \typxi{\Gamma,\ol{x:A},\ol{f:T}}{M}{B}{}
}
{\typxi{\Gamma}{\block{\ol{x:A}}{\ol{f:T}}{M}}{\carr{\ol{A}}{\ol{T}}{B}}{}}

\inferrule*[Lab=\tylab{Value}]
{
  \typxi{\Gamma}{V}{A}{}
}
{\typxi{\Gamma}{\Ret V}{A}{}}

\inferrule*[Lab=\tylab{Let}]
{
  \typxi{\Gamma}{M}{A}{} \\
  \typxi{\Gamma,x:A}{N}{B}{} \\
}
{\typxi{\Gamma}{\Let x = M \In N}{B}{}}

\inferrule*[Lab=\tylab{Def}]
{
  \typxi{\Gamma}{P}{T}{} \\
  \typxi{\Gamma,f : T}{N}{B}{} \\
}
{\typxi{\Gamma}{\Def f = P \In N}{B}{}}

\inferrule*[Lab=\tylab{Call}]
{
  \typxi{\Gamma}{P}{\carr{\ol{A}_i}{\ol{T}_j}{B}}{} \\\\
  \ol{\typxi{\Gamma}{V_i}{A_i}{}} \\
  \ol{\typxi{\Gamma}{Q_j}{T_j}{}}
}
{\typxi{\Gamma}{P(\ol{V}_i,\ol{Q}_j)}{B}{}}

\inferrule*[Lab=\tylab{Handle}]
{
  \typxi{\Gamma,f:\carrsingle{A'}{}{B'}}{M}{A}{} \\
  \typxi{\Gamma,p:A',r:\carrsingle{B'}{}{A}}{N}{A}{} \\
}
{\typxi{\Gamma}{\Try\; \{f^{\carrsingle{A'}{}{B'}}\To M\} \With \{p\;r\mapsto N\}}{A}{}}
\end{mathpar}
\caption{Syntax and typing rules for \SystemXi.}
\label{fig:typing-systemxi}
\end{figure}

The operational semantics of \SystemXi is almost identical to that of
\SystemC except for removing the \semlab{Box} rule and substitutions
of capability sets $C/f$.

\Cref{fig:encoding-systemxi} gives the encoding of \SystemXi into
\Metp{\mcS}.
This encoding is straightforward and does not even use any modalities.
We mostly just translate the syntax of second-class blocks in
\SystemXi to first-class functions in \Metp{\mcS}.
For named handlers we introduce local labels.
We use the syntactic sugar in \Cref{sec:typing-metp} for $\Handle$
with no modality annotation.

\begin{figure}[htb]\small
\[\ba[t]{r@{\ \ }c@{\ \ }l}
  \transl{-} &:& \text{Type} \to \text{Type}\\
  \transl{\TUnit} &=& \TUnit \\
  \transl{\carr{\ol{A}}{\ol{T}}{B}} &=&
    \ol{\transl{A}}\to\ol{\transl{T}}\to\transl{B}
    \\[2ex]
  \transl{-} &:& \text{Context} \to \text{Context} \\
  \transl{\cdot} &=& \cdot \\
  \transl{\Gamma,x:A} &=& \transl{\Gamma},x:\transl{A} \\
  \transl{\Gamma,f:T} &=& \transl{\Gamma},f:\transl{T} \\[2ex]
\ea
\ba[t]{r@{\ \ }c@{\ \ }l}
  \transl{-} &:& \text{Value} \to \text{Term}\\
  \transl{x} &=& x \\
  \transl{\Unit} &=& \Unit \\[2ex]
  \transl{-} &:& \text{Block} \to \text{Term}\\
  \transl{f} &=& f \\
  \transl{\block{\ol{x:A}}{\ol{f:T}}{M}} &=&
    \lambda \ol{x^{\transl{A}}}\,\ol{f^{\transl{T}}} . \transl{M}
    \\[2ex]
\ea\]
\[\ba[t]{r@{\ \ }c@{\ \ }l}
  \transl{-} &:& \text{Computation} \to \text{Term}\\
  \transl{\Ret V} &=& \transl{V} \\
  \transl{\Let x=M \In N} &=& \Let x = \transl{M} \In \transl{N} \\
  \transl{\Def f = P \In N} &=&
    \Let f = \transl{P} \In \transl{N}
    \\
  \transl{P(\ol{V},\ol{Q})} &=&
    \transl{P}\;\ol{\transl{V}}\;\ol{\transl{Q}}
    \\
  \transl{\Try\;\{f^{\carrsingle{A}{}{B}}\To M\} \With H} &=&
    \bl
    \Localeffect{\ell_f : \tr{A}\to \tr{B}} \\
    \Handle\; (\lambda f . \transl{M})\; (\lambda x^{\tr{A}} . \Do\ell_f\;x) \With \tr{H^{\gray{f}}}
    \el \\
  \transl{\{p\;r\mapsto N\}^{\gray{f}}} &=&
    \{\Ret x \mapsto \Letm{}{\aex{\ell_f}} x' = x \In x', \ell_f\;p\;r\mapsto \transl{N}\}
    \\
\ea\]
\caption{An encoding of \SystemXi in \Metp{\mcS}.}
\label{fig:encoding-systemxi}
\end{figure}

We translate runtime constructs used in the operational semantics as follows.
\begin{prog}
  \ba{rcl}
  \tr{\tmcap{\ell}} &=& \lambda x^{\tr{A}} . \Do\ell\;x
    \quad\text{where } \Instctx\ni \ell : \carrsingle{A}{}{B} \\[.5ex]
  \tr{\Try_\ell\;M\With H} &=& \Handle\;\tr{M}\With\tr{H^\ell} \\[.5ex]
  \transl{\{p\;r\mapsto N\}^\ell} &=&
    \{\Ret x \mapsto \Letm{}{\aex{\ell}} x' = x \In x', \ell\;p\;r\mapsto \transl{N}\}
\ea
\end{prog}

We have the following theorems which we prove in \Cref{app:proof-systemxi}.
\begin{restatable}[Type Preservation]{theorem}{SystemXiToMetp}
  \label{lemma:type-preservation-systemxi-to-metn}
  If $\,\Gamma \vdash M : A $ in \SystemXi, then $\transl{\Gamma} \vdash
  \transl{M} : \transl{A} \atmode{\cdot}$ in \Metp{\mcS}.
  Similarly for values and blocks.
\end{restatable}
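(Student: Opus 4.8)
The plan is to prove the statement by simultaneous induction on the three \SystemXi typing derivations: for values, $\typxi{\Gamma}{V}{A}{}$ gives $\stransl{\Gamma}{}\vdash\transl{V}:\transl{A}\atmode{\cdot}$; for blocks, $\typxi{\Gamma}{P}{T}{}$ gives $\stransl{\Gamma}{}\vdash\transl{P}:\transl{T}\atmode{\cdot}$; and for computations, $\typxi{\Gamma}{M}{A}{}$ gives $\stransl{\Gamma}{}\vdash\transl{M}:\transl{A}\atmode{\cdot}$, the last being the theorem. Before the induction I would record three facts that collapse most of the bookkeeping. First, every \SystemXi type translates to a modality-free \Metp{\mcS} type and every value type translates to $\TUnit$, so a translated context $\stransl{\Gamma}{}$ consists only of value and block bindings with no locks and is well-formed at every effect context. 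Second, in the set theory $\mcS$ we always have $\vdash\aid\To\aex{D}\atmode{E}$, because $E\subtype_{\mcS}D,E$ and hence rule \mtylab{Extend} applies; this is the one genuinely $\mcS$-specific ingredient and it does double duty, making every variable of a translated context accessible past the composite $\aex{D}$ lock accumulated by enclosing handlers (even a function-typed block variable, which carries the identity modality), and letting us use lock weakening (\Cref{lemma:structural-rules}) to upgrade an implicit identity lock into an $\aex{\ell}$ lock. Third, $\Let$ and $\Def$ in \Metpt are sugar for application, so their translation cases reduce to \tylab{App} and \tylab{Abs}.

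With these in hand, the routine cases \tylab{Unit}, \tylab{Var}, \tylab{BlockVar}, \tylab{Block}, \tylab{Value}, \tylab{Let}, \tylab{Def}, and \tylab{Call} follow by applying the induction hypotheses and reassembling with the matching \Metp{\mcS} rule (\tylab{Unit}, \tylab{Var}, iterated \tylab{Abs}, iterated \tylab{App}); the effect context stays $\cdot$ throughout since \tylab{Abs} and \tylab{App} leave it untouched. The only subtlety is \tylab{BlockVar}, where $\transl{f}=f$ is looked up at the non-pure function type $\transl{T}$; accessibility still holds because the only locks appearing in a translated term are the $\aex{\ell}$ locks of enclosing handlers, covered by the second fact.

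The substantial case is \tylab{Handle}. I would unfold the translation to $\Localeffect{\ell_f:\TUnit\sto\TUnit}\Handle^{\aid}\;(\lambda f.\transl{M})\,(\lambda x.\Do\ell_f\;x)\With\transl{H^{f}}$ (using that \SystemXi value types are $\TUnit$, so $\ell_f$ receives the well-formed signature $\TUnit\sto\TUnit$), apply \tylab{LocalEffect} — its escape conditions hold since the result type $\TUnit$ and ambient effect context $\cdot$ do not mention $\ell_f$ — and then apply \tylab{Handle} with $\mu=\aid$, whose comonadic premises $\aid\To\aid$ and $\aid\To\aid\circ\aid$ are trivial. For the handled-computation premise I would type $(\lambda f.\transl{M})\,(\lambda x.\Do\ell_f\;x)$ at effect context $\ell_f$ under the lock $\lockwith{\aex{\ell_f}_{\cdot}}$ via \tylab{App}, \tylab{Abs}, \tylab{Do}: the call $\Do\ell_f\;x$ typechecks since $x:\TUnit$ is pure and $\ell_f\subtype\ell_f$, while $\transl{M}$ is obtained from the induction hypothesis on $M$ (giving it at $\atmode{\cdot}$ with $f:\transl{T}$ added) by label weakening and then lock weakening, which turns the trivial identity lock into $\lockwith{\aex{\ell_f}_{\cdot}}$ using $\aid\To\aex{\ell_f}$ and reindexes $f$ and the effect context to $\ell_f$. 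The return clause $\Ret x\mapsto\Letm{}{\aex{\ell_f}}x'=x\In x'$ typechecks by \tylab{Letmod} because $x:\boxwith{\aex{\ell_f}}\transl{A}$ is accessible and pure and the rebound $x'$ is pure; and the operation clause $\ell_f\;p\;r\mapsto\transl{N}$, where $r:\boxwith{\aid}(\transl{B'}\to\transl{A})$, follows from the induction hypothesis on $N$ plus label weakening, the $\Handle$-sugar's extra $\Letm{}{\aid}r=r$ wrapper being harmless. Collecting the premises closes the case.

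The step I expect to be the main obstacle is not conceptual but this handler bookkeeping: pinning down which instances of the structural lemmas (\Cref{lemma:structural-rules}) transport the induction hypothesis for the handled computation's body past the lock $\lockwith{\aex{\ell_f}_{\cdot}}$, and verifying that the effect-context and lock indices produced by lock weakening coincide with those demanded by \tylab{Handle} and \tylab{Abs}. All of this rests on the $\mcS$-specific fact $\aid\To\aex{\ell_f}$ — the same encoding phrased over a row theory would not be uniformly sound here, which is precisely why the theorem is stated for \Metp{\mcS}. Otherwise the argument has the same shape as the \SystemC encoding proof but is considerably lighter, since the translation introduces no genuine modalities into types.
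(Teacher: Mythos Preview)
Your proposal is correct. The paper's proof follows the same overall route---induction on the \SystemXi derivation, with the handler case discharged by lock weakening (\Cref{lemma:structural-rules}.3) using $\aid\To\aex{\ell_f}$ from \mtylab{Extend}---but with one difference: the paper first strengthens the induction hypothesis to ``$\tr{\Gamma}\vdash\tr{M}:\tr{A}\atmode{E}$ for any well-scoped $E$'' and states that this stronger version is needed for the named-handler case. Your direct argument at $\atmode{\cdot}$ shows the strengthening is not actually required: lock weakening already transports the unstrengthened IH on $M$ from $\atmode{\cdot}$ to the handled-computation premise at $\atmode{\ell_f}$, which is all that \tylab{Handle} demands when the ambient effect context is $\cdot$. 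The paper's strengthening makes the statement uniformly reusable at arbitrary ambient effects but is a mild detour for the theorem as stated; your route is the more elementary one. One small over-explanation on your side: the observation that block variables remain accessible under accumulated $\aex{-}$ locks is true but is not actually invoked in your scheme---at the point \tylab{BlockVar} fires the translated context contains no locks, and the subsequent lock-weakening step then preserves the whole derivation, so variable accessibility need not be re-argued.
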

\vspace{-.5\baselineskip}
\begin{restatable}[Semantics Preservation]{theorem}{SystemXiToMetpSemantics}
  \label{lemma:sematics-preservation-systemxi-to-metn}
  If $M$ is well-typed and $M\mid \Instctx\reducesto N\mid {\Instctx'}$ in
  \SystemXi, then $\transl{M}\mid {\transl{\Instctx}}
  \reducesto^\ast \transl{N}\mid {\transl{\Instctx'}}$
  in \Metp{\mcS}.
\end{restatable}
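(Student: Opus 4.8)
The plan is to induct on the derivation of $M\mid\Instctx\reducesto N\mid\Instctx'$ in \SystemXi and to show, for each reduction rule, that $\tr M\mid\tr\Instctx$ reaches $\tr N\mid\tr{\Instctx'}$ in a bounded number of \Metp{\mcS} steps; taking transitive closures then gives $\tr M\mid\tr\Instctx\reducesto^\ast\tr N\mid\tr{\Instctx'}$. The argument mirrors that for \SystemC (\Cref{lemma:sematics-preservation-systemc-to-metn}) but is markedly simpler: \SystemXi tracks no capability sets, performs no capability-set substitutions, and the translation of \Cref{fig:encoding-systemxi} sends blocks directly to ordinary $\lambda$-abstractions without introducing auxiliary effect variables or modalities on block types. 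Throughout I would appeal to type preservation (\Cref{lemma:type-preservation-systemxi-to-metn}) and \SystemXi subject reduction to keep every intermediate term well-typed, so that all translations are defined.

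First I would establish two routine auxiliary lemmas. (i) A \emph{substitution lemma}: for well-typed terms, $\tr{M[V/x]} = \tr M\,[\tr V/x]$, $\tr{M[P/f]} = \tr M\,[\tr P/f]$, and likewise for substitution of the runtime capability $\tmcap\ell$, whose translation $\lambda x.\Do\ell\;x$ is a closed value normal form. Because the term translation is type-directed, this amounts to checking that grafting a derivation of $V$ (resp. $P$, resp. $\tmcap\ell$) into a derivation of $M$ yields exactly the derivation of the substituted term used by the translation; in \SystemXi no capability information flows through the translation, so this is immediate. (ii) An \emph{evaluation-context lemma}: $\tr\EC$ is a \Metp{\mcS} evaluation context with $\tr{\EC[M']} = \tr\EC[\tr{M'}]$, and $\BL{\tr\EC}$ agrees with $\BL\EC$ once each $\Try$/$\Try_\ell$ boundary in $\EC$ is identified with its translated handler boundary (labelled $\ell_f$, resp. the runtime label $\ell$).

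With these in place the case analysis is mechanical. For \semlab{Let}, \semlab{Def}, and \semlab{Call} the translation of the redex is an inserted $\lambda$-abstraction applied to translated values and blocks, each of which is already a value normal form, so finitely many \semlab{App} steps produce $\tr N$ via the substitution lemma. For \semlab{Gen} the translated term first fires \semlab{Gen} on the $\Localeffect{\ell_f:\ldots}$ binder — choosing the fresh \Metp{\mcS} runtime label to be the one picked on the \SystemXi side, so $\tr\Instctx$ is extended to $\tr{\Instctx'}$ — and then one \semlab{App} step substitutes $\lambda x.\Do\ell\;x = \tr{\tmcap\ell}$ for $f$; one checks by inspection that $\tr{H^{f}}[\ell/\ell_f]$ is syntactically $\tr{H^{\ell}}$. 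For \semlab{Ret}, \semlab{Ret} of \Metp{\mcS} produces the return clause with the value wrapped in $\Mod_{\aex\ell}$, and a single \semlab{Letmod} step unwinds the inserted $\Letm{}{\aex\ell}$, yielding $\tr V = \tr{\Ret V}$. For \semlab{Op}, the translated call $(\lambda x.\Do\ell\;x)\,\tr V$ inside $\tr\EC$ \semlab{App}-reduces to $\Do\ell\;\tr V$; since $\ell\notin\BL\EC$ implies $\ell\notin\BL{\tr\EC}$, \semlab{Op} of \Metp{\mcS} fires, binding $p$ to $\tr V$ and $r$ to $\Mod_{\aid}\,(\lambda y.\Handle^{\aid}\,\tr\EC[y]\With\tr{H^{\ell}})$, after which the inserted $\Letm{}{\aid}$ in the operation clause \semlab{Letmod}-reduces; using $\tr{\EC[\Ret y]} = \tr\EC[y]$ and the substitution lemma, the result coincides with the translation of $N[V/p,\,\blocksingle{y}{}{\Try_\ell\;\EC[\Ret y]\With H}/r]$. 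Finally \semlab{Lift} follows from the induction hypothesis, the evaluation-context lemma, and closure of \Metp{\mcS} reduction under evaluation contexts.

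The hard part will be making the type-directed substitution lemma (i) genuinely precise — pinning down that the translation commutes with \SystemXi substitution of values, blocks, and the runtime capability $\tmcap\ell$ — together with the bookkeeping in the \semlab{Gen}/\semlab{Op} cases that the operation-clause relabelling ($\ell_f$ versus the generated runtime label) lines up on the nose. Everything else is a short chase through a handful of administrative $\semlab{App}$ and $\semlab{Letmod}$ reductions.
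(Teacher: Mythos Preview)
Your proposal is correct and takes essentially the same approach as the paper: induction/case analysis on the reduction rule, an evaluation-context compositionality lemma (the paper states exactly this as \Cref{lemma:ectrans-systemxi}), and a substitution-commutes-with-translation fact invoked informally. You are in fact more explicit than the paper in a few places—the extra \semlab{Letmod} step in the \semlab{Op} case coming from the $\Handle^{\aid}$ sugar, the relabelling $\tr{H^{f}}[\ell/\ell_f]=\tr{H^{\ell}}$ in \semlab{Gen}, and the $\ell\notin\BL{\tr\EC}$ side-condition—none of which the paper spells out but all of which are needed.
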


\subsection{\Fepssn and its Encoding in \Metp{\mcS}}
\label{app:fepssn}

\Cref{fig:syntax-fepssn} gives the syntax of \Fepssn. Our presentation
of \Fepssn is fine-grain call-by-value. We highlight new parts
compared to \Feps.
Each effect label $\ell$ is annotated with a scope variable $a$.
As before, we omit kinds when obvious from alphabets.

\begin{figure}[htbp]\small
\begin{syntax}
  \slab{Value Types} &A,B  &::= & \TUnit \mid \alpha
                    \mid \earr{A}{B}{E} \mid
                    \forall \alpha^K.A \mid \hl{\tyev{\ell}{a}} \\
  \slab{Kind}     &K &::= & \Value \mid \Effect \mid \hl{\Scope(\ell)} \\
  \slab{Effect Rows}  &E &::= & \cdot \mid \evar \mid \hl{\ell^a}, E \\
  \slab{Contexts} &\Gamma  &::= & \cdot
                  \mid \Gamma,x : A
                  \mid \Gamma,\alpha : K
                  \\
  \slab{Values}   &V,W  &::= & \Unit \mid x \mid \elambda{E} x^A . M
                        \mid \Lambda \alpha^K . V
                        \mid \hl{\NHandler\;H}
                        \\
  \slab{Computations} &M,N  &::= &
                        \Ret V
                        \mid V\;W \mid V\;A
                      \mid  \Let x = M \In N
                      \mid \hl{\Do V\;W} \\
  \slab{Handlers} &H &::= & %
                            \{\ell\;p\;r \mapsto M\} \\
  \slab{Label Contexts} &\Sigma &::= & \cdot \mid \Sigma,\ell:A\sto B \\
\end{syntax}
\caption{Syntax of \Fepssn.}
\label{fig:syntax-fepssn}
\end{figure}

Different from \citet{XieCIL22}, the kind $\Scope(\ell)$ of scope
variables is annotated with an effect label $\ell$.
For $a : \Scope(\ell)$, each appearance of $a$ in effect rows must be
associated with this label $\ell$ as $\ell^a$.
This annotation is important to rule out well-typed but meaningless
terms in \Fepssn.
Not every well-typed term in \Fepssn is meaningful in the sense that we
can find an appropriate evaluation context to fully apply its
abstractions and handle its effects.
For example, a function of type $\forall a . \tyev{\ell_1}{a} \times
\tyev{\ell_2}{a}\to^{\ell_1^a,\ell_2^a}\TUnit$ cannot be applied and
handled when $\ell_1 \neq \ell_2$, because named handlers cannot
provide two evidences values of types $\tyev{\ell_1}{a}$ and
$\tyev{\ell_2}{a}$ with the same scope variable $a$ but different
operations $\ell_1$ and $\ell_2$. (This type becomes meaningful with
umbrella effects in \citet{XieCIL22}.)
Each handler introduces its own scope variable $a$ with some fixed
operation label $\ell$.
Annotating the kind $\Scope$ with an operation label $\ell$ solves the
problem of attaching the same scope variable $a$ to different
operation labels.

\Cref{fig:typing-fepssn} gives the typing rules.
Rule \tylab{NamedHandler} introduces a named handler as a function,
whose argument takes a handler name of the evidence type
$\tyev{\ell}{a}$.
An evidence type specifies an effect $\ell$ and a scope variable $a$.
The use of rank-2 polymorphism guarantees that the handler name of
type $\tyev{\ell}{a}$ cannot escape the scope of the handler.
Rule \tylab{DoName} invokes an operation via a handler name $M$ of the
evidence type $\tyev{\ell}{a}$.

\begin{figure}[htbp]\small
\raggedright
\boxed{\typf{\Gamma}{V}{A}{}}
\hfill
\begin{mathpar}
\inferrule*[Lab=\tylab{Unit}]
{ }
{\typf{\Gamma}{\Unit}{\TUnit}{}}

\inferrule*[Lab=\tylab{Var}]
{
  \Gamma \ni x : A
}
{\typf{\Gamma}{x}{A}{}}

\inferrule*[Lab=\tylab{Abs}]
{
  \typf{\Gamma,
  x:A}{M}{B}{E}
}
{\typf{\Gamma}{\lambda^E x^A.M}{\earr{A}{B}{E}}{}}

\inferrule*[Lab=\tylab{TAbs}]
{
\typf{\Gamma,\alpha:K}{V}{A}{}
}
{\typf{\Gamma}{\Lambda\alpha^K.V}{A}{}}

\inferrule*[Lab=\tylab{NamedHandler}]
{
  \Sigma \ni \ell : A'\sto B' \\
  \typf{\Gamma,
  p : A', r: \earr{B'}{A}{E}}{N}{A}{E} \\
}
{\typf{\Gamma}{\text{$\NHandler$}\; \{ \ell\;p\;r \mapsto N \}}{\earr{(\forall a^{\Scope(\ell)} . \earr{\tyev{\ell}{a}}{A}{\ell^a,E})}{A}{E}}{}}
\end{mathpar}
\raggedright
\boxed{\typf{\Gamma}{M}{A}{E}}
\hfill
\begin{mathpar}
\inferrule*[Lab=\tylab{Value}]
{
  \typf{\Gamma}{V}{A}{}
}
{\typf{\Gamma}{\Ret V}{A}{E}}

\inferrule*[Lab=\tylab{App}]
{
  \typf{\Gamma}{V}{\earr{A}{B}{E}}{} \\
  \typf{\Gamma}{W}{A}{}
}
{\typf{\Gamma}{V\;W}{B}{E}}

\inferrule*[Lab=\tylab{TApp}]
{
  \typf{\Gamma}{V}{\forall\alpha^K.B}{} \\
  \Gamma\vdash A : K
}
{\typf{\Gamma}{V\;A}{B[A/\alpha]}{E}}

\inferrule*[Lab=\tylab{Let}]
{
  \typf{\Gamma}{M}{A}{E} \\
  \typf{\Gamma,x:A}{N}{B}{E}
}
{\typf{\Gamma}{\Let x = M \In N}{B}{E}}

\inferrule*[Lab=\tylab{DoName}]
{
  \Sigma \ni \ell : A\sto B \\
  \typf{\Gamma}{V}{\tyev{\ell}{a}}{} \\
  \typf{\Gamma}{W}{A}{} \\
}
{\typf{\Gamma}{\Do V\;W}{B}{\ell^a, E}}
\end{mathpar}
\caption{Typing rules for \Fepssn.}
\label{fig:typing-fepssn}
\end{figure}

\Cref{fig:semantics-fepssn} gives the operational semantics of \Fepssn
including definitions of runtime constructs and evaluation contexts.
Similar to the operational semantics of \Metpt and \SystemC, reduction
rules in \Fepssn are also of form $M\mid\Instctx \reducesto
N\mid\Instctx'$.
The most interesting rule is \semlab{Gen} which reduces a handler
application to a runtime $\Handle$ construct and passes a runtime
evidence value $\tmev{h}$ to the argument.
This rule generates a fresh marker $h$ and a fresh scope variable $a$.
The runtime context $\Instctx$ associates dynamically generated
markers $h$ to their operation label $\ell$ and dynamically generated
scope variable $a$.

Different from \citet{XieCIL22}, our \semlab{Op} rule for \Fepssn has
the condition $h\notin\meta{BH}(\EC)$ which makes sure there is no
other handler in $\EC$ with the marker $h$.
This condition is necessary to guarantee that the current handler is
the nearest one for the marker $h$, because even for named handlers it
is still possible to duplicate the same handler during the runtime as
observed by \citet{BiernackiPPS20}.

\begin{figure}[htbp]\small
\begin{syntax}
\slab{Runtime Markers} & h \\
\slab{Runtime Contexts} & \Instctx &::=& \cdot \mid \Instctx,h:\ell^a \\
\slab{Computations} & M & ::=
  & \cdots \mid \Handle_h\; M \With H
  \\
\slab{Values} & V & ::= & \cdots \mid \tmev{h} \\
\slab{Evaluation Contexts} &  \EC &::= & [~]
  \mid \Let x = \EC \In N \mid \Handle_h\; \EC \With H
  \\
\end{syntax}

\begin{nreductions}
\semlab{TApp}   & (\Lambda \alpha^K.V)\,A &\reducesto& V[A/\alpha] \\
\semlab{App}   & (\lambda x^A.M)\,V &\reducesto& M[V/x] \\
\semlab{Gen} &\NHandler\;H\;V \mid \Instctx
  &\reducesto &
  \Handle_h\; (\Let x = V\,a\In x\;\tmev{h}) \With H \mid \Instctx, h:\ell^a\\
\multicolumn{4}{@{}r@{}}{
  \text{where } a,h\text{ fresh and } H \ni (\ell \; p \; r \mapsto N)
} \\
\semlab{NRet} &
  \Handle_h\; (\Ret V) \With H &\reducesto& \Ret V
\\
\semlab{NOp}\hspace{-1em}  &
  \Handle_h\; \EC[\Do \tmev{h} \; V] \With H
    &\reducesto& N[V/p, (\lambda y.\Handle_h\; \EC[\Ret y] \With H)/r],\\
\multicolumn{4}{@{}r@{}}{
      \text{ where } \ell\notin\meta{BH}(\EC) \text{ and }  H \ni (\ell \; p \; r \mapsto N)
} \\
\semlab{Lift} &
  \EC[M] &\reducesto& \EC[N],  \hfill\text{if } M \reducesto N \\
\end{nreductions}
\caption{Operational semantics for \Fepssn.}
\label{fig:semantics-fepssn}
\end{figure}

Typing rules for runtime constructs are as follows.
{\rulesize
\begin{mathpar}
\inferrule*[Lab=\tylab{Evidence}]
{
  \Instctx \ni h : \ell^a
}
{\typf{\Instctx\mid\Gamma}{\tmev{h}}{\tyev{\ell}{a}}{}}

\inferrule*[Lab=\tylab{HandleName}]
{
  \Instctx \ni h : \ell^a \\
  \Sigma \ni \ell : A'\sto B' \\\\
  \typf{\Gamma
  }{M}{A}{\ell^a,E} \\
  \typf{\Gamma, p : A', r: \earr{B'}{A}{E}}{N}{A}{E}  \\
}
{\typf{\Instctx\mid\Gamma}{\Handle_{h}\;M\With \{\ell\; p\;r \mapsto N \}}{A}{E}}
\end{mathpar}
}

\Cref{fig:fepssn-to-metp} gives the translation of \Fepssn into \Metp{\mcS}.
The evidence type $\tyev{\ell}{a}$ with $\Sigma\ni\ell:A\sto B$ is
translated to a function type $\boxwith{\aeq{a}}(\tr{A}\to\tr{B})$.
Correspondingly, an operation invocation $\Do V\;W$ for
$V:\tyev{\ell}{a}$ is translated similarly to a function application.
A named handler $\NHandler\;H$ is translated to a function that takes
a function argument $f$ and handles it with a handler in \Metp{\mcS}.
We pass the term $\Mod_{\aeq{a}}\;(\lambda x.\Doy_a\;x)$ to the
argument $f$ to simulate the handler name of type $\tyev{\ell}{a}$ in
\Fepssn.

\begin{figure}[tb]
  \small
  \renewcommand{\arraystretch}{1.1}
\[\ba[t]{r@{\ \ }c@{\ \ }l}
  \transl{-} &:& \text{Kind} \to \text{Kind}\\
  \transl{\Value} &=& \Pure \\
  \transl{\Effect} &=& \Effect \\
  \transl{\Scope(\ell)} &=& \Effect \\[2ex]
  \transl{-} &:& \text{Type} \to \text{Type}\\
  \transl{\TUnit} &=& \TUnit \\
  \transl{\alpha} &=& \alpha \\
  \transl{\earr{A}{B}{E}} &=& \boxwith{\aeq{\transl{E}}}(\transl{A} \to \transl{B}) \\
  \transl{\forall\alpha^K.A} &=& \forall\alpha^{\tr{K}} . \transl{A} \\
  \transl{\tyev{\ell}{a}} &=& \boxwith{\aeq{a}}(\transl{A}\to\transl{B})
  \\
  & &\text{where } \Sigma \ni \ell : A\sto B
  \\
\ea
\ba[t]{r@{\ \ }c@{\ \ }l}
  \transl{-} &:& \text{Effect Row} \to \text{Effect Context}\\
  \transl{\cdot} &=& \cdot \\
  \transl{\evar} &=& \evar \\
  \transl{\ell^a,E} &=& a,\transl{E}
  \\[2ex]
  \stransl{-}{} &:& \text{Context} \to \text{Context}\\
  \stransl{\cdot}{} &=& \cdot \\
  \stransl{\Gamma,x:A}{} &=& \stransl{\Gamma}{},x:\transl{A} \\
  \stransl{\Gamma,\alpha:K}{} &=& \stransl{\Gamma}{},\alpha:\tr{K}
  \\[2ex]
  \transl{-} &:& \text{Label Context} \to \text{Label Context} \\
  \transl{\cdot} &=& \cdot \\
  \transl{\Sigma,\ell:A\sto B} &=& \transl{\Sigma}, \ell:\tr{A}\sto\tr{B}
\ea
\]

\[
\ba[t]{r@{\ \ }c@{\ \ }l}
  \transl{-} &:& \text{Value / Computation} \to \text{Term} \\
  \transl{\Unit} &=& \Unit \\
  \transl{x} &=& x \\
  \transl{\Lambda\alpha^K.V} &=& \Lambda\alpha^{\tr{K}}.\transl{V} \\
  \transl{\Ret V} &=& \tr{V} \\
  \transl{\Let x = M \In N} &=& \Let x = \tr{M} \In \tr{N} \\
  \transl{V\;A} &=& \transl{V}\;\transl{A} \\
  \transl{V^{A\to^E B}\;W} &=&
    \Letm{}{\aeq{\transl{E}}} x = \transl{V} \In x\;\transl{W} \\
  \transl{\Do V^{\tyev{\ell}{a}}\; W} &=&
    \Letm{}{\aeq{a}} x = \transl{V} \In x\;\transl{W}\\
  \Bigl\llbracket
    {\vcenter{\bl\NHandler\; H\\ %
    :\earr{(\forall a^\ell . \earr{\tyev{\ell}{a}}{A}{\ell^a,E})}{A}{E}\el}}
  \Bigr\rrbracket
  &=&
  \Mod_{\aeq{\transl{E}}}\;(
  \lambda f .
    \bl
    \Localeffect{\ell_a : \tr{A'}\sto \tr{B'}}\\
    \Handle^{\aeq{\tr{E}}}\;
      (\bl\Letm{}{\aeq{\ell_a,\transl{E}}} f' = f\;\ell_a \In \\
      f'\;(\Mod_{\aeq{\ell_a}}\;(\lambda x^{\tr{A'}} . \Do\ell_a\;x)))\el \\
    \!\!\With \tr{H^{\gray{a,E}}})\el \\
  \tr{H^{\gray{a,E}}} &=&
    \{\bl
      \Ret x \mapsto \Letm{}{\aeq{\ell_a,\transl{E}}} x = x \In x,\\
      \ell\;p\;r\mapsto \transl{N}
    \}\el
    \\
\ea\]
  \caption{An encoding of \Fepssn in \Metp{\mcS}.}
\label{fig:fepssn-to-metp}
\end{figure}

The following theorems show that our encoding preserves types and
operational semantics.

\begin{restatable}[Type Preservation]{theorem}{FepssnToMetp}
  \label{lemma:type-preservation-fepssn-to-metn}
  If $\,\typf{\Gamma}{M}{A}{E}$ in \Fepssn, then $\transl{\Gamma}
  \vdash \transl{M} : \transl{A} \atmode{\transl{E}}$ in
  \Metp{\mcS}.
  Similarly for typing judgements of values.
\end{restatable}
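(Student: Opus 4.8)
The plan is to prove type preservation for the encoding of \Fepssn into \Metp{\mcS} by induction on the typing derivations, exactly as was done for \Feps (\Cref{lemma:type-preservation-feffn-to-metn}) and \SystemC (\Cref{lemma:type-preservation-systemc-to-metn}). As with those results, the statement must be strengthened to a mutual induction: simultaneously I prove that $\typf{\Gamma}{V}{A}{}$ in \Fepssn implies $\stransl{\Gamma}{} \vdash \transl{V} : \transl{A} \atmode{E}$ in \Metp{\mcS} for \emph{any} well-formed ambient effect context $E$ (values are translated to terms of kind $\Pure$, so \Cref{lemma:pure-promotion} lets them live in any context), and that $\typf{\Gamma}{M}{A}{E}$ for computations implies $\stransl{\Gamma}{} \vdash \transl{M} : \transl{A} \atmode{\transl{E}}$. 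I would also first record the routine auxiliary facts: $\transl{-}$ commutes with type substitution (so $\transl{A[B/\alpha]} = \transl{A}[\transl{B}/\alpha]$, needed for \tylab{TApp}), $\transl{-}$ sends $\Scope(\ell)$-kinds and $\Effect$-kinds both to $\Effect$ so kinding is preserved, and $\transl{\Sigma},\stransl{\Gamma}{}$ contains $\ell : \transl{A}\sto\transl{B}$ whenever $\Sigma\ni\ell:A\sto B$.

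\textbf{Key steps, in order.} The straightforward cases are \tylab{Unit}, \tylab{Var}, \tylab{TAbs}, \tylab{TApp}, \tylab{Value}, and \tylab{Let}, which all go through by IH and reapplying the corresponding \Metp{\mcS} rule, using the substitution lemma for \tylab{TApp}. The case \tylab{Abs} translates $\lambda^E x^A.M$ to $\Mod_{\aeq{\transl{E}}}\,(\lambda x^{\transl{A}}.\transl{M})$: apply \tylab{Mod} (which inserts a lock $\lockwith{\aeq{\transl{E}}}$ switching the effect context to $\transl{E}$), then \tylab{Abs}, then IH on $M$ at ambient context $\transl{E}$. The case \tylab{App} for $V^{A\to^E B}\,W$ translates to $\Letm{}{\aeq{\transl{E}}} x = \transl{V}\In x\;\transl{W}$: use \tylab{Letmod} with $\nu = \aid$, feeding IH on $V$ (which has type $\boxwith{\aeq{\transl{E}}}(\transl{A}\to\transl{B})$) for the bound value and then \tylab{App} for the body — here the variable $x\varb{\aeq{\transl{E}}}{\transl{A}\to\transl{B}}$ is accessible because $\transl{E}$ is exactly the current ambient context, so \mtylab{Abs} applies with $E\subtype\mu(F)$ trivial. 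The case \tylab{DoName} for $\Do V^{\tyev{\ell}{a}}\,W$ in effect row $\ell^a,E$ is essentially identical: its translation $\Letm{}{\aeq{a}} x = \transl{V}\In x\;\transl{W}$ is well-typed at $\transl{\ell^a,E} = a,\transl{E}$ because $\transl{V}$ has type $\boxwith{\aeq{a}}(\transl{A}\to\transl{B})$ and the effect variable $a$ is present in the current context. The main case is \tylab{NamedHandler}. Here I must check that the translated function body --- which wraps a $\Localeffect{\ell_a:\transl{A'}\sto\transl{B'}}$, then a modality-parameterised $\Handle^{\aeq{\transl{E}}}$ applied to $f\;\ell_a$ specialised and eliminated, with the handler $\transl{H^{a,E}}$ --- typechecks using the \tylab{Handle}, \tylab{LocalEffect}, \tylab{TApp}, and \tylab{Letmod} rules of \Metp{\mcS}. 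Crucially, the continuation $r$ in $\transl{H^{a,E}}$ receives type $\boxwith{\aeq{\transl{E}}}(\transl{B'}\to\transl{A})$, which equals $\transl{\earr{B'}{A}{E}}$, matching the \Fepssn typing of the handler clause; and $\aeq{\transl{E}}$ satisfies the comonadic conditions $\aeq{\transl{E}}\To\aid\atmode{\transl{E}}$ and $\aeq{\transl{E}}\To\aeq{\transl{E}}\circ\aeq{\transl{E}}\atmode{\transl{E}}$ since $\transl{E}\subtype\transl{E}$. I must also verify the side condition that $\ell_a$ does not escape: the result type $\transl{A}$ and ambient effect context $\transl{E}$ mention the \Fepssn-level effect variable/scope-variable, not the freshly-introduced local label $\ell_a$, so the well-formedness premise of \tylab{LocalEffect} holds.

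\textbf{Main obstacle.} The delicate point --- and the one most likely to require care --- is the \tylab{NamedHandler} case, specifically tracking how the rank-2 quantifier $\forall a^{\Scope(\ell)}$ in the \Fepssn argument type $\forall a . \earr{\tyev{\ell}{a}}{A}{\ell^a,E}$ is threaded through to the local label $\ell_a$. In the translation we take a function argument $f$ of type $\forall a.\,\boxwith{\aeq{\transl{E}}}(\boxwith{\aeq{a}}(\transl{A'}\to\transl{B'})\to\transl{A})$, introduce a \emph{fresh local label} $\ell_a$ via \tylab{LocalEffect}, then instantiate $f\;\ell_a$ so the effect variable $a$ is specialised to $\ell_a$. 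I need that this type application is well-kinded ($\ell_a$ can be used where an $\Effect$-kinded type is expected, which holds since $\transl{\Scope(\ell)} = \Effect$), and that after instantiation the argument $\Mod_{\aeq{\ell_a}}\,(\lambda x^{\transl{A'}}.\Do\ell_a\;x)$ has the expected type $\boxwith{\aeq{\ell_a}}(\transl{A'}\to\transl{B'})$ --- which is exactly a \tylab{Mod}/\tylab{Do} derivation, with $\ell_a$ in the effect context established by the lock $\lockwith{\aeq{\ell_a}}$. The bookkeeping of which effect context each lock switches to, and confirming that the handled computation $f'\;(\dots)$ sits at effect context $\ell_a,\transl{E}$ as \tylab{Handle} requires (note the $\lockwith{\aex{\ell_a}}$ in that rule), is where a careless derivation would break. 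Since every ingredient is already available --- \tylab{Handle} is the modality-parameterised rule of \Cref{sec:typing-metp}, \tylab{LocalEffect} handles the fresh label, and \Cref{lemma:pure-promotion} handles value promotion --- I expect no genuinely new difficulty beyond this bookkeeping, and the proof follows the template of \Cref{app:proof-feps} and \Cref{app:proof-systemc} closely.
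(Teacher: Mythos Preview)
Your proposal is correct and follows essentially the same approach as the paper: mutual induction on typing derivations, with the value case strengthened to an arbitrary ambient effect context (the paper packages this as a separate \emph{Pure Values} lemma, \Cref{lemma:fepssn-subeffecting}, but the content is the same), and the \tylab{NamedHandler} case handled by assembling \tylab{LocalEffect}, \tylab{TApp}, \tylab{Letmod}, and the modality-parameterised \tylab{Handle} rule exactly as you describe.

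One small slip to fix before writing out the derivation: the type you give for the argument $f$ in the \tylab{NamedHandler} case is off by one component. You wrote $\forall a.\,\boxwith{\aeq{\transl{E}}}(\boxwith{\aeq{a}}(\transl{A'}\to\transl{B'})\to\transl{A})$, but since $\transl{\ell^a,E} = a,\transl{E}$, the outer modality must be $\aeq{a,\transl{E}}$, not $\aeq{\transl{E}}$. After instantiating $f\;\ell_a$ this becomes $\boxwith{\aeq{\ell_a,\transl{E}}}(\ldots)$, which is why the translation eliminates with $\Letm{}{\aeq{\ell_a,\transl{E}}}$ rather than $\Letm{}{\aeq{\transl{E}}}$. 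This matters for the \tylab{Var} access check on $f'$ inside the handled computation (you need $\ell_a,\transl{E} \subtype \ell_a,\transl{E}$, which holds trivially); with your stated type the modality transformation would not line up. Also note that where you invoke \Cref{lemma:pure-promotion}, the paper actually uses the generalised form \Cref{lemma:pure-promotion-vars} to insert locks into the premises' contexts --- the plain version only moves a single pure value, whereas here you need to push $\lockwith{\aeq{\tr{E}}}$ past an entire translated context whose bindings are all $\Pure$-typed or absolutely-annotated.
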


\begin{restatable}[Semantics Preservation]{lemma}{FepssnToMetpSemantics}
  \label{lemma:sematics-preservation-fepssn-to-metn}
  If $M$ is well-typed and $M \reducesto N$ in \Fepssn, then $\transl{M}
  \reducesto^\ast \transl{N}$ in \Metp{\mcS} where
  $\reducesto^\ast$ denotes the transitive closure of $\reducesto$.
\end{restatable}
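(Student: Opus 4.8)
The plan is to proceed by induction on the derivation of $M \mid \Instctx \reducesto N \mid \Instctx'$ in \Fepssn, in the same style as the proof of semantics preservation for \SystemC (\Cref{lemma:sematics-preservation-systemc-to-metn}); the stated form $\tr{M} \reducesto^\ast \tr{N}$ is the special case in which the runtime context is left implicit. I would first fix translations of the runtime constructs of \Cref{fig:semantics-fepssn} by analogy with the \SystemC case: a runtime context entry $h : \ell^a$ is sent, via a fresh local label $\ell_h$, to an entry $\ell_h : \tr{A'}\sto\tr{B'}$ of the \Metp{\mcS} runtime context (where $\Sigma \ni \ell : A' \sto B'$); the evidence value $\tmev{h}$ is sent to $\Mod_{\aeq{\ell_h}}\;(\lambda x^{\tr{A'}}.\Do\ell_h\;x)$; and $\Handle_h\;M \With H$ is sent to $\Handle^{\aeq{\tr{E}}}\;\tr{M} \With \tr{H^{\ell_h,E}}$, matching the translation of $\NHandler\;H$ in \Cref{fig:fepssn-to-metp} with the abstract scope variable $a$ instantiated by $\ell_h$. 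The \semlab{Lift} case is then discharged by the induction hypothesis together with a compositionality lemma $\tr{\EC[M]} = \tr{\EC}[\tr{M}]$, and well-typedness of $M$ (used through \Cref{lemma:type-preservation-fepssn-to-metn}) supplies the types needed for the type-directed term translation.

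Before the case analysis I would prove three auxiliary lemmas. First, the type translation commutes with type substitution, $\tr{A[B/\alpha]} = \tr{A}[\tr{B}/\alpha]$ and $\tr{E[B/\alpha]} = \tr{E}[\tr{B}/\alpha]$, paying attention to the scope-variable clauses $\tr{\ell^a,E} = a,\tr{E}$ and $\tr{\tyev{\ell}{a}} = \boxwith{\aeq{a}}(\tr{A}\to\tr{B})$, so that instantiating a scope variable in the source corresponds exactly to instantiating an effect variable in the target; a runtime instance of this lemma will identify the fresh scope variable $a$ generated by \semlab{Gen} with the fresh target label $\ell_h$. Second, the term translation commutes with value and type substitution up to a bounded number of administrative reductions: $\tr{M[V/x]} = \tr{M}[\tr{V}/x]$ (value substitution preserves the types driving the type-directed translation) and $\tr{M[B/\alpha]} \reducesto^\ast \tr{M}[\tr{B}/\alpha]$. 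Third, an \emph{administrative normalisation} lemma stating that each $\Mod_\mu$/$\Letm{\nu}{\mu}\!\!$ pair and each $\lambda$/application pair inserted by the translation is contracted in a fixed, finite number of \semlab{Letmod}, \semlab{App}, and \semlab{TApp} steps.

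With these in hand the case analysis is direct. For \semlab{App} and \semlab{TApp} (and any analogous elimination of let-bindings), the translated redex is an ordinary $\beta$-redex possibly guarded by administrative modality eliminations, and the result follows from the substitution lemmas. For \semlab{Gen}, $\tr{(\NHandler\;H)\;V}$ is an administrative redex — an elimination of $\Mod_{\aeq{\tr{E}}}$ immediately followed by a $\beta$-application — whose reduct is the body of the translated handler with $f$ bound to $\tr{V}$; this body begins with $\Localeffect{\ell_a : \tr{A'}\sto\tr{B'}}$, which reduces by the \Metp{\mcS} rule \semlab{Gen} to a fresh label $\ell'$; we extend the marker-to-label map so that $\ell' = \ell_h$ for the fresh marker $h$ (and so that $\ell'$ witnesses the fresh scope variable $a$), after which the type application $\tr{V}\;\ell'$ arising from $V\,a$ together with the remaining administrative steps brings the term to $\tr{\Handle_h\;(\Let x = V\,a \In x\;\tmev{h}) \With H}$ under the updated map. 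For \semlab{NOp}, $\tr{\Do \tmev{h}\;V}$ unfolds to $\Letm{}{\aeq{\ell'}} z = \tr{\tmev{h}} \In z\;\tr{V}$ with $\tr{\tmev{h}} = \Mod_{\aeq{\ell'}}(\lambda w.\Do\ell'\;w)$, which reduces by \semlab{Letmod} and \semlab{App} to $\Do \ell'\;\tr{V}$; the \Metp{\mcS} rule \semlab{Op} then fires on $\Handle^{\aeq{\tr{E}}}\;\tr{\EC}[\Do\ell'\;\tr{V}] \With \tr{H}$ — its side condition $\ell' \notin \BL{\tr{\EC}}$ following from $h \notin \meta{BH}(\EC)$ by injectivity of the marker-to-label map — and the resulting term, after contracting the administrative eliminations in the translated operation clause (notably the $\Letm{}{\aeq{\tr{E}}}\!\!$-binding of the continuation made possible by the modality-parameterised handler) and one application of the substitution lemma, equals $\tr{N[V/p, (\lambda y.\Handle_h\;\EC[\Ret y] \With H)/r]}$. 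The \semlab{NRet} case is analogous but simpler.

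The main obstacle is the bookkeeping around \semlab{Gen}: reconciling \Fepssn's generative semantics — which produces both a fresh marker $h$ and a genuine type-level scope variable $a$ that is then supplied to the argument via the type application $V\,a$ — with \Metp{\mcS}'s generation of a single fresh local label. Because the evidence type $\tyev{\ell}{a}$ translates to $\boxwith{\aeq{a}}(\dots)$ mentioning the abstract $a$, the translation of runtime evidence must replace $a$ by the concrete generated label throughout, and making this precise is exactly the runtime instance of the type-substitution coherence lemma; showing that the invariant relating $\Instctx$, its translation, and the marker-to-label map is preserved by \semlab{Gen} is the delicate point. A secondary subtlety, already flagged in the excerpt's discussion of \semlab{NOp}, is that the $h \notin \meta{BH}(\EC)$ side condition (absent from the presentation of \citet{XieCIL22}) is needed precisely so that the translated side condition $\ell' \notin \BL{\tr{\EC}}$ holds; one must also count administrative steps carefully so that each source step is matched by a finite, nonempty target sequence.
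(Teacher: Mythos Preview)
Your proposal is correct and follows essentially the same case analysis as the paper. The one noteworthy difference is bookkeeping: you key the generated runtime label by the fresh marker $h$ (writing $\ell_h$) and carry an explicit marker-to-label map, whereas the paper keys it by the fresh scope variable $a$ (writing $\ell_a$) and observes that ``the translations of terms do not use $h$ as the scope variable $a$ is also dynamically generated and is enough to uniquely assign operations to handlers.'' Because $a$ already appears at the type level (in $\tyev{\ell}{a}$ and in effect rows $\ell^a$), the paper's choice makes the runtime type-application $V\,a$ translate directly to $\tr{V}\;\ell_a$ with no auxiliary map, and the ``type-substitution coherence'' you flag as the delicate point becomes immediate. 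Your route works, but it duplicates bookkeeping that the scope variable already carries. One small inaccuracy: in the translated operation clause for \Fepssn there is no administrative $\Letm{}{\aeq{\tr{E}}}\!\!$-elimination of the continuation (unlike \SystemC); the clause is simply $\ell\;p\;r\mapsto\tr{N}$, because $r$ already has type $\tr{\earr{B'}{A}{E}} = \boxwith{\aeq{\tr{E}}}(\tr{B'}\to\tr{A})$ as delivered by the modality-parameterised handler, so no contraction is needed there.
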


Translations of runtime constructs for \Fepssn are as follows.
Note that for dynamically generated scope variables $a$, we translate
to their corresponding dynamically generated labels $\ell_a$ where
$\Instctx \ni h : \ell^a$.
The translations of terms do not use $h$ as the scope variable $a$ is
also dynamically generated and is enough to uniquely assign operations
to handlers.
The translations are used in the proof of semantics preservation in
\Cref{app:proof-fepssn}.
\begin{prog}
  \ba{rcl}
  \transl{-} &:& \text{Runtime Context} \to \text{Runtime Context}\\
  \tr{\cdot} &=& \cdot \\
  \tr{\Instctx, h:\ell^a} &=&
    \tr{\Instctx}, \ell_a:\tr{A}\sto\tr{B}
    \hfill\text{ where } \Sigma \ni \ell : A\sto B
     \\[2ex]
  \transl{-} &:& \text{Term} \to \text{Term}\\[.5ex]
  \tr{V\;a} &=& \tr{V}\;\ell_a \quad\text{where } \Instctx\ni h:\ell^a
    \\[.5ex]
  \tr{\tmev{h}} &=& \Mod_{\aeq{\ell_a}}\;(\lambda x^{\tr{A}} . \Do \ell_a\;x)
    \\
    & &\quad\text{where } \Instctx\ni h:\ell^a \text{ and }
    \Sigma\ni \ell : A\sto B
    \\[.5ex]
  \tr{\Handle_h\;M\With H \;\gray{:A\mid E}} &=&
    \Handle^{\aeq{\tr{E}}}\; \tr{M} \With \tr{H}
    \\
    & &\quad\text{where } \Instctx\ni h:\ell^a \text{ and }
    \Sigma \ni \ell : A\sto B\\
\ea
\end{prog}
Translations of evaluation contexts are analogous to the translations
of their corresponding terms.

  \FloatBarrier
  \section{Proofs of Encodings}
\label{app:proof-encoding}

We prove the type preservation and semantics preservation theorems in
\Cref{sec:encoding-caps,sec:encoding-rows}.
We also prove the type preservation and semantics preservation
theorems for some other encodings.

\subsection{Proofs of Encoding of \Feps in \Metp{\emtScp}}
\label{app:proof-feps}

\FepsToMetp*
\begin{proof}
By induction on typing judgements $\Gamma\vdash M : A\mid E$ in
\Feps.
Most cases follow from using IH trivially. We elaborate interesting cases.
When referring to the name of a rule, we sometimes also mention the
calculus name to disambiguate. For instance, \tylabi{Var}{\Feps}
refers to the rule \tylab{Var} of \Feps.
\begin{description}
  \item[Case] \focus{\Unit} By \tylabi{Unit}{\Feps} and \tylabi{Unit}{\Metp{\emtScp}}.
  \item[Case] \focus{x} All translated types have kind $\Pure$ in
    \Metp{\emtScp} and thus can always be accessed by \tylabi{Var}{\Metp{\emtScp}}.
  \item[Case] \focus{\Lambda\alpha^K.V} By IH, \tylabi{TAbs}{\Feps} and
  \tylabi{TAbs}{\Metp{\emtScp}}.
  \item[Case] \focus{V\;A} By IH, \tylabi{TApp}{\Feps} and \tylabi{TApp}{\Metp{\emtScp}}.
  \item[Case] \focus{\elambda{E} x^A.M}
    \begin{mathpar}
      \inferrule*[Lab=\tylab{Abs}]
      {
        \refa{\typf{\Gamma, x:A}{M}{B}{E}}
      }
      {\typf{\Gamma}{\lambda^E x^A.M}{\earr{A}{B}{E}}{}}
    \end{mathpar}
    By IH on \refa{}, \Cref{lemma:structural-rules},
    \Cref{lemma:pure-promotion-vars}, we have
    \[
      \typmet{\stransl{\Gamma}{}, \lockwith{\aeq{\transl{E}}}, x:\transl{A}}{\transl{M}}{\transl{B}}{\transl{E}}
    \]
    By \tylabi{Abs}{\Metp{\emtScp}} and \tylabi{Mod}{\Metp{\emtScp}}, we have
    \[
      \typmet{\stransl{\Gamma}{}}{\Mod_{\aeq{\transl{E}}}\;(\lambda x^{\transl{A}}.\transl{M})}{
        \boxwith{\aeq{\transl{E}}}(\transl{A}\to\transl{B})}{{\cdot}}
    \]
  \item[Case] \focus{\Ret V} By \Cref{lemma:feps-subeffecting}.
  \item[Case] \focus{\Let x = M \In N} By IH, syntactic sugar, and \tylabi{App}{\Metp{\emtScp}}.
  \item[Case] \focus{V\;W}
    \begin{mathpar}
      \inferrule*[Lab=\tylab{App}]
      {
        \refa{\typf{\Gamma}{V}{\earr{A}{B}{E}}{}} \\
        \refb{\typf{\Gamma}{W}{A}{}}
      }
      {\typf{\Gamma}{V\;W}{B}{E}}
    \end{mathpar}
    By IH on \refa{} and \Cref{lemma:feps-subeffecting}, we have
    \[
      \typmet{\transl{\Gamma}}{\transl{V}}{\boxwith{\aeq{\transl{E}}}(\transl{A}\to\transl{B})}{\transl{E}}
    \]
    By IH on \refb{} and \Cref{lemma:feps-subeffecting}, we have
    \[
      \typmet{\transl{\Gamma}}{\transl{W}}{\transl{A}}{\transl{E}}
    \]
    By \tylabi{Letmod}{\Metp{\emtScp}} and \tylabi{App}{\Metp{\emtScp}}, we have
    \[
      \typmet{\transl{\Gamma}}{\Letm{}{\aeq{\transl{E}}} x = \transl{V}\In x\;\transl{W}}{
        \transl{B}}{\transl{E}}
    \]
  \item[Case] \focus{\Do\ell\;V} By IH, \Cref{lemma:feps-subeffecting}, \tylabi{Do}{\Feps} and \tylabi{Do}{\Metp{\emtScp}}.
  \item[Case] \focus{\Handler\;\{\ell\;p\;r\mapsto N\}}
    \begin{mathpar}
      \inferrule*[Lab=\tylab{Handler}]
      {
        \Sigma \ni \ell : A'\sto B' \\
        \refa{\typf{\Gamma, p : A', r: \earr{B'}{A}{E}}{N}{A}{E}} \\
      }
      {\typf{\Gamma}{\Handler\; \{\ell\;p\;r \mapsto N \}}{\earr{(\earr{\TUnit}{A}{\ell,E})}{A}{E}}{}}
    \end{mathpar}
    By IH on \refa{}, \Cref{lemma:pure-promotion-vars}, and
    \Cref{lemma:structural-rules}, we have
    \[
      \refb{\typmet{\str{\Gamma}{},
      \lockwith{\aeq{\tr{E}}_{\cdot}},
      \lockwith{\aeq{\tr{E}}_{\tr{E}}}, p : \tr{A'}, r: \boxwith{\aeq{\tr{E}}}({\tr{B'}}\to {\tr{A}})}{\tr{N}}{\tr{A}}{\tr{E}}} \\
    \]

    By \tylabi{Letmod}{\Metp{\emtScp}}, \tylabi{Var}{\Metp{\emtScp}},
    and $\tr{A} : \Pure$, we have
    \[
      \refc{\typmet{
        \str{\Gamma}{},
        \lockwith{\aeq{\tr{E}}_{\cdot}},
        \lockwith{\aeq{\tr{E}}_{\tr{E}}},
        x:\boxwith{\aeq{\tr{\ell,E}}}\tr{A}}{
        \Letm{}{\aeq{\tr{\ell,E}}} x = x \In x
      }{\tr{A}}{\tr{E}}}
    \]
    By \tylabi{Var}{\Metp{\emtScp}}, \tylabi{Letmod}{\Metp{\emtScp}}, and \tylabi{App}{\Metp{\emtScp}}, we have
    \[
      \refd{\typmet{\str{\Gamma}{},
      \lockwith{\aeq{\tr{E}}}, f : \boxwith{\aeq{\tr{\ell,E}}}(\TUnit\to\tr{A}),
          \lockwith{\aeq{\tr{E}}},
          \lockwith{\aex{\ell}}
          }{
        \Letm{}{\aeq{\tr{\ell,E}}} f' = f \In f'\;\Unit}{\tr{A}}{\tr{\ell,E}}}
    \]
    By \tylabi{Handle}{\Metp{\emtScp}}, \refb{}, \refc{}, and \refd{}, we have
    \[\bl
      {\str{\Gamma}{},
      \lockwith{\aeq{\tr{E}}_{\cdot}}, f : \boxwith{\aeq{\tr{\ell,E}}}(\TUnit\to\tr{A})
      }\vdash \\[1ex]
      \quad {
        \Handle^{\aeq{\tr{E}}}\; (\Letm{}{\aeq{\tr{\ell,E}}} f' = f \In f'\;\Unit) \With \tr{H}
      }:{\tr{A}}\atmode{\tr{E}}
    \el\]
    Finally our final goal follows from
    \tylabi{Abs}{\Metp{\emtScp}} and \tylabi{Mod}{\Metp{\emtScp}}.
\end{description}
\end{proof}

The proof relies on the following lemma.

\begin{restatable}[Pure Values]{lemma}{FepsSubeffecting}
  \label{lemma:feps-subeffecting}
  Given a typing judgement $\typf{\Gamma}{V}{A}{}$ in \Feps,
  if $\typm{\stransl{\Gamma}{}}{\transl{V} : \transl{A}}{\cdot}$
  then $\typm{\stransl{\Gamma}{}}{\transl{V} :
  \transl{A}}{\transl{E}}$ for any $E$.
\end{restatable}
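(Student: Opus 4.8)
\textbf{Proof plan for \Cref{lemma:feps-subeffecting}.}
The statement asserts that the translation of a \Feps-value, which is typeable at the empty effect context in \Metp{\emtScp}, remains typeable at any effect context $\transl{E}$. The plan is to proceed by induction on the structure of the \Feps-value $V$ (equivalently on the derivation of $\typf{\Gamma}{V}{A}{}$), appealing to the fact that every translated type $\transl{A}$ has kind $\Pure$ in \Metp{\emtScp}. This kinding fact is exactly what makes the lemma work: variables whose types have kind $\Pure$ are always accessible regardless of locks, and $\Mod_\mu$-wrapped terms (which is what every translated $\lambda$-abstraction and handler value is) can be re-typed at a different ambient effect context since \tylab{Mod} re-specifies the effect context $\mu(F)$ used inside.

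First I would dispatch the base cases. For $\Unit$, rule \tylab{Unit} of \Metp{\emtScp} applies at any effect context. For a variable $x$, the translation is $x$ itself, and since $\transl{A} : \Pure$ the auxiliary judgement $\Gamma \vdash (\mu,\transl{A}) \To \locks{\Gamma'} \atmode{F}$ holds by its first rule (the pure-type case), so \tylab{Var} applies at any $E$. For the inductive cases, the key observation is that $\elambda{E'}x^A.M$, $\Handler\;H$, and $\Lambda\alpha^K.V$ all translate to terms of the form $\Mod_{\aeq{\transl{E'}}}(\dots)$ or $\Lambda\alpha^{\transl{K}}.(\dots)$, i.e., values whose outermost constructor re-establishes its own effect context via an absolute modality or does not depend on the effect context at all. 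Concretely, for $\transl{\elambda{E'}x^A.M} = \Mod_{\aeq{\transl{E'}}}(\lambda x^{\transl{A}}.\transl{M})$, the premise of \tylab{Mod} is a judgement at effect context $\aeq{\transl{E'}}(F) = \transl{E'}$, which is independent of the ambient $F$; so if the conclusion holds at $F = \cdot$ it holds at $F = \transl{E}$ for any $E$, provided the context well-formedness side condition $\stransl{\Gamma}{} \atmode{\transl{E}}$ holds, which it does since $\stransl{\Gamma}{}$ contains no locks or modality-indexed bindings whose indices constrain $F$. The type-abstraction case $\transl{\Lambda\alpha^K.V} = \Lambda\alpha^{\transl{K}}.\transl{V}$ follows by \tylab{TAbs} directly from the IH applied to the inner value $V$ at the extended context.

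The main obstacle I anticipate is handling the bookkeeping around context well-formedness and the exact shape of the translated handler value, $\Mod_{\aeq{\transl{E'}}}(\lambda f.\;\Handle^{\aeq{\transl{E'}}}\;(\dots)\With\transl{H^{E'}})$: one must check that the inner $\Handle^{\aeq{\transl{E'}}}$ and its premises — which themselves live under the lock $\lockwith{\aeq{\transl{E'}}}$ — are oblivious to the outer ambient effect context, and that the comonadic-structure side conditions on the handler modality $\aeq{\transl{E'}}$ (namely $\aeq{\transl{E'}} \To \aid$ and $\aeq{\transl{E'}} \To \aeq{\transl{E'}}\circ\aeq{\transl{E'}}$ at the appropriate context) continue to hold; but these hold whenever the relevant context has an effect context at least $\transl{E'}$, which is guaranteed by the innermost lock, so they are in fact independent of the outer $E$. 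A clean way to organise this is to observe that the full derivation produced by \Cref{lemma:type-preservation-feffn-to-metn} for a value places the entire body under an outermost lock $\lockwith{\aeq{\cdot}}$ or $\lockwith{\aeq{\transl{E'}}}$ (or is a bare type abstraction), and that \emph{replacing the ambient effect context index below such a lock} is a no-op for the sub-derivation — so the lemma reduces to \Cref{lemma:structural-rules} (lock swapping / monotonicity of context well-formedness) together with the $\Pure$-kindedness of translated types, rather than requiring a fresh induction from scratch. Most cases are then immediate and the only real content is verifying that the handler-value case's nested judgements are genuinely indexed only by $\transl{E'}$ and its locked extensions, never by the free ambient context.
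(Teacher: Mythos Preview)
Your proposal is correct and follows the same approach as the paper, which simply says ``by straightforward induction on typing judgements of values in \Feps.'' Your elaboration correctly identifies the two facts that make the induction go through---that every translated \Feps type has kind $\Pure$ (so variable access is insensitive to locks), and that the outermost $\Mod_{\aeq{\transl{E'}}}$ on translated $\lambda$-abstractions and handler values fixes the inner effect context independently of the ambient one---and your anticipated ``obstacle'' in the handler case is handled exactly as you describe.
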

\begin{proof}
  By straightforward induction on typing judgements of values in
  \Feps.
\end{proof}

\FepsToMetpSemantics*
\begin{proof}
  By induction on $M$ and case analysis on the next reduction rule.
  Note that values in \Feps are translated to value normal forms in \Metp{\emtScp}.
  \begin{description}
    \item[Case] \focus{\semlab{App}} We have
    \[
    \tr{(\lambda^E x^A.M)\;V} = \Letm{}{\aeq{\tr{E}}} x =
    \Mod_{\aeq{\tr{E}}}\;(\lambda x^{\tr{A}}.\tr{M}) \In x\;\tr{V}
    \]
    Our goal follows from \semlab{Letmod} and \semlab{App} in \Metp{\emtScp}.
    It is obvious that translation preserves value substitution.
    \item[Case] \focus{\semlab{TApp}} By \semlab{TApp} in
    \Metp{\emtScp}. It is obvious that translation preserves type
    substitution.
    \item[Case] \focus{\semlab{Let}} By syntactic sugar and
    \semlab{App} in {\Metp{\emtScp}}.
    \item[Case] \focus{\semlab{Handler}}
    Suppose the effect row of the whole term is $E$.
    \[\ba{rcl}
    \Handler\;H\;V
      &\reducesto&
      \Handle\; V\;\Unit \With H
    \ea\]
    We have
    \[\ba{rcl}
    \tr{\LHS} &=&
    \Letm{}{\aeq{\tr{E}}} x = \tr{\Handler\;H} \In x\;\tr{V} \\
    \tr{\Handler\;H} &=&
    {\Mod_{\aeq{\transl{E}}}}\;(
    \lambda f .
      \Handle^{\aeq{\tr{E}}}\;
      (\Let {\Mod_{\aeq{\transl{\ell,E}}}}\; f' = f \In
        f'\;\Unit) \With \tr{H^{\gray{E}}})
    \ea\]
    By \semlab{Letmod} and \semlab{App} in \Metp{\emtScp}, $\tr{\LHS}$ reduces to
    \[
      \Handle^{\aeq{\tr{E}}}\;
      (\Let {\Mod_{\aeq{\transl{\ell,E}}}}\; f' = \tr{V} \In
        f'\;\Unit) \With \tr{H^{\gray{E}}}
    \]
    which is equal to $\tr{\RHS}$ of the above reduction step.
    \item[Case] \focus{\semlab{Ret}} By \semlab{Ret} and \semlab{Letmod} in \Metp{\emtScp}.
    \item[Case] \focus{\semlab{Op}}
    Suppose the effect row of the whole term is $E$.
    \[\ba{rcl}
      \Handle\; \EC[\Do \ell \; V] \With H
      &\reducesto& N[V/p, (\lambda y.\Handle\; \EC[\Ret y] \With H)/r]
    \ea\]
    where $\ell\notin\BL{\EC}$ and $H\ni \ell\;p\;r\mapsto N$.
    We have
    \[\ba{rcl}
    \tr{\LHS} &=& \Handle^{\aeq{\tr{E}}}\; \tr{\EC[\Do \ell\; V]} \With \tr{H}
    \ea\]
    By \Cref{lemma:ectrans-feffn}, we have
    \[\ba{rcl}
      \tr{\EC[\Do \ell \; V]} &=& \tr{\EC}[\Do\ell\;\tr{V}] \\
    \ea\]
    Then by \semlabi{Op}{\Metp{\emtScp}} and translation preserving substitution, $\tr{\LHS}$ reduces to
    \[
      \tr{N}[\tr{V}/p, (\Mod_{\aeq{\tr{E}}}\; (\lambda y . \Handle^{\aeq{\tr{E}}}\; \tr{\EC[y]} \With \tr{H})) / r]
    \]
    which is equal to $\tr{RHS}$ of the above reduction step.
  \item[Case] \focus{\semlab{Lift}} By IH and \Cref{lemma:ectrans-feffn}.
  \end{description}
\end{proof}

The proof of semantics preservation relies on the following lemma.

\begin{lemma}[Translation of Evaluation Contexts]
  \label{lemma:ectrans-feffn}
  For the translation $\tr{-}$ from \Feps to \Metp{\emtScp}, we have
  $\tr{\EC[M]} = \tr{\EC}[\tr{M}]$ for any
  evaluation context $\EC$ and term $M$.
\end{lemma}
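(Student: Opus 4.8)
The plan is to prove Lemma~\ref{lemma:ectrans-feffn} by structural induction on the evaluation context $\EC$ of \Feps. The evaluation contexts of \Feps are given in \Cref{fig:semantics-feps} by the grammar $\EC ::= [~] \mid \Let x = \EC \In N \mid \Handle\; \EC \With H$, so there are exactly three cases to check, and the statement $\tr{\EC[M]} = \tr{\EC}[\tr{M}]$ relates the two ways of composing: translating a filled context versus filling the translated context with the translated term.

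First, for the base case $\EC = [~]$, both sides are literally $\tr{M}$, since $[~][M] = M$ and $\tr{[~]} = [~]$ by the convention (stated just before the lemma in \Cref{app:semantics-feps}) that evaluation contexts are translated analogously to their corresponding terms, so $\tr{[~]} = [~]$. Next, for $\EC = \Let x = \EC' \In N$, we compute $\tr{(\Let x = \EC' \In N)[M]} = \tr{\Let x = \EC'[M] \In N} = \Let x = \tr{\EC'[M]} \In \tr{N}$ using the homomorphic clause $\transl{\Let x = M \In N} = \Let x = \tr{M} \In \tr{N}$ from \Cref{fig:feps-to-metp}; then the induction hypothesis gives $\tr{\EC'[M]} = \tr{\EC'}[\tr{M}]$, so this equals $\Let x = \tr{\EC'}[\tr{M}] \In \tr{N} = \tr{\EC}[\tr{M}]$ by the analogous translation of the $\Let$ evaluation context. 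The case $\EC = \Handle\; \EC' \With H$ is entirely parallel, using the runtime-construct translation $\tr{\Handle\;M\With H} = \Handle^{\aeq{\tr{E}}}\; \tr{M} \With \tr{H}$ from \Cref{app:semantics-feps} and the analogous evaluation-context translation, together with the induction hypothesis on $\EC'$.

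The only subtlety worth flagging is that the term translation of \Feps is type-directed (defined on typing judgements), so the translation of a filled context $\EC[M]$ must agree with filling $\tr{\EC}$, which was defined relative to the typing of $\EC$; but since filling a hole does not change the types assigned to the surrounding constructs, the type annotations used by the translation of the context are the same whether or not the hole is filled, and the equation goes through. In particular, for the $\Handle$ case the effect-row annotation $E$ on $\Handle^{\aeq{\tr{E}}}$ is determined by the context around the hole and is unaffected by $M$. No step here presents a real obstacle: this is a routine homomorphism argument, and the only care needed is to keep the type-directed nature of the translation in mind when invoking the clauses. I would therefore present it as a short proof by induction on $\EC$ with the three cases dispatched as above.
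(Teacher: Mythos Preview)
Your proposal is correct and follows essentially the same approach as the paper, which simply states ``By straightforward case analysis on evaluation contexts of \Feps.'' Your use of structural induction (rather than mere case analysis) is in fact the right framing since evaluation contexts are recursive, and your remark about the type-directed nature of the translation is a useful clarification the paper leaves implicit.
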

\begin{proof}
  By straightforward case analysis on evaluation contexts of \Feps.
\end{proof}

\subsection{Proofs of Encoding \SystemC in \Metp{\mcS}}
\label{app:proof-systemc}

\SystemCToMetp*
\begin{proof}
  By induction on typing judgements in \SystemC.
  As a visual aid, for each non-trivial case we repeat its typing rule
  in \SystemC. We replace each premise by the \Metp{\mcS} judgement of
  the translated premise implied by the induction hypothesis. We
  replace the conclusion by the \Metp{\mcS} judgement of the
  translated conclusion that we need to prove.
  When referring to the name of a rule, we sometimes also mention the
  calculus name to disambiguate. For instance, \tylabi{Var}{\SystemC}
  refers to the rule \tylab{Var} of \SystemC.
  \begin{description}
    \item[Case] \focus{\Unit} By \tylabi{Unit}{\SystemC} and \tylabi{Unit}{\Metp{\mcS}}.
    \item[Case] \focus{x} By \tylabi{Var}{\SystemC} and
      \tylabi{Var}{\Metp{\mcS}}. Variables are always accessible after
      translation as translations of value types always have kind $\Pure$.
    \item[Case] \focus{\CBox P}
      \begin{mathpar}
        \inferrule*
        {
          \refa{\typmet{\str{\Gamma}{}}{\tr{P}}{\tr{T}}{\tr{C}}}
        }
        {\typmet{\str{\Gamma}{}}{
          \Mod_{\tr{C}}\;\tr{P}
        }{\tr{T\At C}}{\cdot}}
      \end{mathpar}
      We have $\tr{T\At C} = \boxwith{\aeq{\tr{C}}}\tr{T}$.
      By \refa{} and \Cref{lemma:pure-promotion-vars}, we have
      \[
        {\typmet{\str{\Gamma}{}, \lockwith{\aeq{\tr{C}}}}{\tr{P}}{\tr{T}}{\tr{C}}}
      \]
      Our goal follows from \tylabi{Mod}{\Metp{\mcS}}.
    \item[Case] \focus{f~\text{transparent}}
      \begin{mathpar}
        \inferrule*
        {
          \Gamma \ni f :^C T
        }
        {\typmet{\str{\Gamma}{}}{\hat{f}}{\tr{T}}{\tr{C}}}
      \end{mathpar}
      Suppose $\Gamma = \Gamma_1,f:^C T,\Gamma_2$.
      We have \[\str{\Gamma_1,f:^C T,\Gamma_2}{C}
        = \str{\Gamma_1}{}, f:\boxwith{\aeq{\tr{C}}}\tr{T}
        , \hat{f}\varb{\aeq{\tr{C}}}{\tr{T}}, \str{\Gamma_2}{}\]
      Our goal follows from \tylabi{Var}{\Metp{\mcS}} and \mtylab{Abs} ($\aeq{\tr{C}}\To\aid\atmode{\tr{C}}$).
    \item[Case] \focus{f~\text{tracked}}
      \begin{mathpar}
        \inferrule*
        {
          \Gamma \ni f :^\ast T
        }
        {\typmet{\str{\Gamma}{}}{\hat{f}}{\tr{T}}{\shat{f}}}
      \end{mathpar}
      Suppose $\Gamma = \Gamma_1,f:^\ast T,\Gamma_2$.
      We have \[
        \str{\Gamma_1, f:^\ast T, \Gamma_2}{} =
        \str{\Gamma_1}{},
          \shat{f},
          f:\boxwith{\aeq{\shat{f}}}\tr{T},
          \hat{f}\varb{\aeq{\shat{f}}}{\tr{T}},
          \str{\Gamma_2}{}
      \]
      Our goal follows from \tylabi{Var}{\Metp{\mcS}} and \mtylab{Abs} ($\aeq{\shat{f}}\To\aid\atmode{\shat{f}}$).
    \item[Case] \focus{\block{\ol{x:A}}{\ol{f:T}}{M}}
      \begin{mathpar}
        \inferrule*
        {
          \refa{\typmet{\str{\Gamma,{\ol{x:A},\ol{f:^\ast T}}}{}}{
            \tr{M}
          }{\tr{B}}{\tr{C \cup \{\ol{f}\} }}}
        }
        {\typmet{\str{\Gamma}{}}{
          \Lambda\ol{\shat{f}} .
          \Mod_{\aex{\ol{\shat{f}}}}(
          \lambda\ol{x^{\transl{A}}}\,\ol{f^{\boxwith{\aeq{\shat{f}}}\transl{T}}} .
          \ol{\Letm{}{\aeq{\shat{f}}}\hat{f} = f \In}
          \transl{M})
        }{\tr{\carr{\ol{A}}{\ol{f:T}}{B}}}{\tr{C}}}
      \end{mathpar}
      By \refa{}, \Cref{lemma:structural-rules}, and
      \Cref{lemma:pure-promotion-vars}, we have
      \[
        \typmet{\tr{\Gamma}, \ol{x:\tr{A}},
        \ol{\shat{f}},
        \lockwith{\aex{\ol{\shat{f}}}},
        \ol{f:\boxwith{\aeq{\shat{f}}}\tr{T}}, \ol{\hat{f}\varb{\aeq{\shat{f}}}{\tr{T}}}}{
          \tr{M}
        }{\tr{B}}{\tr{C},\ol{\shat{f}}}
      \]
      Our goal follows from \tylabi{Abs}{\Metp{\mcS}}, \tylabi{Letmod}{\Metp{\mcS}}, and \tylabi{Mod}{\Metp{\mcS}}.
    \item[Case] \focus{\CUnbox V}
      \begin{mathpar}
        \inferrule*
        {
          \refa{\typmet{\str{\Gamma}{}}{\tr{V}}{\tr{T\At C}}{\cdot}}
        }
        {\typmet{\str{\Gamma}{}}{\Letm{}{\aeq{\transl{C}}} x = \,\transl{V} \In x}{\tr{T}}{\tr{C}}}
      \end{mathpar}
      We have $\tr{T\At C} = \boxwith{\aeq{\tr{C}}}\tr{T}$.
      By \refa{} and \Cref{lemma:systemc-subeffecting}, we have
      \[
        {\typmet{\str{\Gamma}{}}{\tr{V}}{\boxwith{\aeq{\tr{C}}}\tr{T}}{\tr{C}}}
      \]
      Our goal follows from \tylabi{Letmod}{\Metp{\mcS}} and \mtylab{Abs} ($\aeq{\tr{C}}\To\aid\atmode{\tr{C}}$).
    \item[Case] \focus{\Let x = M \In N} By IH,
    \Cref{lemma:systemc-subeffecting}, \tylabi{Let}{\SystemC}, and
    \tylabi{Let}{\Metp{\mcS}}.
    \item[Case] \focus{\Def f = P \In M}
      \begin{mathpar}
        \inferrule*
        {
          \refa{\typmet{\str{\Gamma}{}}{\tr{P}}{\tr{T}}{\tr{C'}}} \\
          {\typmet{\str{\Gamma,f :^{C'} T}{}}{\tr{M}}{\tr{A}}{\tr{C}}} \\
        }
        {\typmet{\str{\Gamma}{}}{
          \Let f = \Mod_{\aeq{\transl{C'}}}\,\transl{P} \In
          \Letm{}{\aeq{\transl{C'}}} \hat{f} = f \In
          \transl{M}
        }{\tr{A}}{\tr{C}}}
      \end{mathpar}
      By \refa{} and \Cref{lemma:pure-promotion-vars}, we have
      \[
      \typmet{\tr{\Gamma}, \lockwith{\aeq{\tr{C'}}}}{\tr{P}}{\tr{T}}{\tr{C'}}
      \]
      Our goal follows from \tylabi{Mod}{\Metp{\mcS}},
      \tylabi{Let}{\Metp{\mcS}}, and \tylabi{Letmod}{\Metp{\mcS}}.
  \end{description}
  \item[Case] \focus{P(\ol{V_i}, \ol{Q_j})}
    \begin{mathpar}
      \inferrule*
      {
        \refa{\ol{\typmet{\str{\Gamma}{}}{\tr{V_i}}{\tr{A_i}}{\cdot}}} \\
        \refb{\ol{\typmet{\str{\Gamma}{}}{\tr{Q_j}}{\tr{T_j}}{\tr{C_j}}}} \\\\
        \typmet{\str{\Gamma}{}}{\tr{P}}{\tr{\carr{\ol{A_i}}{\ol{f_j:T_j}}{B}}}{\tr{C}} \\
        C' \coloneq C \cup \ol{C_j}
      }
      {\typmet{\str{\Gamma}{}}{
        \Letm{}{\aex{\ol{\transl{C_j}}}} x = \transl{P}\;\ol{\transl{C_j}}\In
        x\;\ol{\transl{V_i}}\;\ol{(\Mod_{\aeq{\transl{C_j}}}\,\transl{Q_j})}
      }{\tr{B}[\ol{\tr{C_j}/\shat{f}_j}]}{\tr{C'}}}
    \end{mathpar}
    We have $\tr{\carr{\ol{A_i}}{\ol{f_j:T_j}}{B}} =
    \forall\ol{\shat{f}_j} . \boxwith{\aex{\ol{\shat{f}_j}}}(
    \ol{\transl{A_i}}\to \ol{\boxwith{\aeq{\shat{f}_j}}\transl{T_j}}\to
    \transl{B})$.
    By \refa{} and \Cref{lemma:systemc-subeffecting} we have
    \[\bl
        \refc{\ol{\typmet{\str{\Gamma}{}}{\tr{V_i}}{\tr{A_i}}{C'}}} \\
    \el\]
    By \refb{} and \Cref{lemma:pure-promotion-vars}, we have
    \[\bl
        {\ol{\typmet{\str{\Gamma}{}, \lockwith{\aeq{\tr{C_j}}}}{
          \tr{Q_j}}{\tr{T_j}}{\tr{C_j}}}}
    \el\]
    Then by \tylabi{Mod}{\Metp{\mcS}} we have
    \[\bl
        \refd{\ol{\typmet{\str{\Gamma}{}}{\Mod_{\aeq{\tr{C_j}}}\;\tr{Q_j}}{\boxwith{\aeq{\tr{C_j}}}\tr{T_j}}{\tr{C'}}}}
    \el\]
    Also note that translation preserves type substitution of capability variables, which gives
    \[\refe{\tr{B}[\ol{\tr{C_j}/\shat{f}_j}] = \tr{B[\ol{C_j/f_j}]}}.\]
    Finally our goal follows from \refc{}, \refd{}, \refe{},
    \tylabi{Letmod}{\Metp{\mcS}}, \tylabi{App}{\Metp{\mcS}}, and \tylabi{TApp}{\Metp{\mcS}}.
  \item[Case] \focus{\Ret V} By IH.
  \item[Case] \focus{\text{subtyping of blocks and computations}} By IH and \Cref{lemma:systemc-subeffecting}.
  \item[Case] \focus{\Try\;\{f^{\carrsingle{A'}{}{B'}}\To M\}\With\{p\;r\mapsto N\}}
  \begin{mathpar}
      \inferrule*
      {
        \refa{\typmet{\str{
          \Gamma,
          {f:^\ast\carr{A'}{}{B'}}
          }{}}{
          \tr{M}}{\tr{A}}{\tr{C\cup\{f\}}}} \\
        \refb{\typmet{\str{\Gamma,p:A',r:^C\carr{B'}{}{A}}{C}}{\tr{N}}{\tr{A}}{\tr{C}}} \\
      }
      {\typmet{\str{\Gamma}{C}}{
        {\bl
        {\Localeffect{\ell_f : \tr{A'}\sto\tr{B'}}}
        \Letm{}{\aex{\ell_f}} g\; = M_1 \In \\[.5ex]
        \Handle^{\aeq{\tr{C}}}\; g\;M_2 \With \{\Ret x \mapsto N_1, \ell_f\;p\;r\mapsto N_2\} \el}
      }{\tr{A}}{\tr{C}}}
  \end{mathpar}
  where
  \[\ba{r@{~}c@{~}l}
  M_1 &=& ({\Lambda\shat{f}}.\Mod_{\aex{\shat{f}}}\;(
          \lambda f^{\boxwith{\aeq{\shat{f}}}\boxwith{\aid}(\tr{A'}\to\tr{B'})}.
          \Letm{}{\aeq{\shat{f}}}\hat{f} = f \In \transl{M}))\;\ell_f \\[.5ex]
  M_2 &=& \Mod_{\aeq{\ell_f}}\;(\Mod_{\aid}\;(\lambda x^{\tr{A'}} . \Do \ell_f\;x)) \\[.5ex]
  N_1 &=& {\Letm{}{\aeq{\ell_f,\tr{C}}} x'=x \In} x' \\[.5ex]
  N_2 &=& {\Letm{}{\aeq{\transl{C}}} \hat{r} = r \In} \transl{N}
  \ea\]
  For the translations of contexts in \refa{} and \refb{}, we have
  \[\ba{l@{~}c@{~}l}
    {\str{\Gamma, {f:^\ast\carr{A'}{}{B'}}}{}}
    &=& \bl
      \str{\Gamma}{},
        \shat{f},
      f:\boxwith{\aeq{\shat{f}}}(\tr{A'}\to\tr{B'}),
      \hat{f}\varb{\aeq{\shat{f}}}{\tr{A'}\to\tr{B'}} \\[1ex] \el \\
    {\str{\Gamma,p:A',r:^C\carr{B'}{}{A}}{}}
    &=& \bl \str{\Gamma}{}, p:\tr{A'},
      r:\boxwith{\aeq{\transl{C}}}(\tr{B'}\to\tr{A}),
      \hat{r}\varb{\aeq{\transl{C}}}{\tr{B'}\to\tr{A}} \el
  \el\]
  Then by \refa{}, \Cref{lemma:structural-rules},
  \Cref{lemma:pure-promotion-vars}, and several typing rules in
  \Metp{\mcS}, we have
  \[
    \str{\Gamma}{}, \ell_f:\tr{A'}\sto\tr{B'}
    \vdash M_1 :
    \boxwith{\aex{\ell_f}}(\boxwith{\aeq{\ell_f}}(\boxwith{\aid}(\tr{A'}\to\tr{B'}))\to\tr{A})
    \mid \tr{C}
  \]
  which gives the binding of $g$:
  \[
    g \varb{\aex{\ell_f}} \boxwith{\aeq{\ell_f}}\boxwith{\aid}(\tr{A'}\to\tr{B'})\to\tr{A}
  \]
  Then we have
  \[
    \str{\Gamma}{}, \ell_f:\tr{A'}\sto\tr{B'}
    \vdash g\;M_2
    : \tr{A} \atmode{\ell_f,\tr{C}}
  \]
  By \Cref{lemma:pure-promotion-vars}, we have
  \[
    \refc{\str{\Gamma}{}, \ell_f:\tr{A'}\sto\tr{B'}, \lockwith{\aeq{\tr{C}}}, \lockwith{\aex{\ell_f}}
    \vdash g\;M_2 : \ell_f,\tr{C}}
  \]
  By \refb{}, \Cref{lemma:structural-rules}.6, and \tylabi{Letmod}{\Metp{\mcS}}, we have
  \[
    {\typmet{\str{\Gamma}{}, \ell_f:\tr{A'}\sto\tr{B'},
     p:\tr{A'}, r:\boxwith{\aeq{\transl{C}}}(\tr{B'}\to\tr{A})
    }{N_2}{\tr{A}}{\tr{C}}}
  \]
  Again by \Cref{lemma:pure-promotion-vars}, we have
  \[
    \refd{\typmet{\str{\Gamma}{}, \ell_f:\tr{A'}\sto\tr{B'}, \lockwith{\aeq{\tr{C}}}, p:\tr{A'},
      r:\boxwith{\aeq{\transl{C}}}(\tr{B'}\to\tr{A})
      }{N_2}{\tr{A}}{\tr{C}}}
  \]
  For the translated return clause, by \tylabi{Letmod}{\Metp{\mcS}} and $\tr{A}:\Pure$, we have
  \[
    \refe{\typmet{\str{\Gamma}{}, \ell_f:\tr{A'}\sto\tr{B'}, \lockwith{\aeq{\tr{C}}},
     x:\boxwith{\aeq{\ell_f,\tr{C}}}\tr{A}
    }{N_1}{\tr{A}}{\tr{C}}}
  \]
  Our goal follows from \refc{}, \refd{}, \refe{}, \tylabi{Handle}{\Metp{\mcS}}.
\end{proof}

The proof relies on the following lemma.

\begin{restatable}[Subeffecting]{lemma}{SystemCSubeffecting}
  \label{lemma:systemc-subeffecting}
  Given a typing judgement $\Gamma \vdash M : A \mid C$ in \SystemC,
  if $\typm{\stransl{\Gamma}{}}{\transl{M} : \transl{A}}{\transl{C}}$
  and $C \subseteq C'$ then $\typm{\stransl{\Gamma}{}}{\transl{M} :
  \transl{A}}{\transl{C'}}$. Similarly for blocks.
\end{restatable}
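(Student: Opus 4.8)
The plan is to derive this directly from the generalised pure promotion lemma (\Cref{lemma:pure-promotion-vars}), rather than by a fresh induction on the derivation of $\typm{\str{\Gamma}{}}{\tr{M}:\tr{A}}{\tr{C}}$. First I would translate the hypothesis $C\subseteq C'$ on \SystemC capability sets into the subeffecting statement $\tr{C}\subtype_\mcS\tr{C'}$ of \Metp{\mcS}. Since $\tr{\{\ol{f}\}}=\ol{\shat{f}}$ and the set theory $\mcS$ identifies effect contexts up to reordering and duplication of entries, $C\subseteq C'$ yields $\tr{C},E'\equiv_\mcS\tr{C'}$ for $E'$ the list of effect variables $\shat{f}$ with $f\in C'$ and $f\notin C$, hence $\tr{C}\subtype_\mcS\tr{C'}$ by the definition of subeffecting. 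I would also note that $\str{\Gamma}{}\atmode{\tr{C'}}$ holds: the translated context contains no locks, its variable types are already well-formed (types do not depend on the ambient effect context), and $\tr{C'}$ is a well-formed effect context because $C'$ mentions only capabilities bound in $\Gamma$, the standard well-scopedness invariant of \SystemC judgements.

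The crux is then to verify that every term-variable binding of $\str{\Gamma}{}$ meets the side condition of \Cref{lemma:pure-promotion-vars}, namely that it has the form $x\varb{\mu}{B}$ with either $B$ of kind $\Pure$ or $\mu$ an absolute modality. This is a case analysis on the three clauses of the context translation in \Cref{fig:SystemC-to-Metn}. An ordinary binding $x:A$ becomes $x:\tr{A}$; by inspection of the type translation ($\tr{\TUnit}=\TUnit$ and $\tr{T\At C}=\boxwith{\aeq{\tr{C}}}\tr{T}$) every value type translates to a type of kind $\Pure$. A tracked binding $f:^\ast T$ becomes $\shat{f},\,f:\boxwith{\aeq{\shat{f}}}{\tr{T}},\,\hat{f}\varb{\aeq{\shat{f}}}{\tr{T}}$: the type variable $\shat{f}$ is irrelevant, the binding of $f$ has a type wrapped in the absolute modality $\aeq{\shat{f}}$ (hence kind $\Pure$), and the binding of $\hat{f}$ carries the absolute modality $\aeq{\shat{f}}$. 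A transparent binding $f:^C T$ is handled identically with $\aeq{\tr{C}}$ in place of $\aeq{\shat{f}}$. So every relevant binding, and in particular every free term variable of $\tr{M}$, satisfies the condition.

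With these ingredients I would invoke \Cref{lemma:pure-promotion-vars} with $\Gamma_1:=\str{\Gamma}{}$ and the trailing contexts empty, concluding $\typm{\str{\Gamma}{}}{\tr{M}:\tr{A}}{\tr{C'}}$ from $\typm{\str{\Gamma}{}}{\tr{M}:\tr{A}}{\tr{C}}$ and $\tr{C}\subtype_\mcS\tr{C'}$. The block case needs no separate argument: a translated block $\tr{P}$ is itself a \Metp{\mcS} term typed in $\str{\Gamma}{}$ at effect context $\tr{C}$, so exactly the same invocation applies. The only genuinely delicate point is the observation in the second step that block-type translations $\tr{T}$ are \emph{not} themselves pure (they are function types of kind $\Any$), so it is essential that the context translation wraps each block variable and its hat companion in an absolute modality; that wrapping is precisely what keeps subeffecting sound, and a translation omitting it would fail this lemma.
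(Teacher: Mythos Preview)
Your proposal is correct. The paper's own proof is the one-line ``by straightforward induction on typing judgements in \SystemC'', whereas you bypass that induction entirely by invoking \Cref{lemma:pure-promotion-vars} once, after checking its hypotheses against the shape of $\str{\Gamma}{}$. The key observation that makes your route work---that every term-variable binding produced by the context translation either has a \Pure{} type (value bindings, and the $f:\boxwith{\aeq{\cdots}}\tr{T}$ bindings) or carries an absolute modality (the $\hat f\varb{\aeq{\cdots}}\tr{T}$ bindings)---is exactly right, as is your remark that $\tr{T}$ itself has kind $\Any$ so the absolute wrapping is load-bearing.

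The two approaches are closely related: the paper's induction on \SystemC rules would, at each case, end up re-establishing accessibility of translated variables under the new effect context $\tr{C'}$, which is precisely the content of \Cref{lemma:pure-promotion-vars} specialised to contexts of the form $\str{\Gamma}{}$. Your route is more economical in that it factors this argument through the existing lemma rather than replaying it inline; the paper's route keeps the dependency graph flatter. Both are sound, and the paper itself leans on \Cref{lemma:pure-promotion-vars} repeatedly in the surrounding type-preservation proof, so your choice is in the same spirit.
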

\begin{proof}
  By straightforward induction on typing judgements in \SystemC.
\end{proof}

\SystemCToMetpSemantics*
\begin{proof}
  By induction on $M$ and case analysis on the next reduction rule.
  Values in \SystemC are translated to values in \Metp{\mcS}.
  Not all values in \SystemC are translated to value normal forms in
  \Metp{\mcS}, but we can always further reduce them to value normal forms
  in \Metp{\mcS}.
  The theorems allows us to have more steps of reduction in \Metp{\mcS}.
  We do not explicitly mention that we reduce translations of values
  to value normal forms in the following proof.
  \begin{description}
    \item[Case] \focus{\semlab{Box}}
    We have
    \[\ba{rcl}
      \tr{\CUnbox (\CBox P)} &=& \Letm{}{\aeq{\tr{C}}} x = \Mod_{\aeq{\tr{C}}}\;\tr{P} \In x
    \ea\]
    By \semlabi{Letmod}{\Metp{\mcS}}.
    \item[Case] \focus{\semlab{Let}} We have
    \[\ba{rcl}
      \tr{\Let x = \Ret V \In N} &=& \Let x = \tr{V} \In \tr{N}
    \ea\]
    LHS reduces to $N[V/x]$ and RHS reduces to $\tr{N}[\tr{V}/x]$.
    It is easy to show that translation preserves value substitution.
    \item[Case] \focus{\semlab{Def}}
    \[\ba{rcl}
         \Def f = P \In N &\reducesto & N[P/f] \\
    \ea\]
    By translation preserving substitution, we have
    \[\ba{rcl}
    \tr{\LHS} &=&
      \Let f = {\Mod_{\aeq{\transl{C}}}}\,\transl{P} \In
      \Let {\Mod_{\aeq{\transl{C}}}\;} \hat{f} = f \In
      \transl{N} \\[1ex]
    \tr{\RHS} &=& \tr{N}[\tr{P}/\hat{f}]
    \ea\]
    Our goal follows from \semlabi{App}{\Metp{\mcS}} and \semlabi{Letmod}{\Metp{\mcS}}.
    \item[Case] \focus{\semlab{Call}}
    \[
      \block{\ol{x:A}}{\ol{f:T}}{M}(\ol{V}, \ol{Q}) \reducesto M[\ol{V/x}, \ol{Q/f},
      \ol{C/f}]
    \]
    Let $P = {\block{\ol{x:A}}{\ol{f:T}}{M}}$, we have
    \[\ba{rcl}
    \tr{P} &=&
      \Lambda\ol{\shat{f}} .
      \Mod_{\aex{\ol{\shat{f}}}}(
      \lambda\ol{x^{\transl{A}}}\,\ol{f^{\boxwith{\aeq{\shat{f}}}\transl{T}}} .
      \ol{\Letm{}{\aeq{\shat{f}}}\hat{f} = f \In}
      \transl{M}) \\[1ex]
    \tr{P(\ol{V_i},\ol{Q_j})} &=& \Letm{}{\aex{\ol{\transl{C_j}}}} x = \transl{P}\;\ol{\transl{C_j}}\In
      x\;\ol{\transl{V_i}}\;\ol{(\Mod_{\aeq{\transl{C_j}}}\,\transl{Q_j})}
    \ea\]
    Our goal follows from \semlab{Letmod}, \semlab{App} and
    \semlab{TApp} in \Metp{\mcS}, as well as the fact that translation
    preserves value substitution and type substitution.
    \item[Case] \focus{\semlab{Gen}}
    \[\ba{rcl}
    \Try\; \{f^{\carrsingle{A'}{}{B'}}\To M\} \With H \mid \Instctx
    &\reducesto &
    \Try_\ell\; M[\tmcap{\ell}/f, \{\ell\}/f] \With H \mid\Instctx, \ell : \carrsingle{A'}{}{B'}
    \ea\]
    where $\ell$ fresh.
    We have
    \[\ba{rcl}
    \tr{\Try_f \; M \With H} &=&
    \bl
    {\Localeffect{\ell_f : \tr{A'}\sto\tr{B'}}}
    {\Letm{}{\aex{\ell_f}}} g\; = \\[.5ex]
      ({\Lambda\shat{f}}.\Mod_{\aex{\shat{f}}}\;(\lambda
        f^{\boxwith{\aeq{\shat{f}}}\boxwith{\aid}(\tr{A'}\to\tr{B'})}.
      {\Letm{}{\aeq{\shat{f}}}\hat{f} = f \In} \transl{M}) {)\;\ell_f} \\[1ex]
    \!\!\In\Handle^{\aeq{\tr{C}}}\;
      (g\; (\Mod_{\aeq{\ell_f}}\;(\Mod_\aid\;(\lambda x^{\tr{A'}} . \Do \ell_f\;x))))
    \With \tr{H} \el
    \ea\]
  Taking the same $\ell$ as in the reduction of \SystemC, the
  translated term reduces to
  \[
    \Handle^{\aeq{\tr{C}}}\;
    \transl{M}[(\Mod_\aid\;(\lambda x^{\tr{A'}} . \Do\ell\;x))/\hat{f},\ell/\shat{f}] \With \tr{H}
  \]
  which is equal to $\tr{\Try_\ell\; M[\tmcap{\ell}/f, \{\ell\}/f]
  \With H}$ (recall that the translation of runtime capability value
  is $\tr{\tmcap{\ell}} = \Mod_\aid\;(\lambda x^{\tr{A'}} . \Do \ell\;x)$).
  \item[Case] \focus{\semlab{Ret}} By \semlab{Ret} and \semlab{Letmod} in {\Metp{\mcS}}.
  \item[Case] \focus{\semlab{Op}}
  \[\ba{rcl}
    \Try_\ell\; \EC[\tmcap{\ell}(V)] \With H \mid \Instctx
      &\reducesto& N[V/p,  \block{y}{}{\Try_\ell\; \EC[\Ret y] \With H}/r] \mid \Instctx
  \ea\]
  where $\Instctx \ni \ell : \carrsingle{A'}{}{B'}$. We have
  \[\ba{rcl}
  \tr{\Try_\ell\; \EC[\tmcap{\ell}(V)] \With H \gray{\;: A\mid C}}
  &=& \Handle^{\aeq{\tr{C}}}\; \tr{\EC[\tmcap{\ell}(V)]} \With \tr{H^\ell} \\
  \ea\]
  By \Cref{lemma:ectrans-systemc}, we have $\tr{\EC[\tmcap{\ell}(V)]} =
  \tr{\EC}[\Letm{}{\aid} g = \Mod_{\aid}\;(\lambda x^{\tr{A'}} . \Do\ell\;x)\In g\;\tr{V}]$.
  We can reduce $\tr{V}$ to a value normal form in \Metp{\mcS}.
  Our goal follows from \semlab{Letmod}, \semlab{App}, and \semlab{Op} in \Metp{\mcS}.
  Note that the RHS of the above reduction step is translated to
  \[
  \tr{N}[\tr{V}/p, \tr{\block{y}{}{\Try_\ell\; \EC[\Ret y] \With H}}/\hat{r}].
  \]
  \item[Case] \focus{\semlab{Lift}} By IH and \Cref{lemma:ectrans-systemc}.
  \end{description}
\end{proof}

The proof of semantics preservation relies on the following lemma.

\begin{lemma}[Translation of Evaluation Contexts]
  \label{lemma:ectrans-systemc}
  For the translation $\tr{-}$ from \SystemC to \Metp{\mcS}, we have
  $\tr{\EC[M]} = \tr{\EC}[\tr{M}]$ for any
  evaluation context $\EC$ and term $M$.
\end{lemma}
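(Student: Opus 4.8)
The plan is to prove the statement by structural induction on the evaluation context $\EC$ of \SystemC, following exactly the recipe already used for the analogous lemma for \Feps (\Cref{lemma:ectrans-feffn}). The only genuine subtlety is that the term translation $\tr{-}$ is type-directed, so before starting the induction I would make the statement precise: the translation of an evaluation context $\EC$ with a hole of value or block type $T$ and capability set $C$ is the \Metp{\mcS} evaluation context obtained by translating each frame while threading the type and capability-set data of the hole, and the equation $\tr{\EC[M]} = \tr{\EC}[\tr{M}]$ is asserted for every term $M$ whose type and capability set match those of the hole. Since the grammar of \SystemC evaluation contexts has only four productions --- $[~]$, $\Let x = \EC \In N$, $\Def f = \EC \In N$, and $\Try_\ell\;\EC\With H$ --- the induction has one base case and three step cases.

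For the base case $\EC = [~]$ we have $\tr{[~]} = [~]$ and the equation is immediate. For each step case I would unfold the definition of $\tr{-}$ on the enclosing frame. For $\EC = \Let x = \EC'\In N$ the frame translates homomorphically, so $\tr{\EC[M]} = \Let x = \tr{\EC'[M]}\In\tr{N}$, and the induction hypothesis on $\EC'$ rewrites $\tr{\EC'[M]}$ to $\tr{\EC'}[\tr{M}]$, yielding $\tr{\EC}[\tr{M}]$. For $\EC = \Def f = \EC'\In N$, writing $T\mid C$ for the type and capability set of the subterm in the hole, the frame translates to $\Let f = \Mod_{\aeq{\tr{C}}}\;[~]\In\Letm{}{\aeq{\tr{C}}} \hat{f} = f\In\tr{N}$; since this wrapper depends only on $T$ and $C$ and not on the particular term occupying the hole, the induction hypothesis again closes the case. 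The case $\EC = \Try_\ell\;\EC'\With H$ is identical in shape: the frame becomes $\Handle^{\aeq{\tr{C}}}\;[~]\With\tr{H^{\ell,C}}$, with the annotation determined by the type $A\mid C$ of the whole term, and the induction hypothesis finishes.

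The main --- and essentially only --- obstacle is the bookkeeping forced by type-directedness: I need to observe that the typing derivation of $\EC[M]$ decomposes canonically into a derivation for $M$ at some type and capability set $T\mid C$ together with a hole derivation for $\EC$, that the annotations inserted by $\tr{-}$ along the frames of $\EC$ are functions of this hole data alone, and hence that replacing $M$ by any other term of the same type and capability set --- in particular a reduct, by type preservation of reduction in \SystemC --- leaves $\tr{\EC}$ unchanged. Once this is spelled out, every case reduces to a one-line computation, so I expect the proof to be short. I would also record the companion remark that this invariance is precisely the property needed to handle the \semlab{Lift} case in the proof of \Cref{lemma:sematics-preservation-systemc-to-metn}.
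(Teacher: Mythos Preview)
Your proposal is correct and follows essentially the same approach as the paper: a straightforward structural induction on the evaluation contexts of \SystemC. The paper's proof is terser, merely noting for the $\Def f = \EC \In N$ case that $\Mod_\mu\;\tr{\EC}$ is a valid evaluation context in \Metp{\mcS} (which is the point ensuring the translated frame is itself an evaluation context); your more careful discussion of the type-directedness bookkeeping is a welcome elaboration of something the paper leaves implicit.
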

\begin{proof}
  By straightforward induction on evaluation contexts of \SystemC.
  For the case of $\Def f = \EC \In N$, note that $\Mod_\mu\;\tr{\EC}$ is a
   valid evaluation context in \Metp{\mcS}.
\end{proof}

\subsection{Proofs of Encoding \SystemXi in \Metp{\mcS}}
\label{app:proof-systemxi}

\SystemXiToMetp*
\begin{proof}
  By induction on typing judgements in \SystemXi.
  We prove a stronger version which says that
  $\tr{\Gamma}\vdash\tr{M}:\tr{A}\atmode{E}$ for any well-scoped $E$
  in \Metp{\mcS}.
  We need this stronger version to prove the case of named handlers.
  As a visual aid, for each non-trivial case we repeat its typing rule
  in \SystemXi. We replace each premise by the \Metp{\mcS} judgement of
  the translated premise implied by the induction hypothesis. We
  replace the conclusion by the \Metp{\mcS} judgement of the
  translated conclusion that we need to prove.
  When referring to the name of a rule, we sometimes also mention the
  calculus name to disambiguate. For instance, \tylabi{Var}{\SystemXi}
  refers to the rule \tylab{Var} of \SystemXi.
  \begin{description}
    \item[Case] \focus{\Unit} By \tylabi{Unit}{\SystemXi} and \tylabi{Unit}{\Metp{\mcS}}.
    \item[Case] \focus{x} By \tylabi{Var}{\SystemXi} and
      \tylabi{Var}{\Metp{\mcS}}. Variables are always accessible after
      translation as there is no lock in translated contexts at all.
    \item[Case] \focus{f} By \tylabi{BlockVar}{\SystemXi} and
    \tylabi{Var}{\Metp{\mcS}}. Block variables are always accessible after
    translation as there is no lock in translated contexts at all.
    \item[Case] \focus{\block{\ol{x:A}}{\ol{f:T}}{M}}
    \begin{mathpar}
      \inferrule*
      {
        \typmet{\tr{\Gamma},\ol{x:\tr{A}},\ol{f:\tr{T}}}{\tr{M}}{\tr{B}}{E}
      }
      {\typmet{\tr{\Gamma}}{\lambda\ol{x^{\tr{A}}}\,\ol{f^{\tr{T}}}.\tr{M}}{\tr{\carr{\ol{A}}{\ol{T}}{B}}}{E}}
    \end{mathpar}
    We have ${\tr{\carr{\ol{A}}{\ol{T}}{B}}} = \ol{\tr{A}} \to \ol{\tr{T}} \to \tr{B}$.
    Our goal follows from \tylabi{Abs}{\Metp{\mcS}}.
    \item[Case] \focus{\Ret V} By IH.
    \item[Case] \focus{\Let x = M \In N} By IH, \tylabi{Let}{\SystemXi}, and \tylabi{Let}{\Metp{\mcS}}.
    \item[Case] \focus{\Def f = P \In N} By IH, \tylabi{Def}{\SystemXi}, and \tylabi{Let}{\Metp{\mcS}}.
    \item[Case] \focus{P(\ol{V_i}, \ol{Q_j})}
    \begin{mathpar}
      \inferrule*
      {
        \typmet{\tr{\Gamma}}{\tr{P}}{\tr{\carr{\ol{A_i}}{\ol{T_j}}{B}}}{E} \\
        \ol{\typmet{\tr{\Gamma}}{\tr{V_i}}{\tr{A_i}}{E}} \\
        \ol{\typmet{\tr{\Gamma}}{\tr{Q_j}}{\tr{T_j}}{E}}
      }
      {\typmet{\tr{\Gamma}}{\tr{P}\;\ol{\tr{V_i}}\;\ol{\tr{Q_j}}}{\tr{B}}{E}}
    \end{mathpar}
    We have ${\tr{\carr{\ol{A_i}}{\ol{T_j}}{B}}} = \ol{\tr{A_i}} \to \ol{\tr{T_j}} \to \tr{B}$.
    Our goal follows from \tylabi{App}{\Metp{\mcS}}.
    \item[Case] \focus{\Try\;\{f^{\carrsingle{A'}{}{B'}}\To M\}\With \{p\;r\mapsto N\}}
    \begin{mathpar}
      \inferrule*
      {
        \refa{\typmet{\tr{\Gamma},f:\tr{\carrsingle{A'}{}{B'}}}{\tr{M}}{\tr{A}}{E_1}} \text{ for any } E_1 \\
        \refb{\typmet{\tr{\Gamma},p:\tr{A'},r:\tr{\carrsingle{B'}{}{A}}}{\tr{N}}{\tr{A}}{E}} \\
      }
      {\typmet{\tr{\Gamma}}{
        {\bl
        \Localeffect{\ell_f : \tr{A'}\sto \tr{B'}} \\
        \Handle\; (\lambda f . \transl{M})\; (\lambda x^{\tr{A'}} . \Do\ell_f\;x) \With \tr{H^{\gray{f}}}
        \el}
      }{\tr{A}}{E}}
    \end{mathpar}
    By \refa{}, $\tr{\carrsingle{A'}{}{B'}} = \tr{A'}\to\tr{B'}$,
    \mtylab{Extend}, and \Cref{lemma:structural-rules}.3, we have
    \[
      {\typmet{\tr{\Gamma}, \ell_f:\tr{A'}\sto\tr{B'}, \lockwith{\aex{\ell_f}},
              f:\tr{A'}\to\tr{B'}}{
        \tr{M}}{\tr{A}}{\ell_f,E}}
    \]
    which further gives
    \[
      \refc{\typmet{\tr{\Gamma}, \ell_f:\tr{A'}\sto\tr{B'}, \lockwith{\aex{\ell_f}}}{
         (\lambda f . \transl{M})\; (\lambda x^{\tr{A'}} . \Do\ell_f\;x)
      }{\tr{A}}{\ell_f,E}}
    \]
    Our goal follows from \refb{}, $\tr{\carrsingle{B'}{}{A}} =
    \tr{B'}\to\tr{A}$, \refc{}, $\tr{A}:\Pure$, and
    \tylabi{Handle}{\Metp{\mcS}}.
    Recall that our syntactic sugar for $\Handle$ with no modality
    annotation defined in \Cref{sec:typing-metp} allows us to directly
    give type $\tr{B'}\to\tr{A}$ to the continuation $r$ with no
    modality.
  \end{description}
\end{proof}

\SystemXiToMetpSemantics*
\begin{proof}
  By induction on $M$ and case analysis on the next reduction rule.
  Note that values in \SystemXi are translated to value normal forms
  in \Metp{\mcS}.
  \begin{description}
    \item[Case] \focus{\semlab{Let}} Let-binding in \Metp{\mcS} is syntactic
    sugar of lambda application. By \semlabi{App}{\Metp{\mcS}}.
    \item[Case] \focus{\semlab{Def}} Similar to the above case.
    \item[Case] \focus{\semlab{Call}}
    \[\ba{rcl}
      \block{\ol{x:A}}{\ol{f:T}}{M}(\ol{V},\ol{Q}) &\reducesto & M[\ol{V/x}, \ol{Q/f}]
    \ea\]
    We have
    \[\ba{rcl}
      \transl{\block{\ol{x:A}}{\ol{f:T}}{M}(\ol{V},\ol{Q})} &=&
        (\lambda \ol{x^{\transl{A}}}\,\ol{f^{\transl{T}}} . \transl{M})\;\ol{\tr{V}}\;\ol{\tr{Q}}\\
    \ea\]
    By multiple usages of \semlabi{App}{\Metp{\mcS}}.
    \item[Case] \focus{\semlab{Gen}}
    \[\ba{rcl}
    \Try\; \{f^{\carrsingle{A'}{}{B'}}\To M\} \With H \mid {\Instctx}
    &\reducesto&
    \Try_\ell\; M[\tmcap{\ell}/f] \With H \mid {\Instctx,\ell:\carrsingle{A'}{}{B'}}
    \ea\]
    We have
    \[\ba{rcl}
        \transl{\LHS} &=& \Handle\; (\lambda f . \transl{M})\;(\lambda x . \Do\ell\;x)
        \With \tr{H}\\
        \transl{\RHS} &=& \Handle\; \transl{M[\tmcap{\ell}/f]}
        \With \tr{H}\\
        \tr{\tmcap{\ell}} &=& \lambda x . \Do\ell\;x \ea\]
    Our goal follows from \tylabi{Gen}{\Metp{\mcS}} and the fact that
    translation preserves substitution.
    \item[Case] \focus{\semlab{Ret}} By \semlab{Ret} and \semlab{Letmod} in {\Metp{\mcS}}.
    \item[Case] \focus{\semlab{Op}}
    \[\ba{rcl}
    \Try_\ell\; \EC[\tmcap{\ell}(V)] \With H \mid \Instctx
    &\reducesto& N[V/p, \block{y}{}{\Try_\ell\; \EC[\Ret y] \With H}/r] \mid \Instctx
    \ea\]
    where $\Instctx \ni \ell : \carrsingle{A'}{}{B'}$. We have
    \[\ba{rcl}
    \tr{\LHS} &=& \Handle\; \tr{\EC[\tmcap{\ell}(V)]}
        \With \tr{H^\ell}\\
    \ea\]
    By \Cref{lemma:ectrans-systemxi}, we have
    \[
      \tr{\EC[\tmcap{h}(V)]} =
        \tr{\EC}[\tr{\tmcap{\ell}(V)}] = \tr{\EC}[(\lambda x.\Do\ell\;x)\;\tr{V}]
    \]
    Our goal follows from \semlabi{App}{\Metp{\mcS}} and \semlabi{Op}{\Metp{\mcS}}
    \item[Case] \focus{\semlab{Lift}} Follow from IH and \Cref{lemma:ectrans-systemxi}
  \end{description}
\end{proof}

The proof of semantics preservation relies on the following lemma.

\begin{lemma}[Translation of Evaluation Contexts]
  \label{lemma:ectrans-systemxi}
  For the translation $\tr{-}$ from \SystemXi to \Metp{\mcS}, we have
  $\tr{\EC[M]} = \tr{\EC}[\tr{M}]$ for any
  evaluation context $\EC$ and term $M$.
\end{lemma}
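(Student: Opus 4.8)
The plan is to prove the equation by structural induction on the \SystemXi evaluation context $\EC$, whose grammar — inherited from \SystemC minus the box constructs — is $\EC ::= [~] \mid \Let x = \EC \In N \mid \Def f = \EC \In N \mid \Try_\ell\; \EC \With H$. In each case I would (i) unfold $\tr{\EC[M]}$ by the relevant clause of the computation translation in \Cref{fig:encoding-systemxi} (or the runtime-construct translations for $\Try_\ell$), (ii) apply the induction hypothesis to the smaller context sitting under the constructor, and (iii) refold the result as $\tr{\EC}[\tr{M}]$ using the clause-by-clause definition of $\tr{-}$ on evaluation contexts, which mirrors the translation of the corresponding terms. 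The only genuine content is checking, in step (iii), that each translated constructor lies in the grammar of \Metp{\mcT} evaluation contexts.

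Concretely: for the hole, $\tr{[~][M]} = \tr{M} = [~][\tr{M}]$ since $\tr{[~]} = [~]$. For $\Let x = \EC \In N$ we have $\tr{(\Let x = \EC \In N)[M]} = \Let x = \tr{\EC[M]} \In \tr{N}$, which by the IH equals $\Let x = \tr{\EC}[\tr{M}] \In \tr{N} = (\Let x = \tr{\EC} \In \tr{N})[\tr{M}]$; here I use that $\keyw{let}$ in \Metpt is sugar for $(\lambda x.\tr{N})\;(-)$ and that $U\;\EC$ is a \Metpt evaluation context with $U = \lambda x.\tr{N}$. The case $\Def f = \EC \In N$ is identical, since the \SystemXi encoding sends $\Def f = P \In N$ to $\Let f = \tr{P} \In \tr{N}$ with no modality wrapper — in particular, unlike the \SystemC proof, we do not even need $\Mod_\mu\;\EC$ to be a legal context. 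For $\Try_\ell\; \EC \With H$, the runtime translation gives $\tr{\Try_\ell\;M\With H} = \Handle\;\tr{M}\With\tr{H^\ell}$ (with $\Handle$ the no-annotation sugar for $\Handle^{\aid}$), so $\tr{(\Try_\ell\;\EC\With H)[M]} = \Handle\;\tr{\EC[M]}\With\tr{H^\ell}$, which by the IH equals $\Handle\;\tr{\EC}[\tr{M}]\With\tr{H^\ell} = (\Handle^{\aid}\;\tr{\EC}\With\tr{H^\ell})[\tr{M}]$, using that $\Handle^\mu\;\EC\With H$ is a \Metpt evaluation context.

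I do not expect a real obstacle here: the \SystemXi term translation uses no type annotations, so $\tr{-}$ is genuinely compositional on raw syntax and the equation is a definitional unfolding modulo the induction hypothesis. The only point that needs mild care is bookkeeping — matching the translation of the runtime form $\Try_\ell$ used in defining $\tr{-}$ on evaluation contexts against the one used on terms, and confirming that the derived $\keyw{let}$ sugar still produces evaluation contexts of the shapes $U\;\EC$ permitted by \Cref{fig:semantics-metp}. This lemma is then invoked in the $\semlab{Op}$ and $\semlab{Lift}$ cases of the semantics-preservation proof for \SystemXi.
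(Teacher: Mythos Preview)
Your proposal is correct and takes the same approach as the paper: a straightforward structural induction on \SystemXi evaluation contexts. The paper states only ``By straightforward induction on evaluation contexts of \SystemXi'' without spelling out the cases, so your case-by-case unfolding is simply a more detailed rendering of the same argument.
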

\begin{proof}
  By straightforward induction on evaluation contexts of \SystemXi.
\end{proof}

\subsection{Proofs of Encoding \Fepssn in \Metp{\mcS}}
\label{app:proof-fepssn}

\FepssnToMetp*
\begin{proof}
By induction on typing judgements $\Gamma\vdash M : A\mid E$ in
\Fepssn.
Most cases are similar to those in the proof of encoding \Feps in
\Cref{app:proof-feps}.
We elaborate cases relevant to named handlers.
When referring to the name of a rule, we sometimes also mention the
calculus name to disambiguate. For instance, \tylabi{Var}{\Feps}
refers to the rule \tylab{Var} of \Feps.
\begin{description}
  \item[Case] \focus{\Do V\;W}
    \begin{mathpar}
      \inferrule*[Lab=\tylab{DoName}]
      {
        \Sigma \ni \ell : A\sto B \\
        \refa{\typf{\Gamma}{V}{\tyev{\ell}{a}}{}} \\
        \refb{\typf{\Gamma}{W}{A}{}} \\
      }
      {\typf{\Gamma}{\Do V\;W}{B}{\ell^a, E}}
    \end{mathpar}
    By IH on \refa{} and \refb{} and \Cref{lemma:fepssn-subeffecting}, we have
    \[\bl
      \typmet{\transl{\Gamma}}{\transl{V}}{\boxwith{\aeq{a}}(\transl{A}\to\transl{B})}{\transl{E}} \\
      \typmet{\transl{\Gamma}}{\transl{W}}{\transl{A}}{\transl{E}} \\
    \el\]
    By \tylabi{Letmod}{\Metp{\mcS}} and \tylabi{App}{\Metp{\mcS}}, we have
    \[
      \typmet{\transl{\Gamma}}{\Letm{}{\aeq{a}} x = \transl{V} \In x\;\transl{W}}{\transl{B}}{\transl{E}}
    \]
  \item[Case] \focus{\NHandler\;\{\ell\;p\;r\mapsto N\}}
    \begin{mathpar}
      \inferrule*[Lab=\tylab{NamedHandler}]
      {
        \Sigma \ni \ell : A'\sto B' \\
        \refa{\typf{\Gamma, p : A', r: \earr{B'}{A}{E}}{N}{A}{E}} \\
      }
      {\typf{\Gamma}{\NHandler\; \{\ell\;p\;r \mapsto N \}}{
        \earr{(\forall a^{\Scope(\ell)} . \earr{\tyev{\ell}{a}}{A}{\ell^a,E})}{A}{E}}{}}
    \end{mathpar}
    By IH on \refa{} and \Cref{lemma:pure-promotion-vars}, we have
    \[
      \refb{\typmet{\str{\Gamma}{},
      \lockwith{\aeq{\tr{E}}_{\cdot}},
      \lockwith{\aeq{\tr{E}}_{\tr{E}}}, p : \tr{A'}, r: \boxwith{\aeq{\tr{E}}}({\tr{B'}}\to {\tr{A}})}{\tr{N}}{\tr{A}}{\tr{E}}} \\
    \]

    By \tylabi{Letmod}{\Metp{\mcS}}, \tylabi{Var}{\Metp{\mcS}},
    and $\tr{A} : \Pure$, we have
    \[
      \refc{\typmet{
        \str{\Gamma}{},
        \lockwith{\aeq{\tr{E}}_{\cdot}},
        \lockwith{\aeq{\tr{E}}_{\tr{E}}},
        x:\boxwith{\aeq{\ell_a,\tr{E}}}\tr{A}}{
        \Letm{}{\aeq{\ell_a,\tr{E}}} x = x \In x
      }{\tr{A}}{\tr{E}}}
    \]
    By \tylabi{Var}{\Metp{\mcS}}, \tylabi{Mod}{\Metp{\mcS}}, and \tylabi{App}{\Metp{\mcS}}, we have
    \[\refd{\ba{r}
      {\str{\Gamma}{},
      \lockwith{\aeq{\tr{E}}},
          f' \varb{\aeq{\ell_a,\tr{E}}} \boxwith{\aeq{\ell_a}}(\tr{A'}\sto\tr{B'}) \to\tr{A},
          \lockwith{\aeq{\tr{E}}},
          \lockwith{\aex{\ell_a}}
          }\vdash\qquad\qquad\qquad \\[.5ex] {
        f'\;(\Mod_{\aeq{\ell_a}}\;(\lambda x . \Do\ell_a\;x))}
        : {\tr{A}} \atmode{\ell_a,\tr{E}}
    \ea}\]
    By \tylabi{Handle}{\Metp{\mcS}}, \refb{}, \refc{}, \refd{}, and
    \Cref{lemma:pure-promotion-vars}, we have
    \[\bl
      {\str{\Gamma}{},
      \lockwith{\aeq{\tr{E}}}, f : \boxwith{\aeq{a,\tr{E}}}(\boxwith{\aeq{a}}(\tr{A'}\sto\tr{B'}) \to\tr{A})
      }\vdash
        \Localeffect{\ell_a : \tr{A'}\sto\tr{B'}} \\[.5ex]
        \quad \Handle^{\aeq{\tr{E}}}\;
        (\Letm{}{\aeq{\ell_a,\tr{E}}} f' = f\;\ell_a \In
        f'\;(\Mod_{\aeq{\ell_a}}\;(\lambda x . \Do\ell_a\;x)))
        \With \tr{H}
      :{\tr{A}}\atmode{\tr{E}}
    \el\]
    Finally our final goal follows from
    \tylabi{Abs}{\Metp{\mcS}} and \tylabi{Mod}{\Metp{\mcS}}.
\end{description}
\end{proof}

The proof of type preservation relies on the following lemma.

\begin{restatable}[Pure Values]{lemma}{FepssnSubeffecting}
  \label{lemma:fepssn-subeffecting}
  Given a typing judgement $\typf{\Gamma}{V}{A}{}$ in \Fepssn,
  if $\typm{\stransl{\Gamma}{\cdot}}{\transl{V} : \transl{A}}{\cdot}$
  then $\typm{\stransl{\Gamma}{E}}{\transl{V} :
  \transl{A}}{\transl{E}}$ for any $E$.
\end{restatable}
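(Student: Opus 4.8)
The statement to prove is \Cref{lemma:fepssn-subeffecting}: if a value $V$ is well-typed in \Fepssn\ with $\typf{\Gamma}{V}{A}{}$ and its translation satisfies $\typm{\stransl{\Gamma}{\cdot}}{\transl{V}:\transl{A}}{\cdot}$, then in fact $\typm{\stransl{\Gamma}{E}}{\transl{V}:\transl{A}}{\transl{E}}$ for any well-formed effect context $E$. This is the \Fepssn\ analogue of \Cref{lemma:feps-subeffecting}, so the plan is to mirror that proof.

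\begin{hproof}
The plan is to proceed by induction on the typing derivation $\typf{\Gamma}{V}{A}{}$ of values in \Fepssn, following the proof of \Cref{lemma:feps-subeffecting}. The crucial observation is that every \Fepssn\ value type is translated into a \Metp{\mcS} type of kind $\Pure$: the translations of $\TUnit$, type variables $\alpha$, the function type $\earr{A}{B}{E} = \boxwith{\aeq{\transl{E}}}(\transl{A}\to\transl{B})$, the quantifier $\forall\alpha^K.A$, and the evidence type $\tyev{\ell}{a} = \boxwith{\aeq{a}}(\transl{A}\to\transl{B})$ all have kind $\Pure$ by the kinding rules in \Cref{app:rules-metp} (function types under absolute modalities are $\Pure$, $\TUnit$ is $\Pure$, $\Pure$ is preserved by $\forall$, and type variables of translated kinds are $\Pure$). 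First I would observe that this means the only place the effect context could matter in a value typing derivation is in accessing variables or in the effect contexts of subterms, and by the \Metp{\mcS} typing rules all value constructs (\tylab{Unit}, \tylab{Var}, \tylab{TAbs}, \tylab{Mod}, \tylab{Letmod}) either ignore $E$ or pass it along unchanged.

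Concretely, the cases are: for $\transl{\Unit} = \Unit$, rule \tylab{Unit} works at any $E$. For $\transl{x} = x$, since $\transl{A}$ has kind $\Pure$, rule \tylab{Var} lets us access $x$ regardless of the locks and effect context. For $\transl{\Lambda\alpha^K.V} = \Lambda\alpha^{\tr{K}}.\transl{V}$, rule \tylab{TAbs} reduces to the premise at the same $E$, and the induction hypothesis applies (noting $\stransl{\Gamma,\alpha:K}{E} = \stransl{\Gamma}{E},\alpha:\tr{K}$). For $\transl{\elambda{E'} x^A.M} = \Mod_{\aeq{\transl{E'}}}\;(\lambda x^{\transl{A}}.\transl{M})$, rule \tylab{Mod} introduces the absolute modality $\aeq{\transl{E'}}$ and places a lock $\lockwith{\aeq{\transl{E'}}_E}$ so the premise is typed at $\aeq{\transl{E'}}(E) = \transl{E'}$, which is \emph{independent} of $E$; thus the premise derivation is literally unchanged and we only need to rewrap with the new ambient $E$. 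For $\transl{\NHandler\;H}$, the top-level constructor is again a $\Mod_{\aeq{\transl{E'}}}$, so the same argument applies: the body is typed under a lock that resets the effect context, and the ambient $E$ is irrelevant.

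The key technical ingredient I would cite is \Cref{lemma:pure-promotion-vars} (generalised pure promotion), which already states exactly that a term stays well-typed under a larger effect context provided every free variable has either a $\Pure$ type or an absolute modality. Since translated \Fepssn\ contexts contain only variables $x:\transl{A}$ with $\transl{A}:\Pure$ (there are no lock bindings or modality-annotated bindings in $\stransl{\Gamma}{E}$ as defined in \Cref{fig:fepssn-to-metp}), the hypotheses of \Cref{lemma:pure-promotion-vars} are met automatically, and the statement follows in one step: $\typm{\stransl{\Gamma}{\cdot}}{\transl{V}:\transl{A}}{\cdot}$ and $\cdot\subtype\transl{E}$ give $\typm{\stransl{\Gamma}{E}}{\transl{V}:\transl{A}}{\transl{E}}$. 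I do not expect a genuine obstacle here; the only point requiring care is checking that $\stransl{\Gamma}{\cdot}$ and $\stransl{\Gamma}{E}$ really differ only in the trivial index information (and that the context well-formedness $\stransl{\Gamma}{E}\atmode{\transl{E}}$ holds), which is immediate from the definition of the context translation. An alternative, if one prefers not to invoke \Cref{lemma:pure-promotion-vars}, is the direct structural induction sketched above, which is entirely routine given that each value constructor is translated to something headed by either a plain value former or a $\Mod_{\aeq{-}}$ that discards the ambient context.
\end{hproof}
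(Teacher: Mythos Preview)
Your proposal is correct and follows essentially the same approach as the paper: a straightforward induction on the value typing derivation, with the key observation that every translated \Fepssn\ value type has kind $\Pure$ so variable access is unconstrained, and an appeal to \Cref{lemma:pure-promotion-vars} to finish. The paper's proof is terser (it only calls out the variable case explicitly), but your more detailed case analysis and your remark that \Cref{lemma:pure-promotion-vars} already yields the result in one step are both on point.
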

\begin{proof}
  By straightforward induction on typing judgements of values in
  \Fepssn.
  The most non-trivial case is to show the accessibility of variables.
  Observe that the change from $\str{\Gamma}{\cdot}$ to
  $\str{\Gamma}{E}$ only changes the translations of locks.
  After translation, all variables in the context have types of
  kind $\Pure$.
  Their accessibility follows from \Cref{lemma:pure-promotion-vars}.
\end{proof}

\FepssnToMetpSemantics*
\begin{proof}
  By induction on $M$ and case analysis on the next reduction rule.
  Note that values in \Feps are translated to value normal forms in \Metp{\emtScp}.
  Most cases are similar to those in the proof of encoding \Feps in
  \Cref{app:proof-feps}.
  We show new cases relevant to named handlers.
  \begin{description}
    \item[Case] \focus{\semlab{Gen}} Suppose the effect row of the whole term is $E$.
    \[\ba{rcl}
    \NHandler\;H\;V \mid \Instctx
      &\reducesto&
      \Handle_h\; (\Let x = V\;b \In x\;\tmev{h}) \With H \mid \Instctx,h:\ell^b
    \ea\]
    where $b,h$ fresh and $\Sigma \ni \ell : A'\sto B'$.
    We have
    \[\ba{rcl}
    \tr{\LHS} &=& \Letm{}{\aeq{\tr{E}}} g = \Mod_{\aeq{\tr{E}}}\;\tr{\NHandler\;H} \In g\;\tr{V} \\[1ex]
    \tr{\NHandler\;H} &=&
    \bl
    \Mod_{\aeq{\transl{E}}}\;(
    \lambda f .
      \Localeffect{\ell_a : \tr{A'}\sto \tr{B'}} \\
      \quad\Handle^{\aeq{\tr{E}}}\;
        (\Letm{}{\aeq{\ell_a,\transl{E}}} f' = f\;\ell_a \In
          f'\;(\Mod_{\aeq{\ell_a}}\;(\lambda x . \Do\ell_a\;x))) \\
      \quad\!\!\With \tr{H})\el \\
    \ea\]
    By \semlab{Letmod}, \semlab{App}, and \semlab{Gen} (use the runtime label $\ell_b$)
    in \Metp{\mcS}, $\tr{\LHS}$ reduces to
    \[
        \Handle^{\aeq{\tr{E}}}\;
        \bl(\Letm{}{\aeq{\ell_b,\transl{E}}} f' = f\;\ell_b \In \\
          f'\;(\Mod_{\aeq{\ell_b}}\;(\lambda x . \Do \ell_b\;x))) \With \tr{H}\el
    \]
    which is equal to $\tr{\RHS}$ of the above reduction step.
    Note that the runtime generated scope variable $b$ is translated to $\ell_b$.
    \item[Case] \focus{\semlab{NRet}} By \semlab{Ret} and \semlab{Letmod} in \Metp{\mcS}.
    \item[Case] \focus{\semlab{NOp}} Suppose the effect row of the whole term is $E$.
    \[\ba{rcl}
      \Handle_h\; \EC[\Do \tmev{h} \; V] \With H
      &\reducesto& N[V/p, (\lambda y.\Handle_h\; \EC[\Ret y] \With H)/r]
    \ea\]
    where $\Instctx \ni h:\ell^b$. We have
    \[\ba{rcl}
    \tr{\LHS} &=& \Handle^{\aeq{\tr{E}}}\; \tr{\EC[\Do \tmev{h} \; V]} \With \tr{H}
    \ea\]
    By \Cref{lemma:ectrans-fepssn}, we have
    \[\ba{rcl}
      \tr{\EC[\Do \tmev{h} \; V]} &=& \tr{\EC}[\Letm{}{\aeq{\ell_b}} f = \tr{\tmev{h}} \In f\;\tr{V}] \\
      &=&  \tr{\EC}[\Letm{}{\aeq{\ell_b}} f = \Mod_{\aeq{\ell_b}}\;(\lambda x .\Do\ell_b\;x) \In f\;\tr{V}]
    \ea\]
    Then by \semlab{Letmod} and \semlab{App} in \Metp{\mcS}, $\tr{\LHS}$ reduces to
    \[
      \Handle^{\aeq{\tr{E}}}\; \tr{\EC}[\Do\ell_b\;\tr{V}] \With \tr{H}
    \]
    Our goal follows from \semlab{Op} in \Metp{\mcS}.
  \end{description}
\end{proof}

The proof of semantics preservation relies on the following lemma.

\begin{lemma}[Translation of Evaluation Contexts]
  \label{lemma:ectrans-fepssn}
  For the translation $\tr{-}$ from \Fepssn to \Metp{\mcS}, we have
  $\tr{\EC[M]} = \tr{\EC}[\tr{M}]$ for any
  evaluation context $\EC$ and term $M$.
\end{lemma}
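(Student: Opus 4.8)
The final statement is Lemma~\ref{lemma:ectrans-fepssn}, which asserts that the translation from \Fepssn to \Metp{\mcS} commutes with plugging a term into an evaluation context, i.e. $\tr{\EC[M]} = \tr{\EC}[\tr{M}]$. The plan is to prove this by structural induction on the evaluation context $\EC$ of \Fepssn, following exactly the same pattern as the analogous lemmas for \Feps (Lemma~\ref{lemma:ectrans-feffn}), \SystemC (Lemma~\ref{lemma:ectrans-systemc}), and \SystemXi (Lemma~\ref{lemma:ectrans-systemxi}). Recall from \Cref{fig:semantics-fepssn} that the evaluation contexts of \Fepssn are $\EC ::= [~] \mid \Let x = \EC \In N \mid \Handle_h\; \EC \With H$, so there are only three cases to check.

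First I would handle the base case $\EC = [~]$: here $\EC[M] = M$ and $\tr{\EC} = [~]$, so both sides equal $\tr{M}$ trivially. Second, for $\EC = \Let x = \EC' \In N$, we have $\EC[M] = \Let x = \EC'[M] \In N$, and the translation of a let-binding is homomorphic, $\tr{\Let x = \EC'[M] \In N} = \Let x = \tr{\EC'[M]} \In \tr{N}$; applying the induction hypothesis gives $\tr{\EC'[M]} = \tr{\EC'}[\tr{M}]$, and since the translation of the evaluation context $\Let x = \EC' \In N$ is correspondingly $\Let x = \tr{\EC'} \In \tr{N}$, the two sides agree. Third, for $\EC = \Handle_h\; \EC' \With H$, which is a runtime construct, I would use its translation from the runtime-construct translations given at the end of \Cref{app:fepssn}: $\tr{\Handle_h\;\EC'[M]\With H : A \mid E} = \Handle^{\aeq{\tr{E}}}\; \tr{\EC'[M]} \With \tr{H}$ where $\Instctx \ni h : \ell^a$. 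Again applying the induction hypothesis, $\tr{\EC'[M]} = \tr{\EC'}[\tr{M}]$, and since $\Handle^\mu\;\cdot\With H$ is a valid evaluation-context frame in \Metp{\mcS} (see the evaluation contexts in \Cref{fig:semantics-metp}), the translated evaluation context is $\Handle^{\aeq{\tr{E}}}\; \tr{\EC'} \With \tr{H}$, so the equation holds.

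There is no substantial obstacle here; the lemma is a routine bookkeeping fact. The only point requiring slight care is that the $\Handle_h$ frame is a runtime construct whose translation is type-directed (it inspects the ambient effect row $E$ recorded on the handler and the runtime label $\ell^a$ associated with $h$ in $\Instctx$), so one must observe that plugging a term into the hole does not change the type annotation or the runtime context, and hence the same $\aeq{\tr{E}}$ modality and the same translated handler $\tr{H}$ appear on both sides. This is immediate because evaluation-context plugging is type-preserving and does not alter $\Instctx$. Thus the proof is a straightforward three-case induction, mirroring the corresponding lemmas for the other source calculi.

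\begin{proof}
  By straightforward induction on evaluation contexts of \Fepssn.
  The cases $\EC = [~]$ and $\EC = \Let x = \EC' \In N$ follow
  immediately from the induction hypothesis and the fact that the
  translation of let-binding is homomorphic.
  For the case $\EC = \Handle_h\; \EC' \With H$, note that its
  translation $\Handle^{\aeq{\tr{E}}}\;\tr{\EC'}\With\tr{H}$ (where
  $\Instctx\ni h:\ell^a$) is a valid evaluation context in
  \Metp{\mcS}, and plugging a term into $\EC'$ affects neither the
  type annotation $E$ nor the runtime context $\Instctx$; the result
  then follows from the induction hypothesis.
\end{proof}
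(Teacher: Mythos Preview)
Your proposal is correct and takes essentially the same approach as the paper: a straightforward structural induction on the evaluation contexts of \Fepssn. The paper's own proof is a one-line ``By straightforward induction on evaluation contexts of \Fepssn''; your version simply spells out the three cases and the mild subtlety about the type-directed translation of the runtime $\Handle_h$ frame, which is fine but not required.
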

\begin{proof}
  By straightforward induction on evaluation contexts of \Fepssn.
\end{proof}
\fi

\end{document}